\def\a{\alpha}
\newcommand\be{\beta}
\newcommand\de{\delta}
\newcommand\De{\Delta}
\newcommand\vth{\vartheta}
\newcommand\et{\eta}
\newcommand\ep{\epsilon}
\newcommand\ga{\gamma}
\newcommand\Ga{\Gamma}
\newcommand\ka{\kappa}
\newcommand\la{\lambda}
\newcommand\La{\Lambda}
\newcommand\ph{\varphi}
\newcommand\ps{\psi}
\newcommand\Ps{\Psi}
\newcommand\rh{\rho}
\newcommand\si{\sigma}
\newcommand\Si{\Sigma}
\newcommand\ta{\tau}
\newcommand\om{\omega}
\newcommand\Om{\Omega}
\newcommand\A{\mathcal A}
\newcommand\B{\mathcal B}
\newcommand\I{\mathsf I}
\newcommand\E{\mathsf E}
\newcommand\F{\mathsf F}
\newcommand\G{\mathsf G}
\newcommand\K{\mathsf K}
\newcommand\Loc{\mathsf L}
\newcommand\N{\mathsf N}
\def\S{\mathsf S}
\newcommand\U{\mathsf U}
\newcommand\X{\mathsf X}
\newcommand\Real{\mathbb R}
\newcommand\Nat{\mathbb N}
\newcommand\Rat{\mathbb Q}
\newcommand\Mtwo{{\mathbf M}_2}
\newcommand\Mthree{{\mathbf M}_3}
\newcommand\SA{{\mathbf S}_{\mathcal A}}
\newcommand\SB{{\mathbf S}_{\mathcal B}}
\newcommand\IDMI{\mathbf{I}_{\rm DMI}}
\newcommand\rA{r_{\mathcal A}}
\newcommand\rB{r_{\mathcal B}}
\newcommand\mA{m_{\mathcal A}}
\newcommand\mB{m_{\mathcal B}}
\newcommand\mmax{m_{\rm max}}
\newcommand\ssQ{{\mathsf Q}}
\newcommand\ssW{{\mathsf W}}
\newcommand\Cov{\operatorname{Cov}}
\newcommand\grad{\operatorname{grad}}
\newcommand\cmin{c_{\text{min}}}
\newcommand\ctot{c_{\text{tot}}}
\newcommand\abs[1]{\left|#1\right|}
\newcommand{\der}[2]{\frac{\text{d} #1}{\text{d} #2}}
\newcommand{\pder}[2]{\frac{\partial #1}{\partial #2}}
\newcommand\Tr{{\top}}  
\newcommand\pikak{p_{i_k,\a}^{(k)}}
\newcommand\pinan{p_{i_n,\a}^{(n)}}
\newcommand\pjkak{p_{j_k,\a}^{(k)}}
\newcommand\pikk{p_{i_k}^{(k)}}
\newcommand\pinn{p_{i_n}^{(n)}}
\newcommand\piKaK{p_{i_\K,\a}^{(\K)}}
\newcommand\piNaN{p_{i_\N,\a}^{(\N)}}
\newcommand\Pinn{P_{i_n}^{(n)}}
\newcommand\PiNN{P_{i_\N}^{(\N)}}
\newcommand\Pikk{P_{i_k}^{(k)}}
\newcommand\PiKK{P_{i_\K}^{(\K)}}
\newcommand\Pjkk{P_{j_k}^{(k)}}
\newcommand\qikk{q_{i_k}^{(k)}}
\newcommand\rinan{r_{i_n,\a}^{(n)}}
\newcommand\ominn{\om_{i_n}^{(n)}}
\newcommand\wbara{{\bar w}_\a}
\newcommand\rbara{{\bar r}_\a}
\newcommand\pia{p_{i,\a}}
\newcommand\pib{p_{i,\be}}
\newcommand\pja{p_{j,\a}}
\newcommand\vba{\bar v_\a}
\newcommand\wija{w_{ij,\a}}
\newcommand\wia{w_{i,\a}}
\newcommand\wb{{\bar w}}
\newcommand\wba{{\bar w_\a}}
\newcommand\mab{m_{\a\be}}
\newcommand\tmab{\tilde m_{\a\be}}
\newcommand\mba{m_{\be\a}}
\newcommand\tmba{\tilde m_{\be\a}}
\newcommand\mb{{\bar m}}
\newcommand\tw{\tilde w}
\newcommand\pa{p_\a}
\newcommand\pb{p_\be}
\newcommand\xa{x_\a}
\newcommand\xb{x_\be}
\newcommand\ya{y_\a}
\newcommand\yb{y_\be}
\newcommand\dota{{\cdot,\a}}
\newtheorem{theorem}{Theorem}
\newtheorem{lemma}[theorem]{Lemma}
\newtheorem{remark}[theorem]{Remark}
\newtheorem*{remark*}{Remark}
\newtheorem{example}[theorem]{Example}
\newtheorem{proposition}[theorem]{Proposition}
\newtheorem*{proposition*}{Proposition}
\newtheorem{corollary}[theorem]{Corollary}
\begin{document}
\normalbaselineskip18pt
\baselineskip18pt
\global\hoffset=-15truemm
\global\voffset=-10truemm
\numberwithin{equation}{section}
\numberwithin{theorem}{section}
\numberwithin{figure}{section}
\numberwithin{table}{section}
\setcounter{section}{0}

\parindent=0pt
\vspace{1.5cm}
{\bf\LARGE A Survey on Migration-Selection Models \\in Population Genetics\footnote{{\it Acknowledgments.}
This survey is based on lecture notes I compiled for a lectures series presented during the ``Program on Nonlinear Equations in Population Biology'' in
June 2013 at the Center for PDE, East China Normal University, Shanghai. In turn, these lecture notes were a revised version of notes written for a lecture course at the University of Vienna in 2010. I am very grateful to the participants of these courses for many helpful comments, in particular to Dr.\ Simon Aeschbacher, Benja Fallenstein, and Dr.\ Linlin Su. Comments by and useful discussions with Professor Yuan Lou helped to finalize this work. I am particularly grateful to Professor Wei-Ming Ni for inviting me to the Center for PDE, and to him and its members for their great hospitality. Financial support by grant P25188 of the Austrian Science Fund FWF is gratefully acknowledged.}}

\vspace{1.5cm}
{\bf\large Reinhard B\"urger}
 
\vspace{.5cm}
Department of Mathematics, University of Vienna

\vspace{3cm}

{
\normalbaselineskip14pt
\baselineskip14pt{
{\obeylines 
{\it Contact:}
\bigskip 
Reinhard B\"urger
Institut f\"ur Mathematik 
Universit\"at Wien 
Oskar-Morgenstern-Platz 1
A-1090 Wien 
Austria
E-mail: reinhard.buerger@univie.ac.at
Phone: +43 1 4277 50631
Fax: +43 1 4277 9506}

\vskip1.5cm
{\it Key words and phrases:} differential equations, recurrence equations, stability, convergence, perturbation theory,
evolution, geographic structure, dispersal, recombination 

\vspace{.3cm}
{\it 2010 Mathematics Subject Classification:} 92D10, 92D15; 34D05, 34D30, 37N25, 39A60

\vspace{.3cm}
{\it Running head:} Migration-selection models

}
}

\newpage
\section*{Abstract}
This survey focuses on the most important aspects of the mathematical theory of population genetic models of selection and migration between discrete niches. Such models are most appropriate if the dispersal distance is short compared to the scale at which the environment changes, or if the habitat is fragmented.
The general goal of such models is to study the influence of population subdivision and gene flow among subpopulations on
the amount and pattern of genetic variation maintained. Only deterministic models are treated. Because space is discrete, they are formulated in terms of systems of nonlinear difference or differential equations. A central topic is the exploration of the equilibrium and stability structure under various assumptions on the patterns of selection and migration. Another important, closely related topic concerns conditions (necessary or sufficient) for fully polymorphic (internal) equilibria. First, the theory of one-locus models with two or multiple alleles is laid out. Then, mostly very recent, developments about multilocus models are presented. Finally, as an application, analysis and results of an explicit two-locus model emerging from speciation theory are highlighted.

\vspace{.5cm}

\tableofcontents

\vspace{.5cm}
\parindent=15pt

\section{Introduction}
Population genetics is concerned with the study of the genetic composition 
of populations. This composition is shaped by selection, mutation, recombination, mating behavior and reproduction,
migration, and other genetic, ecological, and evolutionary factors.
Therefore, these mechanisms and their interactions and evolutionary consequences are investigated.
Traditionally, population genetics has been applied to animal and plant breeding, to human genetics,
and more recently to ecology and conservation biology.  
One of the main subjects is the investigation of the mechanisms that generate and maintain genetic variability in 
populations, and the study of how this genetic variation, shaped by environmental
influences, leads to evolutionary change, adaptation, and speciation.
Therefore, population genetics provides the basis for understanding the evolutionary
processes that have led to the diversity of life we encounter and admire. 

Mathematical models and methods have a long history in population genetics,
tracing back to Gregor Mendel, who used his education in mathematics and physics to draw his conclusions. Francis Galton and the biometricians, notably Karl Pearson,
developed new statistical methods to describe the distribution of trait 
values in populations and to predict their change between generations. Yule (1902), Hardy (1908), and Weinberg (1908) worked out simple, but important, consequences of the particulate mode of inheritance proposed by Mendel in 1866 that contrasted and challenged the then prevailing blending theory of inheritance.
However, it was not before 1918 that the synthesis between genetics and the theory of evolution through natural selection began to take shape through
Fisher's (1918) work. By the early 1930s, the foundations of modern population genetics had been laid by the work of Ronald A.\ Fisher, J.B.S.\ Haldane, and
Sewall Wright. They had demonstrated that the theory of evolution by natural selection, proposed by Charles Darwin in 1859, can be justified on the basis of genetics as governed by Mendel's laws. A detailed account of the history of population genetics is given in Provine (1971).

In the following, we explain some basic facts and mechanisms that are needed throughout this survey. 
Mendel's prime achievement was the recognition of the particulate nature of the hereditary determinants, now called genes.
Its position along the DNA is called the \emph{locus}, and a particular sequence there is called an \emph{allele}.
In most higher organisms, genes are present in pairs, one being inherited from the mother, the other from the father. 
Such organisms are called diploid. The allelic composition is called the \emph{genotype}, and the set of observable properties derived from
the genotype is the \emph{phenotype}.

\emph{Meiosis} is the process of formation of reproductive cells, or \emph{gametes} (in animals, sperm and eggs)
from somatic cells. Under \emph{Mendelian segregation}, each gamete contains
precisely one of the two alleles of the diploid somatic cell and each gamete is equally likely to
contain either one. The separation of the paired alleles from one another and their distribution to the gametes is
called \emph{segregation} and occurs during meiosis. At mating, two reproductive cells fuse and 
form a \emph{zygote} (fertilized egg), which contains the full (diploid) genetic information.

Any heritable change in the genetic material is called a \emph{mutation}. Mutations are the ultimate source of genetic 
variability and form the raw material upon which selection acts. 
Although the term mutation includes changes in chromosome structure and number, the vast majority of
genetic variation is caused by changes in the DNA sequence. Such mutations occur in many different ways, 
for instance as base substitutions, in which one nucleotide is replaced by another, as insertions or deletions of DNA, as inversions of 
sequences of nucleotides, or as transpositions. For the population-genetic models treated in this text the molecular origin of a mutant is 
of no relevance because they assume that the relevant alleles are initially present.

During meiosis, different chromosomes assort independently and \emph{crossing over} between two homologous chromosomes
may occur. Consequently, the newly formed gamete contains maternal alleles at one set of loci and
paternal alleles at the complementary set. This process is called \emph{recombination}. Since it leads
to random association between alleles at different loci, recombination has
the potential to combine favorable alleles of different ancestry in one gamete and to break up combinations of deleterious alleles.
These properties are generally considered to confer a substantial evolutionary advantage to sexual species relative to asexuals.
  
The mating pattern may have a substantial influence on the evolution of
gene frequencies. The simplest and most important mode is \emph{random
mating}. This means that matings take place without regard to ancestry or
the genotype under consideration. It seems to occur frequently in nature.
For example, among humans, matings within a population appear to be random with respect to
blood groups and allozyme phenotypes, but are nonrandom with respect to other traits, for example, height.

\emph{Selection} occurs when individuals of different
genotype leave different numbers of progeny because they differ in their
probability to survive to reproductive age (\emph{viability}), 
in their mating success, or in their average number of produced offspring (\emph{fertility}).
Darwin recognized and documented the central importance of
selection as the driving force for adaptation and evolution. Since
selection affects the entire genome, its consequences for the genetic
composition of a population may be complex. Selection is measured in terms
of \emph{fitness} of individuals, i.e., by the number of progeny contributed to the next generation. There are different measures
of fitness, and it consists of several components because selection may act on each stage of the life cycle.

Because many natural populations are geographically structured and selection
varies spatially due to heterogeneity in the environment, it is important to study the consequences of spatial structure for the evolution of populations. 
Dispersal of individuals is usually modeled in one of two alternative ways, either
by diffusion in space or by migration between discrete niches, or demes.
If the population size is sufficiently large, so that random genetic drift can be ignored, then the first kind of model leads to 
partial differential equations (Fisher 1937, Kolmogorov et al.\ 1937). This is
a natural choice if genotype frequencies change continuously along an environmental gradient,
as it occurs in a cline (Haldane 1948). Here we will not be concerned with this wide and fruitful area
and refer instead to Barton (1999), Nagylaki and Lou (2008), Barton and Turelli (2011), Lou et al.\ (2013) for recent developments and references.

Instead, this survey focuses on models of selection and migration between discrete demes. Such models are most appropriate if the dispersal distance is short compared to the scale at which the environment changes, or if the habitat is fragmented. They originated from the work of Haldane (1930) and
Wright (1931). Most of the existing theory has been devoted to study selection on a single locus in
populations with discrete, nonoverlapping generations that mate randomly within demes. However, advances in the theory of multilocus models have been made recently. 
The general goal is to study the influence of population subdivision and of gene flow among subpopulations on
the amount and pattern of genetic variation maintained. The models are formulated in terms of systems of
nonlinear difference or differential equations. The core purpose of this survey is the presentation of the methods of their analysis, of the main results, and of their biological implications. 

For mathematically oriented introductions to the much broader field of population genetics, we refer to the
books of Nagylaki (1992), B\"urger (2000), Ewens (2004), and Wakeley (2008). 
The two latter texts treat stochastic models in detail, an important and topical area ignored in this survey. 

\vspace{.5cm}
\section{Selection on a multiallelic locus}
Darwinian evolution is based on selection and inheritance. In this section, we summarize the essential properties of simple selection models.
It will prepare the ground for the subsequent study of the joint action of spatially varying selection and migration.
Proofs and a detailed treatment may be found in Chapter I of B\"urger (2000). Our focus is on the evolution of the genetic composition 
of the population, but not on its size. Therefore, we always deal with relative frequencies of genes or genotypes within a given population. 

Unless stated otherwise, we consider a population with discrete, nonoverlapping generations, such as annual plants or insects.
We assume two sexes that need not be distinguished because gene or genotype frequencies are the same in both sexes (as is always the case
in monoecious species). Individuals mate at random with respect to the locus under consideration, i.e., in proportion to their frequency.
We also suppose that the population is large enough that gene and genotype frequencies can be 
treated as deterministic, and relative frequency can be identified with probability. Then the evolution of gene or 
genotype frequencies can be described by difference or recurrence equations. These assumptions reflect an idealized situation
which will model evolution at many loci in many populations or species, but which is by no means universal.

\subsection{The Hardy--Weinberg Law}\label{sec:HW}
With the blending theory of inheritance variation in a population declines rapidly, and this was one of the
arguments against Darwin's theory of evolution. With Mendelian 
inheritance there is no such dilution of variation, as was shown independently by the famous British
mathematician Hardy (1908) and, in much greater generality, by the German physician Weinberg (1908, 1909).

We consider a single locus with $I$ possible alleles $\A_i$ and write $\I=\{1,\ldots,I\}$ for the set
of all alleles. We denote the frequency of the ordered genotype $\A_i\A_j$ by $P_{ij}$, so that the frequency of
the unordered genotype $\A_i\A_j$ is $P_{ij}+P_{ji}=2P_{ij}$. Subscripts $i$ and $j$ always refer to alleles. Then 
the frequency of allele $\A_i$ in the population is
$$
	p_i = \sum_{j=1}^I P_{ij}  \;. \footnote{If no summation range is indicated, it is assumed to be over all
			 		admissible values; e.g., $\sum_i=\sum_{i\in\I}$}                                    
$$
After one generation of random mating the zygotic proportions satisfy\footnote{Unless stated otherwise, a prime, ${}'$, always signifies the next generation. Thus,
instead of $P_{ij}(t)$ and $P_{ij}(t+1)$, we write $P_{ij}$ and $P_{ij}'$ (and analogously for other quantities).}
$$
	P_{ij}' = p_ip_j \qquad\text{for every $i$ and $j$}\;.  
$$

A mathematically trivial, but biologically very important, consequence is that (in the absence of other forces)
gene frequencies remain constant across generations, i.e.,
\begin{equation}\label{p_i'=p_i}
	p_i' = p_i \quad\text{for every }i\,. 
\end{equation}
In other words, in a (sufficiently large) randomly mating population reproduction does not change allele 
frequencies. A population is said to be in \emph{Hardy--Weinberg equilibrium} if
\begin{equation}\label{HW}
	P_{ij} = p_ip_j\,.
\end{equation}
In a (sufficiently large) randomly mating population, this relation is always satisfied among zygotes.

Evolutionary mechanisms such as selection, migration, mutation, or random genetic drift distort Hardy-Weinberg
proportions, but Mendelian inheritance restores them if mating is random.

\subsection{Evolutionary dynamics under selection}\label{sec:evol_dyn_sel}
Selection occurs when genotypes in a population differ in their fitnesses, i.e., in
their viability, mating success, or fertility and, therefore, leave
different numbers of progeny. The basic mathematical models of selection
were developed and investigated in the 1920s and early 1930s by Fisher (1930), Wright (1931),
and Haldane (1932). 

We will be concerned with the evolutionary consequences of selection caused by differential
viabilities, which leads to simpler models than (general) fertility selection (e.g., Hofbauer and Sigmund 1988, Nagylaki 1992).
Suppose that at an autosomal locus the alleles $\A_1,\ldots,\A_I$ occur. We count
individuals at the zygote stage and denote the (relative) frequency of 
the ordered genotype $\A_i\A_j$ by $P_{ij}(=P_{ji})$.
Since mating is at random, the genotype frequencies $P_{ij}$ are in
Hardy-Weinberg proportions. We assume 
that the fitness (viability) $w_{ij}$ of an $\A_i\A_j$ individual is constant, i.e., independent of time, population size, or genotype frequencies.
In addition, we suppose $w_{ij}=w_{ji}$, as is usually the case. Then the frequency of
$\A_i\A_j$ genotypes among adults that have survived selection is 
$$
	P_{ij}^\ast = \frac{w_{ij}P_{ij}}{\wb} = \frac{w_{ij}p_ip_j}{\wb}\,,
$$
where we have used \eqref{HW}. Here, 
\begin{equation}\label{Wb}
	\wb = \sum_{i,j} w_{ij}P_{ij} = \sum_{i,j} w_{ij} p_i p_j = \sum_i w_ip_i    
\end{equation}
is the \emph{mean fitness} of the population and
\begin{equation}\label{w_i}
	w_i = \sum_j  w_{ij}p_j 
\end{equation}
is the \emph{marginal fitness} of allele $\A_i$. Both are functions of $p=(p_1,\ldots,p_I)^\Tr$.\footnote{Throughout, the superscript $\vphantom{M}^\Tr$ denotes 
vector or matrix transposition.}

Therefore, the frequency of $\A_i$ after selection is 
\begin{equation}
	p_i^\ast = \sum_j P_{ij}^\ast = p_i\frac{w_i}{\wb}\,.
\end{equation}

Because of random mating, the allele frequency $p_i'$ among zygotes of the next generation is also
$p_i^\ast$ \eqref{p_i'=p_i}, so that allele frequencies evolve according to the 
\emph{selection dynamics}
\begin{equation}\label{sel_dyn}
	p_i' = p_i\,\frac{w_i}{\wb}\,, \quad i\in\I \,.
\end{equation}
This recurrence equation preserves the relation
$$
  \sum_i p_i = 1                                               
$$
and describes the evolution of allele frequencies at a single autosomal locus in a diploid population.
We view the selection dynamics \eqref{sel_dyn} as a (discrete) dynamical system on the simplex
\begin{equation}\label{simplex}
   \S_I = \biggl\{p=(p_1,\ldots,p_I)^\Tr\in\Real^I: p_i\ge0 \text{ for every } i\in\I\,,\,
        \sum_i p_i=1 \biggr\}\,.
\end{equation}
Although selection destroys Hardy-Weinberg proportions, random mating re-establishes them. Therefore, \eqref{sel_dyn} is sufficient to study the
evolutionary dynamics. 

The right-hand side of \eqref{sel_dyn} remains unchanged if every $w_{ij}$ is multiplied by the same constant.
This is very useful because it allows to rescale the fitness parameters according to convenience (also their number is reduced by one). Therefore, we will usually
consider relative fitnesses and not absolute fitnesses.

Fitnesses are said to be \emph{multiplicative} if constants $v_i$ exist such that
\begin{equation}\label{mult_fit}
	w_{ij} = v_i v_j
\end{equation}
for every $i,j$. Then $w_i=v_i\bar v$, where $\bar v=\sum_i v_ip_i$, and $\wb=\bar v^2$. Therefore, 
\eqref{sel_dyn} simplifies to
\begin{equation}\label{sel_dyn_hap}
	p_i' = p_i\,\frac{v_i}{\bar v}\,, \quad i\in\I \,,
\end{equation}
which can be solved explicitly because it is equivalent to the linear system $x_i'=v_i x_i$.
It is easy to show that \eqref{sel_dyn_hap} also describes the dynamics of a haploid population if the fitness $v_i$ is assigned to allele $\A_i$.

Fitnesses are said to be \emph{additive} if constants $v_i$ exist such that
\begin{equation}\label{add_fit}
	w_{ij} = v_i+v_j
\end{equation}
for every $i,j$. Then $w_i=v_i+\bar v$, where $\bar v=\sum_i v_ip_i$, and $\wb=2\bar v$. Although this assumption is important (it means absence of
dominance; see Sect.\ \ref{sec:2alleles_dom}), it does not yield an explicit solution of the selection dynamics.

\begin{example}\rm Selection is very efficient. We assume \eqref{mult_fit}. Then the solution of \eqref{sel_dyn_hap} is
\begin{equation}\label{sol_sel_hap}
	p_i(t) = \frac{p_i(0)v_i^t}{\sum_j p_j(0)v_j^t}\,.
\end{equation}
Suppose that there are only two alleles, $\A_1$ and $\A_2$. If $\A_1$ is the wild type and $\A_2$ is a new
beneficial mutant, we may set (without loss of generality!) $v_1=1$ and $v_2=1+s$. Then we obtain from \eqref{sol_sel_hap}:
\begin{equation}
	\frac{p_2(t)}{p_1(t)} = \frac{p_2(0)}{p_1(0)} \left(\frac{v_2}{v_1}\right)^t
		 = \frac{p_2(0)}{p_1(0)}(1+s)^t\,.
\end{equation}
Thus, $\A_2$ increases exponentially relative to $\A_1$. 

For instance, if $s=0.5$, then after 10 generations the frequency of $\A_2$ has increased by a factor of 
$(1+s)^t=1.5^{10}\approx57.7$ relative to $\A_2$. If $s=0.05$ and $t=100$, this factor is $(1+s)^t=1.05^{100}\approx131.5$. 
Therefore, slight fitness differences may have a big long-term effect, in particular, since 100 generations are short on an evolutionary time scale.
\end{example}

An important property of \eqref{sel_dyn} is that \emph{mean fitness is nondecreasing along trajectories (solutions)}, i.e.,
\begin{equation}\label{Wbar_increase}
   \wb' = \wb(p') \ge \wb(p) = \wb\,,
\end{equation}
and equality holds if and only if $p$ is an equilibrium.\footnote{$p$ is called an equilibrium, or fixed point,
of the recurrence equation $p'=f(p)$ if $f(p)=p$. We use the term equilibrium point to emphasize that we consider
an equilibrium that is a single point. The term equilibrium may also refer to a (connected) manifold of 
equilibrium points.} 

A particularly elegant proof was provided by Kingman (1961). 
As noted by Nagylaki (1977), \eqref{Wbar_increase} follows immediately from an inequality of Baum and Eagon (1967)
by noting that \eqref{sel_dyn} can be written as
\begin{equation*}
	p_i' = p_i\pder{\wb}{p_i}\biggl/\sum_j p_j \pder{\wb}{p_j}
\end{equation*}
because $\partial \wb/\partial p_i = 2w_i$.

The statement \eqref{Wbar_increase} is closely related to \emph{Fisher's Fundamental Theorem of Natural Selection}, which Fisher (1930) formulated as follows:
\begin{center}\small
``The rate of increase in fitness of any organism at any time is equal \linebreak
 to its genetic variance in fitness at that time.''
\end{center} 
For recent discussion, see Ewens (2011) and B\"urger (2011).

In mathematical terms, \eqref{Wbar_increase} shows that
$\wb$ is a Lyapunov function. This has a number of important consequences. For instance, complex dynamical
behavior such as limit cycles or chaos can be excluded. All trajectories approach the set of points
$p\in\S_I$ that are maxima of $\wb$. This is a subset of the set of equilibria. From \eqref{sel_dyn}
it is obvious that the equilibria are precisely the solutions of
\begin{equation}
	p_i(w_i-\wb) = 0 \quad\text{for every $i\in\I$}\,.
\end{equation}
We call an equilibrium internal, or fully polymorphic, if $p_i>0$ for every $i$ (all alleles are present). 
The $I$ equilibria defined by $p_i=1$ are called monomorphic because only one allele is present.

The following result summarizes a number of important properties of the selection dynamics. Proofs and
references to the original literature may be found in B\"urger (2000, Chap.\ I.9); see also Lyubich (1992, Chap.\ 9).
\begin{theorem}{}\label{theorem:sel_dyn}

1. If an isolated internal equilibrium exists, then it is uniquely determined.

2. $\hat p$ is an equilibrium if and only if $\hat p$ is a critical point of the restriction of mean 
			fitness $\wb(p)$ to the minimal subsimplex of $S_I$ that contains the positive components of $\hat p$.

3. If the number of equilibria is finite, then it is bounded above by $2^I-1$.

4. An internal equilibrium is asymptotically stable if and only if it is an isolated local maximum of 
			$\wb$. Moreover, it is isolated if and only if it is hyperbolic (i.e., the Jacobian has 
			no eigenvalues of modulus 1).
			
5. An equilibrium point is stable if and only if it is a local, not necessarily isolated, maximum of $\wb$.

6. If an asymptotically stable internal equilibrium exists, then {\rm every} orbit starting in the 
			interior of $S_I$ converges to that equilibrium.

7. If an internal equilibrium exists, it is stable if and only if, counting multiplicities, the fitness matrix
		 ${\mathsf W}=(w_{ij})$ has exactly one positive eigenvalue.

8. If the matrix $\mathsf W$ has $i$ positive eigenvalues, at least $(i-1)$ alleles will be absent at a 
			stable equilibrium.

9. Every orbit converges to one of the equilibrium points $($even if stable manifolds 
			of equilibria exist$)$.
\end{theorem}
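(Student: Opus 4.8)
The plan is to upgrade the Lyapunov information already in hand into convergence of each orbit to a \emph{single} equilibrium point. Write the selection map as $T\colon\S_I\to\S_I$, $T(p)=p'$ with $p'$ given by \eqref{sel_dyn}; since $\wb>0$ on $\S_I$ for nontrivial nonnegative fitnesses, $T$ is real-analytic on a neighborhood of $\S_I$. Let $E=\{p\in\S_I: p_i(w_i-\wb)=0\ \text{for all }i\}$ be the equilibrium set. First I would record the consequences of the Lyapunov property \eqref{Wbar_increase}: along any orbit $p_n=T^n(p_0)$ the sequence $\wb(p_n)$ is nondecreasing and bounded above by $\max_{ij}w_{ij}$, hence converges to some $c$, so $\wb(p_{n+1})-\wb(p_n)\to0$. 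Since $F(p):=\wb(T(p))-\wb(p)$ is continuous, nonnegative, and vanishes exactly on $E$, compactness of $\S_I$ forces $\operatorname{dist}(p_n,E)\to0$; thus every $\omega$-limit point is an equilibrium and $\wb\equiv c$ on the $\omega$-limit set $\Om:=\omega(p_0)$. This is LaSalle's principle, and it already yields convergence to the \emph{set} of equilibria.

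Second, I would establish the \emph{asymptotic regularity} $\norm{p_{n+1}-p_n}\to0$. Because $T-\mathrm{id}$ is continuous and vanishes on $E$, and $\operatorname{dist}(p_n,E)\to0$, uniform continuity on the compact simplex gives $\norm{T(p_n)-p_n}\to0$. A standard fact then applies: a bounded sequence whose consecutive increments tend to zero has a compact and \emph{connected} set of limit points. Hence $\Om$ is a nonempty, compact, connected subset of $E$ contained in the single level set $\{\wb=c\}$. If the equilibria happen to be isolated (in particular in the hyperbolic setting of part~4), then a connected subset of a discrete set is a single point and we are done.

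The remaining---and main---difficulty is exactly the situation flagged by the parenthetical remark: $E$ may contain positive-dimensional components (manifolds of equilibria), and \emph{a priori} the orbit could drift along such a component, so that $\Om$ is a nondegenerate continuum rather than a point. This cannot be excluded by the Lyapunov structure alone and requires the finer real-analytic geometry of $T$. The relevant structural fact is that the continuous-time selection flow $\dot p_i=p_i(w_i-\wb)$ is the gradient flow of $\tfrac12\wb$ in the Shahshahani metric $g_p(u,v)=\sum_i u_iv_i/p_i$; the discrete map $T$ shares its equilibria and the strict Lyapunov function $\wb$ and is gradient-like in the sense of Losert and Akin (1983). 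One can then either invoke their convergence theorem for the discrete selection dynamics directly, or reprove it by applying a {\L}ojasiewicz gradient inequality for the polynomial (hence real-analytic) potential $\wb$ near $\Om$: the inequality yields summability of the increments, $\sum_n\norm{p_{n+1}-p_n}<\infty$, so that $\{p_n\}$ is a Cauchy sequence and converges to a single equilibrium. I expect this to be the crux of the argument, since everything preceding it is soft Lyapunov/LaSalle theory, whereas excluding drift along an equilibrium manifold genuinely uses analyticity.
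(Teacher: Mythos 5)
The paper itself offers no proof of Theorem \ref{theorem:sel_dyn}: it is stated as a summary with proofs deferred to B\"urger (2000, Chap.\ I.9) and Lyubich (1992, Chap.\ 9). Your proposal addresses only statement 9 of the nine statements. For that part the outline is sound and is essentially the standard route: Lyapunov monotonicity of $\wb$ plus LaSalle gives convergence to the equilibrium set, asymptotic regularity gives a connected $\omega$-limit set inside one level set of $\wb$, and the genuine content --- excluding drift along a positive-dimensional component of the equilibrium set --- is precisely the Losert--Akin (1983) convergence theorem, which you correctly identify as the crux. (Their original proof does not proceed via a {\L}ojasiewicz inequality; the analytic gradient-like argument you sketch is a legitimate modern substitute, but you would need to address the degeneration of the Shahshahani metric on boundary faces, since the $\omega$-limit set of an interior orbit may lie on the boundary of $\S_I$.)

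The gap is that statements 1--8 are untouched, and most of them do not follow from the convergence machinery you set up. Statement 2 is a Lagrange-multiplier computation using $\partial\wb/\partial p_i=2w_i$; statement 1 uses the fact that the internal equilibrium conditions $w_i=\wb$ form a linear system whose solution set meets the interior of $\S_I$ in a convex set, so an isolated solution is unique; statement 3 then follows by applying this to each of the $2^I-1$ faces. Statements 7 and 8 are Kingman's (1961) linear-algebraic analysis of the quadratic form $p^\Tr\mathsf{W}p$ on the simplex and are independent of the dynamics; statements 4--6 combine the Lyapunov property with the second-order behaviour of $\wb$ at a critical point. As a proof of the theorem as stated, the proposal is therefore substantially incomplete, even though the one part it does treat is the part singled out by the parenthetical remark as the delicate one.
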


\subsection{Two alleles and the role of dominance}\label{sec:2alleles_dom}
For the purpose of illustration, we work out the special case of two alleles. 
We write $p$ and $1-p$ instead of $p_1$ and $p_2$. Further, we use relative fitnesses and assume 
\begin{equation}
	w_{11}=1\,,\; w_{12}=1-hs\,,\;  w_{22}=1-s\,, 
\end{equation}
where $s$ is called the \emph{selection coefficient} and
$h$ describes the degree of dominance. We assume $s>0$.

The allele $\A_1$ is called \emph{dominant} if $h=0$, partially dominant if $0<h<\tfrac12$, 
\emph{recessive} if $h=1$, and partially recessive if $\tfrac12<h<1$. \emph{No dominance} refers to $h=\tfrac12$.
Absence of dominance is equivalent to additive fitnesses \eqref{add_fit}. 
If $h<0$, there is \emph{overdominance} or \emph{heterozygote advantage}.
If $h>1$, there is \emph{underdominance} or \emph{heterozygote inferiority}.

\begin{figure}
\begin{center}
	\includegraphics[width=8.0cm]{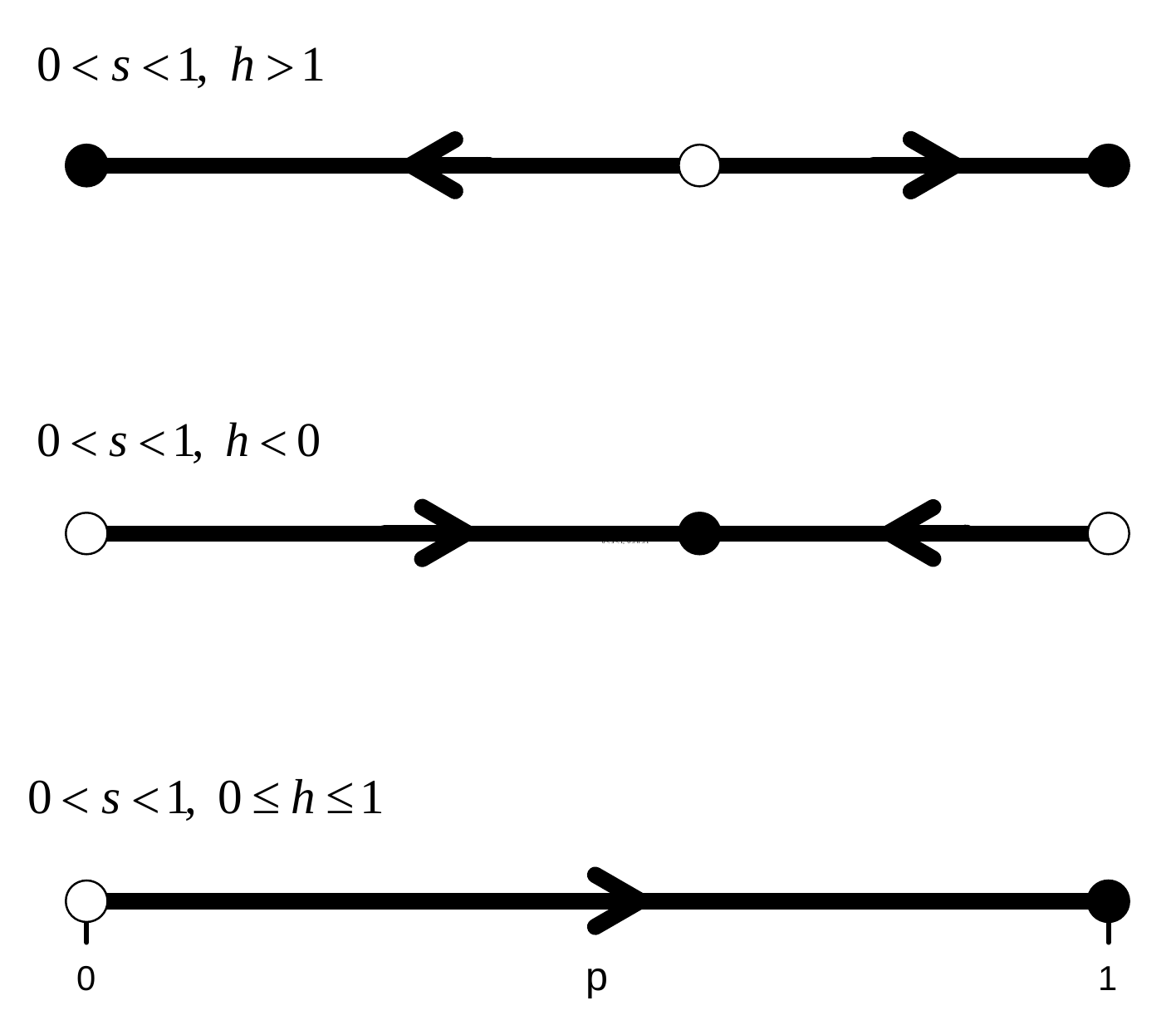}
	\caption{\small Convergence patterns for selection with two alleles.}\label{fig_Conv}
\end{center}
\end{figure}

From \eqref{w_i}, the marginal fitnesses of the two alleles are
$$
	w_1 = 1-hs+hsp \quad\text{and}\quad w_2 = 1-s+s(1-h)p\,         
$$
and, from \eqref{Wb}, the mean fitness is
$$
	\wb = 1-s + 2s(1-h)p - s(1-2h)p^2\,.                          
$$
It is easily verified that the allele-frequency change from one
generation to the next can be written as
\begin{subequations}\label{sel_dyn_2}
\begin{align}
	\De p = p'-p &=\frac{p(1-p)}{2\wb}\,\der{\wb}{p}  \\
  				&= \frac{p(1-p)s}{\wb}\,[1-h-(1-2h)p]\,.             
\end{align}
\end{subequations}

There exists an internal equilibrium if and only if $h<0$ (overdominance) or $h>1$ (underdominance). It
is given by
\begin{equation}
	\hat p = \frac{1-h}{1-2h}\,.
\end{equation}

If dominance is intermediate, i.e., if $0\le h \le 1$, then \eqref{sel_dyn_2} shows that $\De p>0$ if $0<p<1$,
hence $p=1$ is globally asymptotically stable.

If $h<0$ or $h>1$, we write \eqref{sel_dyn_2} in the form
\begin{equation}
	\De p = \frac{sp(1-p)}{\wb}(1-2h)(\hat p -p)\,.
\end{equation}
In the case of overdominance ($h<0$), we have $0<sp(1-p)(1-2h)/\wb<1$ if $0<p<1$, hence
$\hat p$ is globally asymptotically stable and convergence is monotonic. If $h>1$,
then the monomorphic equilibria $p=0$ and $p=1$ each are asymptotically stable and $\hat p$ is unstable.

The three possible convergence patterns are shown in Figure \ref{fig_Conv}. 
Figure \ref{fig_Dominance} demonstrates that the degree of (intermediate) dominance strongly affects
the rate of spread of an advantageous allele.

\begin{figure}
\begin{center}
	\includegraphics[width=12.0cm]{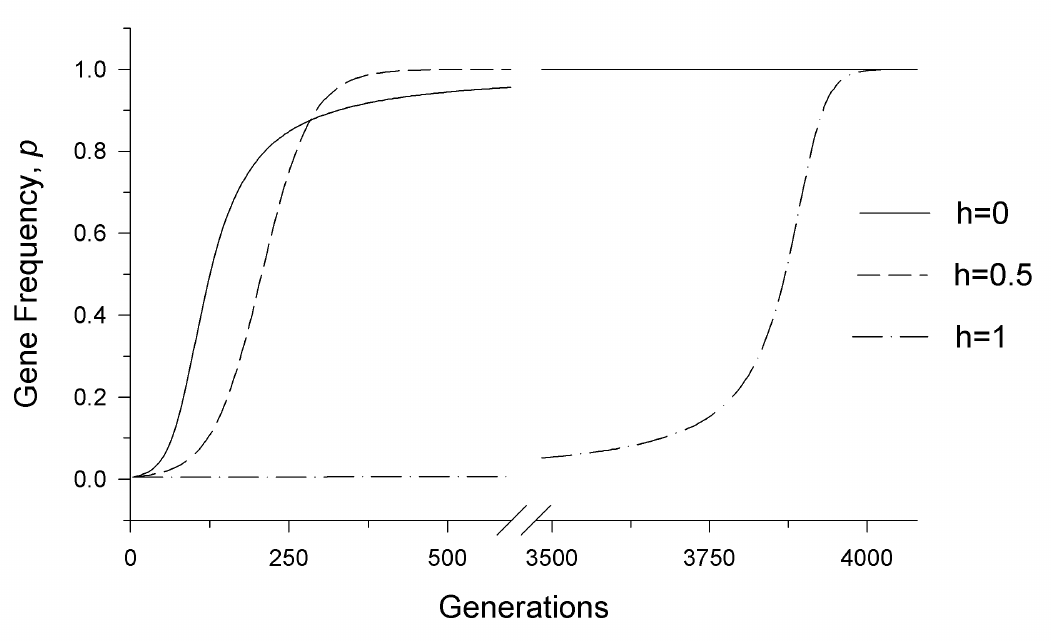}
	\caption{\small Selection of a dominant ($h=0$, solid line), intermediate 
($h=1/2$, dashed), and recessive ($h=1$, dash-dotted) allele. The initial
frequency is $p_0=0.005$ and the selective advantage is $s=0.05$. If the advantageous allele is recessive,
its initial rate of increase is vanishingly small because the frequency $p^2$ of homozygotes is extremely low when $p$ is small. However,
only homozygotes are `visible' to selection.}\label{fig_Dominance}
\end{center}
\end{figure}

\subsection{The continuous-time selection model}\label{sect_cont_time_sel}
Most higher animal species have overlapping generations because birth and death occur
continuously in time. This, however, may lead to substantial
complications if one wishes to derive a continuous-time model from
biological principles. By contrast, discrete-time models can frequently be derived straightforwardly from
simple biological assumptions. If evolutionary forces are weak, a continuous-time version can usually 
be obtained as an approximation to the discrete-time model.

A rigorous derivation of the differential equations describing 
gene-frequency change under selection in a diploid population with overlapping
generations is a formidable task and requires a complex model involving age structure (see Nagylaki 1992,
Chap.\ 4.10). Here, we simply state the system of differential equations and justify it in an alternative way.

In a continuous-time model, the (\emph{Malthusian}) fitness $m_{ij}$ of a genotype $\A_i\A_j$ is defined as its birth rate minus its death rate. 
Then the marginal fitness of allele $\A_i$ is
$$
	m_i = \sum_j m_{ij}p_j\;,                                       
$$
the mean fitness of the population is
$$
	\mb = \sum_i m_ip_i = \sum_{i,j} m_{ij}p_ip_j\;,
$$
and the dynamics of allele frequencies becomes
\begin{equation}\label{sel_cont}
	\dot p_i = \der{p_i}{t} = p_i(m_i-\mb)\,, \quad i\in\I\,. \footnote{Throughout, we use a dot, $\dot{\phantom{x}}$, to indicate derivatives with respect to time.}
\end{equation}
This is the analogue of the discrete-time selection dynamics \eqref{sel_dyn}.
Its state space is again the simplex $S_I$.
The equilibria are obtained from the condition $\dot p_i=0$ for every $i$. We note that \eqref{sel_cont} is a so-called
\emph{replicator equation} (see Hofbauer and Sigmund 1998).

If we set
\begin{equation}\label{w_m}
	w_{ij} = 1 + s m_{ij} \quad\text{for every } i,j\in\I\,,
\end{equation}
where $s>0$ is (sufficiently) small, the difference equation \eqref{sel_dyn} and the differential 
equation \eqref{sel_cont} have the same equilibria. This is obvious upon noting that \eqref{w_m} implies $w_i=1+s m_i$ and $\wb=1+s\mb$.

Following Nagylaki (1992, p.\ 99), we approximate the discrete model \eqref{sel_dyn} by
the continuous model \eqref{sel_cont} under the assumption of weak selection, i.e., small $s$ in \eqref{w_m}.
We rescale time according to $t=\lfloor \ta/s \rfloor$, 
where $\lfloor\,\rfloor$ denotes the closest smaller integer. Then $s$ may be 
interpreted as generation length and, for $p_i(t)$ satisfying the
difference equation \eqref{sel_dyn}, we write $\pi_i(\ta)=p_i(t)$. Then we obtain formally
$$
	\der{\pi_i}{\ta} = \lim_{s\downarrow0} \frac{1}{s}\left[\pi_i(\ta+s)-\pi_i(\ta)\right]
                   = \lim_{s\downarrow0}\frac{1}{s}\left[p_i(t+1)-p_i(t)\right]\;.
$$
From \eqref{sel_dyn} and \eqref{w_m}, we obtain
$p_i(t+1)-p_i(t) = s p_i(t)(m_i-\mb)/(1+s\mb)$. Therefore, $\dot \pi_i = \pi_i(m_i-\mb)$ and
$\De p_i \approx s\dot \pi_i = sp_i(m_i-\mb)$. We note that \eqref{sel_dyn} is essentially
the Euler scheme for \eqref{sel_cont}.

The exact continuous-time model reduces to \eqref{sel_cont} only if the mathematically inconsistent assumption
is imposed that Hardy-Weinberg proportions apply for every $t$ which is generally not true. Under weak
selection, however, deviations from Hardy-Weinberg decay to order $O(s)$ after a short period of time (Nagylaki 1992).

One of the advantages of models in continuous time is that they lead to differential equations, and usually 
these are easier to analyze because the formalism of calculus is available. An example for this
is that, in continuous time, \eqref{Wbar_increase} simplifies to
\begin{equation}
	\dot{\mb} \ge0 \,,
\end{equation}
which is \emph{much} easier to prove than \eqref{Wbar_increase}:
\begin{equation*}
	\dot{\mb} = 2\sum_{i,j}m_{ij}p_j \dot p_i = 2\sum_i m_i \dot p_i = 2\sum_i (m_i^2-\mb^2) p_i = 2\sum_i (m_i-\mb)^2 p_i.
\end{equation*}

\begin{remark}\label{rem:Svir_Shash}\rm
The allele-frequency dynamics \eqref{sel_cont} can be written as a (generalized) gradient system (Svirezhev 1972, Shahshahani 1979):
\begin{equation}\label{Svir_Shash}
   \dot p = G_p \,\grad\bar m
			= G_p \,\left(\frac{\partial\bar m}{\partial p_1},\ldots,
					\frac{\partial\bar m}{\partial p_n}\right)^\Tr\,.
\end{equation}
Here, $G_p =(g^{ij})$ is a quadratic (covariance) matrix, where
\begin{equation}
   g^{ij} = \Cov(f_i,f_j) = \tfrac12 p_i(\de_{ij}-p_j)
\end{equation}
and
\begin{equation}
   f_i(A_kA_l) = \begin{cases} 1          &\quad\text{if } k=l=i\,, \\
			     				\tfrac12   &\quad\text{if $k\ne l$ and $k=i$ or $l=i$}\,, \\
			    				0          &\quad\text{otherwise}\,.
				\end{cases}
\end{equation}

An equivalent formulation is the following covariance form:
\begin{equation}
   \dot p_i = \Cov(f_i,m),
\end{equation}
where $m$ is interpreted as the  random variable $m(A_kA_l)=m_{kl}$ (Li 1967).
It holds under much more general circumstances (Price 1970, Lessard 1997).
\end{remark}

\section{The general migration-selection model}\label{sec:gen_mig-sel}
We assume a population of diploid organisms with discrete, nonoverlapping generations. This population
is subdivided into $\Ga$ demes (niches). Viability selection acts within each deme and is followed by 
adult migration (dispersal). After migration random mating occurs within each deme. We assume that
the genotype frequencies are the same in both sexes (e.g., because the population is monoecious). 
We also assume that, in every deme, the population is so large that gene and genotype frequencies may be 
treated as deterministic, i.e., we ignore random genetic drift. 

\subsection{The recurrence equations}
As before, we consider a single locus with
$I$ alleles $\A_i$ ($i\in\I$). Throughout, we use letters $i,j$ to denote alleles, and greek letters $\a,\be$
to denote demes. We write $\G=\{1,\ldots,\Ga\}$ for the set of all demes. The presentation below is based on
Chapter 6.2 of Nagylaki (1992).

We denote the frequency of allele $\A_i$ in deme $\a$ by $\pia$. Therefore, we have
\begin{equation}
	\sum_i \pia = 1
\end{equation}
for every $\a\in\G$. Because selection may vary among demes, the fitness (viability) $\wija$
of an $\A_i\A_j$ individual in deme $\a$ may depend on $\a$. The marginal fitness of allele $\A_i$ in deme $\a$
and the mean fitness of the population in deme $\a$ are
\begin{equation}\label{marg_mean_fit}
	\wia = \sum_j \wija\pja \quad\text{and}\quad \wba = \sum_{i,j}\wija\pia\pja \,,
\end{equation}
respectively.

Next, we describe migration. Let $\tmab$ denote the probability that an individual in deme $\a$
migrates to deme $\be$, and let $\mab$ denote the probability that an (adult) individual in deme $\a$ immigrated
from deme $\be$. The $\Ga\times\Ga$ matrices
\begin{equation}
	\tilde M =(\tmab) \quad\text{and}\quad M=(\mab)
\end{equation}
are called the \emph{forward} and \emph{backward migration matrices}, respectively. 
Both matrices are \emph{stochastic}, i.e., they are nonnegative and satisfy
\begin{equation}\label{stochasticity}
	\sum_\be \tmab = 1 \quad\text{and}\quad \sum_\be \mab = 1 \quad\text{for every } \a\,.
\end{equation}

Given the backward migration matrix and the fact that random mating within each demes does not change the
allele frequencies, the allele frequencies in the next generation are 
\begin{subequations}\label{mig_sel_dyn}
\begin{equation}\label{mig_dyn}
	\pia' = \sum_\be \mab \pib^\ast\,,
\end{equation}
where
\begin{equation}
	\pia^\ast = \pia \frac{\wia}{\wba}
\end{equation}
\end{subequations}
describes the change due to selection alone; cf.\ \eqref{sel_dyn}. These recurrence equations define
a dynamical system on the $\Ga$-fold Cartesian product $\S_I^\Ga$ of the simplex $\S_I$.
The investigation of this dynamical system, along with biological motivation and interpretation of results, is one of the main purposes of
this survey.

The difference equations \eqref{mig_sel_dyn} require that the backward migration rates are known. 
In the following, we derive their relation to the forward migration rates and discuss conditions when selection or migration do not change
the deme proportions. 

\subsection{The relation between forward and backward migration rates}\label{sec:forw_backw}
To derive this relation, we describe the life cycle explicitly. It starts with zygotes on which selection
acts (possibly including population regulation). After selection adults migrate and usually there is population
regulation after migration (for instance because the number of nesting places is limited). By assumption, population regulation does
not change genotype frequencies. Finally, there is random mating and reproduction, which neither changes gene frequencies (Section \ref{sec:HW}) nor deme proportions. 
The respective proportions of
zygotes, pre-migration adults, post-migration adults, and post-regulation adults in deme $\a$ are
$c_\a$, $c_\a^\ast$, $c_\a^{\ast\ast}$, and $c_\a'$: 

\begin{center}
\setlength{\unitlength}{1cm}
\begin{picture}(14,1)
\thicklines
\put(0,0){Zygote}
\put(1.5,0.1){\vector(1,0){1.5}}
\put(3.2,0){Adult}
\put(4.5,0.1){\vector(1,0){1.5}}
\put(6.2,0){Adult}
\put(7.5,0.1){\vector(1,0){1.5}}
\put(9.2,0){Adult}
\put(10.5,0.1){\vector(1,0){1.5}}
\put(12.3,0){Zygote}
\put(1.65,-0.3){\scriptsize selection}
\put(4.65,-0.3){\scriptsize migration}
\put(7.65,-0.3){\scriptsize regulation}
\put(10.45,-0.3){\scriptsize reproduction}
\put(0,-.8){$c_\a \,,\, \pia$}
\put(3.2,-.8){$c_\a^\ast \,,\, \pia^\ast$}
\put(6.2,-.8){$c_\a^{\ast\ast},\, \pia'$}
\put(9.2,-.8){$c_\a' \,,\, \pia'$}
\put(12.3,-.8){$c_\a' \,,\, \pia'$}
\end{picture}
\end{center}
\vspace{1cm}

Because no individuals are lost during migration, the following must hold:
\begin{subequations}
\begin{align}
	c_\be^{\ast\ast} &= \sum_\a c_\a^\ast \tmab\,, \label{c_a^{astast}}\\
	c_\a^\ast &= \sum_\be c_\be^{\ast\ast} \mba\,.
\end{align}
\end{subequations}
The (joint) probability that an adult is in deme $\a$ and migrates to deme $\be$ can be expressed in terms
of the forward and backward migration rates as follows:
\begin{equation}\label{mfor_mback1}
	c_\a^\ast \tmab = c_\be^{\ast\ast}\mba\,.
\end{equation}
Inserting \eqref{c_a^{astast}} into \eqref{mfor_mback1}, we obtain the desired connection between the
forward and the backward migration rates:
\begin{equation}\label{mfor_mback}
	m_{\be\a} = \frac{c_\a^\ast\tmab}{\sum_\ga c_\ga^\ast \tilde m_{\ga\be}}\,.
\end{equation}
Therefore, if $\tilde M$ is given, an \emph{ansatz} for the vector 
$c^\ast=(c_1^\ast,\ldots,c_\Ga^\ast)^\Tr$ in terms
of $c=(c_1,\ldots,c_\Ga)^\Tr$ is needed to compute $M$ (as well as a hypothesis for the variation, if any, of $c$).

Two frequently used assumptions are the following.

1) \emph{Soft selection}. This assumes that the fraction of adults in every deme is fixed, i.e.,
\begin{equation}\label{soft}
	c_\a^\ast = c_\a \quad\text{for every } \a\in\G\,.
\end{equation}
This may be a good approximation if the population is regulated within each deme, e.g., because individuals
compete for resources locally (Dempster 1955). 

2) \emph{Hard selection}. Following Dempster (1955), the fraction of adults will be proportional to mean
fitness in the deme if the total population size is regulated. This has been called hard selection and is
defined by
\begin{equation}\label{hard}
	c_\a^\ast = c_\a \wba/\wb\,,
\end{equation}
where
\begin{equation}
	\wb = \sum_\a c_\a \wba
\end{equation}
is the mean fitness of the total population.
 
Essentially, these two assumptions are at the extremes of a broad spectrum of possibilities. Soft selection
will apply to plants; for animals many schemes are possible.

Under soft selection, \eqref{mfor_mback} becomes
\begin{equation}
	\mba = \frac{c_\a\tmab}{\sum_\ga c_\ga \tilde m_{\ga\be}}\,.
\end{equation}
As a consequence, if $c$ is constant ($c'=c$), $M$ is constant if and only if $\tilde M$ is
constant. If there is no population regulation after migration, then $c$ will generally depend on time
because \eqref{c_a^{astast}} yields $c' = c^{\ast\ast} = \tilde M^\Tr c$.
Therefore, the assumption of constant deme proportions, $c'=c$,
will usually require that population control occurs after migration.

A migration pattern that does not change deme proportions ($c_\a^{\ast\ast}=c_\a^\ast$) is called
\emph{conservative}. Under this assumption, \eqref{mfor_mback1}
yields 
\begin{equation}
	c_\a^\ast \tmab = c_\be^\ast \mba
\end{equation}
 and, by stochasticity of $M$ and $\tilde M$, we obtain
\begin{equation}
	c_\be^\ast = \sum_\a c_\a^\ast \tmab \quad\text{and}\quad c_\a^\ast = \sum_\be c_\be^\ast \mba \,.
\end{equation}
If there is soft selection and the deme sizes are equal ($c_\a^\ast=c_\a\equiv\text{constant}$), 
then $\mab=\tmba$.

\begin{remark}\rm
Conservative migration has two interesting special cases.

1) Dispersal is called \emph{reciprocal} if the \emph{number} of individuals that migrate from deme $\a$
to deme $\be$ equals the number that migrate from $\be$ to $\a$: 
\begin{equation}\label{reciprocal}
	c_\a^\ast \tmab = c_\be^\ast \tmba\,.
\end{equation}
If this holds for all pairs of demes, then \eqref{c_a^{astast}} and \eqref{stochasticity} 
immediately yield $c_\be^{\ast\ast}=c_\be^\ast$. From \eqref{mfor_mback1}, we infer $\mab=\tmab$, i.e., the forward
and backward migration matrices are identical. 

2) A migration scheme is called \emph{doubly stochastic} if
\begin{equation}
	\sum_\a \tmab = 1 \quad\text{for every } \a\,.
\end{equation}
If demes are of equal size, then \eqref{c_a^{astast}} shows that $c_\a^{\ast\ast}=c_\a^\ast$. Hence,
with equal deme sizes a doubly stochastic migration pattern is conservative. Under soft selection,
deme sizes remain constant without further population regulation. Hence, $\mab=\tmba$ and $M$ is also 
doubly stochastic.

Doubly stochastic migration patterns arise naturally if there is a periodicity, e.g., because the demes
are arranged in a circular way. If we posit equal deme sizes and homogeneous migration, i.e.,
$\tmab = \tilde m_{\be-\a}$ so that migration rates depend only on distance, then the backward migration 
pattern is also homogeneous because $\mab = \tmba = \tilde m_{\a-\be}$ and, hence, depends only on $\be-\a$. 
If migration is symmetric, $\tmab=\tmba$, and the deme sizes are equal, then dispersion is both reciprocal
and doubly stochastic.
\end{remark}

\subsection{Important special migration patterns}\label{sec:special mig patterns I}
We introduce three migration patterns that play an important role in the population genetics and ecological literature.

\begin{example}\label{Deakin_model}\rm
\emph{Random outbreeding and site homing, {\rm or the} Deakin {\rm(1966)} model.} This model assumes that a proportion $\mu\in[0,1]$ of individuals in each deme leaves their deme and is dispersed randomly across all demes. Thus, they perform outbreeding whereas a proportion $1-\mu$ remains at their home site.
If $c_\a^{\ast\ast}$ is the proportion (of the total population) of post-migration adults in deme $\a$, then the forward migration rates are defined by 
$\tmab = \mu c_\be^{\ast\ast}$ if $\a\ne\be$, and $\tilde m_{\a\a} = 1 - \mu + \mu c_\a^{\ast\ast}$.
If $\mu=0$, migration is absent; if $\mu=1$, the \emph{Levene} model is obtained (see below). Because this migration pattern is reciprocal, $M=\tilde M$ holds.

To prove that migration in the Deakin model satisfies \eqref{reciprocal}, we employ \eqref{mfor_mback1} and find
\begin{equation}\label{m_deak1}
	\mba = \frac{c_\a^\ast}{c_\be^{\ast\ast}}\tmab 
		= \begin{cases}
			\mu c_\a^\ast &\text{ if } \a\ne\be\\
			\displaystyle
			\frac{c_\be^\ast}{c_\be^{\ast\ast}}(1-\mu) + \mu c_\be^\ast &\text{ if } \a=\be\,.
			\end{cases}
\end{equation}
From this we deduce
\begin{equation}
	1 = \sum_\a \mba 
	= \sum_{\a\ne\be} \mu c_\a^\ast + \frac{c_\be^\ast}{c_\be^{\ast\ast}}(1-\mu) + \mu c_\be^\ast
	= \mu\cdot1 + (1-\mu)\frac{c_\be^\ast}{c_\be^{\ast\ast}}\,,
\end{equation}
which immediately yields $c_\be^{\ast\ast}={c_\be^\ast}$ for every $\be$ provided $\mu<1$. Therefore, we obtain 
$\tmba = \mu c_\a^{\ast}$ if $\a\ne\be$ and 
$c_\be^\ast\tmba=c_\be^\ast\mu c_\a^{\ast}= c_\a^{\ast}\tmab$, i.e., reciprocity.

We will always assume soft selection in the Deakin model, i.e., $c_\a^{\ast}=c_\a$. Thus, for a given (probability) vector $c=(c_1,\ldots,c_\Ga)^\Tr$, the single parameter $\mu$ is sufficient to describe the migration pattern:
\begin{equation}\label{Deakin}
	\mba = \tmba = \begin{cases}
				\mu c_\a &\text{ if } \a\ne\be\\
				1-\mu + \mu c_\be &\text{ if } \a=\be\,.
				\end{cases}
\end{equation}
\end{example}

\begin{example}\label{Levene_model}\rm
\emph{The Levene {\rm(1953)} model} assumes soft selection and
\begin{equation}\label{def_Levene}
	m_{\a\be} = c_\be\,.
\end{equation}
Thus, dispersing individuals are distributed randomly across all demes in proportion to the deme sizes. In particular, migration is independent of the deme of origin
and $M=\tilde M$.

Alternatively, the Levene model could be defined by $\tilde m_{\a\be} = \mu_\be$, where $\mu_\be>0$ are constants satisfying $\sum_\be\mu_\be=1$. Then 
\eqref{mfor_mback} yields $m_{\a\be} = c_\be^\ast$ for every $\a,\be\in\G$. With soft selection, we get $m_{\a\be} = c_\be$. This is all we need if demes are regulated to constant proportions. But the proportions remain constant even without regulation, for \eqref{c_a^{astast}} gives
$c'_\a=c_\a^{\ast\ast}=\mu_\a$. This yields the usual interpretation $\mu_\a=c_\a$ (Nagylaki 1992, Sect.\ 6.3).
\end{example}

\begin{example}\label{Stepping-stone model}\rm
In the \emph{linear stepping-stone model} the demes are arranged in a linear order and individuals can reach only one of the neighboring
demes. It is an extreme case among migration patterns exhibiting \emph{isolation by distance},
i.e., patterns in which migration diminishes with the distance from the parental deme. In the classical homogeneous version, the forward migration matrix is 
\begin{equation}\label{step_stone_class}
	\tilde M = \begin{pmatrix}
			1-m & m & 0 & \ldots & 0 \\
			m & 1-2m & m & & 0 \\
			\vdots & & \ddots & &\vdots \\
			0 & &  m& 1-2m & m\\
			0 & \ldots & 0 & m & 1-m
		\end{pmatrix}\,.
\end{equation}
We leave it to the reader to derive the backward migration matrix using \eqref{mfor_mback}. It is a special case of the following
general tridiagonal form:
\begin{equation}\label{tridi}
	M = \begin{pmatrix} 
			n_1 & r_1 & 0 & \ldots & 0 \\
			q_2 & n_2 & r_2 & & 0 \\
			\vdots & & \ddots & &\vdots \\
			0 & &  q_{\Ga-1}& n_{\Ga-1} & r_{\Ga-1}\\
			0 & \ldots & 0 & q_\Ga & n_\Ga
		\end{pmatrix}\,,
\end{equation}
where $n_\a\ge0$ and $q_\a+n_\a+r_\a=1$ for every $\a$, $q_\a>0$ for $\a\ge2$, $r_\a>0$ for $\a\le\Ga-1$, and $q_1=r_\Ga=0$.
This matrix admits variable migration rates between neighboring demes. 

If all deme sizes are equal, the homogeneous matrix \eqref{step_stone_class} satisfies $M=\tilde M$, and 
each deme exchanges a fraction $m$ of the population with each of its neighboring demes. The stepping-stone model
has been used as a starting point to derive the partial differential equations for selection and dispersal in continuous space (Nagylaki 1989).
Also circular and infinite variants have been investigated.
\end{example}

\emph{Juvenile migration} is of importance for many marine organisms and plants, where seeds disperse. It can
be treated in a similar way as adult migration. Also models with both juvenile and adult migration
have been studied. Some authors investigated migration and selection in dioecious populations,
as well as selection on X-linked loci (e.g., Nagylaki 1992, pp.\ 143, 144).

Unless stated otherwise, throughout this survey we assume that the backward migration matrix $M$
is constant, as is the case for soft selection if deme proportions and the forward
migration matrix are constant. Then the recurrence equations \eqref{mig_sel_dyn} provide a self-contained 
description of the migration-selection dynamics. Hence, they are sufficient to study evolution for an arbitrary
number of generations.

\section{Two alleles}
Of central interest is the identification of conditions that guarantee the maintenance of genetic diversity.
Often it is impossible to determine the equilibrium structure in detail because establishing existence and,
even more so, stability or location of polymorphic equilibria is unfeasible. Below we introduce an important concept
that is particularly useful to establish maintenance of genetic variation at diallelic loci.
Throughout this section we consider a single locus with two alleles. The number of demes, $\Ga$, can be
arbitrary. 

\subsection{Protected polymorphism}\label{sec:PP}
There is a \emph{protected polymorphism} (Prout 1968) if, independently of the initial
conditions are, a polymorphic population cannot become monomorphic. Essentially, this requires that if an allele
becomes very rare, its frequency must increase. In general, a protected polymorphism is neither
necessary nor sufficient for the existence of a stable polymorphic equilibrium. For instance, on the one
hand, if there is an internal limit cycle that attracts all solutions, then there is a protected polymorphism.
On the other hand, if there are two internal equilibria, one asymptotically stable, the other unstable, 
then selection may remove one of the alleles if sufficiently rare. A generalization of this concept to
multiple alleles would correspond to the concept of permanence often used in ecological models (e.g.,
Hofbauer and Sigmund 1998).

Because we consider only two alleles, we can simplify the notation. We write $\pa=p_{1,\a}$ for the frequency
of allele $\A_1$ in deme $\a$ (and $1-\pa$ for that of $\A_2$ in deme $\a$). Let $p=(p_1,\ldots,p_\Ga)^\Tr$
denote the vector of allele frequencies.
Instead of using the fitness assignments $w_{11,\a}$, $w_{12,\a}$, and $w_{22,\a}$, it 
will be convenient to scale the fitness of the three genotypes in deme $\a$ as follows
\begin{equation}\label{xa_ya}
	\begin{matrix} \A_1\A_1 \quad& \A_1\A_2 \quad& \A_2\A_2 &\\
	                \xa & 1 & \ya &
	\end{matrix}
\end{equation}
($x_\a,y_\a\ge0$).
This can be achieved by setting $\xa=w_{11,\a}/w_{12,\a}$ and $\ya=w_{22,\a}/w_{12,\a}$, provided $w_{12,\a}>0$.

With these fitness assignments, one obtains
\begin{equation}
	w_{1,\a} = 1-\pa +\xa\pa \quad\text{and}\quad \wba = \xa\pa^2 + 2\pa(1-\pa) + \ya(1-\pa)^2\,,
\end{equation}
and the migration-selection dynamics \eqref{mig_sel_dyn} becomes
\begin{subequations}\label{mig_sel_dyn2}
\begin{align}
	\pa^\ast &= \pa w_{1,\a}/\wba \label{mig_sel_dyn21}\\
	\pa' &= \sum_\be \mab \pb^\ast\,. \label{mig_sel_dyn22}
\end{align}	
\end{subequations}
We consider this as a (discrete) dynamical system on $[0,1]^\Ga$.

We call allele $\A_1$ protected if it cannot be lost. Thus, it has to increase in frequency if rare. In
mathematical terms this means that the monomorphic equilibrium $p=0$ must be unstable. To derive a sufficient
condition for instability of $p=0$, we linearize \eqref{mig_sel_dyn2} at $p=0$. If $\ya>0$ for every $\a$ (which means that
$\A_2\A_2$ is nowhere lethal), a simple calculation shows
that the Jacobian of \eqref{mig_sel_dyn21},
\begin{equation}\label{define_D}
	D = \left(\pder{\pa^\ast}{\pb} \right)\Biggl|_{p=0}\,,
\end{equation}
is a diagonal matrix with (nonzero) entries $d_{\a\a}=\ya^{-1}$. Because \eqref{mig_sel_dyn22} is
linear, the linearization of \eqref{mig_sel_dyn2} is
\begin{equation}\label{define_Q}
	p' = Qp\,, \quad\text{where}\quad Q=MD\,,
\end{equation}
i.e., $q_{\a\be}=\mab/\yb$.

To obtain a simple criterion for protection, we assume that the descendants of individuals in every deme be
able eventually to reach every other deme. Mathematically, the appropriate assumption is that $M$ is
irreducible. Then $Q$ is also irreducible and it is nonnegative. Therefore, the Theorem of Perron and Frobenius (e.g., Seneta 1981) 
implies the existence of a uniquely determined eigenvalue $\la_0>0$ of $Q$ such that $\abs{\la}\le\la_0$ holds
for all eigenvalues of $Q$. In addition, there exists a strictly positive eigenvector pertaining to $\la_0$
which, up to multiplicity, is uniquely determined. As a consequence,
\begin{equation}\label{A1_protected}
	\text{$\A_1$ \emph{is protected if } $\la_0 > 1$ \;and\; $\A_1$ \emph{is not protected if } $\la_0 < 1$ }
\end{equation}
(if $\la_0=1$, then stability cannot be decided upon linearization).
This maximal eigenvalue satisfies
\begin{equation}\label{la_0_est}
	\min_\a \sum_\be q_{\a\be} \le \la_0 \le \max_\a \sum_\be q_{\a\be} \,,
\end{equation}
with equality if and only if all the row sums are the same.

\begin{example}\label{ex:A1not_prot}\rm
Suppose that $\A_2\A_2$ is at least as fit as $\A_1\A_2$ in every deme and more fit in at least one deme, i.e.,
$y_\a\ge1$ for every $\a$ and $y_\be>1$ for some $\be$. Then $q_{\a\be}=\mab/\yb\le\mab$ for every $\be$.
Because $M$ is irreducible, there is no $\be$ such that $\mab=0$ for every $\a$. Therefore, the row sums
$\sum_\be q_{\a\be}=\sum_\be \mab/\yb$ in \eqref{la_0_est} are not all equal to one, and we obtain
\begin{equation}
	\la_0 < \max_\a \sum_\be q_{\a\be} \le \max_\a \sum_\be \mab = 1\,.
\end{equation}
Thus, $\A_1$ is not protected, and this holds independently of the choice of the $\xa$, or $w_{11,\a}$.

It can be shown similarly that $\A_1$ is protected if $\A_1\A_2$ is favored over $\A_2\A_2$ in at least
one deme and is nowhere less fit than $\A_2\A_2$.
\end{example}

One obtains the condition for protection of $\A_2$ if, in \eqref{A1_protected}, $\A_1$ is replaced by $\A_2$
and $\la_0$ is the maximal eigenvalue of the matrix with entries $\mab/\xb$.
Clearly, there is a protected polymorphism if both alleles are protected. 

In the case of complete dominance the eigenvalue condition \eqref{A1_protected} cannot be satisfied. Consider, for instance,
protection of $\A_1$ if $\A_2$ is dominant, i.e., $\ya=1$ for every $\a$. Then $q_{\a\be}=\mab$,
$\sum_\be q_{\a\be}=\sum_\be \mab = 1$, and $\la_0=1$. This case is treated in Section 6.2 of Nagylaki (1992).

\subsection{Two demes}\label{Two diallelic demes}
It will be convenient to set
\begin{equation}\label{xy_fit_diall}
	\xa = 1-r_\a \quad\text{and}\quad \ya=1-s_\a \,,
\end{equation}
where $r_\a\le1$ and $s_\a\le1$ for every $\a\in\{1,2\}$. 
We write the backward migration matrix as
\begin{equation}
	M = \begin{pmatrix} 1-m_1 & m_1 \\ m_2 & 1-m_2 \end{pmatrix}\,,
\end{equation}
where $0<m_\a<1$ for every $\a\in\{1,2\}$. 

Now we derive the condition for protection of $\A_1$. The characteristic polynomial of $Q$ is given by
\begin{equation}\label{charpol}
	\ph(x) = (1-s_1)(1-s_2)x^2-(2-m_1-m_2-s_1-s_2+s_1m_2+s_2m_1)x + 1-m_1-m_2\,.
\end{equation}
It is convex and satisfies
\begin{subequations}\label{charpol_vals}
\begin{align}
		\ph(0) &= 1-m_1-m_2 > 0\,, \quad \ph(1)= s_1s_2(1-\ka)\,, \\
		\ph'(1) &= (1-s_1)(m_2-s_2)+(1-s_1)(m_1-s_1)\,,
\end{align}
\end{subequations}
where
\begin{equation}\label{kappa}
     \ka = \frac{m_1}{s_1} + \frac{m_2}{s_2}\,.
\end{equation}

By Example \ref{ex:A1not_prot}, $\A_1$ is not protected if $\A_1\A_2$ is less fit than
$\A_2\A_2$ in both demes (more generally, if $s_1\le0$, $s_2\le0$, and $s_1+s_2<0$). Of course, $\A_1$ will be protected if 
$\A_1\A_2$ is fitter than $\A_2\A_2$ in both demes (more generally, if $s_1\ge0$, $s_2\ge0$, and $s_1+s_2>0$). Hence, we restrict attention to the
most interesting case when $\A_1\A_2$ is fitter than $\A_2\A_2$ in one deme and less fit in the other, i.e., $s_1s_2<0$. 

The Perron-Frobenius Theorem informs us that $\ph(x)$ has two real roots. We have to determine when the
larger ($\la_0$) satisfies $\la_0>1$. From \eqref{charpol} and \eqref{charpol_vals}, we infer that this is the
case if and only if $\ph(1)<0$ or $\ph(1)>0$ and $\ph'(1)<0$. It is straightforward to show that $\ph(1)>0$ and $\ph'(1)<0$ is
never satisfied if $s_1s_2<0$. Therefore, we conclude that, if $s_1s_2<0$, allele $\A_1$ is protected if
\begin{equation}\label{A1_protected_2demes}
     \ka < 1\,;
\end{equation}
cf.\ Bulmer (1972). It is not protected if $\ka>1$.
Figure \ref{AK_Bio_NL_Fig1} displays the region of protection of $\A_1$ for given $m_1$ and $m_2$.

\begin{figure}
\begin{center}
	\includegraphics[width=8.0cm]{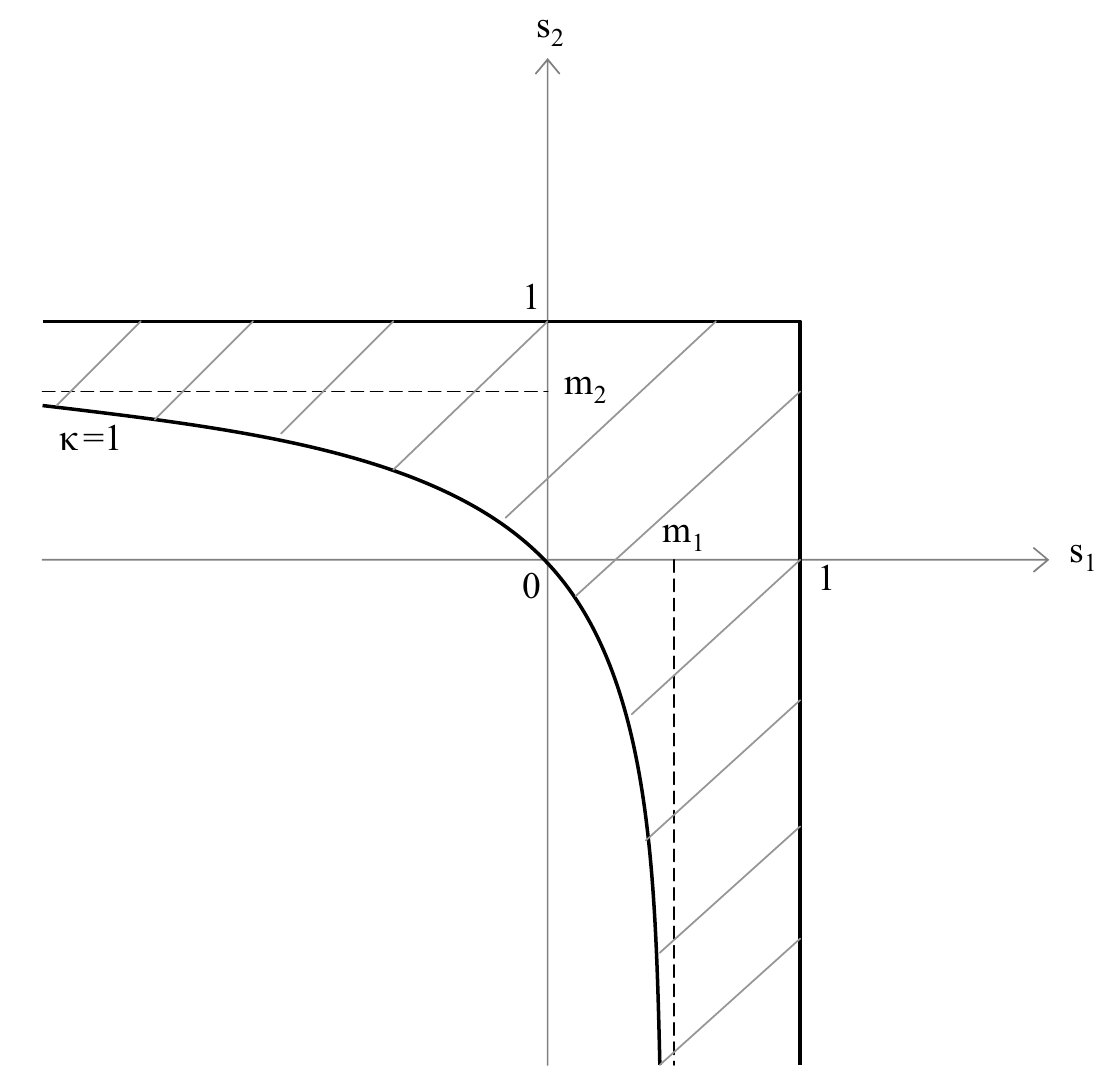}
	\caption{\small The region of protection of $\A_1$ (hatched). From Nagylaki and Lou (2008).}
	 \label{AK_Bio_NL_Fig1}
\end{center}
\end{figure}
\begin{figure}
\begin{center}
	\includegraphics[width=8.0cm]{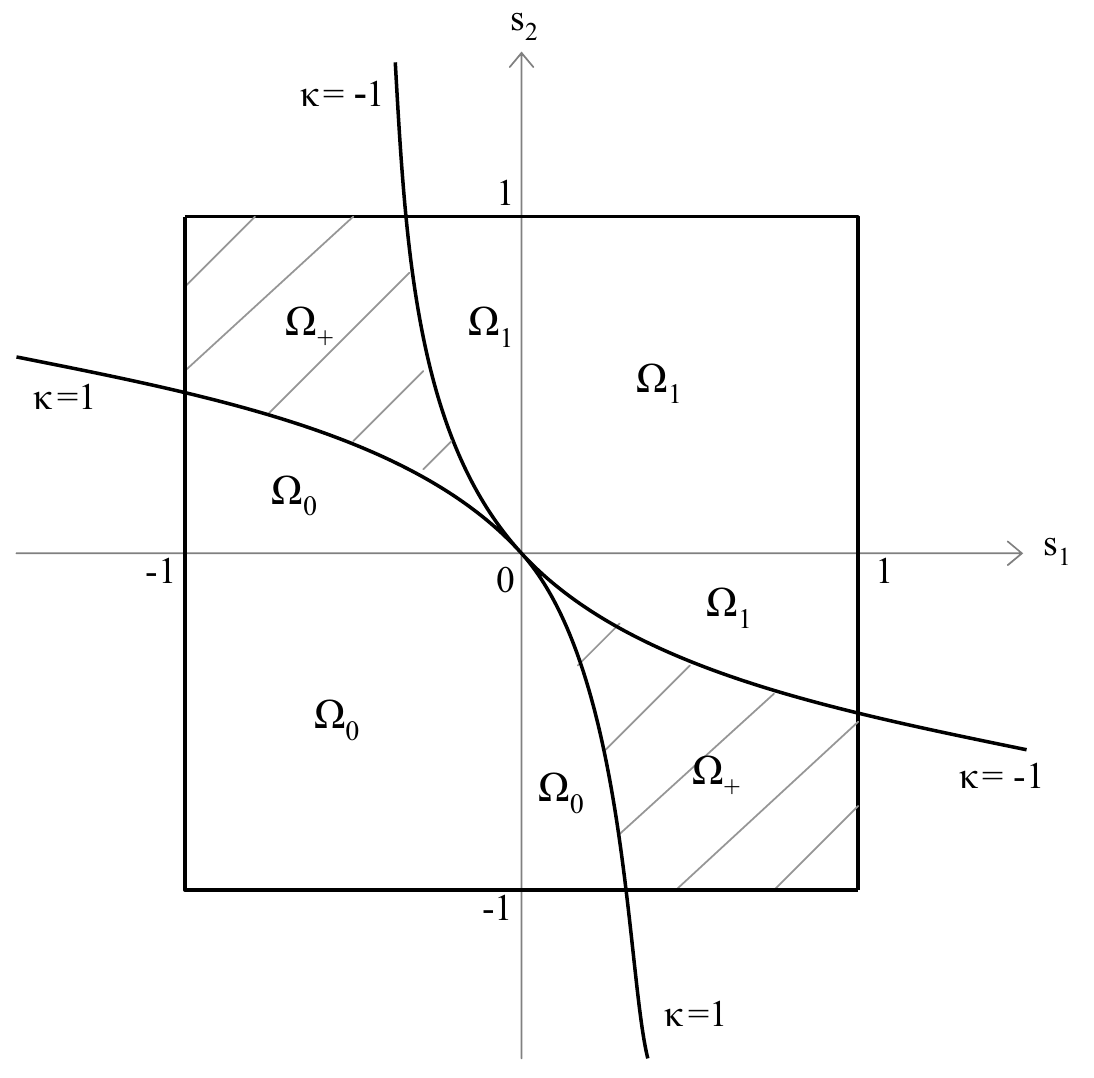}
	\caption{\small The regions of protection of $\A_2$ only ($\Om_0)$, $\A_1$ only ($\Om_1$), and both
	$\A_1$ and $\A_2$ ($\Om_+$) in the absence of dominance. From Nagylaki and Lou (2008).}
	 \label{AK_Bio_NL_Fig2}
\end{center}
\end{figure}

If there is no dominance ($r_\a=-s_\a$ and $0<\abs{s_\a}<1$ for $\a=1$, 2), then further simplification can be 
achieved. From the preceding paragraph the results depicted in Figure \ref{AK_Bio_NL_Fig2}
are obtained. The region of a protected polymorphism is
\begin{equation}\label{PP_nodom}
    \Om_+ = \left\{(s_1,s_2): s_1s_2<0 \text{ and } \abs{\ka}<1 \right\}\,. 
\end{equation}

In a panmictic population, a stable polymorphism can not occur in the absence of overdominance.
Protection of both alleles in a subdivided population requires that selection in the two
demes is in opposite direction and sufficiently strong relative to migration.
Therefore, the study of the maintenance of polymorphism is of most interest if selection acts in opposite direction and dominance is intermediate, i.e., 
\begin{equation}
	r_\a s_\a<0 \;\text{ for } \a=1,2 \;\text{ and } s_1s_2<0.
\end{equation}

\begin{figure}
\vglue-1cm
\begin{center}
	\includegraphics[width=8.0cm]{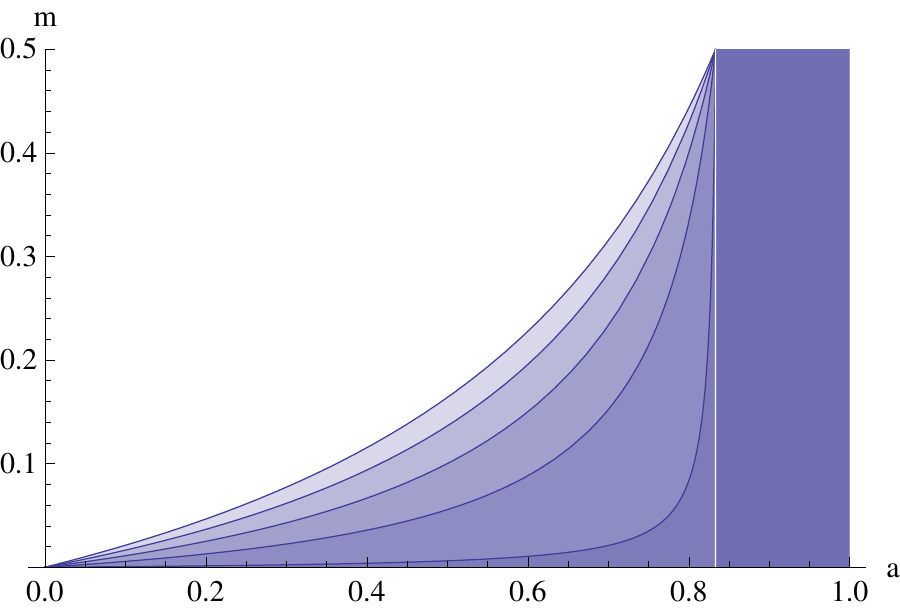}
	\caption{\small Influence of the degree of dominance on the region of protected polymorphism (shaded). The fitness
	scheme is \eqref{Prot_dom1} with $s=0.1$, and migration is symmetric, i.e., $m_1=m_2=m$. The values of $h$ are 
	-0.95, -0.5, 0, 0.5, 0.95 and correspond to the curves from left to right (light shading to dark shading). 
	A protected polymorphism is maintained in the shaded area to the right of the respective curve. 	
	To the right of the white vertical line, a protected polymorphism exists for every $m$.}
	 \label{fig_prot_dom1}
\end{center}
\bigskip
\begin{center} 
	\includegraphics[width=8.0cm]{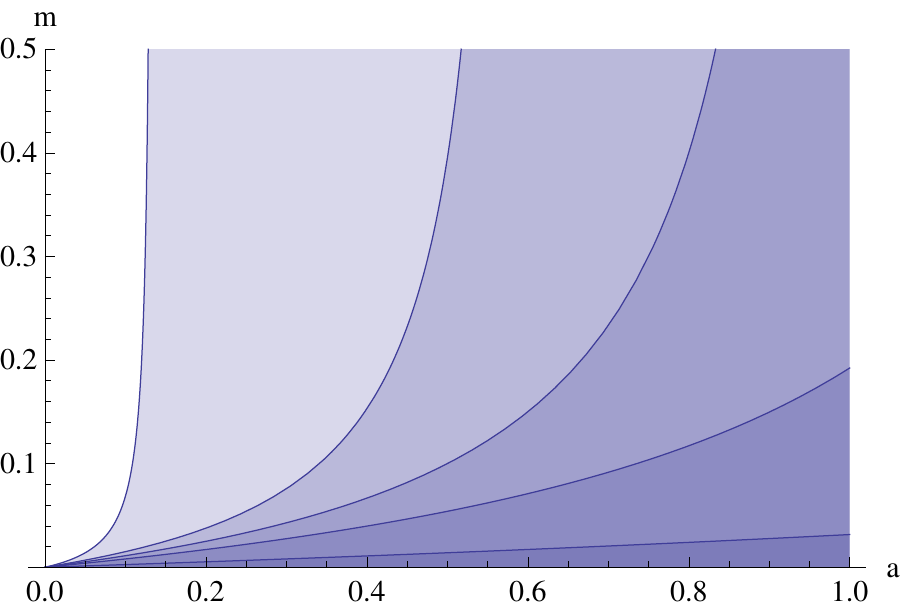}
	\caption{\small Influence of the degree of dominance on the region of protected polymorphism (shaded). The fitness scheme
	is \eqref{Prot_dom2} with $s=0.1$, and migration is symmetric, i.e., $m_1=m_2=m$. The values of $h$ are 
	-0.75, -0.25, 0, 0.25, 0.75 and correspond to the curves from right to left (shading from dark to light!). 
	A protected polymorphism is maintained in the shaded area to the right of the respective curve.}
	 \label{fig_prot_dom2}
\end{center}
\end{figure}

\begin{example}\rm
It is illuminating to study how the
parameter region in which a protected polymorphism exists depends on the degree of dominance in the two
demes. Figures \ref{fig_prot_dom1} and \ref{fig_prot_dom2} display the regions of
protected polymorphism for two qualitatively different scenarios of dominance. In the first, the fitnesses are
given by
\begin{equation}\label{Prot_dom1}
	\begin{matrix} \A_1\A_1 \quad& \A_1\A_2 \quad& \A_2\A_2 &\\
                1+s & 1+hs & 1-s & \\
                1-as & 1-has & 1+as & \,,
	\end{matrix}
\end{equation}
thus, there is deme independent degree of dominance (DIDID). In the second scenario, the fitnesses are given by
\begin{equation}\label{Prot_dom2}
	\begin{matrix} \A_1\A_1 \quad& \A_1\A_2 \quad& \A_2\A_2 &\\
                1+s & 1+hs & 1-s & \\
                1-as & 1+has & 1+as & \,.
	\end{matrix}
\end{equation}
In both cases, we assume $0<s<1$, $0<a<1$ ($a$ is a measure of the selection intensity in deme 2 relative to deme 1), and $-1\le h \le 1$ (intermediate dominance).

If, in \eqref{Prot_dom1}, selection is sufficiently symmetric, i.e., $a>1/(1+2s)$, there exists a protected polymorphism for every
$h\le1$ and every $m\le1$. Otherwise, for given $a$, the critical migration rate $m$ admitting a protected polymorphism decreases with increasing $h$ because this increases the average (invasion) fitness of $\A_1$. This can be shown analytically by studying the principal eigenvalue.

For \eqref{Prot_dom2}, increasing $h$ greatly facilitates protected polymorphism because it leads to an increase of the average fitness of heterozygotes relative to the homozygotes. The precise argument is as follows. If we rescale fitnesses in \eqref{Prot_dom2} according to \eqref{xa_ya}, the matrix
$Q$ in \eqref{define_Q} has the entries $q_{\a1}=m_{\a1}(1+hs)/(1-s)$ and $q_{\a2}=m_{\a1}(1+has)/(1-as)$, which are increasing in $h$. Therefore, the principal eigenvalues $\la_0$ of $Q$ increases in $h$ (e.g., Berman and Plemmons 1994, Chapter 1.3), and \eqref{A1_protected} implies that protection of $\A_1$ is facilitated by increasing $h$. Because an analogous
reasoning applies to $\A_2$, the result is proved.
\end{example}

Indeed, the above finding on the role of dominance for the fitness scheme \eqref{Prot_dom2} is a special case the following result of Nagylaki (personal communication).
Assume fitnesses of $\A_1\A_1$, $\A_1\A_2$, and $\A_2\A_2$ in deme $\a$ are $1+s_\a$, $1+h_\a s_\a$, and 1, respectively, where $s_\a\ge 1$ and $0\le h_\a\le 1$.
If the homozygote fitnesses are fixed, increasing the heterozygote fitness in each deme aids the existence of a protected polymorphism. 
The proof follows by essentially the same argument as above.

\begin{example}\rm 
In the \emph{Deakin model}, the condition \eqref{A1_protected_2demes} for protection of allele $\A_1$
becomes
\begin{equation}\label{PP_Deakin}
	\ka = \mu\left(\frac{c_2}{s_1} + \frac{c_1}{s_2}\right) < 1,
\end{equation}
where $s_1s_2<0$.
Therefore, for given $s_1$, $s_2$, and $c_1$, there is a critical value $\mu_0$ such that allele $\A_1$ is
protected if and only if $\mu<\mu_0$. This implies that for two diallelic demes 
a protected polymorphism is favored by a smaller migration rate. 
\end{example}

\begin{example}\rm 
In the \emph{Levene model}, the condition for a protected polymorphism is
\begin{equation}
	\frac{c_2}{s_1} + \frac{c_1}{s_2} < 1 \quad\text{and}\quad 
	 \frac{c_2}{r_1} + \frac{c_1}{r_2} < 1 \,.
\end{equation}
\end{example}

We close this subsection with an example showing that already with two alleles and two demes the equilibrium structure can be quite complicated.

\begin{example}\label{ex:underdom}\rm
In the absence of migration, the recurrence equations for the allele frequencies $p_1$, $p_2$ in the two demes are two decoupled one-locus selection dynamics
of the form \eqref{sel_dyn_2}. Therefore, if there is underdominance in each deme, the top convergence pattern in Figure \ref{fig_Conv} applies to each deme. As a consequence, in the absence of migration, the complete two-deme system has nine equilibria, four of which are asymptotically stable and the others are unstable.  Under sufficiently weak migration all nine equilibria are admissible and the four stable ones remain stable, whereas the other five are unstable. Two of the stable equilibria are internal. For increasing migration rate, several of these equilibria are extinguished in a sequence of bifurcations (Karlin and McGregor 1972a).
\end{example}

\subsection{Arbitrary number of demes}\label{sec: Ga_diallelic_demes}
The following result is a useful tool to study protection of an allele. Let $I^{(n)}$ and $Q^{(n)}$
respectively designate the $n\times n$ unit matrix and the square matrix formed from the first $n$
rows and columns of $Q$.

\begin{theorem}[Christiansen 1974]\label{Theorem_prot_poly}
If there exists a permutation of demes such that 
\begin{equation}\label{cond_Christ}
	\det(I^{(n)}-Q^{(n)}) < 0
\end{equation}
for some $n$, where $1\le n\le\Ga$, then $\A_1$ is protected. 
\end{theorem}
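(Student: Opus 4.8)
The plan is to exploit Perron--Frobenius theory together with two elementary facts about nonnegative matrices: monotonicity of the spectral radius under entrywise domination, and the growth of a characteristic polynomial at infinity. Recall from \eqref{define_Q} that $Q=MD$ is nonnegative, with entries $q_{\a\be}=\mab/\yb\ge0$, and that it is irreducible; by \eqref{A1_protected}, protection of $\A_1$ is equivalent to $\la_0=\rho(Q)>1$. Since a permutation of demes acts on $Q$ by a similarity $Q\mapsto PQP^{-1}$, it leaves $\la_0$ unchanged, so I may assume the permutation in the hypothesis is the identity and that $Q^{(n)}$ is the leading principal submatrix of $Q$.

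First I would study the scalar function $f(t)=\det(t\,I^{(n)}-Q^{(n)})$, the characteristic polynomial of the principal submatrix $Q^{(n)}$. It is monic of degree $n$, hence $f(t)\to+\infty$ as $t\to+\infty$. The hypothesis \eqref{cond_Christ} says precisely that $f(1)<0$. By the intermediate value theorem there is a real root $t^\ast\in(1,\infty)$, which is a real eigenvalue of $Q^{(n)}$; consequently $\rho(Q^{(n)})\ge t^\ast>1$. This is the step that delivers the \emph{strict} inequality rather than merely $\rho\ge1$, and I regard it as the crux of the argument.

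It then remains to transfer this lower bound from the submatrix to $Q$ itself. Padding $Q^{(n)}$ with zeros to a $\Ga\times\Ga$ matrix adds only zero eigenvalues, so its spectral radius is still $\rho(Q^{(n)})$; and since this padded matrix is entrywise dominated by the nonnegative matrix $Q$, monotonicity of the spectral radius gives $\rho(Q^{(n)})\le\rho(Q)=\la_0$. Chaining the inequalities yields $\la_0\ge\rho(Q^{(n)})>1$, and \eqref{A1_protected} then guarantees that $\A_1$ is protected.

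The main obstacle, such as it is, lies not in the topology but in correctly pinning down the \emph{strict} inequality: the classical M-matrix characterization (all principal minors of $I-Q$ positive $\Leftrightarrow\rho(Q)<1$) yields only $\rho(Q)\ge1$ from a single nonpositive minor, whereas the conclusion requires $\la_0>1$ so that linearization actually decides instability of $p=0$. The intermediate-value argument on $f$ circumvents this by producing an eigenvalue strictly exceeding $1$, and it has the added virtue of being self-contained, relying only on $f(1)<0$ together with the growth of $f$ at infinity. A secondary point I would verify carefully is the reduction to the leading principal submatrix, namely that the freedom of permuting demes costs nothing because it is a similarity transformation.
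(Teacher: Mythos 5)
The paper states this theorem with attribution to Christiansen (1974) and gives no proof of its own, so there is nothing in the text to compare against; I can only assess your argument on its merits, and it is correct and self-contained. All three steps hold: relabelling demes conjugates $Q$ by a permutation matrix, which preserves $\la_0$ and lets you take $Q^{(n)}$ to be the leading principal block; the characteristic polynomial $f(t)=\det\bigl(tI^{(n)}-Q^{(n)}\bigr)$ is monic with $f(1)<0$ by \eqref{cond_Christ}, so the intermediate value theorem yields a real eigenvalue of $Q^{(n)}$ strictly greater than $1$ and hence $\rho(Q^{(n)})>1$; and since $Q\ge0$ entrywise, the zero-padded copy of $Q^{(n)}$ is dominated by $Q$, so monotonicity of the spectral radius for nonnegative matrices gives $\la_0=\rho(Q)\ge\rho(Q^{(n)})>1$, and \eqref{A1_protected} then yields protection of $\A_1$. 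You are also right to flag that the strict inequality is the delicate point — a single nonpositive principal minor of $I-Q$ only rules out $\rho(Q)<1$ in the M-matrix framework — and your IVT step resolves it cleanly (one could equally note that $f(t)>0$ for all $t>\rho(Q^{(n)})$ because nonreal eigenvalues contribute positive quadratic factors, which gives the same conclusion). This is in substance the classical argument behind Christiansen's criterion, and nothing in it conflicts with the surrounding development in Section 4.3.
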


This theorem is sharp in the sense that if the inequality in \eqref{cond_Christ} is reversed for every 
$n\le\Ga$, then $\A_1$ is not protected.

The simplest condition for protection is obtained from Theorem \ref{Theorem_prot_poly} by setting $n=1$. 
Hence, $\A_1$ is protected if an $\a$ exists such that (Deakin 1972)
\begin{equation}\label{prot_simple}
	m_{\a\a}/\ya > 1\,.
\end{equation}
This condition ensures that, when rare, the allele $\A_1$ increases in frequency in deme $\a$ even if this
subpopulation is the only one containing the allele. The general condition in the above theorem ensures that
$\A_1$ increases in the $n$ subpopulations if they are the only ones that contain it. Therefore,
we get the following simple sufficient condition for a protected polymorphism:
\begin{equation}\label{PP_simple}
	\text{There exist $\a$ and $\be$ such that $m_{\a\a}/\ya > 1$ and $m_{\be\be}/\xb > 1$.}
\end{equation}

If we apply these results to the Deakin model with an arbitrary number of demes, condition \eqref{PP_simple} becomes
\begin{equation}\label{PP_simple_Deakin}
	\text{There exist $\a$ and $\be$ such that $\frac{1-\mu(1-c_\a)}{\ya} > 1$ and 
	$\frac{1-\mu(1-c_\be)}{\xb} > 1$.}
\end{equation}

A more elaborate and less stringent condition follows from Theorem \ref{Theorem_prot_poly} by nice matrix algebra:

\begin{corollary}[Christiansen 1974]\label{corollary_Christ_Deak}
For the Deakin model with $\Ga\ge2$ demes, the following is a sufficient condition for protection of $\A_1$.
There exists a deme $\a$ such that
\begin{equation}\label{KC1}
	1-\ya \ge \mu
\end{equation}
or, if \eqref{KC1} is violated in every deme $\a$, then
\begin{equation}\label{KC2}
	\mu \sum_\a \frac{c_\a}{\mu+\ya-1}  > 1 \,. 
\end{equation}

If \eqref{KC1} is violated for every $\a$ and the inequality in \eqref{KC2} is reversed, then $\A_1$ is not protected.
\end{corollary}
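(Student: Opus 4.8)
The plan is to exploit the very special structure of the backward migration matrix in the Deakin model to evaluate the principal minors $\det(I^{(n)}-Q^{(n)})$ of Theorem~\ref{Theorem_prot_poly} in closed form, and then to read off both the sufficiency of \eqref{KC1}--\eqref{KC2} and its sharpness from a single factored expression.

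First I would compute $Q=MD$ from \eqref{define_Q} for the Deakin model. Using the backward rates \eqref{Deakin} together with $d_{\be\be}=1/\yb$, the entries are $q_{\a\be}=\mu c_\be/\yb$ for $\a\ne\be$ and $q_{\a\a}=(1-\mu+\mu c_\a)/\ya$. The decisive observation is that the off-diagonal entry $q_{\a\be}$ depends only on the column index $\be$. Hence, for any $n$ demes (which, by a permutation, we may take to be the first $n$),
\[
   Q^{(n)} = \mathbf{1}\,b^\Tr + \operatorname{diag}\!\Bigl(\tfrac{1-\mu}{\ya}\Bigr), \qquad b_\be=\frac{\mu c_\be}{\yb}, \qquad \mathbf{1}=(1,\ldots,1)^\Tr,
\]
so that $I^{(n)}-Q^{(n)}=\operatorname{diag}(D_\a)-\mathbf{1}\,b^\Tr$ is a rank-one perturbation of a diagonal matrix, with $D_\a=(\mu+\ya-1)/\ya$. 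Applying the matrix-determinant lemma to this rank-one update gives, whenever all $D_\a\ne0$, the closed form
\[
   \det\bigl(I^{(n)}-Q^{(n)}\bigr) = \Bigl(\prod_{\a=1}^{n} D_\a\Bigr)\Bigl(1-\mu\sum_{\a=1}^{n}\frac{c_\a}{\mu+\ya-1}\Bigr),
\]
which is the key identity from which everything follows.

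For the sufficiency of \eqref{KC1} I would argue directly at $n=1$: if $1-\ya\ge\mu$ for some deme $\a$, then $q_{\a\a}=m_{\a\a}/\ya>1$ (for $\mu>0$, $c_\a>0$), i.e.\ $\det(I^{(1)}-Q^{(1)})<0$, which is already the simple criterion \eqref{prot_simple}, so $\A_1$ is protected. If instead \eqref{KC1} fails in every deme, then every $D_\a>0$; taking $n=\Ga$ in the displayed formula, the product is positive and the sign of the full minor equals that of its second factor, which is negative exactly when \eqref{KC2} holds. Theorem~\ref{Theorem_prot_poly} then delivers protection of $\A_1$.

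The delicate point is the sharpness claim, which demands that \emph{all} minors $\det(I^{(n)}-Q^{(n)})$ be positive, over every $n$ and every permutation of demes, not just at $n=\Ga$. Here the structure of the formula does the work: when \eqref{KC1} fails everywhere, all $D_\a>0$ and each summand $c_\a/(\mu+\ya-1)$ is positive, so the bracketed factor $1-\mu\sum_{\a\in S}c_\a/(\mu+\ya-1)$ is monotonically \emph{decreasing} as the index set $S$ grows, and its minimum over subsets is attained at the full set $S=\{1,\ldots,\Ga\}$. Thus, if the reversed \eqref{KC2} holds, $\mu\sum_{\a=1}^{\Ga}c_\a/(\mu+\ya-1)<1$, the bracket stays positive for every subset, every minor $\det(I^{(n)}-Q^{(n)})$ is positive, and the sharp converse in Theorem~\ref{Theorem_prot_poly} yields that $\A_1$ is not protected. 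I expect the main obstacle to be precisely this bookkeeping across all principal submatrices; the monotonicity observation is exactly what collapses it to the single full-set inequality.
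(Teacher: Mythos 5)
Your proof is correct and follows exactly the route the paper indicates (it only remarks that the corollary ``follows from Theorem~\ref{Theorem_prot_poly} by nice matrix algebra'' and cites Christiansen, without giving details): you apply Christiansen's determinant criterion after observing that $I^{(n)}-Q^{(n)}$ is a rank-one perturbation of a diagonal matrix, and the factored determinant $\bigl(\prod_\a D_\a\bigr)\bigl(1-\mu\sum_\a c_\a/(\mu+\ya-1)\bigr)$ together with the subset-monotonicity of the second factor handles both sufficiency and sharpness. The only implicit hypotheses are the standing ones ($\mu>0$, $c_\a>0$, $\ya>0$), which are already needed for irreducibility and for the linearization, so the argument is complete.
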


Corollary \ref{corollary_Christ_Deak} can be extended to the inhomogeneous Deakin model, which allows
for different homing rates (Christiansen 1974; Karlin 1982, p.\ 182). 
Using Corollary \ref{corollary_Christ_Deak}, we can generalize the finding from two diallelic demes that a lower degree of outbreeding is favorable 
for protection of one or both alleles. More precisely, we show:

\begin{corollary}\label{protect_monotone}
If $0<\mu_1\le\mu_2\le1$, then allele $\A_1$ is protected for $\mu_1$ if $\mu_2$ satisfies the conditions in
Corollary \ref{corollary_Christ_Deak}.
\end{corollary}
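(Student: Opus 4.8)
The plan is to show that the set of outbreeding rates $\mu\in(0,1]$ for which $\A_1$ is protected is a \emph{down-set}, i.e.\ an interval of the form $(0,\mu_0]$ or $(0,\mu_0)$; the corollary then follows at once, since protection at $\mu_2$ together with $\mu_1\le\mu_2$ forces protection at $\mu_1$. Because Corollary~\ref{corollary_Christ_Deak} is sharp (its reversed inequalities characterize non-protection), the phrase ``$\mu$ satisfies the conditions of Corollary~\ref{corollary_Christ_Deak}'' is equivalent to protection of $\A_1$, so it suffices to track those conditions as $\mu$ varies. I would split according to whether \eqref{KC1} holds for some deme. If \eqref{KC1} holds at $\mu_2$, say $1-\ya\ge\mu_2$ for some $\a$, then $1-\ya\ge\mu_2\ge\mu_1$, so \eqref{KC1} holds at $\mu_1$ and $\A_1$ is protected there. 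The same monotonicity shows that $\{\mu:\ \eqref{KC1}\text{ holds for some }\a\}=(0,\mu_{\mathrm{KC1}}]$ with $\mu_{\mathrm{KC1}}=\max_\a(1-\ya)$, already a down-set.

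The substance lies in the regime where \eqref{KC1} fails in every deme, i.e.\ $\mu>\mu_{\mathrm{KC1}}$, where protection is governed by \eqref{KC2}. Writing
\[
   f(\mu)=\mu\sum_\a\frac{c_\a}{\mu+\ya-1},
\]
protection is equivalent to $f(\mu)>1$, and I must deduce $f(\mu_1)>1$ from $f(\mu_2)>1$ whenever $\mu_{\mathrm{KC1}}<\mu_1\le\mu_2$. On this range every denominator $\mu+\ya-1$ is positive, so $f$ is smooth; as $\mu\downarrow\mu_{\mathrm{KC1}}$ the deme attaining the maximum has $\mu+\ya-1\downarrow0$ with $\ya<1$, forcing $f(\mu)\to+\infty$, while $f(\mu)\to\sum_\a c_\a=1$ as $\mu\to\infty$.

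The main obstacle is that $f$ need not be monotone: the summand $\mu/(\mu+\ya-1)$ \emph{increases} with $\mu$ in demes with $\ya>1$, so $f'$ has no definite global sign and several crossings of the level~$1$ are a priori conceivable. I would remove this obstacle by proving that $f$ crosses the value~$1$ only \emph{downward}, i.e.\ $f(\mu)=1\Rightarrow f'(\mu)<0$ (unless all $\ya$ with $c_\a>0$ equal $1$, in which case $f\equiv1$ and there is nothing to prove). At a crossing one has $\sum_\a c_\a(\ya-1)/(\mu+\ya-1)=0$; comparing $1/(\mu+\ya-1)$ with $1/\mu$ term by term (it is smaller when $\ya>1$ and larger when $\ya<1$, the difference carrying a factor $-(\ya-1)^2$) gives
\[
   f'(\mu)=\sum_\a\frac{c_\a(\ya-1)}{(\mu+\ya-1)^2}
         <\frac1\mu\sum_\a\frac{c_\a(\ya-1)}{\mu+\ya-1}=0.
\]
Combined with $f\to+\infty$ at the left endpoint, a single downward crossing is forced, so $\{\mu>\mu_{\mathrm{KC1}}:f(\mu)>1\}$ is an interval abutting $\mu_{\mathrm{KC1}}$; splicing it with the down-set $(0,\mu_{\mathrm{KC1}}]$ shows the whole protection region is a down-set, which completes the proof.

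A cleaner but less elementary alternative would bypass the case split entirely. Protection is equivalent to $\la_0(\mu)>1$ for the Perron root of $Q(\mu)=MD=(1-\mu)D+\mu\,\mathbf 1\,d^\Tr$, where $D=\operatorname{diag}(\ya^{-1})$ and $d_\a=c_\a/\ya$. Expanding $\det(\la I-Q)$ by the matrix-determinant lemma yields the secular equation $\mu\sum_\a (c_\a/\ya)\big/\bigl(\la-(1-\mu)/\ya\bigr)=1$ for $\la_0$, and implicit differentiation combined with a Cauchy--Schwarz estimate (using $\sum_\a c_\a=1$) gives $d\la_0/d\mu\le0$; monotonicity of $\la_0$ then yields the down-set property directly. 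I would present the elementary crossing argument as the main proof and record this eigenvalue version as a remark.
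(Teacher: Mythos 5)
Your proof is correct, but it reaches the conclusion by a route that differs from the paper's in presentation. The paper argues directly between the two points $\mu_1$ and $\mu_2$: rewriting \eqref{KC2} as $\sum_\a c_\a(1-\ya)/(\mu+\ya-1)>0$, it multiplies each term of the $\mu_2$-sum by the ratio $(\mu_1+\ya-1)/(\mu_2+\ya-1)$ and uses that this ratio is $\ge\mu_1/\mu_2$ exactly when $\ya\ge1$ — which is precisely the direction needed to transfer positivity from $\mu_2$ to $\mu_1$. You instead prove that $\{\mu: f(\mu)>1\}$ is a down-set via a single-crossing lemma, $f(\mu)=1\Rightarrow f'(\mu)<0$; your term-by-term bound $\frac{\ya-1}{(\mu+\ya-1)^2}\le\frac1\mu\cdot\frac{\ya-1}{\mu+\ya-1}$ is in effect the infinitesimal version of the paper's ratio comparison, so the two arguments rest on the same sign analysis in $\ya-1$. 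What your version buys is the explicit qualitative picture — the protection region is an interval abutting $0$ — at the cost of a slightly longer argument and some edge cases. Three small points to tidy up: (i) the claim $f(\mu)\to+\infty$ as $\mu\downarrow\mu_{\mathrm{KC1}}$ presupposes that some deme has $\ya<1$; if all $\ya\ge1$ then $f\le1$ throughout and the statement is vacuous — in any case this limit is not needed, since the downward-crossing property alone yields the down-set conclusion; (ii) your opening assertion that the conditions of Corollary \ref{corollary_Christ_Deak} are \emph{equivalent} to protection ignores the undecided equality case, though your argument never actually uses that equivalence, only the implication from the conditions to protection; (iii) the eigenvalue-monotonicity alternative ($d\la_0/d\mu\le0$) is asserted rather than proved, which is acceptable only because you relegate it to a remark.
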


\begin{proof}
If condition \eqref{KC1} holds for $\mu_2$, it clearly holds for $\mu_1$. Now suppose that $1-\ya<\mu_1$ 
for every $\a$ (hence $1-\ya<\mu_2$) and that $\mu_2$ satisfies \eqref{KC2}. This is equivalent to
\begin{equation}
	\sum_\a \frac{c_\a(1-y_\a)}{\mu_1+\ya-1}\, \frac{\mu_1+\ya-1}{\mu_2+\ya-1}
	= \sum_\a \frac{c_\a(1-y_\a)}{\mu_2+\ya-1} > 0\,.
\end{equation}
Because $(\mu_1+\ya-1)/(\mu_2+\ya-1)\ge \mu_1/\mu_2$ if and only if $\ya>1$, we obtain
\begin{equation}
	\frac{\mu_1}{\mu_2}	\sum_\a \frac{c_\a(1-y_\a)}{\mu_1+\ya-1}
	     \ge \sum_\a \frac{c_\a(1-y_\a)}{\mu_2+\ya-1} > 0\,,
\end{equation}
which proves our assertion.
\end{proof}

In general, it is not true that less migration favors the
maintenance of polymorphism. For instance, if the amount of homing ($1-\mu_\a$) varies among demes, a protected polymorphism may be destroyed by decreasing one $\mu_\a$ (Karlin 1982, p.\ 128).

\begin{example}\rm
We apply Corollary \ref{corollary_Christ_Deak} to the Levene model. 
Because \eqref{KC1} can never be satisfied if $\mu=1$, $\A_1$ is protected from loss if  
the harmonic mean of the $y_\a$ is less than one, i.e., if (Levene 1953)
\begin{subequations}\label{PP_gen}
\begin{equation}\label{PP_gen_a}
  y^\ast = \left(\sum_\a \frac{c_\a}{y_\a}\right)^{-1} < 1\,.
\end{equation}
Analogously, allele $\A_2$ is protected if
\begin{equation}\label{PP_gen_b}
  x^\ast = \left(\sum_\a \frac{c_\a}{x_\a}\right)^{-1} < 1\,.
\end{equation}
\end{subequations}
Jointly, \eqref{PP_gen_a} and \eqref{PP_gen_b} provide a sufficient 
condition for a protected polymorphism. If $y^\ast>1$ or $x^\ast>1$, then 
$\A_1$ or $\A_2$, respectively, is lost if initially rare. 

If $\A_1$ is recessive everywhere ($y_\a=1$ for every $\a$), then it is 
protected if (Prout 1968)
\begin{equation}\label{PP_rec_A}
   \bar x = \sum_\a c_\a x_\a > 1\,.
\end{equation}
Therefore, a sufficient condition for a protected polymorphism is
\begin{equation}\label{PP_rec}
    x^\ast < 1 < \bar x\,.
\end{equation}
\end{example}

Whereas in the Deakin model and in its special case, the Levene model, dispersal does not depend on the geographic distribution of niches, in the stepping-stone model it occurs between neighboring niches. How does this affect the maintenance of a protected polymorphism? Here is the answer:

\begin{example}\rm
For the general stepping-stone model \eqref{tridi} with fitnesses given by \eqref{xa_ya}, Karlin (1982)
provided the following explicit criterion for protection of $\A_1$:
\begin{equation}\label{PP_step_stone}
	\sum_\a \frac{\pi_\a}{\ya}\biggl/ \sum_\a \pi_\a\ > 1 \,,
\end{equation}
where $\pi_1=1$ and
\begin{equation}
	\pi_\a = \frac{r_{\a-1}r_{\a-2}\cdots r_1}{q_\a q_{\a-1}\cdots q_2} 
\end{equation}
if $\a\ge2$. This result is a simple consequence of Lemma \ref{Friedland_Karlin}.

For the homogeneous stepping-stone model with equal demes sizes, i.e., $M$ given by \eqref{step_stone_class},
\eqref{PP_step_stone} simplifies to
\begin{equation}
	\frac{1}{\Ga} \sum_\a \frac{1}{y_\a} > 1\,,
\end{equation}
which is the same as condition \eqref{PP_gen} in the Levene model with equal deme sizes. 
\end{example}

Thus, we obtain the surprising result that the conditions for protection are the same in the Levene model and in the homogeneous stepping stone model provided all demes have equal size. 

\subsection{The continent-island model}\label{CI-model}
We consider a population living on an island. At first, we admit an arbitrary number of alleles.
Each generation, a proportion $a$ of adults is removed by
mortality or emigration and a fraction $b$ of migrants with constant allele frequencies $\hat q_i$ is added.
A simple interpretation is that of one-way migration from a continent to an island. The continental population is in equilibrium
with allele frequencies $\hat q_i>0$. Sometimes, $\hat q_i$ is interpreted as the average frequency
over (infinitely) many islands and the model is simply called island model. 

Assuming random mating on the island, the dynamics of allele frequencies $p_i$ on the
island becomes
\begin{equation}\label{island_dyn}
	p_i' = (1-m)p_i \frac{w_i}{\wb} + m \hat q_i\,,
\end{equation}
where $w_i$ is the fitness of allele $\A_i$ on the island and $m=b/(1-a+b)$. Thus, $m$ is the fraction of
zygotes with immigrant parents. Throughout we assume $0<m<1$.

If we define $u_{ij}=m \hat q_j$ and consider $u_{ij}$ as the mutation rate from $\A_i$ to $\A_j$, a special case of the (multiallelic)
mutation-selection model is obtained (the so-called house-of-cards model). Therefore (B\"urger 2000, pp.\ 102-103), \eqref{island_dyn} has the Lyapunov
function
\begin{equation}
	V(p) = \wb(p)^{1-m}\prod_i p_i^{2m \hat q_i}\,.
\end{equation}
It follows that all trajectories are attracted by the set of equilibria. 

In the sequel we determine the conditions under which an `island allele' persists in the population
despite immigration of other alleles from the continent. Clearly, no allele carried to the
island recurrently by immigrants can be lost. 

We investigate the diallelic case and consider alleles $\A_1$ and $\A_2$ with frequencies $p$ and $1-p$ on the island, and $\hat q_2=1$ on the continent. 
Thus, all immigrants are $\A_2\A_2$. For the fitnesses on the island, we assume 
\begin{equation}\label{island_fit}
	w_{11}=1+s\,,\; w_{12}=1+hs\,, \; w_{22}=1-s\,,
\end{equation}
where $0<s\le1$ and $-1\le h\le 1$. Therefore, $\A_1$ evolves according to
\begin{equation}\label{island_p'}
	p' = f(p) = (1-m)p \frac{w_1}{\wb}\,.
\end{equation}

We outline the analysis of \eqref{island_p'}. Since $f(1)=1-m<1$, this confirms that $\A_2$ cannot be lost ($p=1$ is not an equilibrium). Because
\begin{equation}
	f(p) = p (1-m) \frac{w_{12}}{w_{22}} + O(p^2)
\end{equation}
as $p\to0$, the allele $\A_1$ is protected if
\begin{equation}
	m < 1-\frac{w_{22}}{w_{12}}\,.
\end{equation}
Obviously, $p=0$ is an equilibrium. If there is no other equilibrium, then it must be globally asymptotically
stable (as is also obvious from $f(1)<1$, which implies $p'<p$).

The equilibria with $p\ne0$ satisfy
\begin{equation}
	\wb = (1-m)w_1\,,
\end{equation}
which is quadratic in $p$. With the fitnesses \eqref{island_fit}, the solutions are
\begin{equation}\label{equi_island}
	\hat p_\pm = \frac{1}{4h}\left[ 1+3h+m(1-h) \pm \sqrt{(1-h)^2(1+m)^2 + 8hm(1+1/s)} \right]\,,
\end{equation}
These solutions give rise to feasible equilibria if $0\le p_\pm\le1$.
As $h\to0$, \eqref{equi_island} gives the correct limit,
\begin{equation}
	\hat p_- = \frac{1-m/s}{1+m} \quad\text{if } h=0.\
\end{equation}

We define
\begin{subequations}
\begin{align}
	h_0 &= -\frac{1+m}{3-m}\,, \\
	\mu_1 &= 1+h(1-m)\,, \\
	\mu_2 &= -m-\frac{(1-h)^2(1+m)^2}{8h}\,.
\end{align}	
\end{subequations}
Then $\mu_2>\mu_1$ if and only if $h<h_0$. If $\mu_2<\mu_1$, then $\hat p_+$ is not admissible.
As $m$ increases from 0 to $1$, $h_0$ decreases from $-\tfrac13$ to $-1$. Hence, if there is no dominance or
the fitter allele $\A_1$ is (partially) dominant ($0\le h \le 1$), then $h>h_0$. If $\A_1$ is recessive,
then $h<h_0$. For fixed $h$ and $s$ it is straightforward, but tedious, to study the dependence of $\hat p_+$ and $\hat p_-$ on $m$.
The results of this analysis can be summarized as follows (for an illustration, see Figure \ref{Fig_CI-bifplot_1loc}):

\begin{figure}
\begin{center}
	\includegraphics[width=12.0cm]{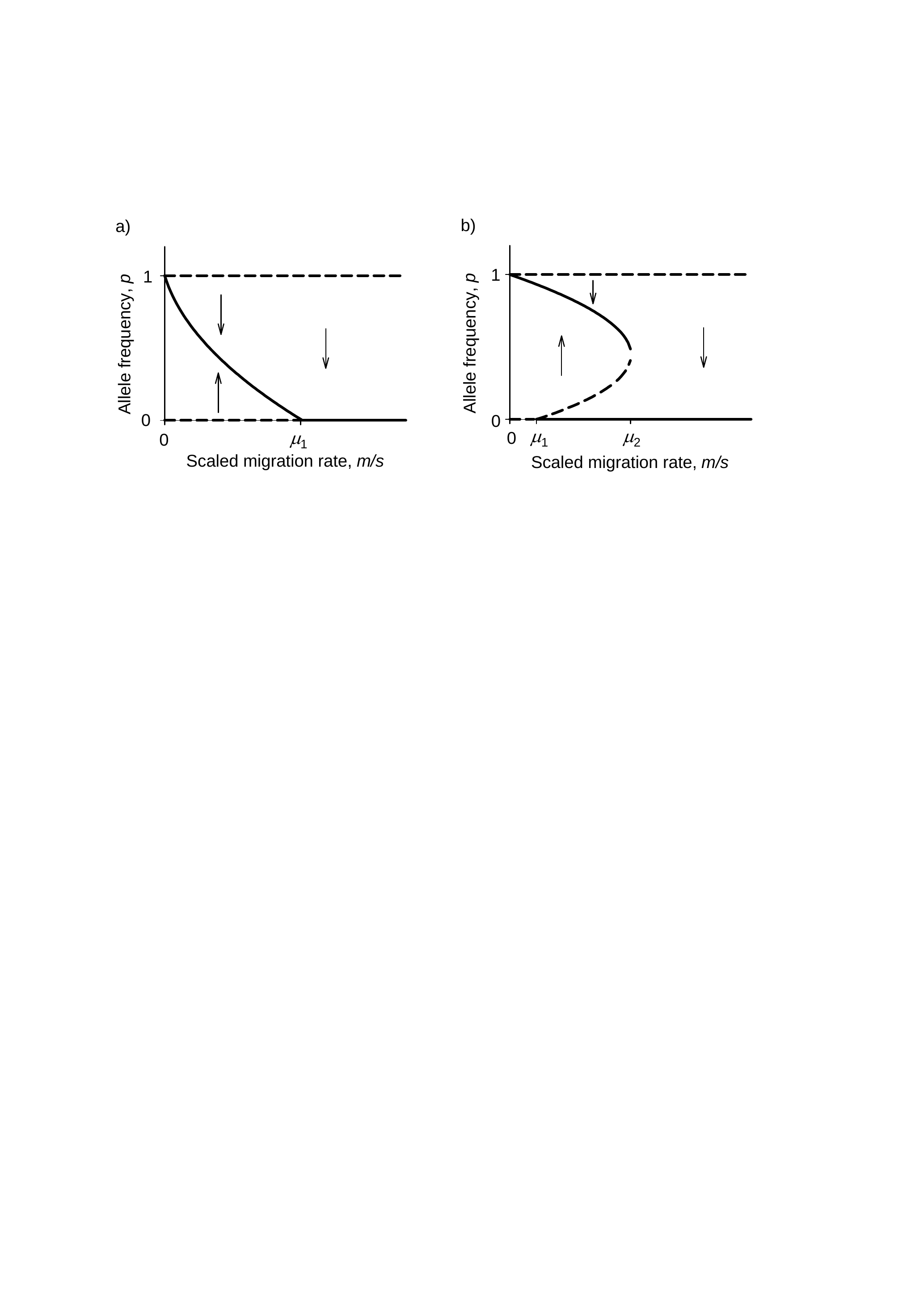}
	\caption{\small Bifurcation patterns for the one-locus continent-island model. Bold lines indicate an asymptotically stable equilibrium, dashed lines an unstable equilibrium. Figure a displays the case $h_0\le h \le1$, and Figure b the case
	$-1\le h < h_0$. The parameters are $s=0.1$, and $h=0.5$ and $h=-0.9$ in cases a and b, respectively. In a, we have $\mu_1\approx 1.41$; in b, $\mu_1\approx0.11$, $\mu_2\approx0.5$}
	 \label{Fig_CI-bifplot_1loc}
\end{center}
\end{figure}

\begin{theorem}\label{th:oneloc_island_model}
(a) Let $m/s<\mu_1$. Then 0 and $p_-$ are the only equilibria and $p(t)\to \hat p_-$ as $t\to\infty$. Thus,
for sufficiently weak migration, a unique polymorphism is established independently of the degree of dominance
and the initial condition.

(b) Let $m/s>\mu_2$ and $-1\le h < h_0$, or $m/s\ge\mu_1$ and $h_0\le h \le1$, i.e., migration is strong relative to selection.
Then there exists no internal equilibrium and $p(t)\to0$ as $t\to0$, i.e., the continental allele $\A_2$ becomes fixed.

(c) Let $\mu_1<m/s\le\mu_2$ and $-1\le h < h_0$, i.e., migration is moderately strong relative to selection and the fitter allele is (almost) recessive. 
Then there are the three equilibria 0, $\hat p_+$, and $\hat p_-$, where $0<\hat p_+ < \hat p_-$, and
\begin{subequations}
\begin{alignat}{2}
	&p(t) \to 0 &\quad\text{if }\; p(0) < \hat p_+ \,, \\
	&p(t) \to \hat p_- &\quad\text{if } \; p(0) > \hat p_+ \,.
\end{alignat}
\end{subequations}
\end{theorem}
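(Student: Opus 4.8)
The plan is to reduce the fixed-point problem to the roots of a single quadratic and then read off every convergence statement from the sign of $f(p)-p$ together with monotonicity of the map. Writing out the marginal and mean fitness under \eqref{island_fit}, one finds that $f(p)-p=-p\,g(p)/\wb$, where
$$g(p):=\wb-(1-m)w_1$$
is quadratic in $p$ and $\wb>0$ on $[0,1]$. Hence the interior equilibria are precisely the roots of $g$ in $(0,1)$ (these are the $\hat p_\pm$ of \eqref{equi_island}), and on $(0,1)$ the sign of $f(p)-p$ is opposite to that of $g(p)$. First I would record the boundary data: a short computation gives $g(0)=m(1+hs)-s(1+h)$, so that $g(0)<0\iff m/s<\mu_1$, while $g(1)=m(1+s)>0$ always; moreover the leading coefficient of $g$ is $-2hs$, so $g$ is concave for $h>0$, linear for $h=0$, and convex for $h<0$.

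The second ingredient is that $f$ is strictly increasing on $[0,1]$: the diallelic selection map $p\mapsto pw_1/\wb$ is strictly increasing, and $f$ is $(1-m)$ times it. (Alternatively one may invoke the Lyapunov function $V$ introduced above for \eqref{island_p'} to obtain convergence to the equilibrium set; the monotone-map argument is cleaner for locating basins.) For a strictly increasing one-dimensional map every orbit is monotone and converges to a fixed point, and between two consecutive fixed points the direction of motion is constant and prescribed by the sign of $g$. Thus all dynamical conclusions follow once the roots of $g$ in $(0,1)$ are counted.

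I would organize the count by the sign of $g(0)$. In case (a), $m/s<\mu_1$ forces $g(0)<0<g(1)$, and in each concavity case this yields exactly one root $\hat p_-\in(0,1)$, with $g<0$ on $(0,\hat p_-)$ and $g>0$ on $(\hat p_-,1)$; hence $0$ repels and $\hat p_-$ attracts all of $(0,1]$, so $p(t)\to\hat p_-$. When $m/s\ge\mu_1$ we have $g(0)\ge0$, and I split on the sign of $h$. For $h\ge0$ the quadratic is concave or linear, so $g(0),g(1)>0$ already force $g>0$ on $[0,1]$: no interior equilibrium and $p(t)\to0$ (the $h_0\le h\le1$ branch of (b), using $h_0<0$). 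For $h<0$ the convex $g$ may dip below zero, and the decisive algebraic fact is that the discriminant of $g$ equals the radicand in \eqref{equi_island}, which vanishes exactly at $m/s=\mu_2$ and, being decreasing in $m/s$, is positive exactly for $m/s<\mu_2$. Hence if $m/s>\mu_2$ then $g>0$ on $(0,1)$ and $p(t)\to0$ (first branch of (b)); while if $\mu_1<m/s\le\mu_2$ — possible only when $\mu_1<\mu_2$, i.e. $h<h_0$ — there are two roots $0<\hat p_+<\hat p_-<1$ with $g$ of sign $+,-,+$ on the three subintervals, so $0$ and $\hat p_-$ attract and $\hat p_+$ is the repelling basin boundary, which is precisely (c). The residual convex case $h_0\le h<0$ falls under $m/s\ge\mu_1\ge\mu_2$, hence under the first alternative, and again gives $p(t)\to0$.

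The main obstacle is the bookkeeping in the last step: translating the qualitative features of $g$ (its concavity, its endpoint values, and its discriminant) into the sharp thresholds $\mu_1$, $\mu_2$, $h_0$, and in particular confirming that in the convex case with positive discriminant \emph{both} roots genuinely lie in $(0,1)$ exactly on the stated parameter range — this amounts to checking that the vertex abscissa of $g$ lies in $(0,1)$, equivalently that the sum and product of the roots are admissible. The monotonicity claim for $f$ is standard for diallelic selection but should be verified explicitly, or bypassed using the Lyapunov function $V$ already available for \eqref{island_p'}.
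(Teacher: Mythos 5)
Your proposal is correct and follows the same basic reduction the paper outlines: the interior equilibria are the roots of the quadratic $\wb-(1-m)w_1$ (your $g$), and the thresholds $\mu_1$, $\mu_2$, $h_0$ classify its roots in $(0,1)$. The paper itself only sketches this and defers the "straightforward but tedious" root analysis to Nagylaki (1992, Ch.~6.1), obtaining convergence from the Lyapunov function $V$; you instead get convergence from strict monotonicity of the one-dimensional map $f$ (which holds, since the derivative of $p\mapsto pw_1/\wb$ has numerator $p^2w_{11}w_{12}+2p(1-p)w_{11}w_{22}+(1-p)^2w_{12}w_{22}>0$), which is cleaner for locating basins of attraction and makes the argument self-contained. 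Your organization of the root count by the sign of $g(0)$, the concavity of $g$, and the discriminant is a tidier packaging of the same computation.

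Two small points. First, the radicand in \eqref{equi_island} is not monotone in $m$ in general (it is a convex quadratic in $m$ whose derivative at $m=0$ can have either sign); the correct statement is that $m/s\le\mu_2$ is, by the very definition of $\mu_2$ (which depends on $m$), algebraically equivalent to nonnegativity of the radicand, so no monotonicity is needed. Second, the "bookkeeping" you flag in case (c) does close exactly as you describe: the vertex of the convex parabola $g$ lies at $p^\ast=[1+3h+m(1-h)]/(4h)$, and for $h<0$ one checks that $p^\ast>0$ is equivalent to $1+3h+m(1-h)<0$, i.e.\ precisely to $h<h_0$, while $p^\ast<1$ reduces to $(1-h)(1+m)>0$ and holds automatically; together with $g(0)>0$, $g(1)>0$ and a nonnegative discriminant this places both roots in $(0,1)$, and for $h_0\le h<0$ the vertex lies at $p^\ast\le0$ so $g$ is increasing and positive on $[0,1]$, completing part (b).
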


For a detailed proof, see Nagylaki (1992, Chapter 6.1). The global dynamics of the
diallelic continent-island model admitting evolution on the continent is derived in Nagylaki (2009a).

Sometimes immigration from two continents is considered (Christiansen 1999). Some authors call our general
migration model with $n$ demes the $n$-island model (e.g., Christiansen 1999). 
The following (symmetric) island model is a standard model in investigations of the consequences of 
random drift and mutation in finite populations:
\begin{subequations}\label{island}
\begin{align}
	m_{\a\a} &= 1 - m \,, \\
	\mab &= \frac{m}{\Ga-1} \;\text{ if } \a\ne\be \,.	
\end{align}
\end{subequations}
Clearly, migration is doubly stochastic in this model and there is no isolation by distance. In the special
case when all deme sizes are equal, i.e., $c_\a=1/\Ga$, \eqref{island} reduces to a special case of the
Deakin model, \eqref{Deakin}, with $m=\mu(1-\Ga^{-1})$. In this, and only this case, the island model
is conservative and the stationary distribution of $M$ is $c=e/\Ga$.

\subsection{Submultiplicative fitnesses}\label{sec:submult}
Here, we admit arbitrary migration patterns but assume that fitnesses are submultiplicative, i.e., the fitnesses in
\eqref{xa_ya} satisfy
\begin{equation}\label{submult}
	x_\a y_\a \le 1
\end{equation}
for every $\a$ (Karlin and Campbell 1980). We recall from Section \ref{sec:evol_dyn_sel} that with multiplicative fitnesses ($x_\a y_\a=1$), the diploid selection dynamics reduces to the haploid dynamics.
Throughout, we denote left principal eigenvector of $M$ by $\xi\in\S_\Ga$; cf.\ \eqref{xi}.

\begin{theorem}[Karlin and Campbell 1980, Result I]\label{KarlinCampbell_Result1}
If $\Ga\ge2$ and fitnesses are submultiplicative in both demes, then at most one monomorphic equilibrium
can be asymptotically stable and have a geometric rate of convergence.
\end{theorem}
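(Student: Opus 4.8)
The plan is to reduce the claim to a single spectral inequality and then invoke the Friedland--Karlin inequality (Lemma~\ref{Friedland_Karlin}). There are exactly two monomorphic equilibria of \eqref{mig_sel_dyn2}: the state $p=0$, in which $\A_2$ is fixed in every deme, and the state $p=\e$, in which $\A_1$ is fixed. By \eqref{define_Q} the linearization at $p=0$ is $p'=Q_y p$ with $Q_y=MD_y$, i.e.\ $(Q_y)_{\a\be}=\mab/\yb$; and, as noted after \eqref{A1_protected}, the stability of $p=\e$ is governed by the companion matrix $Q_x=MD_x$ with $(Q_x)_{\a\be}=\mab/\xb$. Let $\la_0=\rho(Q_y)$ and $\nu_0=\rho(Q_x)$ be the associated Perron--Frobenius eigenvalues. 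A monomorphic equilibrium is asymptotically stable \emph{with a geometric rate of convergence} precisely when the spectral radius of its linearization is strictly below $1$; hence the theorem is equivalent to the assertion that $\la_0<1$ and $\nu_0<1$ cannot both hold.

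First I would record the relevant lower bounds. Since $M$ is an irreducible stochastic matrix, its Perron eigenvalue equals $1$, with right eigenvector $\e$ and strictly positive left eigenvector $\xi\in\S_\Ga$ from \eqref{xi}, normalized by $\sum_\a\xi_\a=\sum_\a\xi_\a(\e)_\a=1$. Applying Lemma~\ref{Friedland_Karlin} to $M$ with the diagonal scaling $D_y=\mathrm{diag}(1/\ya)$ yields the geometric-mean bound
\begin{equation*}
	\la_0=\rho(MD_y)\ \ge\ \prod_\a \ya^{-\xi_\a},
\end{equation*}
and the same inequality with $\xa$ in place of $\ya$ gives $\nu_0=\rho(MD_x)\ge\prod_\a \xa^{-\xi_\a}$.

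Multiplying the two bounds, I obtain
\begin{equation*}
	\la_0\,\nu_0\ \ge\ \prod_\a (\xa\ya)^{-\xi_\a}.
\end{equation*}
By submultiplicativity \eqref{submult} each factor obeys $\xa\ya\le1$, so $(\xa\ya)^{-\xi_\a}\ge1$ because $\xi_\a>0$; therefore $\la_0\nu_0\ge1$. In particular $\la_0$ and $\nu_0$ cannot both be strictly less than $1$, which is exactly the claim. (Should both monomorphic equilibria happen to be stable, this can only occur through the borderline $\la_0=1$ or $\nu_0=1$, where the linearization is nonhyperbolic and convergence is not geometric, a case the hypothesis excludes.)

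The sole substantive ingredient is the spectral inequality, so the main obstacle is simply having the Friedland--Karlin geometric-mean bound available in the correct form; once Lemma~\ref{Friedland_Karlin} is in hand, submultiplicativity translates directly into $\la_0\nu_0\ge1$ and the argument is immediate. The one point that genuinely must be verified is that the normalization of the Perron eigenvectors of $M$---namely $\sum_\a\xi_\a=1$ against the right eigenvector $\e$---is exactly the one required by Lemma~\ref{Friedland_Karlin}, for this is what makes the exponents in the two geometric means equal to $\xi_\a$ and hence lets them combine cleanly against \eqref{submult}.
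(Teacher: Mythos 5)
Your proposal is correct and follows exactly the route the paper indicates: applying the Friedland--Karlin bound (Lemma~\ref{Friedland_Karlin}) to the two linearizations $MD_y$ and $MD_x$ from the protection conditions of Section~\ref{sec:PP}, and multiplying the resulting geometric-mean bounds so that submultiplicativity forces $\la_0\nu_0\ge1$. The paper only sketches this argument in one line; your write-up supplies the same details, including the correct handling of the nonhyperbolic borderline via the ``geometric rate of convergence'' hypothesis.
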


The proof of this theorem applies the following very useful result to the conditions for protection of $\A_1$ or $\A_2$ (Section \ref{sec:PP}).

\begin{lemma}[Friedland and Karlin 1975]\label{Friedland_Karlin}
If $M$ is a stochastic $n\times n$ matrix and $D$ is a diagonal matrix with entries $d_i>0$ along the
diagonal, then
\begin{equation}
	\rh(MD) \ge \prod_{i=1}^n d_i^{\xi_i}
\end{equation}
holds, where $\rh(MD)$ is the spectral radius of $MD$.
\end{lemma}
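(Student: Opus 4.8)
The plan is to combine the Perron--Frobenius theorem with the weighted arithmetic--geometric mean inequality, applied directly to the right eigenvector equation of $MD$. The only structural facts I will use are that each row of $M$ sums to $1$ and that $\xi$ is a nonnegative left fixed vector of $M$ with $\sum_i\xi_i=1$; no derivative formulas or convexity of the spectral radius are needed.

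First I would record the eigendata. Since $M$ is stochastic, its spectral radius is $1$, the all-ones vector is a right eigenvector, and the left Perron eigenvector is the stationary vector $\xi\in\S_\Ga$, so that $\xi M=\xi$ and $\sum_i\xi_i=1$. Assuming $M$ irreducible — whence $MD$, which has the same zero pattern as $M$ because $D$ has a strictly positive diagonal, is also irreducible — Perron--Frobenius furnishes a strictly positive right eigenvector $u=(u_1,\dots,u_n)^\Tr>0$ of $MD$, with $MDu=\la u$ and $\la=\rho(MD)>0$. Componentwise this reads $\sum_j m_{ij}d_ju_j=\la u_i$ for every $i$.

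Next I would set $v_j=d_ju_j>0$, so that the eigenvalue equation becomes $\sum_j m_{ij}v_j=\la u_i$. Because the weights $m_{ij}\ge0$ satisfy $\sum_j m_{ij}=1$, the weighted AM--GM inequality gives, for each $i$,
\[
	\la u_i=\sum_j m_{ij}v_j\ \ge\ \prod_j v_j^{m_{ij}}\,.
\]
I would then raise the $i$-th inequality to the power $\xi_i$ and multiply over $i$. On the left this produces $\la^{\sum_i\xi_i}\prod_i u_i^{\xi_i}=\la\prod_i u_i^{\xi_i}$, using $\sum_i\xi_i=1$; on the right it produces $\prod_j v_j^{\sum_i\xi_i m_{ij}}=\prod_j v_j^{\xi_j}$, using $\xi M=\xi$. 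Substituting $v_j=d_ju_j$ on the right and cancelling the common strictly positive factor $\prod_j u_j^{\xi_j}$ leaves $\la\ge\prod_j d_j^{\xi_j}$, which is exactly the assertion.

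The step I expect to be the main obstacle is guaranteeing the strict positivity of the Perron eigenvector $u$, since both the geometric-mean bound and the final cancellation require $u_j>0$ for every $j$; this is precisely where irreducibility of $M$ is used. To cover a possibly reducible $M$ (where $\xi$ need not be unique or positive), I would pass to the perturbed stochastic matrices $M_\epsilon=(1-\epsilon)M+\epsilon J$, with $J$ the strictly positive matrix of entries $1/n$, which are irreducible; applying the inequality to each $M_\epsilon$ and letting $\epsilon\downarrow0$ recovers the claim by continuity of the spectral radius and of the stationary vector. Everything else is routine bookkeeping with the two normalizations $\sum_j m_{ij}=1$ and $\sum_i\xi_i=1$.
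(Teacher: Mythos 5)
Your argument is correct, and it is worth noting that the paper itself gives no proof of this lemma: it is quoted as a known result of Friedland and Karlin (1975), whose original derivation goes through a variational/convexity characterization of $\log\rh(MD)$ as a function of $\log d_i$. Your route --- Perron--Frobenius for the positive right eigenvector $u$ of $MD$, the substitution $v_j=d_ju_j$, weighted AM--GM row by row with the stochastic weights $m_{ij}$, and then the $\xi$-weighted geometric average using $\xi^\Tr M=\xi^\Tr$ to telescope the exponents --- is the standard short elementary proof of this inequality, and every step checks out: the exponent bookkeeping $\sum_i\xi_i m_{ij}=\xi_j$ and the cancellation of $\prod_ju_j^{\xi_j}>0$ are exactly right. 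The only soft spot is the reducible case: there the stationary vector of $M$ is not unique, so ``continuity of the stationary vector'' under the perturbation $M_\ep=(1-\ep)M+\ep J$ is not quite the right phrase --- what you get is that $\xi_\ep$ converges to \emph{some} stationary vector of $M$ and the inequality holds with that limit. This is immaterial here, since the paper invokes the lemma only under the ergodicity assumption \eqref{ergodic}, where $\xi$ is unique and your main argument applies directly.
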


Without restrictions on the migration matrix, the sufficient condition 
\begin{equation}
	\prod_\a \left(1/\ya\right)^{\xi_\a}>1
\end{equation}
for protection of $A_1$ is sharp. This can be verified for a circulant stepping stone model, 
when $\xi_\a = 1/\Ga$.

The conclusion of Theorem \ref{KarlinCampbell_Result1} remains valid under some other conditions (see
Karlin and Campbell 1980). For instance, if migration is doubly stochastic, then $\xi_\a=1/\Ga$ for every $\a$
and $\rh(MD)\ge\prod_\a d_\a^{1/\Ga}$. Therefore, the compound condition
\begin{equation}
	\prod_\a \xa \ya <1
\end{equation}
implies the conclusion of the theorem. 

Here is another interesting result:

\begin{theorem}[Karlin and Campbell 1980, Result III]\label{KarlinCampbell_Result3}
If $\Ga=2$ and fitnesses are multiplicative, then there exists a unique asymptotically stable equilibrium which is globally attracting.
\end{theorem}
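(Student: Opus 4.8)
The plan is to exploit the reduction, noted in Section~\ref{sect_cont_time_sel} and recalled in Section~\ref{sec:submult}, that with multiplicative fitnesses ($x_\a y_\a=1$) the diploid selection step collapses to the haploid map \eqref{sel_dyn_hap}. Writing $v_\a=x_\a=1/y_\a$ for the relative fitness of $\A_1$ in deme $\a$, the selection step \eqref{mig_sel_dyn21} becomes the M\"obius map $p_\a^\ast=g_\a(p_\a):=v_\a p_\a/(1-p_\a+v_\a p_\a)$, and the whole generation map on $[0,1]^2$ is $F=M\circ g$ with $g(p)=(g_1(p_1),g_2(p_2))$. Since $0<m_\a<1$, the matrix $M$ has all entries positive, each $g_\a$ is strictly increasing, and a direct check shows that the only equilibria on the boundary of $[0,1]^2$ are the corners $(0,0)$ and $(1,1)$: no edge is invariant, because migration seeds $\A_1$ into every deme. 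First I would record that $F$ is strongly monotone on the interior, its Jacobian $M\,\mathrm{diag}(g_\a'(p_\a))$ being entrywise positive. This places the problem within the theory of cooperative planar maps, for which bounded orbits are known to converge to fixed points.

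The dichotomy for the two monomorphic equilibria comes from Lemma~\ref{Friedland_Karlin}. Linearising $F$ at $(0,0)$ gives exactly the matrix $Q=MD$ of \eqref{define_Q} with $D=\mathrm{diag}(1/y_\a)$, while linearising at $(1,1)$ in the coordinates $1-p_\a$ gives $MD^{-1}$, using $1/x_\a=y_\a$. By \eqref{A1_protected}, $(0,0)$ is asymptotically stable if and only if $\rho(MD)<1$, and $(1,1)$ if and only if $\rho(MD^{-1})<1$. Applying the Friedland--Karlin bound to both $D$ and $D^{-1}$, with the common left Perron vector $\xi$ of $M$, yields
\begin{equation*}
	\rho(MD)\,\rho(MD^{-1})\ \ge\ \prod_\a (1/y_\a)^{\xi_\a}\prod_\a (y_\a)^{\xi_\a}=1 ,
\end{equation*}
so the two monomorphic equilibria can never be simultaneously stable. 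This leaves a clean trichotomy: either exactly one corner is stable and globally attracting, or both corners are unstable (both alleles protected), in which case an internal equilibrium must exist.

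The remaining, and hardest, task is to show that in the protected case the internal equilibrium is unique and attracts the whole interior, and that in the one-stable-corner case no internal equilibrium coexists with it. Here I would use the projective structure that multiplicativity supplies: in the log-odds coordinate $\ell_\a=\log\bigl(p_\a/(1-p_\a)\bigr)$ the selection step is the pure translation $\ell_\a\mapsto\ell_\a+\log v_\a$, hence an isometry of the metric $d(\ell,\tilde\ell)=\max_\a|\ell_\a-\tilde\ell_\a|$ (the supremum of the Hilbert metrics on the two $1$-simplices), whereas migration, being a positive averaging across demes, is nonexpansive for $d$ and strictly contracts it whenever the two demes' displacements disagree. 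Combined with strong monotonicity, this nonexpansiveness should force the extremal fixed points produced by monotone iteration from just above $(0,0)$ and just below $(1,1)$ to coincide, yielding a single internal equilibrium that squeezes every interior orbit. The main obstacle is precisely the strictness: because migration never mixes the two alleles within a deme, the associated ambient positive map is not strictly positive and Birkhoff's contraction theorem does not apply off the shelf, so one must argue that genuine coupling of the demes by the positive matrix $M$ excludes a pair of distinct equilibria sharing a common, deme-independent odds-ratio. Once this is settled the trichotomy collapses to a single globally attracting, asymptotically stable equilibrium in every case, and the non-generic boundaries $\rho(MD)=1$ or $\rho(MD^{-1})=1$ are recovered by continuity in the parameters. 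As a more computational alternative, the diallelic two-deme characteristic polynomial \eqref{charpol} can be specialised to $x_\a y_\a=1$ to classify the corners and locate the internal root explicitly, after which a phase-plane comparison of the two equilibrium curves delivers the uniqueness.
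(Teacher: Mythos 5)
Your overall frame---reduce to the haploid map, use cooperativity of the planar map for convergence of every orbit, and use Lemma \ref{Friedland_Karlin} to show the two monomorphic equilibria cannot be simultaneously stable---matches the paper's strategy in its first and last steps, and those parts are sound. The gap sits in the middle, and it is exactly the step the theorem turns on: you must show that the map has at most one internal equilibrium and, in particular, that a stable corner never coexists with a stable internal equilibrium. Convergence of every orbit to \emph{some} fixed point (which planar cooperativity does give you) is compatible with two stable internal equilibria separated by an unstable one, or with a stable corner coexisting with an interior pair created by a fold; none of this is excluded by your corner dichotomy. Your proposed mechanism for excluding it---nonexpansiveness of migration in the sup of the per-deme log-odds (Hilbert) metrics, upgraded to strict contraction---is precisely where you concede ``the main obstacle'' and stop. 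That obstacle is real: selection is an isometry in that metric and migration couples demes but not alleles within a deme, so the composite map is not a strict Birkhoff contraction, and you give no argument ruling out distinct fixed points. Your ``computational alternative'' also does not work as stated: \eqref{charpol} is the characteristic polynomial of the linearization $Q=MD$ at the corner $p=0$ and carries no information about internal equilibria.

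The paper closes exactly this gap by explicit computation. It eliminates $p_2$ from $p_1'=p_1$, substitutes into $p_2'=p_2$, and shows that every internal equilibrium is a root of an explicit quadratic $A(p_1)$; it then proves that the discriminant of $A$ is strictly positive for all $m_1\in(0,1)$ by a two-case analysis. Hence the two roots of $A$ are always real, so internal equilibria can never be created or destroyed by a saddle-node inside $(0,1)$---they can only enter or leave through $p_1=0$ or $p_1=1$---and a bifurcation count in $s_1$ (using that the constant term of $A$ is linear in $s_1$) shows that the stable polymorphism appears exactly when a corner destabilizes and that no second interior pair ever enters. Only after that does the monotonicity property deliver global convergence. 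So your proposal is a correct skeleton with the load-bearing uniqueness argument missing; to complete it you would need either to carry out the discriminant/nullcline computation or to actually prove the strict-contraction claim, neither of which you have done.
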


Because the proof given by Karlin and Campbell (1980) is erroneous, we present a corrected proof (utilizing
their main idea).

\begin{proof}[Proof of Theorem \ref{KarlinCampbell_Result3}]
Because fitnesses are multiplicative, we can treat the haploid model. Without loss of generality we assume that the fitnesses of alleles $\A_1$ and $\A_2$ in deme 1 are
$s_1\ge1$ and 1, respectively, and in deme 2 they are $s_2$ ($0<s_2<1$) and 1. (We exclude the case when one allele is favored in both demes, hence goes to fixation.) Let the frequency of $\A_1$ in demes 1 and 2 be $p_1$ and $p_2$, respectively. Then the recursion becomes
\begin{subequations}\label{dynamics_haploid_2A2D}
\begin{align}
	p_1' &= (1-m_1) \frac{s_1 p_1}{s_1 p_1+1-p_1} + m_1 \frac{s_2 p_2}{s_2 p_2+1-p_2}\,, \\
    p_2' &= m_2 \frac{s_1 p_1}{s_1 p_1+1-p_1} + (1-m_2)\frac{s_2 p_2}{s_2 p_2+1-p_2} \,. 
\end{align}	
\end{subequations}
Solving $p_1'=p_1$ for $p_2$, we obtain
\begin{equation}
	p_2 = \frac{p_1[(1-p_1)(1-s_1) + s_1m_1]}{p_1(1-p_1)(1-s_1)(1-s_2)+m_1 (s_2+s_1 p_1-s_2 p_1)}\,.
\end{equation}
Substituting this into $p_2'=p_2$, we find that every equilibrium must satisfy
\begin{equation}
	p_1(1-p_1)(1-p_1+s_1 p_1) A(p_1)=0\,,
\end{equation}
where
\begin{subequations}
\begin{align}
	A (p_1) &= m_1 [(1-s_1) (1-s_2+m_2 s_2)+m_1 s_1 (1-s_2)] \notag\\
            &\quad -(1-s_1) \left\{(1-m_2)(1-s_1)(1-s_2) + m_1[1- s_2(1-m_2)+s_1(1-m_2-s_2)]\right\} p_1  \notag\\
            &\quad +(1-m_2)(1-s_1)^2(1-s_2)p_1^2\,.
\end{align}	
\end{subequations}
We want to show that $A(p_1)$ has always two zeros, because then zeros in $(0,1)$ can emerge only 
by bifurcations through either $p_1=0$ or $p_1=1$. Therefore, we calculate the discriminant
\begin{subequations}
\begin{align}
		D &= (1-m_2)^2 \si^2\ta^2 + 2m_1(1-m_2) \si \ta [m_2 (2+\si-\ta)-\si \ta] \notag\\
          &\qquad  +m_1^2 [m_2^2 (\si+\ta)^2-2m_2 \si (2+\si-\ta) \ta+\si^2 \ta^2]\,,
\end{align}
\end{subequations}
where we set $s_1=1+\si$ and $s_2=1-\ta$ with $\si\ge 0$ and $0<\ta<1$.

Now we consider $D$ as a (quadratic) function in $m_1$. To show that $D(m_1)>0$ if $0<m_1<1$, 
we compute
\begin{equation}
	D(0)=(1-m_2)^2\si^2\ta^2 >0 \,, \; D (1)=m_2^2[\ta-\si (1-\ta)]^2 >0\,,
\end{equation}
and
\begin{equation}
	D'(0) = 2 (1-m_2) \si \ta [m_2 (2+\si-\ta)-\si \ta]\,.
\end{equation}
We distinguish two cases.

1. If $m_2 (2+\si-\ta)\ge\si \ta$, then $D'(0)\ge 0$ and there can be no zero in $(0,1)$. 
(If $D$ is concave, then $D(m_1)>0$ on $[0,1]$ because $D(1)>0$; if $D$ is convex, then
$D'(m_1)>0$ holds for every $m_1>0$, hence $D(m_1)>0$ on $[0,1]$.)

2. If $m_2 (2+\si-\ta)<\si\ta$, whence $D'(0)< 0$ and $m_2<\frac{\si \ta}{2+\si-\ta}$, we obtain
\begin{subequations}
\begin{align}
	D'(1) &= 2m_2[-\si \ta (2+\si -\ta-\si \ta)+m_2 (\si^2 (1-\ta)+\ta^2+\si \ta^2)] \notag \\
          &< -2m_2 \frac{4 \si (1+\si) (1-\ta) \ta}{2+\si-\ta} < 0\,.
\end{align}
\end{subequations}
Therefore, $D$ has no zero in $(0,1)$. Thus, we have shown that $A(p_1)=0$
has always two real solutions. 

For the rest of the proof we can follow Karlin and Campbell:
If $s_1=1$ and $s_2 < 1$, there are no polymorphic equilibria, $p_1=0$ is stable and $p_1=1$ is unstable. 
As $s_1$ increases from 1, a bifurcation event at $p_1=0$ occurs before $p_1=1$ becomes stable. At this bifurcation
event, $p_1=0$ becomes unstable and a stable polymorphic equilibrium bifurcates off $p_1=0$. 
Because the constant term $m_1 [(1-s_1) (1-s_2+m_2 s_2)+m_1 s_1 (1-s_2)]$ in $A(p_1)$ is linear in $s_1$, no further
polymorphic equilibria can appear as $s_1$ increases until $p=1$ becomes stable, which then remains stable for
larger $s_1$.

Karlin and Campbell (1980) pointed out that the dynamics \eqref{dynamics_haploid_2A2D} has the following monotonicity property:
If $p_1<q_1$ and $p_2<q_2$, then $p_1'<q_1'$ and $p_2'<q_2'$. By considering $1-p_2$ and $1-q_2$, $p_1<q_1$ and $p_2>q_2$ implies $p_1'<q_1'$ and $p_2'>q_2'$. In particular, if $p_1<p_1'$ and $p_2<p_2'$, then $p_1'<p_1''$ and $p_2'<p_2''$.
Therefore, this monotonicity property implies global monotone convergence to the asymptotically stable equilibrium.
\end{proof}

\begin{remark}\label{rem_converge_2A2D}\rm
Because the monotonicity property used in the above proof holds for an arbitrary number of demes and for selection on diploids,
Karlin and Campbell (1980) stated that ``periodic or oscillating trajectories seem not to occur'' (in diallelic models). However, the argument in the above proof cannot be extended to more than two demes and Akin (personal communication) has shown that for three diallelic demes unstable periodic orbits can occur in the corresponding continuous-time model (cf.\ Section \ref{sec:Mig_sel_cont}). For two diallelic demes, however, every trajectory converges to an equilibrium.
\end{remark}

Karlin and Campbell (1980) showed that for some migration patterns, Theorem \ref{KarlinCampbell_Result3} 
remains true for an arbitrary number of demes. The Levene model is one such example. In fact,
Theorem \ref{RB_condition} is  a much stronger result. It is an open problem
whether Theorem \ref{KarlinCampbell_Result3} holds for arbitrary migration patterns if there
are more than two demes. An interesting and related result is the following.

\begin{theorem}[Karlin and Campbell 1980, Result IV]\label{KarlinCampbell_Result4}
Let $\Ga\ge2$ and let $M$ be arbitrary but fixed.  If there exists a unique, globally attracting equilibrium
under multiplicative fitnesses, then there exists a unique, globally attracting equilibrium for arbitrary
submultiplicative fitnesses.
\end{theorem}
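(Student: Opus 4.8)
The plan is to exploit the order-preserving structure of the selection–migration map and to squeeze the submultiplicative dynamics between two auxiliary \emph{multiplicative} systems, to which the hypothesis applies directly. Write the map as $T=M\circ S$, where $S$ acts deme-wise through the one-locus selection map $g_\a(\pa)=\pa\,w_{1,\a}/\wba$ from \eqref{mig_sel_dyn2}, and $M$ is the nonnegative, row-stochastic backward migration matrix. Two structural facts carry the argument. First, for fixed viabilities $g_\a$ is a strictly increasing homeomorphism of $[0,1]$, and since $M\ge0$ the composite $T$ is monotone for the componentwise order on $[0,1]^\Ga$, with the two fixation states $\mathbf 0,\mathbf 1$ as equilibria and strongly monotone on the interior when $M$ is irreducible. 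Second, regarding $g_\a$ as a function of the fitnesses \eqref{xa_ya}, a one-line computation gives $\partial g_\a/\partial\xa\ge0$ and $\partial g_\a/\partial\ya\le0$: raising the $\A_1\A_1$ fitness aids $\A_1$, raising the $\A_2\A_2$ fitness hurts it. Here ``globally attracting'' is understood as attracting all interior orbits.

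The submultiplicativity \eqref{submult}, $\xa\ya\le1$, is exactly what permits a multiplicative sandwich. Keeping $\xa$ fixed and raising $\ya$ to $1/\xa\ge\ya$ produces a multiplicative scheme $L$ with $g_\a^L\le g_\a$; keeping $\ya$ fixed and raising $\xa$ to $1/\ya\ge\xa$ produces a multiplicative scheme $U$ with $g_\a^U\ge g_\a$. Hence, pointwise on $[0,1]^\Ga$,
\begin{equation*}
	T^L \le T^{\mathrm{sub}} \le T^U ,
\end{equation*}
with $T^L,T^U$ both multiplicative and sharing the given $M$. By hypothesis each of $T^L,T^U$ has a unique globally attracting equilibrium $\hat p^L$ and $\hat p^U$, and the comparison $T^L\le T^U$ of two monotone maps forces $\hat p^L\le\hat p^U$. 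The elementary comparison principle for monotone maps gives $(T^L)^n\le(T^{\mathrm{sub}})^n\le(T^U)^n$, so letting $n\to\infty$ shows every orbit of $T^{\mathrm{sub}}$ is asymptotically trapped in the order interval $[\hat p^L,\hat p^U]$. In particular the orbit cannot oscillate with amplitude exceeding $\hat p^U-\hat p^L$; this is how global convergence is imported past the two-deme barrier flagged in Remark \ref{rem_converge_2A2D}, where the monotonicity trick alone fails for $\Ga\ge3$.

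Next I would extract extremal fixed points. Since $\hat p^L=T^L(\hat p^L)\le T^{\mathrm{sub}}(\hat p^L)$ and $T^{\mathrm{sub}}(\hat p^U)\le T^U(\hat p^U)=\hat p^U$, monotonicity makes $(T^{\mathrm{sub}})^n(\hat p^L)$ nondecreasing and $(T^{\mathrm{sub}})^n(\hat p^U)$ nonincreasing; being bounded, they converge to fixed points $\hat p_*\le\hat p^*$ of $T^{\mathrm{sub}}$, and the trapping above drives every interior orbit into $[\hat p_*,\hat p^*]$. The theorem thus reduces to the single assertion $\hat p_*=\hat p^*$. The boundary bookkeeping is handled by the Friedland--Karlin Lemma \ref{Friedland_Karlin}: applied to the protection matrices $M\,\mathrm{diag}(1/\ya)$ and $M\,\mathrm{diag}(1/\xa)$ it yields $\rh\!\left(M\,\mathrm{diag}(1/\ya)\right)\rh\!\left(M\,\mathrm{diag}(1/\xa)\right)\ge\prod_\a(\xa\ya)^{-\xi_\a}\ge1$, where $\xi$ is the left Perron eigenvector of $M$ with $\sum_\a\xi_\a=1$. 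So, exactly as in Theorem \ref{KarlinCampbell_Result1}, the two monomorphic states cannot both be stable, which pins down which boundary (if either) is attracting and excludes the degenerate case of two competing fixations.

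The main obstacle is the collapse $\hat p_*=\hat p^*$, i.e.\ ruling out a pair (and, by strong monotonicity for irreducible $M$, a strictly ordered pair $\hat p_*\ll\hat p^*$) of distinct interior equilibria of $T^{\mathrm{sub}}$. The usual shortcut—strict concavity or sublinearity of $T$, which would force a unique positive fixed point—is unavailable: $g_\a$ is concave in the demes favouring $\A_1$ but convex in those favouring $\A_2$, so the curvature changes sign with $\a$, and this sign-heterogeneity is precisely the reason the multiplicative hypothesis is needed as a crutch. The route I would take is a monotone continuation: deform $\ya\nearrow1/\xa$ (reaching the multiplicative scheme $L$), which by the second structural fact moves each selection map monotonically in one direction, and track the two ordered equilibria, arguing that the submultiplicative inequality prevents them from merging strictly before multiplicativity is reached—so that a persistent gap would contradict the assumed uniqueness for $T^L$. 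Making this tracking rigorous, i.e.\ controlling bifurcations of the interior equilibria of $T^{\mathrm{sub}}$ along the deformation (equivalently, showing an order-unstable interior equilibrium between $\hat p_*$ and $\hat p^*$ cannot survive the passage to the multiplicative boundary), is the crux and the step I expect to require the most care.
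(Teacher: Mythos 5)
The paper states this theorem without proof (it is quoted from Karlin and Campbell 1980), so there is no internal argument to compare yours against; it must stand on its own, and as written it has a genuine gap at exactly the step that carries the content of the result. Your preliminary structure is sound: $g_\a$ is increasing in $\pa$, and $\partial g_\a/\partial\xa\ge0$, $\partial g_\a/\partial\ya\le0$, so under \eqref{submult} the multiplicative schemes $L$ (raise $\ya$ to $1/\xa$) and $U$ (raise $\xa$ to $1/\ya$) do give $T^L\le T^{\mathrm{sub}}\le T^U$, and the standard monotone iteration started at $\hat p^L$ and $\hat p^U$ correctly produces fixed points $\hat p_*\le\hat p^*$ of $T^{\mathrm{sub}}$ whose order interval asymptotically traps every interior orbit. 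But this only \emph{localizes} the dynamics in $[\hat p_*,\hat p^*]$; the theorem is precisely the assertion that this interval degenerates to a point, and that is the part you do not prove. (Two smaller points: the sandwich is undefined where $\xa=0$ or $\ya=0$, and strong monotonicity requires irreducibility of $M$, which the statement does not assume.)

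The ``monotone continuation'' you propose for the collapse $\hat p_*=\hat p^*$ does not work as described. Deforming $\ya$ upward to $1/\xa$ moves $T$ monotonically downward as a map, but it does not control a \emph{pair} of ordered interior equilibria: two distinct interior fixed points present at the submultiplicative parameters can collide in a saddle-node bifurcation at an interior value of the deformation parameter and vanish, which is entirely consistent with uniqueness at the multiplicative endpoint $T^L$. No contradiction is obtained, so the possibility $\hat p_*\ll\hat p^*$ --- equivalently bistability, or a nonconvergent orbit inside the interval --- is not excluded; yet ruling this out is the whole theorem. Closing the gap needs an additional ingredient you do not supply: a degree or index count invariant along the homotopy, a concavity or subhomogeneity property of $T^{\mathrm{sub}}$ on the relevant order interval, or a direct argument showing any equilibrium of $T^{\mathrm{sub}}$ is simultaneously a sub- and super-solution of a single multiplicative system. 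As you acknowledge, this is the crux; since it is missing, what you have is a correct reduction and localization, not a proof.
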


\section{The Levene model}\label{sec:Levene model}
This is a particularly simple model to examine the consequences of spatially varying selection for the maintenance of genetic variability.
It can also be interpreted as a model of frequency-dependent selection (e.g., Wiehe and Slatkin 1998).
From the definition \eqref{def_Levene} of the Levene model and equation \eqref{mig_dyn}, we infer 
\begin{equation}
	p_{i,\a}' = \sum_\be c_\be p_{i,\be}^\ast,
\end{equation}
which is independent of $\a$. Therefore, after one round of migration allele frequencies are the same in all demes, whence
it is sufficient to study the dynamics of the vector of allele frequencies
\begin{equation}
	p = (p_1,\ldots,p_I)^\Tr \in \S_I\,.
\end{equation}
The migration-selection dynamics \eqref{mig_sel_dyn} simplifies drastically and yields the recurrence equation
\begin{equation}\label{dyn_Levene}
	p_i' = p_i \sum_\a c_\a \frac{w_{i,\a}}{\wba}  \quad (i\in\I)
\end{equation}
for the evolution of allele frequencies in the Levene model.
Therefore, there is no population structure in the Levene model although the population experiences spatially varying selection. In particular, distance does not play any role. The dynamics \eqref{dyn_Levene} is also obtained if, instead of the life cycle in Section \ref{sec:forw_backw}, it is assumed that adults form a common mating pool and zygotes are distributed randomly to the demes.

\begin{remark}\label{rem:weak_sel_etc}\rm
1. In the limit of weak selection, the Levene model becomes equivalent to panmixia.
Indeed, with $w_{ij,\a} = 1 + s r_{ij,\a}$, an argument analogous to that below \eqref{w_m} shows that the resulting dynamics is of the form
\begin{equation}
	\dot p_i = p_i \sum_\a c_\a \left(w_{i,\a} - \bar w_\a\right) = p_i(z_i-\bar z)\,,
\end{equation}
where $z_i=\sum_j z_{ij}p_j = \sum_j\sum_\a c_\a r_{ij,\a}p_j$ is linear in $p$ and $\bar z= \sum_i z_i p_i$. 
Here, $z_{ij}=\sum_\a c_\a r_{ij,\a}$ is the spatially averaged selection coefficient of $\A_i\A_j$ (Nagylaki and Lou 2001). 
Therefore, the Levene model is of genuine interest only if selection is strong. 
For arbitrary migration, the weak-selection limit generally does not lead to a panmictic dynamics (Section \ref{sec:Weak_mig}).

2. The weak-selection limit for juvenile migration is panmixia (Lou and Nagylaki 2008).

3. For hard selection \eqref{hard}, the dynamics in the Levene model becomes
\begin{equation}
	p_i' = p_i\, \frac{w_i}{\wb} \quad (i\in\I)\,, \text{ where } w_i = \sum_\a c_\a w_{i,\a}\,,
\end{equation}
which is again equivalent to the panmictic dynamics with fitnesses averaged over all demes. 
Therefore, no conceptually new behavior occurs. 
With two alleles, there exists a protected polymorphism if and only if the averaged fitnesses display
heterozygous advantage. Interestingly, it is not true that the condition for protection of an allele
under hard selection is always more stringent than under soft selection, although under a range of
assumptions this is the case (see Nagylaki 1992, Sect.\ 6.3).
\end{remark}

\subsection{General results about equilibria and dynamics}\label{sec:Levene_general}
We define 
\begin{equation}\label{tw}
	\tw = \tw(p)= \prod_\a {\wba(p)}^{c_\a}\,,
\end{equation}
which is the geometric mean of the mean fitnesses in single demes. Furthermore, we define
\begin{equation}\label{F}
	F(p) = \ln\tw(p)= \sum_\a c_\a \ln\wba(p)\,.
\end{equation}
Both functions will play a crucial role in our study of the Levene model.

From
\begin{equation}
	\pder{\wba}{p_i} = \pder{}{p_i}\sum_{i,j} p_ip_jw_{ij,\a} = 2\sum_j p_j w_{ij,\a} = 2w_{i,\a}\,,
\end{equation}
we obtain
\begin{equation}\label{p_i_pder1}
	p_i' = p_i \sum_\a \frac{c_\a}{\wba}\,\frac{1}{2}\,\pder{\wba}{p_i} 
	 	= \frac{1}{2} p_i \sum_\a c_\a \pder{\ln\wba}{p_i}  
	 	= \frac{1}{2} p_i \pder{F(p)}{p_i}\,.
\end{equation}
Because $\sum_i p_i'=1$, we can write \eqref{p_i_pder1} in the form
\begin{equation}\label{p_i_pder}
	p_i' = p_i \pder{\tw(p)}{p_i}\biggl/\sum_jp_j\pder{\tw(p)}{p_j} 
		= p_i \pder{F(p)}{p_i}\biggl/\sum_jp_j\pder{F(p)}{p_j} \,.
\end{equation}

A simple exercise using Lagrange multipliers shows that the equilibria of \eqref{p_i_pder}, hence of
\eqref{dyn_Levene}, are exactly the critical points of $\tw$, or $F$.
Our aim is to prove the following theorem.

\begin{theorem}[Li 1955; Cannings 1971; Nagylaki 1992, Sect.\ 6.3]\label{Levene_Lyap}
(a) Geometric-mean fitness satisfies $\De\tw(p)\ge0$ for every $p\in\S_I$, 
and $\De\tw(p)=0$ if and only if $p$ is an equilibrium of \eqref{dyn_Levene}.
The same conclusion holds for $F$.

(b) The set $\La$ of equilibria is globally attracting, i.e., $p(t)\to\La$ as $t\to\infty$. If every point
in $\La$ is isolated, as is generic\footnote{We call a property generic if it holds for almost all parameter combinations and, if applicable, for almost all initial conditions}, then $p(t)$ converges to some $\hat p\in\La$. Generically,
$p(t)$ converges to a local maximum of $\tw$.
\end{theorem}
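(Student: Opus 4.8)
The plan is to read the Levene recursion \eqref{p_i_pder} as the Baum--Eagon growth transformation of $\tw$ and to run the entropy proof of the Baum--Eagon inequality in a \emph{quantitative} form, whose remainder term then drives both parts of the theorem. Throughout I assume $\wba>0$ on $\S_I$ (e.g.\ all genotype fitnesses positive), so the map and $\tw,F$ are smooth on the simplex. By \eqref{p_i_pder} the map $p\mapsto p'$ is exactly $p_i'=p_i\,\partial_i\tw\big/\sum_j p_j\,\partial_j\tw$. To get around the fact that $\tw=\prod_\a\wba^{c_\a}$ is not a polynomial, I first reduce to rational $c_\a=k_\a/N$ with common denominator $N$ and set $P=\tw^N=\prod_\a\wba^{k_\a}$, a genuine homogeneous polynomial of degree $2N$ with nonnegative coefficients (a product of the quadratic forms $\wba$, whose entries $w_{ij,\a}\ge0$). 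Since $\partial_i P=N\tw^{N-1}\partial_i\tw$, the growth transformation of $P$ is again our map. Writing $P(p)=\sum_\nu a_\nu p^\nu$ with $a_\nu\ge0$ and $|\nu|=2N$, I form the probabilities $\mu_\nu=a_\nu p^\nu/P(p)$ and apply Jensen to the concave logarithm, using $\sum_\nu\mu_\nu\nu_i=p_i\partial_iP/P=2N p_i'$ (Euler homogeneity plus the definition of the map):
\begin{equation*}
\ln\frac{P(p')}{P(p)}=\ln\sum_\nu\mu_\nu\frac{(p')^\nu}{p^\nu}\ \ge\ \sum_i\ln\frac{p_i'}{p_i}\sum_\nu\mu_\nu\nu_i=2N\sum_i p_i'\ln\frac{p_i'}{p_i}.
\end{equation*}
Dividing by $N$ gives the key estimate
\begin{equation*}
\De F=\ln\frac{\tw(p')}{\tw(p)}\ \ge\ 2\sum_i p_i'\ln\frac{p_i'}{p_i}\ \ge\ 0,
\end{equation*}
the last step being nonnegativity of relative entropy, with equality iff $p'=p$.

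To finish (a), I fix $p$ and observe that both sides of the displayed estimate are continuous in $(c_\a)$ (the map and $\tw$ depend continuously on $c$, the relative entropy being read on $\mathrm{supp}(p)$, where $p_i=0\Rightarrow p_i'=0$). Hence it extends from the dense rational $c$ to \emph{all} admissible $c$, and carrying the relative-entropy remainder through the limit preserves strictness: $\De F>0$ whenever $p'\ne p$. Because the fixed points of the map are exactly the equilibria of \eqref{dyn_Levene}, this yields $\De F\ge0$ with equality iff $p$ is an equilibrium; since $\tw=e^F$ and $\exp$ is strictly increasing, the identical statement holds for $\De\tw$.

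For (b), part (a) makes $\tw$ a strict Lyapunov function on the compact space $\S_I$, so $\tw(p(t))$ increases to a limit and the $\om$-limit set of each orbit is nonempty, compact, invariant under the continuous map, and (by the equality case of (a)) contained in $\La$; thus $p(t)\to\La$. The remainder term again gives more: by Pinsker's inequality $\tfrac12\|p(t{+}1)-p(t)\|_1^2\le\sum_i p_i(t{+}1)\ln\bigl(p_i(t{+}1)/p_i(t)\bigr)\le\tfrac12\De F(p(t))\to0$, so consecutive iterates converge together. Boundedness together with $\|p(t{+}1)-p(t)\|\to0$ forces the $\om$-limit set to be connected; when the equilibria are isolated, a connected subset of a discrete set is a single point, so $p(t)\to\hat p\in\La$. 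Isolation is generic: the equilibria are the critical points of $F$ restricted to $\S_I$ and to each of its subsimplices, and by Sard's theorem these are nondegenerate, hence isolated, for almost all fitness arrays $(w_{ij,\a})$.

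It remains to get generic convergence to a local maximum, and this is where the main obstacle lies. At a nondegenerate equilibrium that is \emph{not} a local maximum of $\tw$, the restricted function $F$ has positive Morse index, and since the dynamics strictly increases $\tw$ such a fixed point is unstable with a stable set of dimension strictly below $\dim\S_I$ (stable-manifold theorem applied to the hyperbolic fixed point). The finitely many non-maximal equilibria then contribute a finite union of lower-dimensional submanifolds, a Lebesgue-null set, and every initial condition off it has its already-established limit at a local maximum. The delicate step is precisely this translation of the Morse index of $F$ into hyperbolicity and a lower-dimensional stable set for the nonlinear map, which leans on the generic nondegeneracy from the previous paragraph and on careful boundary bookkeeping across all subsimplices; by contrast the fractional-exponent gap in (a) is dispatched cleanly by the rational-approximation/continuity device, with the relative-entropy remainder guaranteeing that neither the equality case nor the estimate $\|p(t{+}1)-p(t)\|\to0$ is lost in the limit.
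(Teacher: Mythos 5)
Your proof is correct and, for part (a), follows the same route as the paper: reduce to rational $c_\a$ with common denominator, pass to the homogeneous polynomial $W=\tw^N$ with nonnegative coefficients, invoke the Baum--Eagon mechanism, and extend to real $c_\a$ by continuity. The added value of your version is that you do not merely cite Baum--Eagon but rerun its Jensen/entropy proof in quantitative form, obtaining $\De F\ge 2\sum_i p_i'\ln(p_i'/p_i)\ge0$; this relative-entropy remainder makes the passage to irrational $c_\a$ cleaner than the paper's ``straightforward but tedious approximation argument'' (strictness survives the limit automatically because the remainder is continuous in $c$ and independent of the rational approximation), and it pays off again in part (b), where Pinsker's inequality gives $\norm{p(t+1)-p(t)}\to0$, hence connectedness of the $\om$-limit set and convergence to a single equilibrium when $\La$ is discrete. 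The paper's outline handles the first claim of (b) by citing LaSalle and dismisses the second as ``obvious,'' so your argument actually supplies a step the paper leaves implicit. Two caveats: your appeal to Sard's theorem for generic nondegeneracy is the wrong tool (parametric transversality is what is needed), though the theorem's statement already assumes isolation as a generic hypothesis so nothing is lost; and, as you acknowledge, the translation of Morse index into a lower-dimensional stable set for the map (to get generic convergence to a local maximum) is left as a sketch --- but the paper's own justification of that third claim is no more detailed, and its parenthetical remark concedes exactly the same point about saddle points.
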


This is an important theoretical result because it states that in the Levene model no complex dynamical behavior,
such as limit cycles or chaos, can occur. One immediate consequence of this theorem is

\begin{corollary}
Assume the Levene model with two alleles. If there exists a protected polymorphism, then all trajectories
converge to an internal equilibrium point.
\end{corollary}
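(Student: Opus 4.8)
The plan is to reduce the two-allele Levene dynamics \eqref{dyn_Levene} to a one-dimensional, order-preserving map and then to combine its monotonicity with the instability of the two monomorphic states that a protected polymorphism guarantees. With two alleles we set $p=p_1\in[0,1]$ (so the frequency of $\A_2$ is $1-p$), and \eqref{dyn_Levene} becomes $p'=T(p)$, where
\begin{equation*}
	T(p) = p\sum_\a c_\a\,\frac{w_{1,\a}(p)}{\wba(p)} = \sum_\a c_\a\,\frac{p\,w_{1,\a}(p)}{\wba(p)}\,.
\end{equation*}
Under the protected-polymorphism hypothesis both alleles are protected, which (through the matrices $MD$ underlying \eqref{A1_protected}) forces $x_\a>0$ and $y_\a>0$ for every $\a$; hence each $\wba(p)>0$ on all of $[0,1]$, so $T$ is well defined and continuous with $T(0)=0$ and $T(1)=1$.

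The structural point I would establish first is that $T$ is \emph{strictly increasing}. Each summand $p\,w_{1,\a}/\wba$ is precisely the one-deme diallelic selection map, which is an order-preserving homeomorphism of $[0,1]$; this is the diploid monotonicity property recorded in Remark \ref{rem_converge_2A2D}, and it is consistent with the sign structure of $\De p$ in \eqref{sel_dyn_2}. A convex combination, with weights $c_\a>0$ summing to one, of increasing maps is again increasing, so $T$ is a strictly increasing continuous self-map of $[0,1]$ fixing the endpoints.

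Next I would deduce convergence purely from monotonicity. For a continuous increasing map of an interval every orbit is monotone: if $p(1)=T(p(0))\ge p(0)$, then applying $T$ and using order preservation gives $p(2)=T(p(1))\ge T(p(0))=p(1)$, and inductively $(p(t))$ is nondecreasing (symmetrically nonincreasing if $p(1)\le p(0)$). Being monotone and bounded, $p(t)\to\hat p$ for some $\hat p\in[0,1]$, and continuity of $T$ yields $T(\hat p)=\hat p$, i.e.\ $\hat p$ is an equilibrium. (Alternatively one may invoke Theorem \ref{Levene_Lyap}: geometric-mean fitness $\tw$ is a strict Lyapunov function, so the orbit approaches the equilibrium set $\La$; the monotonicity of $T$ then upgrades this to convergence to a single point.)

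Finally I would place the limit in the interior using the protected polymorphism. Protection of both alleles makes $p=0$ and $p=1$ unstable by \eqref{A1_protected}; concretely $T(p)>p$ for all sufficiently small $p>0$ and $T(p)<p$ for all $p$ sufficiently close to $1$. Take any polymorphic initial condition $p(0)\in(0,1)$. Its orbit is monotone and converges to a fixed point $\hat p$. If $\hat p=0$, the orbit would be eventually decreasing toward $0$, yet once $p(t)$ is small we have $p(t+1)=T(p(t))>p(t)$, a contradiction; likewise $\hat p=1$ is impossible since $T(p)<p$ near $1$. Hence $\hat p\in(0,1)$ is an internal equilibrium point, as claimed. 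The step I expect to be the crux is the monotonicity of the reduced map $T$: everything downstream is a soft consequence of order preservation together with the boundary instability, so the argument really rests on the fact that the one-deme diploid selection map is increasing and that pooling across demes preserves this; reading off the endpoint instability from the protection conditions is then routine.
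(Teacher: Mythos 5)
Your argument is correct, but it is not the paper's route. The paper obtains the corollary as an immediate consequence of Theorem \ref{Levene_Lyap}: the geometric-mean fitness $\tw$ is a strict Lyapunov function, LaSalle's invariance principle gives convergence to the equilibrium set, and (generically, when equilibria are isolated) to a single equilibrium; protection of both alleles then excludes the monomorphic limits. You instead exploit the one-dimensionality directly: the reduced map $T$ is a convex combination of the per-deme selection maps, each nondecreasing, so every orbit is a monotone sequence and converges to a fixed point with no appeal to $\tw$ or to genericity. That is a genuine gain — Theorem \ref{Levene_Lyap}(b) only guarantees pointwise convergence when the equilibria are isolated, whereas your monotonicity argument gives it unconditionally — at the price of being special to two alleles, while the Lyapunov route works in $\S_I$. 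The one step you lean on without proof, that the single-deme diallelic selection map $p\mapsto p\,w_{1,\a}/\wba$ is nondecreasing, is true and classical (it is the diploid monotonicity property behind Remark \ref{rem_converge_2A2D}); if you want it self-contained, note that with fitnesses $x_\a,1,y_\a$ the derivative of this map has the sign of $y_\a(1-p)^2+2x_\a y_\a\,p(1-p)+x_\a p^2\ge0$. The endpoint analysis is also fine: the protection conditions \eqref{PP_gen} give $T(p)>p$ near $0$ and $T(p)<p$ near $1$, which combined with monotone convergence forces the limit into $(0,1)$; and you correctly note that $x_\a>0$, $y_\a>0$ (needed for $T$ to be defined and to fix the endpoints) is implicit in the protection conditions.
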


\begin{proof}[Outline of the proof of Theorem \ref{Levene_Lyap}]
(a) We assume $c_\a\in\Rat$ for every $\a$. 
Then there exists $n\in\Nat$ such that $W=\tw^n$ is a homogeneous polynomial in $p$
with nonnegative coefficients. From
\begin{equation*}
	\pder{W}{p_i} = n \tw^{n-1}\pder{\tw}{p_i}\,,
\end{equation*}
we infer
\begin{equation*}
	\pder{W}{p_i}\biggl/\sum_j p_j \pder{W}{p_j} 
		= \pder{\tw}{p_i}\biggl/\sum_j p_j \pder{\tw}{p_j}  \,.
\end{equation*}
From the inequality of Baum and Eagon (1967), 
we infer immediately $W(p')>W(p)$ unless $p'=p$. Therefore, $W$ and $\tw$ are (strict) Lyapunov functions.
By a straightforward but tedious approximation argument, which uses compactness of $\S_I$, it follows that $\tw$ is also a Lyapunov function if $c_\a\in\Real$ (see Nagylaki 1992, Sect. 6.3). This proves the first assertion in (a). The second follows because the logarithm is
strictly monotone increasing.

(b) The first statement is an immediate consequence of LaSalle's invariance principle (LaSalle 1976, p.\ 10). The second
is obvious, and the third follows because $\tw$ is nondecreasing. (Note, however, that convergence to
a maximum holds only generically because some trajectories may converge to saddle points. Moreover, because
$\tw$ may also have minima and saddle points, the globally attracting set $\La$ may not be stable; only a 
subset of $\La$ is stable.)
\end{proof}

Next, we derive a useful criterion for the existence of a unique stable equilibrium. 

\begin{lemma}\label{wba_concave}
If $\wba$ is concave for every $\a$, then $F$ is concave.
\end{lemma}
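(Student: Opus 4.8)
The plan is to exploit the structure of $F$ as a positively weighted sum of logarithms of the deme mean fitnesses. Recalling from \eqref{F} that $F(p) = \sum_\a c_\a \ln\wba(p)$ with weights $c_\a \ge 0$, and noting that $\S_I$ is convex and that a nonnegative linear combination of concave functions is concave, it suffices to prove that each summand $p \mapsto \ln\wba(p)$ is concave on $\S_I$. Thus the whole lemma reduces to a single composition fact applied deme by deme.

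For each fixed $\a$ I would invoke the standard composition rule: if $h$ is concave on a convex set and $g$ is concave and \emph{nondecreasing} on an interval containing the range of $h$, then $g\circ h$ is concave. Here $h=\wba$ is concave by hypothesis, and $g=\ln$ is concave and strictly increasing on $(0,\infty)$. Before applying the rule one must check that $\wba(p)>0$ throughout $\S_I$, so that $\ln\wba$ is well defined and the range of $\wba$ lies in the domain of $\ln$; this holds because the genotype fitnesses are positive, whence $\wba=\sum_{i,j}\wija p_ip_j>0$ on the simplex. The composition inequality then follows in two steps: concavity of $\wba$ gives $\wba(\lambda p+(1-\lambda)q)\ge \lambda\wba(p)+(1-\lambda)\wba(q)$, monotonicity of $\ln$ preserves this inequality when both sides are passed through $\ln$, and concavity of $\ln$ bounds the resulting right-hand side below by $\lambda\ln\wba(p)+(1-\lambda)\ln\wba(q)$. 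Summing over $\a$ with the weights $c_\a\ge0$ then yields concavity of $F$.

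I expect no serious obstacle here, since the argument is elementary once the composition rule is recalled; the only point requiring genuine care is the monotonicity hypothesis. It is essential that $\ln$ be \emph{increasing}, not merely concave, for the first inequality to survive being passed through $\ln$; this is precisely where positivity of the deme mean fitnesses enters. An equivalent route, legitimate here because $\wba$ is a polynomial and hence smooth, stays at the level of Hessians: one computes $\partial^2\ln\wba/\partial p_i\partial p_j = \wba^{-1}\,\partial^2\wba/\partial p_i\partial p_j - \wba^{-2}(\partial\wba/\partial p_i)(\partial\wba/\partial p_j)$, observes that the first term is negative semidefinite because $\wba$ is concave, and that the second is a negative rank-one form, so that the Hessian of $\ln\wba$ is negative semidefinite and $\ln\wba$ is concave. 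Either route closes the argument, and the composition version has the advantage of requiring no differentiability of $\wba$.
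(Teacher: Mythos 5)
Your proof is correct and follows essentially the same route as the paper's: the paper likewise observes that $\ln$ is concave and strictly increasing, concludes that $\ln\wba$ is concave, and then sums with the nonnegative weights $c_\a$. Your added care about positivity of $\wba$ and the alternative Hessian computation are fine but not needed beyond what the paper's one-line argument already contains.
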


\begin{proof}
Because the logarithm is strictly monotone increasing and concave, a simple
estimate shows that $\ln\wba$ is concave. Hence, $F$ is concave.
\end{proof}

\begin{theorem}[Nagylaki and Lou 2001]\label{F_concave}
If $F$ is concave, there exists exactly one stable equilibrium (point or manifold) and it is globally
attracting. If there exists an internal equilibrium, it is the global attractor. 
\end{theorem}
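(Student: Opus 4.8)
The plan is to identify the unique stable equilibrium as the set $M$ of maximizers of $F$ on $\S_I$, using two facts already in hand: the equilibria of \eqref{dyn_Levene} are exactly the critical points of $F$ on $\S_I$, and $F$ is a strict Lyapunov function (Theorem~\ref{Levene_Lyap}). First I would pin down the geometry of $M$. Since $\S_I$ is compact and $F$ is continuous, $M\neq\emptyset$; since $\S_I$ is convex and $F$ is concave, $M$ is convex, hence connected, i.e.\ a single point or a connected manifold, exactly as the statement requires.

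Next I would record the decisive consequence of concavity for interior points. At an interior equilibrium $\hat p$ (all $\hat p_i>0$), \eqref{p_i_pder} forces $\pder{F}{p_i}(\hat p)$ to be independent of $i$, which is precisely the Lagrange stationarity condition for maximizing $F$ subject to $\sum_i p_i=1$; for a concave $F$ this first-order condition is sufficient for a global constrained maximum, so $\hat p\in M$. Thus every internal equilibrium lies in $M$, and in particular, if one exists, $M$ meets $\interior\S_I$.

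I would then settle stability and uniqueness. Stability of $M$ is immediate from the Lyapunov property: the super-level sets $\{F\ge \max F-\ep\}$ are forward-invariant and shrink to $M=\{F=\max F\}$ as $\ep\downarrow0$, so orbits starting near $M$ remain near $M$. For uniqueness I would show that any equilibrium $\hat q\notin M$ is unstable. Such a $\hat q$ satisfies $F(\hat q)<\max F$ and, being a critical point of $F$, is a maximizer of $F$ on the face carrying its support (again by concavity, since $F$ restricted to that face is concave and $\hat q$ is relatively interior there). That face cannot contain a point of $M$ (else $F(\hat q)=\max F$), so every $m\in M$ activates some allele $k$ absent at $\hat q$. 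Concavity then gives $F\bigl((1-t)\hat q+tm\bigr)\ge(1-t)F(\hat q)+t\max F>F(\hat q)$, whence the directional derivative toward $m$ is positive; a short computation using $\sum_i m_i=\sum_i\hat q_i=1$ reduces this to $\sum_{k}m_k\bigl(\pder{F}{p_k}(\hat q)-\lambda\bigr)>0$ over the absent alleles, so some absent allele $k$ has marginal fitness $\pder{F}{p_k}(\hat q)$ exceeding the mean $\lambda$ and invades. Hence $\hat q$ is transversally repelling and unstable, and $M$ is the only stable equilibrium.

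The main obstacle is global attraction. Theorem~\ref{Levene_Lyap}(b) only yields convergence to the full equilibrium set $\La$, and the $\om$-limit set of an interior orbit is a connected set of equilibria at some common level $F^\ast$; I must exclude $F^\ast<\max F$. By the previous paragraph every such limiting equilibrium lies on $\partial\S_I$ and is transversally repelling, with a definite positive invasion rate for some absent allele. Upgrading this pointwise repulsion to a statement about the whole (possibly non-isolated, connected) limit set, so that no interior orbit can accumulate on it, is the delicate step; I would carry it out with an average-Lyapunov-function argument using $P(p)=\prod_k p_k^{a_k}$, or a Butler–McGehee chaining argument, exploiting that the invasion directions are inherited uniformly across the connected repelling set (cf.\ permanence theory, Hofbauer and Sigmund 1998). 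Closing this uniform estimate over a non-isolated equilibrium set is where I expect the real work to lie. Once $F^\ast=\max F$ is forced, the $\om$-limit lies in $M$, so every interior orbit converges to $M$; combined with stability, $M$ is globally attracting, and when an internal equilibrium exists it belongs to $M\cap\interior\S_I$, making $M$ the global attractor.
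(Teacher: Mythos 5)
Your argument reaches the same conclusion as the survey's (very terse) proof but by a genuinely different route for the uniqueness part. The paper argues ``from above'': a stable equilibrium is a local maximum of $F$, a local maximum of a concave function on a convex set is a global maximum, and if two stable equilibria existed then $F$ would equal $\max F$ on the whole segment joining them, so that segment consists of equilibria and the two points belong to one connected equilibrium manifold (and if the segment meets the interior, the internal case applies). You instead argue ``from below'': you show directly that every equilibrium $\hat q$ outside the maximizer set $M$ is unstable via the invasion computation. That computation is correct --- at a face-interior critical point the supported partials all equal $\la=\sum_j \hat q_j\pder{F}{p_j}(\hat q)=2$, the directional derivative toward $m\in M$ is therefore carried entirely by the absent alleles, and the linearization of \eqref{p_i_pder} in the direction of an absent allele $k$ with $\pder{F}{p_k}(\hat q)>\la$ has eigenvalue $\pder{F}{p_k}(\hat q)/\la>1$. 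This buys you strictly more than the paper's segment argument (a classification of \emph{all} non-maximizing equilibria as unstable, not merely the statement that two stable ones must be connected), at the cost of length. Your observations that $M$ is nonempty, convex, hence connected, and stable because the superlevel sets of $F$ are forward-invariant and form a neighborhood basis of $M$, are all correct and are exactly what ``(point or manifold)'' refers to.

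The one genuine gap is the one you flag yourself: passing from ``every equilibrium off $M$ is transversally unstable'' to ``no interior orbit accumulates on the sublevel part of the equilibrium set''. You should know that the survey's own proof does not fill this gap either; it disposes of global attraction in a single sentence (``Concavity and the fact that $\De F\ge0$ imply that if an internal equilibrium exists, it must be the global attractor'') and defers to Nagylaki and Lou (2001). So you are not behind the source. But be warned that your first candidate repair, a fixed-exponent function $P(p)=\prod_k p_k^{a_k}$, does not work pointwise in discrete time: concavity gives a uniform lower bound $1+\de$ on the weighted \emph{arithmetic} mean of the per-capita growth factors $\tfrac12\pder{F}{p_k}$ near a non-maximizing equilibrium, whereas $\De\ln P$ is controlled by the weighted \emph{geometric} mean, which can fall below $1$ even when the arithmetic mean exceeds $1+\de$ (one absent allele with very large and another with very small marginal fitness). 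So you would need the genuinely time-averaged permanence criterion, or a Butler--McGehee-type exclusion adapted to non-isolated equilibria, or the original argument of Nagylaki and Lou. Your identification of this as ``where the real work lies'' is accurate.
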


\begin{proof}
Concavity and the fact that $\De F\ge0$ imply that if an internal equilibrium exists, it must be the global
attractor.

Suppose now there exist two stable equilibrium points on the boundary of the simplex $\S_I$. Because $F$
is concave, $F$ is constant on the line that joins them. If that line is on the boundary of $\S_I$, then
these two equilibrium points are elements of the same manifold of equilibria (on the boundary of $\S_I$). 
If the line connecting them
is an internal equilibrium, then the case treated above applies.
\end{proof}

If dominance is absent in every deme, there exist constants $v_{i,\a}$ such that
\begin{equation}\label{nodom}
	w_{ij,\a} = v_{i,\a} + v_{j,\a}
\end{equation}
for every $i,j\in\I$ and $\a\in\G$, whence \eqref{marg_mean_fit} yields
\begin{equation}\label{nodom_w}
	w_{i,\a} = v_{i,\a} + \bar v_\a \quad\text{and}\quad \wba = 2\vba\,,
\end{equation}
where
\begin{equation}\label{vba}
	\vba = \sum_i v_{i,\a}p_i\,.
\end{equation}
A simple calculation shows that without dominance the dynamics \eqref{dyn_Levene} simplifies to
\begin{equation}\label{dyn_Levene_nodom}
	p_i' = \tfrac12 p_i \biggl(1 + \sum_\a c_\a \frac{v_{i,\a}}{\vba} \biggr) \quad (i\in\I)\,.
\end{equation}

Fitnesses are called \emph{multiplicative} if there exist constants $v_{i,\a}$ such that
\begin{equation}
	w_{ij,\a} = v_{i,\a} v_{j,\a}
\end{equation}
for every $i,j\in\I$ and $\a\in\G$. Then \eqref{marg_mean_fit} yields
\begin{equation}
	w_{i,\a} = v_{i,\a} \vba  \quad\text{and}\quad \wba = \vba^2 \,.
\end{equation}
With multiplicative fitnesses, one obtains the haploid Levene model,
i.e.,
\begin{equation}
	p_i'=p_i\sum_\a c_\a v_{i,\a}/\vba \quad (i\in\I)\,.
\end{equation}

\begin{theorem}[Nagylaki and Lou 2001]\label{no_dom_no_mig}
The function $F$ is concave in the following cases:

(a) In every deme there is no dominance.

(b) In every deme fitnesses are multiplicative. 

(c) In every deme a globally attracting internal equilibrium exists without migration (i.e., with the dynamics
$p_{i,\a}'=p_{i,\a}w_{i,\a}/\wba$).

Therefore, the conclusions of Theorem \ref{F_concave} apply in each of the cases.
\end{theorem}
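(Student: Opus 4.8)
The plan is to reduce all three cases to a single analytic fact: since $F(p)=\sum_\a c_\a\ln\wba(p)$ is a nonnegative combination of the functions $\ln\wba$ (recall $c_\a\ge0$ and $\sum_\a c_\a=1$), it suffices to show that each $\ln\wba$ is concave on the (interior of the) simplex, i.e.\ that each $\wba$ is \emph{log-concave}. In the cases where $\wba$ itself happens to be concave I would simply invoke Lemma \ref{wba_concave}; otherwise I would work with the composition $\ln\circ\,\wba$ directly, since log-concavity is strictly weaker than concavity.

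Case (a) is immediate. By \eqref{nodom_w}, absence of dominance gives $\wba=2\vba$, which is affine in $p$ and therefore concave, so Lemma \ref{wba_concave} already yields concavity of $F$. Case (b) cannot be routed through the lemma, because $\wba=\vba^2$ is convex; instead I would observe that $\ln\wba=2\ln\vba$, and $\ln$ of the positive affine function $\vba$ is concave, so each $\ln\wba$ is concave and hence so is $F$.

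The substantive case is (c). First I would convert the dynamical hypothesis into a spectral one: by Theorem \ref{theorem:sel_dyn}(6,7), the existence of a globally attracting internal equilibrium for the single-deme selection dynamics $p_{i,\a}'=p_{i,\a}w_{i,\a}/\wba$ is equivalent to the symmetric fitness matrix $\mathsf W_\a=(w_{ij,\a})$ having \emph{exactly one positive eigenvalue}. Using $\partial\wba/\partial p_i=2w_{i,\a}$, so that $\nabla\wba=2\mathsf W_\a p$ and $\nabla^2\wba=2\mathsf W_\a$, the log-concavity criterion $\wba\,\nabla^2\wba\preceq\nabla\wba\,(\nabla\wba)^\Tr$ becomes, for every $\xi$,
\begin{equation*}
	\wba\,(\xi^\Tr\mathsf W_\a\xi)\le 2\,(\xi^\Tr\mathsf W_\a p)^2 .
\end{equation*}
This is where the one-positive-eigenvalue condition does the work, via a reverse (Lorentzian) Cauchy--Schwarz inequality: if $\mathsf W_\a$ has a single positive eigenvalue and $\wba(p)>0$, then $(\xi^\Tr\mathsf W_\a p)^2\ge\wba\,(\xi^\Tr\mathsf W_\a\xi)$ for all $\xi$. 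I would prove this by a signature argument: decompose $\xi=tp+\eta$ with $\eta$ in the $\mathsf W_\a$-orthogonal complement of $p$ and $t=(\xi^\Tr\mathsf W_\a p)/\wba$; since $p$ is ``timelike'' ($\wba(p)>0$) and the positive index of $\mathsf W_\a$ equals one, the restriction of the form to that complement is negative semidefinite, so $\xi^\Tr\mathsf W_\a\xi=t^2\wba+\eta^\Tr\mathsf W_\a\eta\le t^2\wba$, which rearranges to the stated bound. The harmless extra factor $2$ on the right-hand side then makes the log-concavity inequality more than satisfied, and each $\ln\wba$ is concave.

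I expect the main obstacle to be precisely this case (c): both the clean identification of ``globally attracting internal equilibrium'' with the spectral condition (which leans on the nontrivial equivalences in Theorem \ref{theorem:sel_dyn}) and the domain bookkeeping, namely checking that the interior of $\S_I$ lies inside the positivity cone $\{\wba>0\}$ where the reverse Cauchy--Schwarz inequality is valid and where $\ln\wba$ is even defined (here one uses that $\mathsf W_\a$ is nonnegative and that the internal equilibrium $\hat p$ satisfies $\wba(\hat p)>0$). It is worth noting that this spectral viewpoint actually unifies the argument, since the matrices in (a) (of the form $\mathbf 1 v^\Tr+v\mathbf 1^\Tr$) and in (b) (of the form $vv^\Tr$) also have exactly one positive eigenvalue; the reverse Cauchy--Schwarz inequality thus furnishes a single proof of log-concavity in all three cases. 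In every case $F$ is concave, so Theorem \ref{F_concave} applies and gives the asserted existence of a unique, globally attracting stable equilibrium.
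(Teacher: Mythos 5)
Your proposal is correct, and cases (a) and (b) coincide with the paper's treatment (including the observation that (b) cannot go through Lemma \ref{wba_concave} directly, since $\wba=\vba^2$ is convex; one must pass to $\ln\wba=2\ln\vba$). For case (c), however, you take a genuinely different and considerably heavier route than the paper. The paper's argument is one line: since $\De\wba\ge0$ along orbits of the isolated single-deme dynamics, a globally attracting internal equilibrium must be an interior maximum of the \emph{quadratic} function $\wba$ on the simplex; a quadratic with an interior local maximum has negative semidefinite Hessian on the tangent space and is therefore concave there, so Lemma \ref{wba_concave} applies. Your argument instead converts the hypothesis into the spectral condition that $\mathsf W_\a$ has exactly one positive eigenvalue (Theorem \ref{theorem:sel_dyn}(7)) and then establishes log-concavity of $\wba$ directly via the reverse (Lorentzian) Cauchy--Schwarz inequality $(\xi^\Tr\mathsf W_\a p)^2\ge\wba\,(\xi^\Tr\mathsf W_\a\xi)$ for $\wba(p)>0$. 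This is valid, and it buys something the paper's proof does not: a single uniform mechanism covering all three cases, since the matrices in (a) and (b) also have exactly one positive eigenvalue, and it shows that $\wba$ is log-concave even when it is not concave. The price is that it leans on the nontrivial equivalences of Theorem \ref{theorem:sel_dyn}; in particular, the step from ``globally attracting'' to ``stable'' (needed to invoke item 7) is not automatic for general dynamical systems and itself requires the Lyapunov property $\De\wba\ge0$ together with items 4--5 --- which is precisely the ingredient the paper uses to finish in one step. So your proof is correct but takes the scenic route where a direct appeal to the quadratic structure of $\wba$ suffices.
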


\begin{proof}
(a) Without dominance, $\wba=2\bar v_\a$ is linear in every deme. 

(b) With multiplicative fitnesses, $\ln\wba=2\ln\bar v_\a$, which is concave because $\bar v_\a$ is linear.

(c) If there is a globally attracting internal equilibrium in deme $\a$ when it is isolated, then the
fact that $\De\wba\ge0$ implies that the quadratic $\wba$ must be concave. 

In all three cases, the assertion follows from Lemma \ref{wba_concave}.
\end{proof}

\begin{remark}\rm
Theorems \ref{F_concave} and \ref{no_dom_no_mig} hold for hard selection if we replace $F$ by $\bar w$.
For haploids, Strobeck (1979) proved uniqueness and asymptotic stability of an internal equilibrium.
\end{remark}

Nagylaki and Lou (2006b) provide sufficient conditions for the nonexistence of an internal equilibrium.

The following theorem shows that a globally stable internal equilibrium can be maintained on an open set of parameters, even in the absence of overdominance.
Clearly, this requires that there are at least two demes; cf.\ Section \ref{sec:2alleles_dom}.
We say there is partial dominance in every deme if
\begin{equation}\label{partial_dom}
	w_{ii,\a} < w_{ij,\a} < w_{jj,\a} \quad\text{or}\quad 	w_{jj,\a} < w_{ij,\a} < w_{ii,\a}	
\end{equation}
holds for every pair $i,j\in\I$ with $i\ne j$ and for every $\a\in\Ga$.

\begin{theorem}\label{int_dom_fully}
Assume an arbitrary number of alleles and $\Ga\ge2$ demes. Then there exists a nonempty open set of fitness parameters exhibiting partial dominance 
such that for every parameter combination in this set, there is a unique, internal, asymptotically stable equilibrium
point of the dynamics \eqref{dyn_Levene}. This equilibrium is globally asymptotically stable.
\end{theorem}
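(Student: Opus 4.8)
The plan is to route everything through the concavity theory already in hand. By Theorem \ref{F_concave}, whenever $F$ in \eqref{F} is concave there is a unique globally attracting stable equilibrium, and if an internal equilibrium exists it \emph{is} that global attractor. So it suffices to exhibit a nonempty \emph{open} set of fitness arrays that simultaneously (i) satisfy the partial-dominance inequalities \eqref{partial_dom}, (ii) render $F$ strictly concave, and (iii) admit an internal equilibrium of \eqref{dyn_Levene}. Given (i)--(iii), strict concavity turns the ``point or manifold'' alternative of Theorem \ref{F_concave} into a single nondegenerate interior maximizer of $\tw$; this maximizer is an asymptotically stable equilibrium point by the strict-Lyapunov property of Theorem \ref{Levene_Lyap}, and being the global attractor it is globally asymptotically stable. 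Everything then hinges on building such arrays.

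For (ii) I would force each single-deme mean fitness $\wba$ to be strictly concave, i.e. each symmetric matrix $(w_{ij,\a})$ negative definite on the tangent space $\{\xi\in\Real^I:\sum_i\xi_i=0\}$. By the proof of Lemma \ref{wba_concave} this already makes $F=\sum_\a c_\a\ln\wba$ concave, and the strict negative curvature makes it \emph{strictly} concave: on the tangent space one computes that the Hessian quadratic form is $\sum_\a \tfrac{2c_\a}{\wba}\big(\sum_{ij}w_{ij,\a}\xi_i\xi_j\big)-\sum_\a\tfrac{4c_\a}{\wba^2}\big(\sum_i w_{i,\a}\xi_i\big)^2$, which is strictly negative for $\xi\neq0$. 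The delicate point is that strict concavity of the parabola $\wba$ must coexist with partial dominance, which forbids single-deme overdominance; concretely this means the unconstrained maximum of the concave $\wba$ must fall outside $\S_I$, leaving $\wba$ monotone along each edge. This is possible: take homozygote fitnesses $w_{ii,\a}$ distinct and positive and heterozygote fitnesses slightly above the arithmetic mean, $w_{ij,\a}=\tfrac12(w_{ii,\a}+w_{jj,\a})+\epsilon_\a$ with $\epsilon_\a>0$ below half the smallest homozygote gap in deme $\a$; this yields \eqref{partial_dom} together with a strictly concave $\wba$, and the defining inequalities are all strict, hence open.

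The heart of the matter is (iii): producing an interior equilibrium although no deme exhibits overdominance, so that the spatial averaging in $\tw=\prod_\a\wba^{c_\a}$ in \eqref{tw} must do the work (cf.\ Theorem \ref{no_dom_no_mig}). I would pin the target at the barycenter $\hat p=(1/I,\dots,1/I)$. When the number of demes permits, a cyclic arrangement in which the whole array is invariant under simultaneous cyclic relabeling of alleles and demes makes $\nabla F(\hat p)$ invariant under a transitive symmetry group, hence proportional to $\mathbf{1}$, so $\hat p$ is a critical point of $F$; by strict concavity it is then the unique, interior maximizer. For an arbitrary $\Ga\ge2$ the same conclusion follows by a direct construction: prescribe $\hat p$ and solve the $I-1$ independent stationarity conditions $\sum_\a c_\a w_{i,\a}(\hat p)/\wba(\hat p)=\mathrm{const}$ for the still-free off-diagonal fitness entries, a linear problem with ample freedom once there are at least two demes. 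Either way Theorem \ref{F_concave} identifies this interior critical point as the global attractor.

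Openness is then immediate: conditions (i) and (ii) are strict inequalities and so cut out an open set of arrays, while a nondegenerate interior maximizer of a strictly concave function depends continuously on the array and stays interior under small perturbations, with $F$ remaining concave throughout, so Theorem \ref{F_concave} applies verbatim to every nearby array. I expect the real obstacle to be precisely the construction in (iii): verifying that partial dominance, strict single-deme concavity, and an interior critical point of $F$ are \emph{mutually} compatible for an arbitrary number of alleles, and that together they carve out a set of positive measure rather than a thin coincidence. The template family above plus the linear stationarity solve is what I would use to settle this.
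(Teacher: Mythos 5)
Your argument is correct, but it takes a genuinely different route from the paper's. The paper disposes of Theorem \ref{int_dom_fully} by citation: it is the single-locus case of Result 5.1 in B\"urger (2010), which in turn rests on the weak-selection perturbation machinery behind Theorem \ref{poly_weak_sel_th} — one builds fitnesses whose \emph{spatial averages} are overdominant while each deme has partial dominance, obtains a globally stable internal equilibrium of the weak-selection limit (a panmictic dynamics with averaged fitnesses, cf.\ Remark \ref{rem:weak_sel_etc}), and transfers it to the exact dynamics by hyperbolic perturbation, openness coming from hyperbolicity. You instead stay entirely inside the Lyapunov/concavity framework of Theorems \ref{Levene_Lyap} and \ref{F_concave}: making each $W_\a=(w_{ij,\a})$ negative definite on $\{\xi:\sum_i\xi_i=0\}$ does render $F$ strictly concave (your Hessian formula is right, the rank-one term being negative semidefinite), your template $w_{ij,\a}=\tfrac12(w_{ii,\a}+w_{jj,\a})+\epsilon_\a$ achieves this compatibly with \eqref{partial_dom} exactly when $0<\epsilon_\a<\tfrac12\min_{i\ne j}\abs{w_{ii,\a}-w_{jj,\a}}$, and the interior critical point can indeed be pinned at the barycenter by spatial balancing (e.g.\ two demes with $w_{ii,1}+w_{ii,2}$ independent of $i$ and $\epsilon_1=\epsilon_2$, which makes $\sum_\a c_\a w_{i,\a}(\hat p)/\wba(\hat p)$ independent of $i$); your "linear problem" phrasing is loose since $\wba(\hat p)$ also depends on the unknown off-diagonal entries, but the explicit balanced family already settles existence, and openness follows from strict inequalities plus continuity of the nondegenerate interior maximizer. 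What each approach buys: yours is self-contained given the survey's stated results, is not confined to weak selection (the open set lives at arbitrary selection intensity), and delivers global convergence for free from the Lyapunov function; the paper's route exposes the biological mechanism (averaged overdominance — which is in fact the weak-selection shadow of your negative-definiteness condition, since $\sum_\a c_\a W_\a$ is then negative definite on the tangent space) and, more importantly, generalizes uniformly to multiple loci (Theorem \ref{multi_int_dom_fully}) and to general migration (Theorem \ref{poly_weak_sel_th}), where no analogue of $F$ is available.
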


This theorem arises for the special case of a single locus in Result 5.1 in B\"urger (2010). The latter is an immediate consequence of Theorem 3.1 in B\"urger (2010) and of Theorem 2.2, Remark 2.3(iii), and Remark 2.4 in B\"urger (2009b); cf.\ Theorem \ref{poly_weak_sel_th} below. The proof shows that the maintenance of a stable internal
equilibrium requires a certain form of spatially averaged overdominance that, with intermediate dominance, can be achieved only if the direction of selection and the degree of dominance vary among demes.

\subsection{No dominance}\label{sec:Levene_nodom}
Throughout this subsection, we assume no dominance, i.e., \eqref{nodom}. For this important
special case more detailed results can be proved than for general (intermediate) dominance.
Nevertheless, internal equilibrium solutions can be determined explicitly only under specific assumptions (e.g., Theorem 4.2 in Nagylaki and Lou 2006b).
An interesting question to ask is: What determines the number of alleles that can be maintained at an 
equilibrium? If there is no dominance, there is a simple answer.

\begin{theorem}[Nagylaki and Lou 2001]\label{no_dom_bound}
Without dominance, the number of demes is a generic upper bound on the number of alleles present
at equilibrium. Any neutral deme should not be counted in this bound. 
\end{theorem}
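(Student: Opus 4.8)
The plan is to convert the equilibrium conditions into a \emph{linear} compatibility problem in auxiliary deme-variables, and then to read off the bound from a rank inequality whose failure is non-generic. First I would write down the equilibrium equations in the no-dominance dynamics \eqref{dyn_Levene_nodom}: a point $p$ with support $S=\{i:p_i>0\}$ is an equilibrium exactly when
\begin{equation*}
	\sum_\a c_\a\,\frac{v_{i,\a}}{\vba} = 1 \qquad\text{for every } i\in S .
\end{equation*}
The key move is to introduce the quantities $t_\a:=c_\a/\vba>0$, one per deme, so that these become the \emph{linear} system $\sum_\a v_{i,\a}\,t_\a = 1$ $(i\in S)$, i.e. $V\mathbf t=\mathbf 1$, where $V=(v_{i,\a})_{i\in S,\ \a\in\G}$ is the $k\times\Ga$ fitness matrix with $k:=|S|$, $\mathbf t=(t_\a)\in\Real^\Ga$, and $\mathbf 1\in\Real^k$ is the all-ones vector. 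The existence of an equilibrium supported on $S$ \emph{forces} this system to be consistent; I would stress that only consistency, not the positivity of $\mathbf t$, is needed for the upper bound.

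Second, consistency of $V\mathbf t=\mathbf 1$ requires $\mathbf 1\in\operatorname{im}V$, the column space of $V$. If $k>\Ga$, then $\operatorname{im}V$ has dimension at most $\Ga<k$, so it is a proper subspace of $\Real^k$, and for the \emph{fixed} target $\mathbf 1$ the set of fitness parameters for which $\mathbf 1$ happens to lie in $\operatorname{im}V$ is precisely where every $(\Ga+1)\times(\Ga+1)$ minor of $[\,V\mid\mathbf 1\,]$ vanishes. This is a proper algebraic subvariety of parameter space — proper because one readily exhibits a $V$ of full column rank $\Ga$ with $\mathbf 1$ outside its column space, keeping one such minor nonzero — hence a Lebesgue-null set. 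Thus, off this null set the system is inconsistent and no equilibrium can carry $k>\Ga$ alleles.

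Third, to discount neutral demes I would observe that a deme $\a$ is neutral iff $v_{i,\a}$ is independent of $i$, say $v_{i,\a}\equiv v_\a$; then $\vba=v_\a$, its column of $V$ equals $v_\a\mathbf 1$, and its contribution to each equilibrium equation is the allele-independent constant $c_\a$. Writing $\G_1\subseteq\G$ for the non-neutral demes and $\Ga_1=|\G_1|$, summing out the neutral demes reduces the equilibrium condition to
\begin{equation*}
	\sum_{\a\in\G_1} v_{i,\a}\,t_\a = \rho \quad (i\in S),\qquad
	\rho:=\sum_{\a\in\G_1} c_\a>0 .
\end{equation*}
Dividing by $\rho>0$, consistency now requires $\mathbf 1\in\operatorname{im}V_{\G_1}$ with $V_{\G_1}=(v_{i,\a})_{i\in S,\ \a\in\G_1}$ a $k\times\Ga_1$ matrix, and the same minor argument applied to $[\,V_{\G_1}\mid\mathbf 1\,]$ shows that generically $k\le\Ga_1$. (Generically a globally non-neutral deme is also non-constant on every support with $|S|\ge2$, so $\G_1$ may be taken to be the globally non-neutral demes; the fully neutral case $\Ga_1=0$ is degenerate and non-generic.) Since there are only finitely many subsets $S\subseteq\I$, the union of the finitely many null sets above is null; removing it yields the generic parameter set on which every equilibrium carries at most $\Ga_1$ alleles.

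I expect the only genuinely delicate point to be the genericity bookkeeping, namely verifying that the bad locus is a \emph{proper} subvariety so that the generic set is nonempty and full-measure; this is settled by the single explicit full-rank example noted above. It is also worth flagging the conceptual crux, which prevents a naive miscount: directly, the equilibrium conditions look like $k$ equations in the $k-1$ free frequencies, which would wrongly preclude any polymorphism with $k\ge2$. In fact those equations are linearly dependent, since $\sum_{i\in S}p_i\bigl(\sum_\a c_\a v_{i,\a}/\vba-1\bigr)\equiv0$ because $\sum_i p_i v_{i,\a}=\vba$, so their effective count is $k-1$ and gives no bound at all. The real obstruction lives instead in the deme-variables $t_\a$, whose number is $\Ga$ (respectively $\Ga_1$) — recognizing this reformulation is the heart of the argument.
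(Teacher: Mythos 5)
Your proposal is correct and follows essentially the same route as the paper: the substitution $x_\a=c_\a/\hat{\bar v}_\a$ (your $t_\a$) linearizes the equilibrium conditions into an inhomogeneous system of $|S|$ equations in $\Ga$ unknowns, which is generically inconsistent when $|S|>\Ga$, and neutral demes drop out because their columns are constant. Your additional care with the genericity bookkeeping (minors of the augmented matrix) and the remark on the linear dependence of the frequency equations are welcome elaborations but do not change the argument.
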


\begin{proof}
From \eqref{dyn_Levene_nodom} we conclude that, at an internal equilibrium $\hat p$, 
\begin{equation}\label{nodom_internal}
	\sum_\a c_\a \frac{v_{i,\a}}{\hat{\bar v}_\a} = 1 \quad (i\in\I)
\end{equation}
holds. The substitution 
\begin{equation}\label{subst_x}
	x_\a = c_\a/\hat{\bar v}_\a\,, \;\text{ where } \hat{\bar v}_\a=\sum_i v_{i,\a}\hat p_i\,,
\end{equation}
linearizes \eqref{nodom_internal}:
\begin{equation}\label{nodom_linearized}
	\sum_\a v_{i,\a}x_\a =1 \quad (i\in\I)\,.
\end{equation}
This is a system of $I$ inhomogeneous linear equations for the $\Ga$ unknowns $x_\a$. Therefore, a
solution exists generically only if $I\le\Ga$. (Note
that even if there exists a solution $(x_\a)$, it does not necessarily give rise to a solution $\hat p\in\S_I$.)

The statement about neutral demes follows because in a neutral deme $v_{i,\a}=\vba=1$ for every $i$,
hence $x_\a=c_\a$ holds.
\end{proof}

\begin{remark}\rm
If in Theorem \ref{no_dom_bound} general intermediate
dominance is admitted, then there exists an open set of parameters for which any number of alleles can
be maintained at an asymptotically stable equilibrium. Theorem \ref{weak_sel_th} provides a much more general result.
\end{remark}

\begin{remark}\rm
Theorem \ref{no_dom_bound} holds for hard selection. 
A slight modification of the proof shows that it holds also for multiplicative fitnesses.
In fact, Strobeck (1979) proved an analog of Theorem \ref{no_dom_bound} for haploid species.
\end{remark}

\begin{example}[Nagylaki and Lou 2001] \rm 
The upper bound established in Theorem \ref{no_dom_bound} can be
assumed if $\Ga=I$. Let $v_{i,\a}=u_i\de_{i\a}$, where $u_i>0$ and $\de_{i\a}$ is the Kronecker delta. 
This means that allele $\A_i$ has fitness $u_i$ in deme $i$ and fitness 0 elsewhere. Hence, every allele
is the best in one deme. Then \eqref{dyn_Levene_nodom} simplifies to
\begin{equation}
	p_i' = \tfrac12 p_i \biggl(1 + \frac{u_i}{\bar v_i} \biggr) 
			= \tfrac12 p_i (1+c_i/p_i) = \tfrac12 (p_i+c_i)  \,.
\end{equation}
The solution is $p_i(t)= c_i + (p_i(0)-c_i)(\tfrac12)^t \to c_i$ as $t\to\infty$.
\end{example}

In the formulation of Theorem \ref{no_dom_bound}, the word `generic' is essential. 
If $\Ga$ and $I$ are arbitrary and one assumes
\begin{equation}
	w_{ij,\a} = 1+r_{ij} g_\a
\end{equation}
for (sufficiently small) constants $r_{ij}$ and $g_\a$, it can be shown that an internal (hence globally
attracting) manifold of equilibria exists for an open set of parameter combinations. This
holds also for the additive case, when $r_{ij}=s_i+s_j$ (Nagylaki and Lou 2001).

Another interesting problem is to determine conditions when a specific allele will go to fixation. A simple and intuitive
result is the following:

\begin{theorem}[Nagylaki and Lou 2006b]\label{nodom_loss}
Suppose there exists $i\in\I$ such that
\begin{equation}
	\sum_\a c_\a \frac{v_{j,\a}}{v_{i,\a}} < 1
\end{equation}
for every $j\ne i$. Then $p_i(t)\to1$ as $t\to\infty$.
\end{theorem}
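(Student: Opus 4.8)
The plan is to show that the frequency $p_i$ of the designated allele increases monotonically along every trajectory with $p_i(0)>0$ and converges to $1$ (equivalently, every competing allele is lost). Starting from the no--dominance recursion \eqref{dyn_Levene_nodom}, I would first record the exact one--step change,
\begin{equation*}
	\De p_i = p_i' - p_i = \tfrac12\,p_i\left(\sum_\a c_\a\frac{v_{i,\a}}{\vba} - 1\right),
\end{equation*}
so the whole problem reduces to proving that the bracketed term is strictly positive whenever $0<p_i<1$. Note that $v_{i,\a}>0$ is implicit in the hypothesis and that $p_i(0)>0$ is necessary for the conclusion (if $p_i(0)=0$ then $p_i\equiv0$), so I assume it; then $\vba\ge v_{i,\a}p_i>0$ throughout and all quantities below are well defined.

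The key step is a single application of Jensen's inequality. Writing $\rho_\a = \vba/v_{i,\a} = p_i + \sum_{k\ne i}p_k\,v_{k,\a}/v_{i,\a}>0$ and using convexity of $x\mapsto 1/x$ on $(0,\infty)$ together with $\sum_\a c_\a=1$, I obtain
\begin{equation*}
	\sum_\a c_\a\frac{v_{i,\a}}{\vba} = \sum_\a\frac{c_\a}{\rho_\a} \ge \frac{1}{\sum_\a c_\a\rho_\a}.
\end{equation*}
The hypothesis enters exactly here: $\sum_\a c_\a\rho_\a = p_i + \sum_{k\ne i}p_k\sum_\a c_\a\frac{v_{k,\a}}{v_{i,\a}} < p_i + \sum_{k\ne i}p_k = 1$ whenever $p_i<1$, since then some $p_k>0$ for $k\ne i$ and each $\sum_\a c_\a v_{k,\a}/v_{i,\a}<1$. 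Hence $\sum_\a c_\a v_{i,\a}/\vba>1$ and $\De p_i>0$ strictly for $0<p_i<1$, so $p_i(t)$ is strictly increasing and bounded, whence $p_i(t)\to p_i^\ast\in(0,1]$.

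The main obstacle is upgrading this monotone increase into convergence all the way to the boundary, i.e.\ ruling out $p_i^\ast<1$; the difficulty is that $\De p_i$ a priori depends on the entire vector $p$, not merely on $p_i$. I would resolve this by sharpening the Jensen estimate into a \emph{deme--independent} scalar bound. Setting $b = \max_{k\ne i}\sum_\a c_\a v_{k,\a}/v_{i,\a}<1$, the estimate $0<\sum_\a c_\a\rho_\a\le b + (1-b)p_i$ gives
\begin{equation*}
	\De p_i \ge \tfrac12\,p_i\,\frac{(1-b)(1-p_i)}{b + (1-b)p_i},
\end{equation*}
whose right--hand side depends on $p$ only through $p_i$. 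Since $p_i(t)$ converges, $\De p_i(t)\to0$; but the right--hand side tends to $\tfrac12\,p_i^\ast(1-b)(1-p_i^\ast)/\bigl(b+(1-b)p_i^\ast\bigr)$, which is strictly positive when $0<p_i^\ast<1$. This contradiction forces $p_i^\ast=1$, and therefore $p_i(t)\to1$ and $p_j(t)\to0$ for all $j\ne i$, as claimed. (Alternatively, the same conclusion follows from a standard $\om$--limit set and invariance argument using only the strict monotonicity $\De p_i>0$, but the scalar estimate makes the finish self--contained.)
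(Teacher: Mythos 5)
Your proof is correct. Note first that the paper itself does not prove this theorem: it states it and defers entirely to Nagylaki and Lou (2006b) ("A proof \ldots may be found in \ldots"), so your argument is a self-contained substitute rather than a reproduction. Its structure is also genuinely different from the machinery the survey sets up for the Levene model: rather than invoking the geometric-mean-fitness Lyapunov function $\tw$ of Theorem \ref{Levene_Lyap} (which gives convergence to the equilibrium set and would then require a separate classification of which boundary equilibria can be limits), or the ratio/convex-combination arguments of the type appearing in Theorem \ref{theorem_loss}, you show directly that $p_i$ itself is a strict Lyapunov function. The two steps are both sound: the identity $\De p_i=\tfrac12 p_i\bigl(\sum_\a c_\a v_{i,\a}/\vba-1\bigr)$ follows from \eqref{dyn_Levene_nodom}; the Jensen (AM--HM) step $\sum_\a c_\a/\rho_\a\ge 1/\sum_\a c_\a\rho_\a$ with $\rho_\a=\vba/v_{i,\a}$ is valid since $\rho_\a\ge p_i>0$ (using nonnegativity of the $v_{k,\a}$ and the implicit $v_{i,\a}>0$, which you rightly flag, as you do $p_i(0)>0$); and the hypothesis gives $\sum_\a c_\a\rho_\a<1$ exactly when $p_i<1$. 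The refinement to the deme-independent bound $\De p_i\ge\tfrac12 p_i(1-b)(1-p_i)/\bigl(b+(1-b)p_i\bigr)$ with $b=\max_{k\ne i}\sum_\a c_\a v_{k,\a}/v_{i,\a}<1$ correctly closes the usual gap in monotonicity arguments (that $\De p_i$ depends on all of $p$), since $\De p_i\to0$ along a convergent monotone sequence forces the continuous positive lower bound to vanish at the limit, whence $p_i^\ast=1$. What your approach buys is elementarity and self-containedness; what it gives up is generality --- it leans on the single scalar $p_i$ being monotone, which is special to the hypothesis that one allele dominates all others in the averaged sense, whereas the Lyapunov-function route extends to the subtler loss/fixation results cited alongside this theorem.
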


A proof as well as additional results on the loss or fixation of alleles may be found in Nagylaki and Lou (2006b).
The following example is based on a nice application of Theorem \ref{no_dom_no_mig} and shows that
migration may eliminate genetic variability.

\begin{figure}
\begin{center}
	\includegraphics[width=8.0cm]{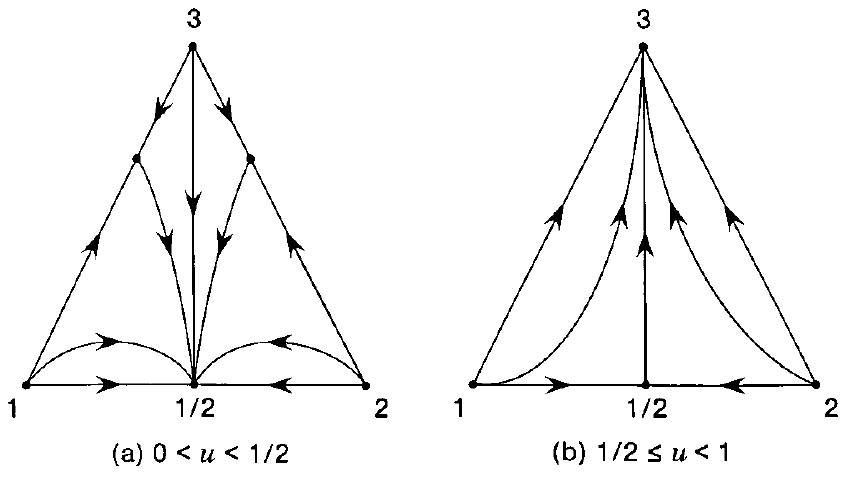}
	\caption{\small Phase portrait for the Levene model in Example \ref{EX_NL01} (Figure from Nagylaki and Lou 2001).}
	 \label{Fig_NL01}
\end{center}
\end{figure}

\begin{example}[Nagylaki and Lou 2001]\label{EX_NL01}\rm
We suppose two demes of equal size ($c_1=c_2=\tfrac12$) and three alleles without dominance. The alleles
$\A_1$ and $\A_2$ have extreme fitnesses and $\A_3$ is intermediate in both demes. More precisely,
we assume $v_{1,1}=1$, $v_{2,1}=0$, $v_{3,1}=u$, and $v_{1,2}=0$, $v_{2,2}=1$, $v_{3,2}=u$, where $0<u<1$.
Without migration, $\A_1$ is ultimately fixed in deme 1 and $\A_2$ is fixed in deme 2. We now establish
that the Levene model evolves as sketched in Figure \ref{Fig_NL01}, i.e.,
\begin{equation}
	p(t) \to \begin{cases}  
				(\tfrac12,\tfrac12,0) \quad\text{if } 0<u<\tfrac12\,, \\
				 (0,0,1) \quad\text{if } \tfrac12 \le u<1\,.
	\end{cases}
\end{equation}
Because there is no dominance, Theorem \ref{no_dom_no_mig} informs us that there can be only 
one stable equilibrium. Therefore, it is sufficient to establish asymptotic stability of the limiting
equilibrium, which we leave as an easy exercise.
\end{example}

Following Nagylaki (2009a), we say there is \emph{deme-independent degree of intermediate dominance} (DIDID) if 
\begin{subequations}\label{DIDID}
\begin{equation}
  w_{ij,\a} = \vth_{ij} w_{ii,\a}
       + \vth_{ji} w_{jj,\a}
\end{equation}
holds for constants $\vth_{ij}$ such that
\begin{equation}\label{DID_sym}
  0 \le \vth_{ij} \le 1 \quad\text{and}\quad \vth_{ji} = 1 - \vth_{ij}    
\end{equation}
\end{subequations}
for every $\a$ and every pair $i,j$. In particular, $\vth_{ii}=\tfrac12$.

Obviously, DIDID covers complete dominance or recessiveness ($\vth_{ij}=0$ or $\vth_{ij}=1$ if $i\ne j$), 
and no dominance ($\vth_{ij}=\tfrac12$), but not multiplicativity.  We also note that
DIDID includes the biologically important case of absence of genotype-by-environment interaction. In general, 
$F$ is not concave under DIDID. However, Nagylaki (2009a, Theorem 3.2) proved that with DIDID the evolution
of $p(t)$ is qualitatively the same as 
without dominance. Therefore, the conclusion of Theorem \ref{F_concave} holds,
i.e., under DIDID there exists exactly one stable equilibrium (point or manifold) and it is globally
attracting. If there exists an internal equilibrium, it is the global attractor. Moreover, the number
of demes is a generic upper bound on the number of alleles that can segregate at an equilibrium 
(generalization of Theorem \ref{no_dom_bound}). This contrasts sharply with Theorem \ref{int_dom_fully}.
Finally, the condition for loss of an allele in Theorem \ref{nodom_loss}
has a simple generalization to DIDID. For proofs and further results about DIDID consult Nagylaki (2009a).

\begin{remark}\label{rem:Nodom_DIDID}\rm
For general migration, the dynamics under DIDID may differ qualitatively from that under no dominance. For
instance, in a diallelic model with one way migration, two asymptotically
stable equilibria may coexist if there is DIDID (Nagylaki 2009a). Moreover, in two triallelic demes,
a globally asymptotically stable internal equilibrium exists for an open set of parameters (Peischl 2010). Therefore, with arbitrary migration and DIDID,
the number of alleles maintained at a stable equilibrium can be higher than the number of demes. This is not the case in the absence of dominance
when the number of demes is a generic upper bound (Theorem \ref{th:NL01_nodom}).
\end{remark}

\subsection{Two alleles with dominance}
Here, we specialize to two alleles but admit arbitrary dominance.
We assume that fitnesses are given by \eqref{xa_ya}.
Our first result yields a further class of examples when a unique stable equilibrium exists.

\begin{theorem}[B\"urger 2009c]\label{RB_condition}
For every $\a$, let the fitnesses satisfy
\begin{subequations}\label{RB_cond}
\begin{align}
   &x_\a y_\a \le 1 + (1-y_\a)^2 \text{ and } y_\a \le1 \label{RB_cond1} \\
\noalign{\vskip-5truemm}
\intertext{or}
\noalign{\vskip-5truemm}
   &x_\a y_\a \le 1 + (1-x_\a)^2 \text{ and } x_\a \le1 \,. \label{RB_cond2}
\end{align}
\end{subequations}
Then $F$ is concave on $[0,1]$. Hence, there exists at most one 
internal equilibrium. If an internal equilibrium exists, it is globally asymptotically stable. 
If a monomorphic equilibrium is stable, then it is globally asymptotically stable. 
\end{theorem}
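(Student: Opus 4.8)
The plan is to reduce everything to a single‑deme concavity statement and then invoke Theorem~\ref{F_concave}. Since $F(p)=\sum_\a c_\a\ln\wba(p)$ is a nonnegative combination of the functions $\ln\wba$, it is concave as soon as each $\ln\wba$ is concave on $[0,1]$. Once $F$ is shown to be concave, the three dynamical assertions (at most one internal equilibrium; global asymptotic stability of an internal equilibrium when it exists; global asymptotic stability of a stable monomorphic equilibrium) are immediate consequences of Theorem~\ref{F_concave} together with the strict Lyapunov property of $\tw$ from Theorem~\ref{Levene_Lyap}. Because the two alternatives \eqref{RB_cond1} and \eqref{RB_cond2} are interchanged by the substitution $p\mapsto1-p$ (equivalently $\xa\leftrightarrow\ya$), which leaves concavity of $\ln\wba$ unaffected, it suffices to treat \eqref{RB_cond1}, i.e.\ $\ya\le1$ and $\xa\ya\le1+(1-\ya)^2$.

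First I would record the concavity criterion for one deme. Writing $\wba(p)=Ap^2+Bp+C$ with $A=\xa+\ya-2$, $B=2(1-\ya)$, $C=\ya$ (so that $\wba(0)=\ya$ and $\wba(1)=\xa$), and using $\wba>0$ on the interval, a direct computation gives
\begin{equation*}
	(\ln\wba)'' = -\frac{\phi(p)}{\wba(p)^2}\,,\qquad
	\phi(p):=\bigl(\wba'(p)\bigr)^2-\wba''(p)\,\wba(p)=2A^2p^2+2ABp+(B^2-2AC)\,.
\end{equation*}
Thus $\ln\wba$ is concave on $[0,1]$ if and only if $\phi\ge0$ there. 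The quadratic $\phi$ is convex (its leading coefficient is $2A^2\ge0$), and a short calculation yields the three quantities I will need:
\begin{equation*}
	\phi(0)=2\bigl[1+(1-\ya)^2-\xa\ya\bigr],\quad
	\phi(1)=2\bigl[1+(1-\xa)^2-\xa\ya\bigr],\quad
	\min_{p\in\Real}\phi=2(1-\xa\ya)\,,
\end{equation*}
the last being attained at the vertex $p^\ast=(1-\ya)/(2-\xa-\ya)$.

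Next I would run the case analysis under \eqref{RB_cond1} according to the sign of $A$. If $A>0$, then since $1-\ya\ge0$ and $2-\xa-\ya<0$ the vertex satisfies $p^\ast\le0$; hence $\phi$ is increasing on $[0,1]$ and $\min_{[0,1]}\phi=\phi(0)=2[1+(1-\ya)^2-\xa\ya]\ge0$ by the assumed inequality. If $A=0$, then $\phi\equiv B^2=4(1-\ya)^2\ge0$. If $A<0$, i.e.\ $\xa<2-\ya$, then multiplying by $\ya\ge0$ gives $\xa\ya\le(2-\ya)\ya=1-(1-\ya)^2\le1$, so the global minimum $2(1-\xa\ya)$ is already nonnegative and $\phi\ge0$ everywhere. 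In every case $\phi\ge0$ on $[0,1]$, so $\ln\wba$ is concave; summing over demes yields concavity of $F$, and Theorem~\ref{F_concave} then delivers all the stated conclusions.

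The main obstacle is the case $A<0$: there the vertex $p^\ast$ may lie inside $[0,1]$, and the endpoint bound \eqref{RB_cond1} does not by itself control $\phi$ on the interior. The observation that makes the argument work is the elementary fact that the two hypotheses $\ya\le1$ and $\xa+\ya<2$ together force $\xa\ya\le1$ (via $(1-\ya)^2\ge0$); this is exactly what moves the effective minimum away from the uncontrolled interior value $2(1-\xa\ya)$. A minor technical point to dispatch is positivity of $\wba$ on the relevant set, which is automatic on $(0,1)$ since the heterozygote contributes $2p(1-p)>0$ and is handled at the endpoints by the standing assumption of positive fitnesses; this is needed both for $\ln\wba$ to be defined and for the identity for $(\ln\wba)''$ above.
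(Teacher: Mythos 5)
Your proposal is correct and follows the same route as the paper: establish concavity of $\ln\wba$ in each deme under \eqref{RB_cond} and then invoke Theorem~\ref{F_concave}. The deme-wise computation (the sign analysis of $\phi=(\wba')^2-\wba''\wba$ via its values at $0$, $1$, and its vertex, together with the symmetry $\xa\leftrightarrow\ya$) correctly supplies the details that the paper delegates to B\"urger (2009c).
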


In the proof it is shown that $\ln\wba$ is concave if and only if \eqref{RB_cond} is fulfilled. 
Then the conclusion follows from Theorem \ref{F_concave}.

Theorem \ref{RB_condition} shows that the protection conditions \eqref{PP_gen} imply the existence of a globally asymptotically stable internal equilibrium if \eqref{RB_cond} holds. It generalizes Result I in Karlin (1977), who proved uniqueness of an internal equilibrium and global convergence 
under the assumption of submultiplicative fitnesses \eqref{submult}. His proof is based on a different method.
Submultiplicative fitnesses, hence fitnesses satisfying \eqref{RB_cond}, include a number of important cases:
multiplicative fitnesses (hence, selection on haploids), no dominance,
partial or complete dominance of the fitter allele, and overdominance.

\begin{example}\label{ex:PP_nodom}\rm
If there are two niches, no dominance, and fitnesses are given by $1+s_\a$, 1, and $1-s_\a$ ($s_1s_2<0$), 
then \eqref{PP_nodom}
informs us that there is protected polymorphism if
\begin{equation}
	\abs{\ka}<1\,, \quad\text{where } \ka = \frac{c_1}{s_2}+\frac{c_2}{s_1}\,.
\end{equation}
By Theorem \ref{no_dom_no_mig} or \ref{RB_condition} there is a unique, asymptotically stable equilibrium $\hat p$.
A simple calculation yields $\hat p = (1-\ka)/2$.
\end{example}

\begin{example}\label{ex:PP_multfit}\rm
If there are two niches and multiplicative fitnesses given by $1/(1-s_\a)$, 1, and $1-s_\a$ ($s_1s_2<0$), 
then there is a protected polymorphism if
\begin{equation}
	0<1-\ka<1\,, 
\end{equation}
where $\ka$ is as above. The unique, asymptotically stable equilibrium is given by $\hat p = 1-\ka$.
\end{example} 

\begin{figure}
\vglue-.5cm
\begin{center}
	\includegraphics[width=8.0cm]{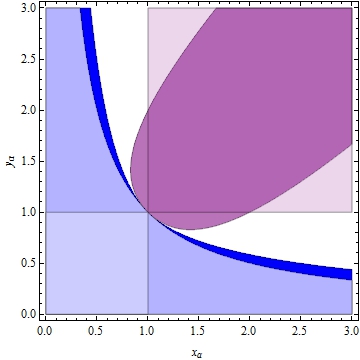}
	\caption{\small In the light blue regions fitnesses are submultiplicative, in the dark blue and light blue regions they
	satisfy \eqref{RB_cond}. In the blue region(s), there exists a single stable equilibrium. In the purple
	region, fitnesses satisfy \eqref{Karlin_cond}. There, both boundary equilibria are stable provided an
	internal equilibrium exists. In the light blue square there is overdominance, in the light purple square there is underdominance.}
	 \label{rb_cond_fig}
\end{center}
\end{figure}

A partial converse to Theorem \ref{RB_condition} is the following:

\begin{theorem}[Karlin 1977, Result IA]\label{Karlin 1977, Result IA}
Suppose
\begin{equation}\label{Karlin_cond}
	x_\a y_\a > 1 + \max\{(x_\a-1)^2, (y_\a-1)^2\} \quad\text{for every } \a\,,
\end{equation}
and there exists at least one internal equilibrium. Then both monomorphic equilibria are asymptotically
stable and the internal equilibrium is unique.
\end{theorem}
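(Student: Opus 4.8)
The plan is to read \eqref{Karlin_cond} as the convexity counterpart of the concavity condition \eqref{RB_cond}: I would show that \eqref{Karlin_cond} makes each $\ln\wba$ \emph{strictly convex} on $[0,1]$, so that $F=\sum_\a c_\a\ln\wba$ is strictly convex, and then read off all three conclusions from this single fact. Writing $p=\pa$ for the one free frequency, \eqref{xa_ya} gives the quadratic $\wba(p)=(\xa+\ya-2)p^2+2(1-\ya)p+\ya$, which is positive on $[0,1]$ because \eqref{Karlin_cond} forces $\xa,\ya>0$. The curvature of $\ln\wba$ has the sign of $h(p):=\wba\cdot 2(\xa+\ya-2)-\bigl(\der{\wba}{p}\bigr)^2$, the numerator of its second $p$-derivative. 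The key observation is that, since the second $p$-derivative of $\wba$ equals the constant $2(\xa+\ya-2)$, the function $h$ is itself a quadratic in $p$ with leading coefficient $-2(\xa+\ya-2)^2\le0$; that is, $h$ is \emph{concave}, hence lies above the chord joining its endpoint values, so $h>0$ throughout $[0,1]$ as soon as $h(0)>0$ and $h(1)>0$. A direct computation gives $h(0)=2\bigl[\xa\ya-1-(\ya-1)^2\bigr]$ and $h(1)=2\bigl[\xa\ya-1-(\xa-1)^2\bigr]$, and these are positive precisely under \eqref{Karlin_cond}; the same computation shows \eqref{Karlin_cond} forces $\xa+\ya\ne2$, so $h$ is genuinely concave and not merely constant. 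Hence each $\ln\wba$, and therefore $F$, is strictly convex on $[0,1]$.

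With $F$ strictly convex the first $p$-derivative $\der{F}{p}$ is strictly increasing, so $F$ has at most one interior critical point. Since the internal equilibria of \eqref{dyn_Levene} are exactly the interior critical points of $F$, the assumed internal equilibrium $\hat p\in(0,1)$ is unique and satisfies $\left.\der{F}{p}\right|_{\hat p}=0$; moreover $\hat p$ is the global \emph{minimum} of $F$, whence the two monomorphic states $p=0,1$ are the maxima of the Lyapunov function $F$ supplied by Theorem \ref{Levene_Lyap}. This is the exact mirror image of Theorem \ref{F_concave}: there concavity of $F$ makes the interior equilibrium the attractor, here strict convexity makes it a repeller and pushes all mass to the boundary.

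To upgrade ``maximum of the Lyapunov function'' to genuine asymptotic stability I would use the boundary slopes. Strict monotonicity of $\der{F}{p}$ together with $\left.\der{F}{p}\right|_{\hat p}=0$ give $\left.\der{F}{p}\right|_{0}<0<\left.\der{F}{p}\right|_{1}$. Evaluating from \eqref{xa_ya}, $\left.\der{F}{p}\right|_{0}=2\bigl(\sum_\a c_\a/\ya-1\bigr)$ and $\left.\der{F}{p}\right|_{1}=2\bigl(1-\sum_\a c_\a/\xa\bigr)$, so the two sign conditions read $\sum_\a c_\a/\ya<1$ and $\sum_\a c_\a/\xa<1$, i.e.\ $y^\ast>1$ and $x^\ast>1$ in the notation of \eqref{PP_gen}. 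By the protection analysis \eqref{PP_gen_a}--\eqref{PP_gen_b}, $y^\ast>1$ means $\A_1$ is lost when rare and $x^\ast>1$ means $\A_2$ is lost when rare; equivalently, the one-dimensional map $p'=p\sum_\a c_\a w_{1,\a}/\wba$ has multipliers $\sum_\a c_\a/\ya$ at $p=0$ and $\sum_\a c_\a/\xa$ at $p=1$, both strictly below one. Either route gives asymptotic stability of both monomorphic equilibria, completing the proof.

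The main obstacle is the convexity step, and in particular the observation that one need not track the curvature of $\ln\wba$ across the whole interval: since the numerator $h$ is a concave quadratic, its positivity on $[0,1]$ reduces to its two endpoint values, and those values are \emph{exactly} the two inequalities bundled into the maximum in \eqref{Karlin_cond}. Everything afterwards is bookkeeping---differentiating $F$ at $0$ and $1$ and invoking the Lyapunov structure of Theorem \ref{Levene_Lyap}---so the only genuine risk is an algebra slip in the endpoint evaluations of $h$, $\left.\der{F}{p}\right|_{0}$, and $\left.\der{F}{p}\right|_{1}$.
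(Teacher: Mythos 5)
Your argument is correct, and it is worth noting that the paper itself offers no proof of this statement: it is quoted from Karlin (1977), whose method (according to the remark the paper makes about his Result I) is entirely different from the log-concavity technique used for Theorem \ref{RB_condition}. What you have done is supply the natural convexity dual of the paper's own proof of Theorem \ref{RB_condition}: there, \eqref{RB_cond} is shown to be equivalent to concavity of $\ln\wba$ and the conclusion follows from Theorem \ref{F_concave}; here, you show \eqref{Karlin_cond} is equivalent to strict convexity of $\ln\wba$ on $[0,1]$, since the numerator $h=\wba\,\wba''-(\wba')^2$ of $(\ln\wba)''$ is a concave quadratic whose endpoint values $h(0)=2[\xa\ya-1-(\ya-1)^2]$ and $h(1)=2[\xa\ya-1-(\xa-1)^2]$ are precisely the two inequalities bundled into \eqref{Karlin_cond}. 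I checked the algebra: the leading coefficient of $h$ is $-2(\xa+\ya-2)^2$, the endpoint evaluations are as you state, and \eqref{Karlin_cond} does exclude $\xa+\ya=2$ (it would force $\xa\ya\le1$). The remainder is also sound: strict monotonicity of $F'$ gives uniqueness of the interior critical point, the existence hypothesis then forces $F'(0)<0<F'(1)$, and these inequalities are exactly $y^\ast>1$ and $x^\ast>1$, i.e.\ the linearization multipliers $\sum_\a c_\a/\ya$ and $\sum_\a c_\a/\xa$ of the Levene map at $p=0$ and $p=1$ are below one. You are right to rest the stability claim on the multipliers rather than on the Lyapunov function alone; a boundary maximum of $F$ together with $\De F\ge0$ does not by itself yield asymptotic stability. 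The payoff of your route is that it unifies Theorems \ref{RB_condition} and \ref{Karlin 1977, Result IA} as the two sign cases of a single curvature computation, at the cost of nothing beyond what the paper already sets up in Section \ref{sec:Levene_general}.
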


Figure \ref{rb_cond_fig} displays the regions in which submultiplicativity, \eqref{RB_cond}, or \eqref{Karlin_cond} hold. Table \ref{tab:Levene_1loc} presents numerical results which demonstrate that stronger selection, submultiplicativity, and a higher number of demes all facilitate protected polymorphism in the Levene model. These findings agree with intuition. For instance, if an allele has sufficiently low fitness in just one deme, i.e., $c_\a/y_\a>1$, the other allele is protected.

\begin{table}[ht]
	\centering\small
		\begin{tabular}{ccccccc}
		\hline
			$\Ga$ & \multicolumn{2}{c}{weak selection} & \multicolumn{2}{c}{moderate selection} &\multicolumn{2}{c}{strong selection} \\
			      & intdom & submult & intdom & submult & intdom & submult\\
			\hline
			2  &0.094 & 0.185 & 0.115 & 0.271 & 0.220 & 0.332 \\
			3  &0.124 & 0.262 & 0.162 & 0.416 & 0.357 & 0.516 \\
			4  &0.138 & 0.310 & 0.190 & 0.512 & 0.460 & 0.636 \\
			10 &0.158 & 0.468 & 0.270 & 0.788 & 0.787 & 0.917 \\		
		\hline
		\end{tabular}
	\caption{\small Proportion of protected polymorphism in the Levene model with intermediate dominance. Fitnesses of $\A_1\A_1$, $\A_1\A_2$, $\A_2\A_2$ are parameterized as $1+s_\a$, $1+h_\a s_\a$, $1-s_\a$. The data for weak selection are generated for the weak-selection limit (Remark \ref{rem:weak_sel_etc}). Then there is a protected polymorphism if and only if there is average overdominance, i.e., if 
$\sum_\a c_\a s_\a h_\a > \abs{\sum_\a c_\a s_\a}$ holds. For moderate or strong selection, the conditions \eqref{PP_gen} were evaluated.
In all cases, $10^6$ parameter combinations satisfying $s_\a\in(-s,s)$, $h_\a\in(-1,1)$, $c_\a\in(-1,1)$ were randomly (uniformly) chosen, and the values $c_\a$ were normalized. For weak or strong selection, $s=1$; for moderate selection, $s=0.2$. For submultiplicative fitnesses (columns `submult'), the $10^6$ parameter combinations satisfy the additional constraint \eqref{submult}, which is reformulated as a condition for $h_\a$.
The columns `intdom' contain the data for general intermediate dominance. The proportion of submultiplicative parameter combinations among all parameter combinations with intermediate dominance is $(1-\tfrac12\ln2)^\Ga\approx0.653^\Ga$.}
\label{tab:Levene_1loc}
\end{table}

So far, we derived sufficient conditions for a unique (internal) equilibrium, but we have not
yet considered the question of how many (stable) internal equilibria can coexist. This turns out to be
a difficult question which was solved only recently for diallelic loci.

With fitnesses given by \eqref{xa_ya}, a simple calculation shows that the dynamics \eqref{dyn_Levene}
can be written as
\begin{equation}
	\De p = p(1-p)\sum_\a c_\a \frac{1-y_\a + p(x_\a+y_\a-2)}{x_\a p^2+2p(1-p)+y_\a(1-p)^2}\,.
\end{equation}
Expressing this with a common denominator, we see that the internal equilibria are the solutions of a polynomial in $p$
of degree $2\Ga-1$. Thus, in principle, there can be up to $2\Ga-1$ internal equilibria. For two demes,
Karlin (1977) provided a construction principle for obtaining three internal equilibria. It requires quite extreme
fitness differences and that in both demes the less fit allele is clsoe to dominant. 

By a clever procedure, Novak (2011) proved the following result:
\begin{theorem}\label{Novak}
The diallelic Levene model allows for any number $j\in\{1,2,\ldots,2\Ga-1\}$ of hyperbolic internal equilibria with any
feasible stability configuration.
\end{theorem}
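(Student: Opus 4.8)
The plan is to pass from the dynamics to the potential $F$. Dividing the displayed expression for $\De p$ preceding the theorem by the factor $p(1-p)$, which is nonzero at an internal equilibrium, one sees that the internal equilibria are exactly the zeros in $(0,1)$ of
$$
\Phi(p) := 2\sum_\a c_\a \frac{N_\a(p)}{\wba(p)}, \qquad N_\a(p) = 1-y_\a + p(x_\a+y_\a-2),
$$
and since $N_\a=\tfrac12\,\wba'$ we have $\Phi=F'$ with $F$ as in \eqref{F}. Clearing denominators, $\Phi = 2P/\prod_\be \bar w_\be$ with $P(p)=\sum_\a c_\a N_\a(p)\prod_{\be\ne\a}\bar w_\be(p)$ a polynomial of degree $\le 2\Ga-1$; because each $\bar w_\be>0$ on $[0,1]$, the zeros of $\Phi$ and of $P$ in $(0,1)$ coincide, which gives the upper bound and shows that an internal equilibrium is hyperbolic exactly when it is a simple zero of $\Phi$. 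By Theorem \ref{Levene_Lyap}, $F$ is a strict Lyapunov function, so a hyperbolic internal equilibrium is asymptotically stable iff it is a local maximum of $F$, i.e.\ iff $\Phi$ crosses from $+$ to $-$ there. Hence the stability types of the interior equilibria, ordered by position, must alternate, and the feasible stability configurations are precisely these alternating patterns, compatible with the protection (boundary) signs of $p=0$ and $p=1$ fixed by Section \ref{sec:PP}.

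\textbf{Strategy: realize the maximum, then remove equilibria by bifurcation.} I would first construct parameters giving the maximal number $2\Ga-1$ of simple interior zeros of $\Phi$ with a prescribed leading pattern (the crux, below), and then reach every smaller $j$ by controlled bifurcations. Every $j$ of the same parity is obtained by fold (saddle--node) bifurcations: deform two fitness parameters until an adjacent local-maximum/local-minimum pair of $F$ coalesces and disappears, lowering $j$ by $2$. The opposite parity is reached by a transcritical bifurcation in which an interior zero is driven out through $p=0$ or $p=1$, exchanging stability with the adjacent monomorphic equilibrium and lowering $j$ by $1$. Throughout I keep $c_\a>0$ and rescale the weight vector into $\S_\Ga$ at the very end --- this does not move the zeros of $\Phi$ --- and keep all fitnesses positive; transversality of these bifurcations ensures the remaining zeros stay simple and retain their pattern. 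This delivers every $j\in\{1,\dots,2\Ga-1\}$ with every feasible configuration.

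\textbf{The crux and the main obstacle.} The hard part is the maximal configuration, which I would build by induction on $\Ga$, showing that one extra deme raises the attainable maximum by exactly $2$. Given a $\Ga$-deme realization of $\Phi$ with $2\Ga-1$ simple zeros, add a deme $\Ga+1$ of strong over- or under-dominance, with single-deme equilibrium $q$ and small weight $c_{\Ga+1}>0$; the new summand $c_{\Ga+1}g_{\Ga+1}$, $g_{\Ga+1}=N_{\Ga+1}/\bar w_{\Ga+1}$, is sign-definite away from $q$, and for $c_{\Ga+1}$ small the $2\Ga-1$ existing transversal zeros persist. One then wants to place $q$ in an outer sign-interval of the old $\Phi$ and tune $c_{\Ga+1}$ so that the competition between $g_{\Ga+1}$ and the tails of the neighbouring terms forces a genuine extra dip (or rise), producing two new simple zeros. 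The delicacy --- and the step I expect to be hardest --- is that the admissible quadratic family $\bar w_\a = x_\a p^2 + 2p(1-p) + y_\a(1-p)^2$ does \emph{not} permit arbitrarily sharp transitions at arbitrary interior points (sharpening $g_\a$ forces its zero $q_\a=(1-y_\a)/(2-x_\a-y_\a)$ toward the boundary), so the two extra zeros cannot be manufactured by a naive localized spike; they must instead arise from the interaction of tails of opposite transition sign, and their existence and simplicity must be checked by a quantitative transversality estimate. Equivalently, the maximal count cannot be read off from any extreme-fitness limit, since there $F$ degenerates to a piecewise-linear potential with at most $\Ga$ interior extrema, hiding the additional $\Ga-1$ equilibria; these survive only at finite fitness. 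Carrying out this estimate within the constrained family is exactly where Novak's ``clever procedure'' does the real work; the bifurcation bookkeeping of the previous paragraph is then routine.
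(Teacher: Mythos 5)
Your reduction is sound and matches the setup the paper itself gives just before the theorem: the internal equilibria are the zeros in $(0,1)$ of $F'=\sum_\a c_\a \wba'/\wba$, equivalently of a polynomial of degree at most $2\Ga-1$, and the strict Lyapunov property of $F$ forces the stability types of consecutive interior equilibria to alternate, so ``any feasible stability configuration'' reduces to realizing every count $j$ together with the two possible boundary patterns. (One small caveat: ``hyperbolic $\Leftrightarrow$ simple zero of $\Phi$'' is not literally an equivalence for the discrete map, since a simple zero with $\hat p(1-\hat p)\Phi'(\hat p)=-2$ would have eigenvalue $-1$; you should either rule this out for the Levene map or just aim directly for simple zeros with small $|\Phi'|$, which are hyperbolic.)

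However, the proof is not complete, and you say so yourself: the entire content of the theorem lives in the step you defer, namely exhibiting fitness parameters for which $\Phi$ has $2\Ga-1$ simple interior zeros. Everything else (the degree bound, the alternation, the removal of equilibria by fold or boundary-exit bifurcations, padding with neutral demes) is bookkeeping that the paper essentially already records; note that Karlin (1977) needed a genuinely delicate construction just to reach $3$ equilibria for $\Ga=2$, which is a warning that the maximal case is not obtainable by soft perturbation arguments. Your proposed induction (add a deme of small weight $c_{\Ga+1}$ and let its term create two new zeros) runs into exactly the obstruction you identify: a term of uniformly small weight perturbs the existing transversal zeros but cannot by itself reverse the sign of $\Phi$ on a new interval unless $g_{\Ga+1}=\wba[\Ga+1]'/\wba[\Ga+1]$ is made large, which within the admissible quadratic family forces near-lethal homozygotes and pushes the action toward the boundary of $[0,1]$; whether the tails can be balanced to produce two new \emph{simple} interior zeros without destroying the old ones is precisely the quantitative estimate you do not supply. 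Until that construction is carried out, the theorem is asserted rather than proved. (For what it is worth, the survey itself does not prove this statement either; it cites Novak (2011) for the ``clever procedure'' that does this work.)
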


His numerical results, which admit arbitrary dominance, show that the proportion of parameter space supporting more than $\Ga$ equilibria becomes extremely small if $\Ga>2$.

\section{Multiple alleles and arbitrary migration}\label{sec:mult_allele_arb_mig}
Because for multiple alleles and arbitrary migration few general results are available,
we focus on three limiting cases that are biologically important and amenable to mathematical analysis. These are weak selection and
weak migration, weak migration (relative to selection), and weak selection (relative to migration).
The first leads to the continuous-time migration model, the second to the so-called
weak-migration limit, and the third to the strong-migration limit. The latter two are based on a separation of time scales.
With the help of perturbation theory, results about existence and stability of equilibria, but also about global convergence, can be derived for an open subset of parameters of the full model. In addition, we report results about the case of no dominance and about uniform selection, i.e., when selection is the same in every deme.
We start with no dominance.

\subsection{No dominance}
The following generalizes Theorem \ref{no_dom_bound} for the Levene model. It holds for an arbitrary (constant) backward migration matrix.

\begin{theorem}[Nagylaki and Lou 2001, Theorem 2.4]\label{th:NL01_nodom}
Without dominance, the number of demes is a generic upper bound on the number of alleles that can be
maintained at any equilibrium.
\end{theorem}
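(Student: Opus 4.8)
The plan is to generalize the proof of Theorem~\ref{no_dom_bound} from the Levene model to an arbitrary constant backward migration matrix $M$. As there, it suffices to bound the number of alleles that are \emph{present} (that is, $\pia>0$ for at least one deme $\a$) at a fixed point of \eqref{mig_sel_dyn}. I would fix such an equilibrium, let $S\subseteq\I$ be the set of present alleles and $J=|S|$, and aim to show $J\le\Ga$ for generic fitnesses. Writing the no-dominance relations \eqref{nodom_w} as $w_{i,\a}=v_{i,\a}+\vba$ and $\wba=2\vba$, the selection step becomes $\pia^{\ast}=\tfrac12\pia\bigl(1+v_{i,\a}/\vba\bigr)$. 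For each fixed allele $i$ I collect the deme frequencies into a column vector $p_i=(p_{i,1},\dots,p_{i,\Ga})^{\Tr}$ (and similarly $p_i^{\ast}$); then \eqref{mig_dyn} reads $p_i=Mp_i^{\ast}$ at equilibrium, the coupling across alleles entering only through the mean fitnesses.

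The key step is a reduction to determinant conditions sharing a single deme-indexed set of unknowns. I introduce the $\Ga$ auxiliary variables $D_\a=1/\vba$, which depend on the deme but \emph{not} on the allele. Since $p_i^{\ast}=\tfrac12\bigl[I^{(\Ga)}+\operatorname{diag}(D_\a v_{i,\a})\bigr]p_i$, the identity $p_i=Mp_i^{\ast}$ rearranges to
\begin{equation*}
	B^{(i)}(D)\,p_i=0,\qquad B^{(i)}(D):=2I^{(\Ga)}-M-M\operatorname{diag}(D_\a v_{i,\a}).
\end{equation*}
(The step needs no invertibility hypothesis on $M$, only that $2I^{(\Ga)}-M$ makes sense; the eigenvalues of the stochastic matrix $M$ lie in the closed unit disc.) Crucially, $B^{(i)}$ involves the fitnesses only through row $i$ of $(v_{i,\a})$ and the equilibrium only through the shared vector $D$. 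For every present allele the vector $p_i$ is nonzero, so its annihilation forces
\begin{equation*}
	\det B^{(i)}(D)=0\qquad\text{for every } i\in S.
\end{equation*}

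This is a system of $J$ polynomial equations in the $\Ga$ real unknowns $D_1,\dots,D_\Ga$, and here the count becomes decisive. Because the $i$-th equation uses only the fitness block $(v_{i,\a})_\a$ of allele $i$, and these blocks are disjoint across $i\in S$, the equations can be perturbed independently. Treating $\Phi=(\det B^{(i)})_{i\in S}\colon\Real^{\Ga}\times\Real^{J\Ga}\to\Real^{J}$, one checks that $0$ is a regular value on the region where all $D_\a\ne0$: the differentials $d\Phi_i$ have disjoint supports in the fitness coordinates and each $d_{v_i}\det B^{(i)}$ is generically nonzero (for instance at $M=I^{(\Ga)}$ one has $\det B^{(i)}=\prod_\a(1-D_\a v_{i,\a})$, which plainly varies with every $v_{i,\a}$, and nondegeneracy is an open condition persisting for nearby $M$). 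Parametric transversality (Sard) then gives that for almost every fitness assignment the fibre $\{D:\Phi(D,v)=0\}$ has dimension $\Ga-J$; when $J>\Ga$ it is empty, so no equilibrium can carry more than $\Ga$ present alleles. As there are only finitely many candidate subsets $S$, a single generic choice of fitnesses excludes $J>\Ga$ for all of them at once.

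The main obstacle is precisely this genericity step: making rigorous that the $J$ determinant conditions are independent as the fitnesses vary, rather than merely counting $J$ equations against $\Ga$ unknowns. This is exactly where the absence of dominance is indispensable, since it is what renders the auxiliary weights $D_\a=1/\vba$ allele-\emph{independent} and hence produces only $\Ga$ shared unknowns; under general intermediate dominance the analogous weights would carry an allele index, the count would no longer over-determine, and indeed Theorem~\ref{int_dom_fully} shows the bound then fails. I would verify the independence by exhibiting, at each admissible $D$ with $D_\a\ne0$, a direction in the block $(v_{i,\a})_\a$ that moves $\det B^{(i)}$ while fixing the others, which follows once $\det B^{(i)}$ is seen to be a nonconstant polynomial in its own block. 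As a consistency check, the Levene case is recovered: there $M$ has identical rows, the frequencies become deme-independent, $D_\a$ collapses to $x_\a=c_\a/\vba$, and $\det B^{(i)}=0$ reduces to the linear system $\sum_\a v_{i,\a}x_\a=1$ of Theorem~\ref{no_dom_bound}.
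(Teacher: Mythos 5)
The survey does not actually prove this theorem; it only cites Nagylaki and Lou (2001, Theorem 2.4), and the only proof it contains is for the Levene special case (Theorem \ref{no_dom_bound}). Your proposal is the natural generalization of that Levene argument and, as far as I can tell, coincides in substance with the proof in the cited source: at an equilibrium the vector $p_i$ of deme frequencies of a present allele is a nontrivial null vector of $B^{(i)}(D)=2I^{(\Ga)}-M-M\operatorname{diag}(D_\a v_{i,\a})$, so each present allele contributes one scalar condition $\det B^{(i)}(D)=0$ in the $\Ga$ shared, allele-independent unknowns $D_\a=1/\vba$; since the alleles' fitness blocks are disjoint, these conditions are generically independent and over-determine the system when $J>\Ga$. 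Your consistency check against the Levene case is correct ($\det(2I^{(\Ga)}-e u^\Tr)$ collapses to the linear condition $\sum_\a v_{i,\a}x_\a=1$), and your diagnosis of why dominance would destroy the count is exactly right. The one place where your write-up is weaker than you acknowledge is the transversality step: parametric transversality requires $0$ to be a regular value of $\Phi$ at \emph{every} point of $\Phi^{-1}(0)$ in the admissible region, and $\partial\det B^{(i)}/\partial v_{i,\be}=-D_\be\, e_\be^\Tr \operatorname{adj}\bigl(B^{(i)}\bigr) M e_\be$ vanishes identically wherever $B^{(i)}$ has corank at least $2$ (so that $\operatorname{adj}(B^{(i)})=0$); checking nondegeneracy "generically" or only near $M=I^{(\Ga)}$ does not cover a fixed arbitrary $M$. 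To close this one must either handle the higher-corank stratum separately (it is cut out by still more equations, hence even more over-determined) or argue as Nagylaki and Lou do at the level of a dimension count, which is what "generic" means in the statement. With that caveat the proposal is sound and follows the intended route.
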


This theorem also holds for hard selection. As shown by Peischl (2010), it can not be extended to DIDID; cf.\ Remark \ref{rem:Nodom_DIDID}.
An example showing that the upper bound can be achieved if
$I=\Ga$ is obtained by setting $v_{i,\a}=u_i\de_{i\a}$, where $u_i>0$. The internal equilibrium can be
written as $\hat p =  \tfrac12(I-\tfrac12 M)^{-1}M$, and convergence is geometric (at a rate $\le\tfrac12$). 

As we shall see below, with dominance the number of alleles that can be maintained at equilibrium may depend on the strength and pattern of migration.

\subsection{Migration and selection in continuous time}\label{sec:Mig_sel_cont}
Following Nagylaki and Lou (2007), we assume that both selection and migration are weak and approximate the discrete migration-selection dynamics \eqref{mig_sel_dyn} by a differential equation which is easier accessible. Accordingly, let
\begin{equation}\label{weak_sel_and_mig}
	w_{ij,\a} = 1 + \ep r_{ij,\a} \quad\text{and}\quad 
		\tilde m_{\a\be} = \de_{\a\be} + \ep \tilde\mu_{\a\be}\,,
\end{equation}
where $r_{ij,\a}$ and $\tilde\mu_{\a\be}$ are fixed for every $i,j\in\I$ and every $\a,\be\in\G$, and
$\ep>0$ is sufficiently small. From \eqref{marg_mean_fit} we deduce
\begin{subequations}\label{marg_fit_weak}
\begin{equation}\label{marg_mean_fit_weak}
	\wia = 1+ \ep r_{i,\a} \quad\text{and}\quad \wba = 1 + \ep \bar r_\a \,,
\end{equation}
where
\begin{equation}\label{marg_mean_fit_r}
	r_{i,\a} = \sum_j r_{ij,\a}\pja \quad\text{and}\quad \bar r_\a = \sum_{i,j}r_{ij,\a}\pia\pja \,.
\end{equation}
\end{subequations}

To approximate the backward migration matrix $M$, note that \eqref{hard} and \eqref{marg_mean_fit_weak}
imply that, for both soft and hard selection,
\begin{equation}\label{c_ast_ep}
	c_\a^\ast=c_\a+O(\ep)
\end{equation}
as $\ep\to0$. 
Substituting \eqref{weak_sel_and_mig} and \eqref{c_ast_ep} into \eqref{mfor_mback} leads to
\begin{equation}
	m_{\a\be} = \de_{\a\be} + \ep\mu_{\a\be} + O(\ep^2)
\end{equation}
as $\ep\to0$, where
\begin{equation}
	\mu_{\a\be} = \frac{1}{c_\a}\left(c_\be \tilde\mu_{\be\a} 
			- \de_{\a\be}\sum_\ga c_\ga\tilde\mu_{\ga\a}\right)\,.
\end{equation}
Because $\tilde M$ is stochastic, we obtain for every $\a\in\Ga$,
\begin{equation}
	\tilde\mu_{\a\be}\ge0 \text{ for every } \be\ne\a \quad\text{and}\quad \sum_\be \tilde\mu_{\a\be} = 0\,.
\end{equation}
As a simple consequence, $\mu_{\a\be}$ shares the same properties.

The final step in our derivation is to rescale time as in Sect.\ \ref{sect_cont_time_sel} by setting 
$t=\lfloor \ta/\ep \rfloor$ and $\pi_{i,\a}(\ta) = p_{i,\a}(t)$. Inserting all this into the difference equations
\eqref{mig_sel_dyn} and expanding yields
\begin{equation}
	\pi_{i,\a}(\ta+\ep) = \pi_{i,\a}\left\{1 + \ep[r_{i,\a}(\pi_{\cdot,\a}) - \bar r_\a (\pi_{\cdot,\a})]\right\}
		+ \ep \sum_\be \mu_{\a\be} \pi_{i,\be} + O(\ep^2) 
\end{equation}
as $\ep\to0$, where $\pi_{\cdot,\a}=(\pi_{1,\a},\ldots,\pi_{I,\a})^\Tr\in\S_I$.
Rearranging and letting $\ep\to0$, we arrive at
\begin{equation}
	\der{\pi_{i,\a}}{\ta} = \sum_\be \mu_{\a\be} \pi_{i,\be} 
			+ \pi_{i,\a} [r_{i,\a}(\pi_{\cdot,\a}) - \bar r_\a (\pi_{\cdot,\a})]\,.
\end{equation}
Absorbing $\ep$ into the migration rates and selection coefficients and returning to $p(t)$, we obtain the
slow-evolution approximation of \eqref{mig_sel_dyn},
\begin{equation}\label{weak_mig_sel}
	\dot p_{i,\a} = \sum_\be \mu_{\a\be} p_{i,\be} + p_{i,\a} [r_{i,\a}(p_{\cdot,\a}) - \bar r_\a (p_{\cdot,\a})]\,.
\end{equation}
In contrast to the discrete-time dynamics \eqref{mig_sel_dyn}, here the migration and selection terms are decoupled. This is a general
feature of many other slow-evolution limits (such as mutation and selection, or selection, recombination and migration).
Because of the decoupling of the selection and migration terms, the analysis of explicit models is often facilitated.

With multiple alleles, there are no general results on the dynamics of \eqref{weak_mig_sel}. For
two alleles, we set $p_\a=p_{1,\a}$ and write \eqref{weak_mig_sel} in the form
\begin{equation}\label{weak_mig_2all}
	\dot p_\a = \sum_\be \mu_{\a\be} p_\be + \ph_\a(p_\a)\,.
\end{equation}
Since $\mu_{\a\be}\ge0$ whenever $\a\ne\be$, the system \eqref{weak_mig_2all} is quasimonotone
or cooperative, i.e., $\partial\dot p_\a/\partial p_\be\ge0$ if $\a\ne\be$. As a consequence, \eqref{weak_mig_2all} 
cannot have an exponentially stable limit cycle. However, Akin (personal communication) has proved for three diallelic demes that a Hopf bifurcation
can produce unstable limit cycles. This precludes global convergence, though not generic convergence.
If $\Ga=2$, then every trajectory converges (Hirsch 1982; Hadeler and Glas 1983; see also Hofbauer and Sigmund 1998, p.\ 28), 
as is the case in the discrete-time model (Remark \ref{rem_converge_2A2D}).

\begin{example}\label{Example_Eyland}\rm
Eyland (1971) provided a global analysis of \eqref{weak_mig_sel} for the special case of two diallelic 
demes without dominance. As in Sect.\ \ref{Two diallelic demes}, 
we assume that the fitnesses of $\A_1\A_1$, $\A_1\A_2$, and
$\A_2\A_2$ in deme $\a$ are $1+s_\a$, 1, and $1-s_\a$, respectively, where $s_\a\ne0$ ($\a=1,2$).
Moreover, we set $\mu_1=\mu_{12}>0$, $\mu_2=\mu_{21}>0$, and write $p_\a$ for the frequency of $\A_1$ in deme $\a$.
Then \eqref{weak_mig_sel} becomes
\begin{subequations}
\begin{align}
	\dot p_1 &= \mu_1(p_2-p_1) + s_1p_1(1-p_1)\,, \\
	\dot p_2 &= \mu_2(p_1-p_2) + s_2p_2(1-p_2)\,.
\end{align}
\end{subequations}
The equilibria can be calculated explicitly. At equilibrium, $p_1=0$ if and only if $p_2=0$,
and $p_1=1$ if and only if $p_2=1$. In addition, there may be an internal equilibrium point. We set
\begin{equation}
	\si_\a=\frac{\mu_\a}{s_\a} \,,\quad \ka = \si_1+\si_2\,,
\end{equation}
and
\begin{equation}
	B = (1-4\si_1\si_2)^{1/2}\,.
\end{equation}
The internal equilibrium exists if and only if $s_1s_2<0$ and $\abs{\ka}<1$; cf.\ \eqref{PP_nodom}.
If $s_2<0<s_1$, it is given by
\begin{equation}\label{Eyland_internal}
	\hat p_1 = \tfrac12(1+B)-\si_1 \quad\text{and}\quad \hat p_2 = \tfrac12(1-B)-\si_2 \,.
\end{equation}

It is straightforward to determine the local stability properties of the three possible equilibria. Gobal
asymptotic stability follows from the results cited above about quasimonotone systems. Let
$p=(p_1,p_2)^\Tr$. Then allele $\A_1$ is eliminated in the region $\Om_0$ in Figure \ref{AK_Bio_NL_Fig2},
i.e., $p(t)\to(0,0)$ as $t\to\infty$, whereas $\A_1$ is ultimately fixed in the region $\Om_1$. In
$\Om_+$, $p(t)$ converges globally to the internal equilibrium point $\hat p$ given by \eqref{Eyland_internal}.
\end{example}

For the discrete-time dynamics \eqref{mig_sel_dyn} such a detailed analysis is not available. However, for
important special cases, results about existence, uniqueness, and stability of equilibria were derived by Karlin and Campbell (1980). 
Some of them are treated in Section \ref{sec:submult}.

Finally, we present sufficient conditions for global loss of an allele. In discrete time, such conditions are
available only for the Levene model. For \eqref{weak_mig_sel}, however, general conditions were derived
by Nagylaki and Lou (2007). Suppose that there exists $i\in\I$ and constants $\ga_{ij}$ such that
\begin{subequations}\label{loss}
\begin{equation}
	\ga_{ij}\ge0\,, \quad \ga_{ii}=0\,, \quad \sum_j \ga_{ij} =1\,,
\end{equation}
and
\begin{equation}\label{lossb}
	\sum_j \ga_{ij} r_{jk,\a} > r_{ik,\a}
\end{equation}	
\end{subequations}
for every $\a\in\G$ and every $k\in\I$. 

\begin{theorem}[Nagylaki and Lou 2007, Theorem 3.5 and Remark 3.7]\label{theorem_loss}
If the matrix $(\mu_{\a\be})$ is irreducible and the conditions \eqref{loss} are satisfied, then $p_i(t)\to0$ as $t\to\infty$
whenever $p_i(0)>0$ and $p_j(0)>0$ for every $j$ such that $\ga_{ij}>0$.
\end{theorem}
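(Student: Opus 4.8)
The plan is to exhibit a Lyapunov function that decays at a uniform exponential rate, built from the domination hypothesis \eqref{lossb}, and to defeat the migration term by taking a pointwise maximum over demes. Throughout I write the state as $p_{i,\a}$, the frequency of $\A_i$ in deme $\a$, governed by \eqref{weak_mig_sel}. First I would record the elementary invariances: since $\sum_\be\mu_{\a\be}=0$, the constraint $\sum_i p_{i,\a}=1$ is preserved in each deme, and the nonnegative region is forward invariant (on the face $p_{j,\a}=0$ one has $\dot p_{j,\a}=\sum_{\be\ne\a}\mu_{\a\be}p_{j,\be}\ge0$). Because the migration part is cooperative ($\mu_{\a\be}\ge0$ for $\a\ne\be$) and $(\mu_{\a\be})$ is irreducible, the evolution operator of the linear-in-migration part is strictly positive for positive times, so every allele present initially in some deme is present in all demes for every $t>0$; moreover $\dot p_{j,\a}\ge -Cp_{j,\a}$ for a constant $C$, hence no frequency reaches $0$ in finite time. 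Fixing any $t_0>0$, for $t\ge t_0$ all of $p_{i,\a}(t)$ and $p_{j,\a}(t)$ (for $\ga_{ij}>0$) are strictly positive in every deme, so the quantities below are well defined and finite.

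Next I define, for each deme, $u_\a=\ln p_{i,\a}-\sum_j\ga_{ij}\ln p_{j,\a}$ and set $V=\max_\a u_\a$. Differentiating $u_\a$ along \eqref{weak_mig_sel} and using $\sum_j\ga_{ij}=1$ to cancel the common term $\bar r_\a$, the selection part of $\dot u_\a$ is $S_\a=r_{i,\a}-\sum_j\ga_{ij}r_{j,\a}=\sum_k p_{k,\a}\bigl(r_{ik,\a}-\sum_j\ga_{ij}r_{jk,\a}\bigr)$. By \eqref{lossb} each bracket is $\le-\delta$, where $\delta=\min_{\a,k}\bigl(\sum_j\ga_{ij}r_{jk,\a}-r_{ik,\a}\bigr)>0$, and since $\sum_k p_{k,\a}=1$ this gives $S_\a\le-\delta$ uniformly. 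Using in addition $\sum_\be\mu_{\a\be}=0$ to remove the diagonal term, the migration part is $M_\a=\sum_{\be\ne\a}\mu_{\a\be}\bigl(\tfrac{p_{i,\be}}{p_{i,\a}}-\sum_j\ga_{ij}\tfrac{p_{j,\be}}{p_{j,\a}}\bigr)$.

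The key step is to show migration cannot increase the maximum. Let $\a^\ast$ attain $V=u_{\a^\ast}$. From the identity $u_\be-u_{\a^\ast}=\ln\tfrac{p_{i,\be}}{p_{i,\a^\ast}}-\sum_j\ga_{ij}\ln\tfrac{p_{j,\be}}{p_{j,\a^\ast}}$ and the weighted arithmetic--geometric mean inequality $\prod_j b_j^{\ga_{ij}}\le\sum_j\ga_{ij}b_j$ applied with $b_j=p_{j,\be}/p_{j,\a^\ast}$, I get $\ln\tfrac{p_{i,\be}}{p_{i,\a^\ast}}\le(u_\be-u_{\a^\ast})+\ln\sum_j\ga_{ij}\tfrac{p_{j,\be}}{p_{j,\a^\ast}}$. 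Since $u_\be\le u_{\a^\ast}$ the first summand is $\le0$, whence $\tfrac{p_{i,\be}}{p_{i,\a^\ast}}\le\sum_j\ga_{ij}\tfrac{p_{j,\be}}{p_{j,\a^\ast}}$ for every $\be$. As $\mu_{\a^\ast\be}\ge0$ for $\be\ne\a^\ast$, this forces $M_{\a^\ast}\le0$, so $\dot u_{\a^\ast}=S_{\a^\ast}+M_{\a^\ast}\le-\delta$.

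Finally I would assemble the bounds. As a maximum of finitely many $C^1$ functions, $V$ is locally Lipschitz and its upper right Dini derivative equals $\max\{\dot u_\a:u_\a=V\}\le-\delta$, so $V(t)\le V(t_0)-\delta(t-t_0)\to-\infty$. Since $\ln p_{j,\a}\le0$ gives $\ln p_{i,\a}\le u_\a\le V$ in every deme, we conclude $p_{i,\a}(t)\to0$ for all $\a$, i.e.\ $\A_i$ is lost. I expect the main obstacle to be exactly the migration estimate of the third paragraph: a $\xi$-weighted average $\sum_\a\xi_\a u_\a$ of the per-deme log-ratios does \emph{not} have a sign-definite migration contribution, and it is the choice of the pointwise maximum over demes, together with the weighted AM--GM comparison, that turns migration to our advantage. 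The remaining points — forward invariance and absence of finite-time extinction (where irreducibility enters), and the Dini-derivative bookkeeping for the nonsmooth $V$ — are routine.
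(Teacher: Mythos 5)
Your proof is correct, and it is essentially the argument of the cited source: the survey itself gives no proof of this theorem (it only refers to Nagylaki and Lou 2007, Theorem 3.5), where one likewise tracks $\max_\alpha\, p_{i,\alpha}\prod_j p_{j,\alpha}^{-\gamma_{ij}}$ (your $e^{V}$), uses \eqref{lossb} to make the selection contribution uniformly negative, and disposes of migration at the maximizing deme via the weighted arithmetic--geometric mean inequality. Your handling of the auxiliary points (forward positivity via irreducibility of $(\mu_{\alpha\beta})$, cancellation of $\bar r_\alpha$ and of the diagonal migration term using $\sum_j\gamma_{ij}=1$ and $\sum_\beta\mu_{\alpha\beta}=0$, and the Dini-derivative bound for the nonsmooth maximum) is sound.
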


If there is no dominance, constants $s_{i,\a}$ exist such that $r_{ij,\a}=s_{i,\a}+s_{j,\a}$ for every
$i,j,\a$. Then condition \eqref{lossb} simplifies to
\begin{equation}\label{lossc}
	\sum_j \ga_{ij} s_{j,\a} > s_{i,\a}\,,
\end{equation}
and Theorem \ref{theorem_loss} applies. 

To highlight one of the biological implications, we follow Remark 3.12 in Nagylaki and Lou (2007) and
assume $\ga_{i1}>0$, $\ga_{iI}>0$, and $\ga_{ij}=0$ for
$j=2,\ldots,I-1$. Then \eqref{lossc} becomes
\begin{equation}\label{lossd}
	\ga_{i1} s_{1,\a} +  \ga_{iI} s_{I,\a} > s_{i,\a}
\end{equation}
for every $\a$, and the theorem shows that $p_i(t)\to0$ as $t\to\infty$ for $i=2,\ldots,I-1$. If,
in addition to \eqref{lossd}, we assume that
\begin{equation}
	\min(s_{1,\a},s_{I,\a}) < s_{i,\a} < \max(s_{1,\a},s_{I,\a})
\end{equation}
for $i=2,\ldots,I-1$ and every $\a$, then every allele $\A_i$ with $1<i<I$ is intermediate in every
deme, and $\A_1$ and $\A_I$ are extreme. Thus, Theorem \ref{theorem_loss} implies that all intermediate
alleles are eliminated. This conclusion can be interpreted as the elimination of generalists by
specialists, and it can yield the increasing phenotypic differentiation required for parapatric speciation
(cf.\ Lou and Nagylaki 2002 for an analogous result and discussion in the context of diffusion models).
If $s_{1,\a}-s_{I,\a}$ changes sign among demes, so that every allele is the fittest in some deme(s) and
the least fit in the other(s), then both alleles may be maintained. 

An other application of Theorem \eqref{theorem_loss} is the following. If in every deme the homozygotes have the
same fitness order and there is strict heterozygote intermediacy, then the allele with the greatest homozygous
fitness is ultimately fixed (Remark 3.20 in Nagylaki and Lou 2007).

\begin{remark}\rm
The slow-evolution approximation of the exact 
juvenile-migration model is also \eqref{weak_mig_sel} (Nagylaki and Lou 2008, Nagylaki 1992, pp.\ 143-144).
\end{remark}

\subsection{Weak migration}\label{sec:Weak_mig}
If migration is sufficiently weak relative to selection, properties of the dynamics can be inferred
by perturbation techniques from the well-understood case of a finite number of isolated demes in which
there is selection and random mating.

To study weak migration, we assume
\begin{equation}\label{weak_mig}
	m_{\a\be} = \de_{\a\be} + \ep \mu_{\a\be}\,,
\end{equation}
where $\mu_{\a\be}$ is fixed for every $\a,\be$, and $\ep>0$ is sufficiently small.  Because $M$ is stochastic, 
we obtain for every $\a\in\Ga$ (cf.\ Section \ref{sec:Mig_sel_cont})
\begin{equation}\label{ass_muab}
	\mu_{\a\be} \ge 0 \;\text{ for every } \be\ne\a \quad\text{and}\quad   \sum_\be \mu_{\a\be} = 0 \,.
\end{equation}
If there is no migration ($\ep=0$), the dynamics in each deme reduces to the pure selection dynamics
\begin{equation}\label{no_mig_dyn}
	\pia' = \pia \frac{\wia}{\wba}\,.
\end{equation}
Because \eqref{no_mig_dyn} is defined on the Cartesian product $\S_I^\Ga$, it may exhibit a 
richer equilibrium and stability structure than the panmictic selection dynamics \eqref{sel_dyn}. This was already illustrated by Example \ref{ex:underdom}, in which two asymptotically stable internal equilibria may coexist. By Theorem \ref{theorem:sel_dyn}, such an equilibrium configuration does not occur for \eqref{sel_dyn}.
 
The following is a central perturbation result that has a number of important consequences. Among others, it excludes complex dynamics in the full system \eqref{mig_sel_dyn} provided migration is sufficiently weak. 
We note that hyperbolicity is a generic property for \eqref{no_mig_dyn} (Appendix A in Nagylaki et al.\ 1999), and an internal equilibrium is hyperbolic if and only if it is isolated (Lemma 3.2 in Nagylaki and Lou 2006a).

\begin{theorem}[Nagylaki and Lou 2007, Theorem 4.1]\label{theorem_weak_mig}
Suppose that every equilibrium of \eqref{no_mig_dyn} is hyperbolic, that
\eqref{weak_mig} holds, and that $\ep>0$ is sufficiently  small.

(a) The set of equilibria $\Si_0\subset\S_I^\Ga$ of \eqref{no_mig_dyn}
contains only isolated points, as does the set of equilibria
$\Si_\ep\subset\S_I^\Ga$ of \eqref{mig_sel_dyn}. As $\ep\to0$, 
each equilibrium in $\Si_\ep$ converges to the corresponding equilibrium in $\Si_0$.

(b) In the neighborhood of each asymptotically stable equilibrium in
$\Si_0$, there exists exactly one equilibrium in $\Si_\ep$, and it 
is asymptotically stable. In the neighborhood of each unstable internal
equilibrium in $\Si_0$, there exists exactly one equilibrium
in $\Si_\ep$, and it is unstable. In the neighborhood of each unstable 
boundary equilibrium in $\Si_0$, there exists at most one equilibrium 
in $\Si_\ep$, and if it exists, it is unstable. 

(c) Every solution $p(t)$ of \eqref{mig_sel_dyn} converges to one of the equilibrium
points in $\Si_\ep$.
\end{theorem}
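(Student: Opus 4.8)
The plan is to handle parts (a) and (b) by the implicit function theorem together with continuity of eigenvalues, and to reduce the global statement (c) to the fact that the unperturbed dynamics \eqref{no_mig_dyn} is gradient-like with a strict Lyapunov function. First I would record the structural facts: the time-one map $f_\ep$ of \eqref{mig_sel_dyn} is smooth jointly in $(p,\ep)$ and, by \eqref{weak_mig}, affine in $\ep$, and it maps the compact product simplex $\S_I^\Ga$ into itself for all small $\ep\ge0$. Equilibria solve $G(p,\ep):=f_\ep(p)-p=0$. Working in local coordinates on $\S_I^\Ga$ (eliminating one allele frequency per deme) so that $G$ acts on the tangent space $T$, the hyperbolicity hypothesis says that at each $\hat p\in\Si_0$ the derivative $D_pf_0(\hat p)$ has no eigenvalue equal to $1$ on $T$; hence $D_pG(\hat p,0)=D_pf_0(\hat p)-I$ is invertible on $T$. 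The implicit function theorem then yields, for each $\hat p\in\Si_0$, a unique smooth branch $\hat p(\ep)$ of solutions of $G=0$ with $\hat p(0)=\hat p$. Conversely, a compactness argument gives that for small $\ep$ every equilibrium in $\Si_\ep$ lies near $\Si_0$: otherwise some sequence $\ep_n\to0$ would carry equilibria bounded away from $\Si_0$, and a convergent subsequence would produce an equilibrium of $f_0$ outside $\Si_0$, a contradiction. This establishes the bijective correspondence and the isolatedness in (a).

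For (b) I would use that the eigenvalues of $D_pf_\ep(\hat p(\ep))$ on $T$ depend continuously on $\ep$. At $\ep=0$ they avoid the unit circle, so for small $\ep$ an equilibrium that was asymptotically stable (all eigenvalues inside the disk) stays asymptotically stable, while one that was unstable (some eigenvalue outside) stays unstable. Since positivity is an open condition, an internal $\hat p$ yields a genuine internal equilibrium $\hat p(\ep)$. An unstable boundary equilibrium, however, may have its unique implicit-function branch leave $\S_I^\Ga$ (a vanishing coordinate turning negative once migration reintroduces the corresponding allele into that deme), which is exactly why only ``at most one'' perturbed equilibrium is asserted there.

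The real work is (c). At $\ep=0$ the system \eqref{no_mig_dyn} is uncoupled across demes, and by \eqref{Wbar_increase} each deme's mean fitness $\wba$ is nondecreasing and strictly increasing off equilibrium; hence $V(p):=\sum_\a c_\a\ln\wba(p_{\cdot,\a})$ is a strict Lyapunov function for $f_0$, and by Theorem \ref{theorem:sel_dyn}(9) every orbit of $f_0$ converges to a point of $\Si_0$. The plan is to show that this gradient-like structure is robust. Because $f_\ep\to f_0$ in $C^1$ and $\S_I^\Ga$ is compact, $V(f_\ep(p))-V(p)$ converges uniformly to $V(f_0(p))-V(p)$; on the complement of any fixed neighborhood $U$ of the finite set $\Si_0$ the latter is bounded below by some $\delta>0$, so for $\ep$ small $V$ still strictly increases along $f_\ep$-orbits outside $U$. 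Consequently every orbit eventually enters $U$; inside $U$, hyperbolicity and the stable-manifold theorem confine the recurrent behaviour to arbitrarily small neighborhoods of the individual points of $\Si_\ep$, so that the chain-recurrent set of $f_\ep$ is exactly the finite hyperbolic set $\Si_\ep$. Since $V$ increases, an orbit can visit these neighborhoods only in increasing Lyapunov order and finitely often, settling at one of them, and thus converges to a single point of $\Si_\ep$ (convergence to a finite discrete set forces convergence to a point, as the $\omega$-limit set is connected).

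The main obstacle is precisely this last, global, step: local persistence of equilibria and their stability (parts a, b) is routine, whereas a $C^1$-small perturbation of a convergent system could a priori create new recurrence (for instance a small invariant circle) far from the equilibria. The device that rules this out is the quasi-Lyapunov function $V$ above, which forces all non-transient dynamics into neighborhoods of $\Si_0$, where hyperbolicity localizes the chain-recurrent set. Making the ``finitely often, in increasing order'' argument fully rigorous is cleanest through Conley's theory, via upper semicontinuity of the chain-recurrent set together with the observation that near a hyperbolic fixed point the only chain-recurrent point is the fixed point itself; this is the step on which I would spend the most care.
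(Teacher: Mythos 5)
Your proposal follows essentially the same route as the paper's proof: implicit function theorem plus continuity of eigenvalues (Hartman--Grobman) for (a) and (b), and for (c) a strict Lyapunov function for the uncoupled dynamics combined with Conley's chain-recurrence theory --- upper semicontinuity of the chain-recurrent set and the fact that near a hyperbolic fixed point the only chain-recurrent point is the fixed point itself --- which is exactly the argument in the paper (there attributed to Akin (1993) and Nagylaki et al.\ (1999)). The one point you leave unaddressed is why the perturbed equilibrium corresponding to an \emph{asymptotically stable boundary} equilibrium of $\Si_0$ remains inside $\S_I^\Ga$: the IFT branch lives a priori in the ambient space, and positivity is not automatic on the boundary; the paper closes this by applying Brouwer's fixed point theorem to a small forward-invariant neighborhood intersected with the simplex. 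Apart from that detail, the argument is correct and matches the paper's.
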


The perturbation results in (a) and (b) are essentially due to Karlin and McGregor (1972a,b). 
The proof of (c), which also yields (a) and (b), is a simplification of that of Theorem 2.3 in 
Nagylaki et al.\ (1999) and is based on quite deep results. To outline the proof, we need some 
preparation.

We consider a family of maps (difference equations) that depends on a parameter $\ep$,
\begin{equation}\label{difference}
     x' = f(x,\ep)\,,
\end{equation}
where $x\in\X\subseteq\Real^n$ ($\X$ compact and convex) and $\ep\in\E\subseteq\Real^k$ ($\E$ open).
We assume that $\hat x$ is a fixed point of \eqref{difference} if $\ep=\ep_0$ and that the Jacobian
$f'(\hat x,\ep_0)$ of $f$ evaluated at $(\hat x,\ep_0)$ exists and is continuous. Furthermore, we posit that
$\hat x$ is a hyperbolic equilibrium. If we define the function
\begin{equation}
     F(x,\ep) = f(x,\ep)-x\,,
\end{equation}
then $F(\hat x,\ep_0)=0$ and the Jacobian $F'(\hat x,\ep_0)$ is continuous and nonsingular (by
hyperbolicity of $\hat x$). Therefore, the
implicit function theorem shows that there exists an open neighborhood $\U$ of $\ep_0$ and a function
$\phi:\U\to\X$ such that $\phi(\ep_0) = \hat x$ and $F(\phi(\ep),\ep)=0$ for $\ep\in\U$, hence
\begin{equation}
	f(\phi(\ep),\ep) = \phi(\ep)\,.
\end{equation}
Hence, for every $\ep\in\U$, i.e., for $\ep$ close to $\ep_0$, \eqref{difference} has a uniquely 
determined fixed point $\hat x_\ep=\phi(\ep)$ close to $\hat x$. 

With the help of the Hartman-Grobman theorem, it can be shown that the stability properties of the perturbed 
fixed points $\hat x_\ep$ are the same as those of the unperturbed, $\hat x$. The reason is that if
an equilibrium is hyperbolic, this
property persists under small perturbations (the Jacobian changes continuously if parameters change
continuously); see also Theorem 4.4 in Karlin and McGegor (1972b). 
The extension of this argument to finitely many hyperbolic equilibria is evident. 

Although hyperbolic equilibria change continuously under small perturbations, limit sets of trajectories
do not: they can explode. Thus, perturbations could introduce `new' limit sets away from the hyperbolic
equilibria. What has good properties under perturbations is the set of \emph{chain-recurrent points}
introduced by Conley (1978).

Let $X$ be a compact set with metric $d$ and let $f:\X\to \X$ be a continuous map. A point $x\in \X$ is called
{\it chain recurrent} (with respect to $f$) if, for every $\de>0$, there exists a
finite sequence $x_0=x$, $x_1,\ldots,x_{r-1},x_r=x$ (often called a
$\de$-pseudo-orbit) such that $d(f(x_m),x_{m+1})<\de$ for $m=0,1,\ldots,r-1$.
The set of chain-recurrent points contains the limit sets of all orbits and  behaves well under perturbations (Akin 1993, p.\ 244).

\begin{proof}[Outline of the proof of Theorem \ref{theorem_weak_mig}]
(a) and (b) follow from the implicit function theorem and the Hartman-Grobman theorem. That asymptotically
stable boundary equilibria remain in the state space follows from Brouwer's fixed point theorem 
(for details, see Karlin and McGregor 1972, especially Theorem 4.4).

(c) Let
\begin{equation}
   \F = \{p\in\S_I^\Ga: p_{i,\a}(w_{i,\a}-\wba) = 0 \,
                  \; \forall i\in\I, \forall\a\in\Ga \}
\end{equation}
denote the set of equilibria of \eqref{no_mig_dyn}. Then, within in each deme, $\De\wba\ge0$ with
equality only at equilibrium; cf.\ \eqref{Wbar_increase}. Hence,
\begin{equation}
	\wb(p) = \sum_\a \wba(p_{\cdot,\a})
\end{equation}
satisfies $\De\wb(p)\ge0$  with $\De\wb(p)=0$ if and only if $p\in\F$. Because, on $\F$, $\wb$ takes only finitely
many values, Theorem 3.14 in Akin (1993) implies that the chain-recurrent points of \eqref{no_mig_dyn} are
exactly the equilibria.  

Now we can follow the proof of Theorem 2.3 in Nagylaki et al.\ (1999) almost verbally.
Because the set of chain-recurrent points consists only of hyperbolic equilibria, this is
also true for small $C^1$ perturbations of the dynamics (Akin 1993, p.\ 244).
Indeed, as an immediate consequence of the definition of chain recurrence,
it follows that the chain-recurrent set of \eqref{no_mig_dyn} changes in an upper
semicontinuous way with $\ep$. In particular, the chain-recurrent set for
$\ep>0$ is contained in the union of the $\de$-neighborhoods of the equilibria
for $\ep=0$, with $\de\to0$ for $\ep\to0$. By the implicit function theorem and the
openness of hyperbolicity (Hartman-Grobman theorem), if $\ep>0$ is
small, then the maximal invariant sets in those neighborhoods are
hyperbolic equilibria. Hence, for small $\ep$, the chain-recurrent set consists
only of finitely many equilibria, which implies convergence of all trajectories.
\end{proof}

\begin{remark}\rm
Boundary equilibria that are unstable in the absence of migration, can disappear under weak migration because
they may leave $\S_I^\Ga$. A simple example is overdominance in two diallelic demes: the unstable
zero-migration equilibria $(p_{1,1},p_{1,2})=(1,0)$ and $(p_{1,1},p_{1,2})=(0,1)$ do not survive perturbation.
Indeed, if $\ep>0$, the perturbation of $(1,0)$ must have the form $(1-\ep z_1, \ep z_2)$. If the fitnesses
of the genotypes are as in \eqref{xa_ya} and the migration rates are $m_{12}=\ep m_1$
and $m_{21}=\ep m_2$, then straightforward calculations show that this equilibrium
is given by
\begin{equation*}
     z_1 = -\frac{x_1 m_1}{1-x_1}  \quad\text{and}\quad  z_2 = -\frac{y_2 m_2}{1-y_2}\,,
\end{equation*}
which is not in $[0,1]\times[0,1]$ if there is overdominance, i.e., if $\xa<1$ and $\ya<1$.
\end{remark}

\begin{remark}\rm In the absence of migration, mean fitness is monotone increasing in each deme (except at the equilibria); see \eqref{Wbar_increase}. 
Therefore, by continuity, $\De\wb(p)=\sum_\a \De\wba(p_{\cdot,\a})>0$ for sufficiently small $\ep$ if
$p$ is bounded away from the set $\F$ of equilibria. If $p$ is close to $\F$, mean fitness may decrease. As an example assume two
diallelic demes with overdominance and (stable) equilibria $\hat p_{\cdot,1}$ and $\hat p_{\cdot,2}$
with $\hat p_{\cdot,1}\ne\hat p_{\cdot,2}$. Suppose that in some generation $p_{\cdot,\a}=\hat p_{\cdot,\a}$ 
holds for $\a=1,2$. Since $\wb(p)$ is maximized at $\hat p$, which, with migration, is an equilibrium only if 
$\hat p_{\cdot,1}=\hat p_{\cdot,2}$, we see that $\wb(p')<\wb(p)$.
\end{remark}

As another application of Theorem \ref{theorem_weak_mig}, we study the number of alleles that can be maintained 
at an asymptotically stable equilibrium under weak migration. First we prove a simple result for a single deme.
For more general results, see Sect.\ 2 in Nagylaki and Lou (2006a).

\begin{proposition}\label{int_dom_no_mig}
Let $\Ga=1$ and assume that the alleles are ordered such that $w_{ii}\ge w_{i+1,i+1}$ for $i=1,\ldots,I-1$.
In addition, assume that $w_{11} > w_{22}$ and there is intermediate dominance, i.e., 
\begin{equation}
	w_{ii} \ge w_{ij} \ge w_{jj} 
\end{equation}
for every $i$ and every $j>i$.
Then allele $\A_1$ is fixed as $t\to\infty$. The corresponding equilibrium is globally asymptotically stable.
\end{proposition}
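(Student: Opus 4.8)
The plan is to use the mean fitness $\wb(p)=\sum_{i,j}w_{ij}p_ip_j$ as a Lyapunov function and to reduce the entire statement to one inequality: that the fixation state $e_1=(1,0,\ldots,0)^\Tr$ is the \emph{unique} global maximizer of $\wb$ on $\S_I$. Once this is in place, Theorem~\ref{theorem:sel_dyn} supplies the rest of the machinery: by \eqref{Wbar_increase} mean fitness strictly increases off equilibria, by part~5 a global maximum of $\wb$ is a stable equilibrium, and by part~9 every orbit converges to an equilibrium point. So the real content is the maximization claim plus an argument that the attraction is global.

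The heart is the bound $\wb(p)\le w_{11}$, with equality only at $e_1$. First I would record that under the hypotheses $w_{ij}\le w_{11}$ for \emph{all} $i,j$: the ordering $w_{11}\ge w_{22}\ge\cdots$ gives $w_{ii}\le w_{11}$, and for $i<j$ intermediate dominance gives $w_{ij}\le w_{ii}\le w_{11}$. Hence
\[
\wb(p)=\sum_{i,j}w_{ij}p_ip_j\le w_{11}\Bigl(\sum_i p_i\Bigr)^{2}=w_{11}.
\]
Moreover $w_{11}>w_{22}$ combined with the ordering forces $w_{ij}<w_{11}$ whenever $\min(i,j)\ge2$. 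Writing the equality case as $\sum_{i,j}(w_{11}-w_{ij})p_ip_j=0$ with all summands nonnegative, every term must vanish; taking $i=j\ge2$ shows $p_i=0$ for each $i\ge 2$, so $p=e_1$. Thus $e_1$ is the unique maximizer, $\A_1$ is fixed at the only maximum, and $e_1$ is stable by part~5; asymptotic stability then follows from the strict Lyapunov property, which is essential in the degenerate complete-dominance case ($w_{1j}=w_{11}$) where the linearization at $e_1$ has an eigenvalue equal to $1$ and is inconclusive.

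I expect the main obstacle to be upgrading ``every orbit converges to some equilibrium'' (part~9) to genuine \emph{global} attraction of $e_1$ from $\interior\S_I$, i.e.\ excluding interior orbits that converge to an equilibrium $\hat p\ne e_1$. This cannot be obtained from monotonicity of $p_1$, which generally fails: when $\A_1$ is rare in a population dominated by fit heterozygotes of the other alleles one can have $w_1<\wb$, so $p_1$ may temporarily decrease. Instead I would show that $e_1$ is the only \emph{local} maximum of $\wb$, so that (again by part~5) every other equilibrium is unstable and can be reached only along a stable manifold; the structural fact to establish is that $\A_1$ invades every equilibrium at which it is absent. Indeed, at such a boundary equilibrium the directional derivative of $\wb$ in the interior-pointing $p_1$-direction equals $2(w_1-\wb)$, so invasion makes the point fail to be a maximum and confines its stable manifold to the invariant face $\{p_1=0\}$, away from the interior. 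The cleanest way to organize this is an induction on the number of alleles: the face $\{p_1=0\}$ is itself a selection system on $\A_2,\ldots,\A_I$ with intermediate dominance and best homozygote $\A_2$, so its dynamics converges to $e_2$, where $\A_1$ invades because $w_{12}>w_{22}$, and the analogous statement on every proper face closes off all escape routes along the boundary. The delicate points, which I would flag rather than grind through, are (i) the degenerate ties $w_{22}=w_{33}=\cdots$ or complete dominance $w_{1j}=w_{jj}$, where invasion is only marginal and the relevant equilibria are non-hyperbolic, so the conclusion must be drawn from the Lyapunov function and not from linearization, and (ii) equilibria carrying $\A_1$ together with other alleles, which the global inequality shows are not the global maximum but which still require the no-overdominance structure (pairwise intermediate dominance forbids stable polymorphism) to be excluded as local maxima and hence as limits of interior orbits.
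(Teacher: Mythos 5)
Your reduction to the claim that $e_1=(1,0,\ldots,0)^\Tr$ is the unique global maximizer of $\wb$ on $\S_I$ is correct and cleanly proved, but the passage from ``unique global maximum'' to ``globally asymptotically stable'' has genuine gaps. (i) The inference that an equilibrium which is not a local maximum of $\wb$ is ``hence'' not a limit of interior orbits is a non sequitur: parts 5 and 9 of Theorem~\ref{theorem:sel_dyn} only give instability and convergence of every orbit to \emph{some} equilibrium, and part 9 explicitly allows stable manifolds of saddles, which can meet $\interior\S_I$. (ii) Your induction on the face $\{p_1=0\}$ invokes the proposition for the subsystem on $\A_2,\ldots,\A_I$, but that subsystem need not satisfy the strict hypothesis $w_{22}>w_{33}$ (only $\ge$ is assumed), so the inductive step is unavailable; in the fully degenerate case the face consists entirely of equilibria and certainly does not converge to $e_2$. (iii) The invasion inequality $w_{12}>w_{22}$ is likewise only guaranteed as $\ge$. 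You flag (ii) and (iii) as ``delicate points,'' but as organized the argument does not absorb them, and (i) is simply an invalid step. A useful observation you could have extracted from your own setup: for any $p$ with $p_1>0$ and $p_J>0$, where $J\ge2$ is the largest index present, one has $w_{1i}\ge w_{ii}\ge w_{Ji}$ for every $i$ in the support with at least one strict inequality (since $w_{11}>w_{22}\ge w_{JJ}$), hence $w_1>w_J$; this shows at once that $e_1$ is the \emph{only} equilibrium with $p_1>0$, making your point (ii) about mixed equilibria vacuous, but it still leaves the real problem of excluding interior orbits that limit onto the face $\{p_1=0\}$.

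The paper closes exactly that problem by a different and more elementary device: from $w_{1i}\ge w_{ii}\ge w_{Ii}$ for every $i$ and $w_{11}>w_{II}$ one gets $w_1>w_I$ whenever $p_1>0$ and $p_I>0$, so the ratio $p_I/p_1$ is strictly decreasing along orbits; this forces $p_I(t)\to0$ from every initial condition with $p_1(0)>0$, and induction on $I$ (eliminating $\A_I$, then $\A_{I-1}$, \dots) finishes the proof. This pairwise-ratio argument yields convergence from every interior initial condition without any discussion of local maxima, stable manifolds, or the non-hyperbolic degenerate cases where your route gets stuck. If you want to keep the Lyapunov framework, the missing lemma is precisely that no orbit with $p_1(0)>0$ can converge to the face $\{p_1=0\}$, and the ratio comparison is essentially the cheapest way to prove it.
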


\begin{proof}
The assumptions imply $w_{1i}\ge w_{ii}\ge w_{Ii}$ for every $i\in\I$ and $w_{11}>w_{I1}$ or $w_{I1}>w_{II}$. It follows that
$w_1= \sum_i w_{1i}p_i \ge \sum_i w_{Ii}p_i =w_I$ and, if $p_1\ne0$ and $p_I\ne0$,
\begin{equation}
	w_1>w_I\,.
\end{equation}
Therefore, we obtain
\begin{equation}
	\left(\frac{p_I}{p_1}\right)' = \frac{p_I'}{p_1'} = \frac{p_I w_I}{p_1 w_1} < \frac{p_I}{p_1}
\end{equation}
if $p_1>0$ and $p_I>0$. Hence, $p_I(t)\to0$ as $t\to\infty$ provided $p_I<1$ and $p_1>0$. Now we can
repeat this argument with $p_1$ and $p_{I-1}$. Proceeding inductively, we obtain $p_1(t)\to1$ as $t\to\infty$.
\end{proof}

\begin{theorem}\label{weak_mig_gen_max}
Suppose that migration is sufficiently weak and there is partial dominance in every deme \eqref{partial_dom}.

(a) Generically, there is global convergence to an asymptotically stable equilibrium at which at most $\Ga$
alleles are present. Thus, the number of demes is a generic upper bound for the number of alleles that 
can be maintained at a  stable equilibrium. 

(b) If $\Ga\le I$, then there is an open set of parameters such that $\Ga$ alleles are segregating at a globally
asymptotically stable equilibrium.
\end{theorem}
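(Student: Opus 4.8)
The plan is to regard the full dynamics \eqref{mig_sel_dyn} under the weak-migration scaling \eqref{weak_mig} as a perturbation of the decoupled single-deme dynamics \eqref{no_mig_dyn} and to apply the perturbation result Theorem \ref{theorem_weak_mig}. The first point to establish is that partial dominance \eqref{partial_dom} forces all homozygote fitnesses within a deme to be distinct, since $w_{ij,\a}$ can lie strictly between $w_{ii,\a}$ and $w_{jj,\a}$ only when $w_{ii,\a}\ne w_{jj,\a}$. Relabelling the alleles in deme $\a$ so that $w_{11,\a}>w_{22,\a}>\cdots>w_{II,\a}$, the strict inequalities in \eqref{partial_dom} yield $w_{ii,\a}>w_{ij,\a}>w_{jj,\a}$ for $i<j$, hence a fortiori the hypotheses of Proposition \ref{int_dom_no_mig}. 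Thus each isolated deme ($\ep=0$) has a unique, globally asymptotically stable equilibrium, namely fixation of its fittest homozygote, and every other single-deme equilibrium is unstable. Consequently the unperturbed product system \eqref{no_mig_dyn} on $\S_I^\Ga$ has a unique asymptotically stable equilibrium $\hat p_0$, the product of these fixations; it is globally attracting and involves at most $\Ga$ distinct alleles.

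For part (a), I would recall that hyperbolicity of all equilibria of \eqref{no_mig_dyn} is generic (Appendix A in Nagylaki et al.\ 1999), so Theorem \ref{theorem_weak_mig} applies for small $\ep$: the equilibrium $\hat p_0$ perturbs to a unique asymptotically stable equilibrium $\hat p_\ep$ of \eqref{mig_sel_dyn}, all remaining equilibria are unstable, and every trajectory converges to an equilibrium point. The allele count is then controlled by invariance. If allele $i$ is fixed in no deme at $\hat p_0$, then $p_{i,\a}=0$ for every $\a$; since an absent allele is created neither by selection nor by migration (if $p_{i,\be}=0$ for all $\be$, then $p_{i,\be}^\ast=0$ and so $p_{i,\a}'=0$), the face $\Phi_J=\{p:\,p_{i,\a}=0\ \forall\a,\ i\notin J\}$, where $J$ is the set of alleles present at $\hat p_0$, is invariant for every $\ep$. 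Applying the perturbation argument inside $\Phi_J$ (where $\hat p_0$ remains hyperbolic, since the Jacobian block-decomposes along the invariant subspace) locates a perturbed stable equilibrium in $\Phi_J$, which by the uniqueness in Theorem \ref{theorem_weak_mig}(b) must equal $\hat p_\ep$. Hence $\hat p_\ep$ carries only the alleles in $J$, at most $\Ga$ of them, and being the sole stable equilibrium it bounds the alleles at any stable equilibrium. As the finitely many unstable equilibria have stable sets of positive codimension, generic orbits avoid them and converge to $\hat p_\ep$, which gives the asserted generic global convergence.

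For part (b), with $\Ga\le I$ I would select an open set of fitness parameters obeying \eqref{partial_dom} in which, in deme $\a$ (using the distinct alleles $\A_1,\dots,\A_\Ga$), the homozygote $\A_\a\A_\a$ is strictly fittest. Then $\hat p_0$ fixes allele $\a$ in deme $\a$, so exactly $\Ga$ distinct alleles are present, and each such fixation is hyperbolic because its invasion multipliers $w_{\a j,\a}/w_{\a\a,\a}$ ($j\ne\a$) are strictly less than one. Theorem \ref{theorem_weak_mig} then supplies, for small $\ep>0$, a unique nearby asymptotically stable equilibrium $\hat p_\ep$ at which allele $\a$ has frequency close to one in deme $\a$; in particular all $\Ga$ alleles segregate. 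Global asymptotic stability is inherited from the structure at $\ep=0$: $\hat p_0$ is globally attracting and the only stable equilibrium, the rest being hyperbolic saddles or sources, and this configuration persists for small $\ep$, so by Theorem \ref{theorem_weak_mig}(c) every orbit converges to an equilibrium and $\hat p_\ep$ is the only possible limit for interior orbits on the constructed parameter set.

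I expect the main obstacle to be upgrading the convergence-to-equilibria furnished by Theorem \ref{theorem_weak_mig}(c) to genuine global convergence onto the single stable equilibrium $\hat p_\ep$. The perturbation machinery locates and classifies all equilibria and guarantees that every orbit limits on one of them, but it does not by itself preclude an interior orbit from lying on the stable manifold of a perturbed saddle. Handling these exceptional lower-dimensional sets — which is exactly what the word ``generically'' guards against in (a), and which in (b) rests on the global attractivity of $\hat p_0$ at $\ep=0$ together with the upper-semicontinuous persistence of the chain-recurrent set (Akin 1993) used in the proof of Theorem \ref{theorem_weak_mig} — is the delicate point of the argument.
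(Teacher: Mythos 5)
Your proof follows the same route as the paper's: Proposition \ref{int_dom_no_mig} applied deme by deme shows that, without migration, one allele is fixed in every deme and the resulting product equilibrium is globally attracting, and Theorem \ref{theorem_weak_mig} transfers this to weak migration, with part (b) obtained exactly as in the paper by letting a different allele be fittest in each deme. You also correctly supply several details the paper leaves implicit --- that \eqref{partial_dom} forces distinct homozygote fitnesses so the ordering hypothesis of Proposition \ref{int_dom_no_mig} holds, the invariant-face argument identifying which alleles the perturbed equilibrium carries, and the role of ``generically'' in excluding initial conditions on stable manifolds of perturbed saddles.
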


\begin{proof}
Corollary \ref{int_dom_no_mig} shows that, in the absence of migration, one allele is fixed in every deme.
The parameter combinations satisfying the assumptions of the theorem clearly form an open set of all 
possible parameter combinations. Therefore, without migration, at most $\Ga$ alleles can be maintained 
at an asymptotically stable equilibrium, and this can be achieved on an open set in the parameters space. 
Moreover, this equilibrium is globally asymptotically stable. 
Therefore, Theorem \ref{theorem_weak_mig} yields statement (a) for weak migration. Clearly, the same set
of alleles as without migration occurs at this equilibrium.

If $\Ga\le I$, we still obtain an open set of parameters if we choose fitnesses such that
in each deme a different allele has the highest homozygous fitness. 
Hence, the upper bound $\Ga$ can be achieved on an open set,
and Theorem \ref{theorem_weak_mig} yields statement (b).
\end{proof}

Therefore, in contrast to the Levene model (Theorem \ref{int_dom_fully}), in which migration is strong, for weak migration the number of alleles that
can be maintained at a stable equilibrium cannot exceed the number of demes.

\subsection{Strong migration}
If migration is much stronger than selection, we expect that rapid convergence to spatial quasi-homogeneity
occurs. After this short initial phase, evolution should be approximately panmictic with suitably averaged
allele frequencies. Because in the absence of selection, there exists a globally attracting manifold of
equilibria, so that the dynamics is not gradient like,
the derivation of perturbation results is much more delicate than for weak migration. Since the fundamental ideas in the proofs of the most relevant results
are essentially the same as if selection acts on many loci, we defer the analysis of strong migration to Section \ref{sec:mult_strong_mig}, where the multilocus case is treated.

\subsection{Uniform selection}
Selection is called uniform if
\begin{equation}
	w_{ij,\a} = w_{ij} \quad\text{for every $i$, $j$, and every $\a$.} 
\end{equation}
Under spatially uniform selection, one might
expect that population structure leaves no genetic traces. However, this is not always true as shown
by Example \ref{ex:underdom} with underdominance and weak migration. Indeed, if we have
two diallelic demes with the same underdominant selection in both, then under weak migration there are
nine equilibria, four of which are asymptotically stable. These are the two monomorphic equilibria and
the two equilibria where each of the alleles is close to fixation in one deme and rare in the other. Only three
of the equilibria are uniform, i.e., have the same allele frequencies in both demes. These are the two
monomorphic equilibria and the `central' equilibrium.

In the following we state sufficient conditions under which there is no genetic indication of population
structure. We call $\hat p\in\S_I^\Ga$ a uniform selection equilibrium if every $\hat p_{\cdot,\a}$
is an equilibrium of the pure selection dynamics \eqref{no_mig_dyn} and $\hat p_{\cdot,\a}=\hat p_{\cdot,\be}$
for every $\a$, $\be$.

\begin{theorem}[Nagylaki and Lou 2007, Theorem 5.1]
If $\hat p\in\S_I^\Ga$ is a uniform selection equilibrium, then $\hat p$ is an equilibrium of the
(full) migration-selection dynamics \eqref{mig_sel_dyn}, and $\hat p$ is either asymptotically stable
for both \eqref{no_mig_dyn} and \eqref{mig_sel_dyn}, or unstable for both systems.
\end{theorem}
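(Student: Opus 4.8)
The plan is to handle the two assertions in turn: first confirm that $\hat p$ is an equilibrium of the full system \eqref{mig_sel_dyn}, and then show that the linearizations of \eqref{no_mig_dyn} and \eqref{mig_sel_dyn} at $\hat p$ have the \emph{same} spectral radius on the tangent space of $\S_I^\Ga$, so the two systems are simultaneously asymptotically stable or unstable.

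First I would record the consequences of uniformity. Write $q\in\S_I$ for the common value $\hat p_{\cdot,\a}=q$ of all demes. Since $w_{ij,\a}=w_{ij}$, the marginal and mean fitnesses \eqref{marg_mean_fit} at $\hat p$ are deme-independent, $\wia=w_i(q)$ and $\wba=\bar w(q)$. Because each $\hat p_{\cdot,\a}$ is by assumption an equilibrium of the pure-selection dynamics \eqref{no_mig_dyn}, the selection step is the identity at $\hat p$, namely $\hat p_{i,\a}^\ast=q_iw_i(q)/\bar w(q)=q_i=\hat p_{i,\a}$. Feeding this into \eqref{mig_sel_dyn} and using stochasticity \eqref{stochasticity} gives $\hat p_{i,\a}'=\sum_\be \mab\,q_i=q_i$, so $\hat p'=\hat p$ and $\hat p$ is a full equilibrium.

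Next I would set up the linearization via a tensor (Kronecker) decomposition. Let $T:\S_I\to\S_I$ be the single-deme selection map $T_i(r)=r_iw_i(r)/\bar w(r)$, and $J=T'(q)$. The pure-selection map on $\S_I^\Ga$ applies $T$ in each deme, so at the \emph{uniform} point $\hat p$ its Jacobian is block diagonal with $\Ga$ identical blocks $J$; in deme-by-allele ordering this is $I_\Ga\otimes J$. Migration is the linear map with matrix $M\otimes I_I$ (it mixes demes through $M=(\mab)$ and fixes the allele index). Since the selection step fixes $\hat p$ and migration is linear, the chain rule collapses the Jacobian of \eqref{mig_sel_dyn} at $\hat p$ to $(M\otimes I_I)(I_\Ga\otimes J)=M\otimes J$. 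Both maps preserve the per-deme sum constraints, so I restrict to the tangent space $V=\Real^\Ga\otimes W$ with $W=\{\eta\in\Real^I:\sum_i\eta_i=0\}$. Differentiating $\sum_iT_i\equiv1$ shows $J$ maps $\Real^I$ into $W$, hence $W$ is $J$-invariant and the restricted Jacobians are $I_\Ga\otimes(J|_W)$ and $M\otimes(J|_W)$, with eigenvalues $\mu_j$ and $\la_a\mu_j$ respectively, where $\mu_j\in\mathrm{spec}(J|_W)$ and $\la_a\in\mathrm{spec}(M)$. Because $M$ is stochastic, $\rh(M)=1$ (Perron eigenvalue $1$, all $|\la_a|\le1$), so
\[
\rh\bigl(M\otimes(J|_W)\bigr)=\rh(M)\,\rh(J|_W)=\rh(J|_W)=\rh\bigl(I_\Ga\otimes(J|_W)\bigr).
\]
Thus both linearizations have spectral radius $\rh(J|_W)$: if it is $<1$ both equilibria are asymptotically stable, and if it is $>1$ both are unstable, which is the claimed dichotomy.

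I expect the main obstacle to be the bookkeeping that legitimizes the Kronecker picture: checking that migration and selection act on complementary tensor factors, that uniformity of both selection and of $\hat p$ is what forces all selection blocks to coincide (so the composition is genuinely $M\otimes J$ rather than a general block matrix), that $J$ leaves the zero-sum subspace $W$ invariant so $M\otimes(J|_W)$ is a true restriction, and above all that the Perron eigenvalue $1\in\mathrm{spec}(M)$ forces \emph{equality} of the two spectral radii rather than a mere inequality. The borderline case $\rh(J|_W)=1$ is non-hyperbolic for both systems simultaneously, so linearization is inconclusive for both; the stated dichotomy is read modulo this degenerate situation.
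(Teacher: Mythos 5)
Your proposal is correct. The paper itself gives no proof of this statement (it only cites Nagylaki and Lou 2007), so there is nothing to compare line by line; but your argument is the natural one and, to my knowledge, essentially the one in the cited source: at a uniform equilibrium the selection Jacobian is $I_\Ga\otimes J$, migration contributes $M\otimes I_I$, the composition is $M\otimes J$, and since $\rh(M)=1$ the spectral radii of $I_\Ga\otimes(J|_W)$ and $M\otimes(J|_W)$ on the tangent space coincide. Two small points you already half-acknowledge and that are worth keeping explicit: the dichotomy is read modulo the non-hyperbolic case $\rh(J|_W)=1$, and for a boundary equilibrium the instability direction transfers because the eigenvalue $1\cdot\mu_j$ of $M\otimes(J|_W)$ carries the eigenvector $e\otimes v_j$ (with $e=(1,\ldots,1)^\Tr$ the right Perron vector of $M$), i.e.\ the uniform replication of the single-deme perturbation, which is admissible in $\S_I^\Ga$ exactly when $v_j$ is admissible in $\S_I$. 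Your factorization also yields at once the paper's follow-up remark that the ultimate rate of convergence is governed by selection alone and is independent of migration.
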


It can also be shown that the ultimate rate of convergence to equilibrium is determined entirely by 
selection and is independent of migration.

Next, one may ask for the conditions when the solutions of \eqref{mig_sel_dyn} converge globally to a
uniform selection equilibrium. One may expect global convergence under migration if the uniform
selection equilibrium is globally asymptotically stable without migration. So far, only weaker results
could be proved.

\begin{theorem}[Nagylaki and Lou 2007, Sections 5.2 and 5.3]
Suppose there is a uniform selection equilibrium that is globally asymptotically stable in the 
absence of migration. Each of the following conditions implies global convergence of solutions to $\hat p$
under migration and selection.

(a) Migration is weak and, without migration, every equilibrium is hyperbolic.

(b) Migration is strong, $M$ is ergodic, and every equilibrium of the strong-migration limit is hyperbolic.

(c) The continuous-time model \eqref{weak_mig_sel} applies, $\mu_{\a\be}>0$ for every pair $\a,\be$ with
$\a\ne\be$, and $\hat p$ is internal in the absence of migration (hence, also with migration).

(d) The continuous-time model \eqref{weak_mig_sel} applies, $\mu_{\a\be}=\mu_{\be\a}$ for every pair $\a,\be$, and 
$\hat p$ is internal.
\end{theorem}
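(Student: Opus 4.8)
The preceding theorem already guarantees that $\hat p$ is an equilibrium of the full system and that it has the same stability type with and without migration; since by hypothesis $\hat p$ is globally asymptotically stable for \eqref{no_mig_dyn}, it is asymptotically stable for the migration--selection dynamics as well. Thus in each of the four cases only \emph{global} convergence remains to be proved. The four cases fall into two groups handled by different tools: (a) and (b) are perturbative and rest on the equilibrium/convergence machinery already set up, whereas (c) and (d) are genuinely global statements for the continuous-time model \eqref{weak_mig_sel} that I would settle by an explicit Lyapunov function. Throughout write $\hat p_{\cdot,\a}=\hat q$ for the common deme value, and let $\bar r$ denote the quadratic form $\bar r_\a$, which under uniform selection is independent of $\a$.

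Case (a). Here I would invoke Theorem \ref{theorem_weak_mig} directly. Global asymptotic stability of $\hat p$ for \eqref{no_mig_dyn} forces $\hat p$ to be the \emph{unique} asymptotically stable equilibrium in $\Si_0$: a second stable equilibrium would attract an open set and contradict global attraction of $\hat p$. By part (b) of that theorem the perturbed system has, for small $\ep$, exactly one asymptotically stable equilibrium near $\hat p$ --- and this is $\hat p$ itself, since $\hat p$ is an \emph{exact} equilibrium of \eqref{mig_sel_dyn} and therefore does not move --- while every other equilibrium of $\Si_\ep$ is unstable. Part (c) then says every trajectory converges to one of these equilibria. To upgrade this to convergence to $\hat p$ I would use a finite-time absorption argument: since $\hat p$ is globally attracting at $\ep=0$, the unperturbed flow carries any compact interior set into a prescribed neighborhood of $\hat p$ within a fixed time $T$, and continuous dependence of the flow on $\ep$ over $[0,T]$ keeps this true for small $\ep$; local asymptotic stability of $\hat p$ then finishes the argument. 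The delicate point is the behavior near $\partial\S_I^\Ga$, where the zero-migration system has boundary equilibria; these are transversally repelling (again by global attraction of $\hat p$) and their repulsion persists by hyperbolicity, so no trajectory is trapped there.

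Case (b). This is the hardest case, and I would defer its core to the strong-migration theory of Section \ref{sec:mult_strong_mig}. Ergodicity of $M$ yields a unique stationary distribution $\xi$. On the fast (migration) time scale the dynamics collapses onto the manifold of spatially homogeneous states, parametrized by the $\xi$-weighted mean allele frequency, and the induced slow dynamics is the panmictic selection dynamics with the $\xi$-averaged fitnesses $\sum_\a\xi_\a w_{ij,\a}$. Under uniform selection these averages equal $w_{ij}$, so the reduced system is exactly the single-deme selection dynamics, for which $\hat q$ (equivalently the homogeneous state $\hat p$) is globally asymptotically stable. Given that every equilibrium of the limit is hyperbolic, the strong-migration perturbation results transfer this global convergence to the full system. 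The genuine obstacle, and the reason this cannot be done with the elementary implicit-function argument of case (a), is that the $\ep\to0$ limit possesses an entire \emph{normally hyperbolic manifold} of equilibria rather than isolated hyperbolic points; one must use the invariant-manifold machinery of Section \ref{sec:mult_strong_mig} to reduce to the slow manifold and pull back the attractor.

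Cases (c) and (d). For \eqref{weak_mig_sel} I would exhibit a single Lyapunov function covering both. Let $\xi>0$ be the left null vector of the generator $(\mu_{\a\be})$, normalized to a probability vector; it exists and is positive by Perron--Frobenius when $\mu_{\a\be}>0$ for all $\a\ne\be$ (case (c)), and one may take $\xi$ uniform when $(\mu_{\a\be})$ is symmetric (case (d)), since then the column sums also vanish. Define
\[
   V(p)=\sum_\a \xi_\a\sum_i \hat q_i\,\ln\frac{\hat q_i}{p_{i,\a}}.
\]
Because $\hat p$ is internal, $V$ is finite on the interior with compact sublevel sets there (it blows up as any $p_{i,\a}\to0$), confining trajectories away from the boundary. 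Differentiating along \eqref{weak_mig_sel} splits $\dot V$ into a selection and a migration part. The selection part is $\sum_\a\xi_\a(\bar r(p_{\cdot,\a})-\bar r(\hat q))\le0$, because $\hat q$, being a globally asymptotically stable interior equilibrium of the pure selection dynamics, is the unique maximizer of mean fitness $\bar r$. The migration part equals $-\sum_i\hat q_i\sum_{\a,\be}\xi_\a\mu_{\a\be}\,p_{i,\be}/p_{i,\a}$, and for each $i$
\[
   \sum_{\a,\be}\xi_\a\mu_{\a\be}\frac{p_{i,\be}}{p_{i,\a}}
      \;\ge\; \sum_{\a,\be}\xi_\a\mu_{\a\be}\,\ln\frac{p_{i,\be}}{p_{i,\a}}=0,
\]
where the inequality is $x\ge1+\ln x$ combined with $\sum_\be\mu_{\a\be}=0$, and the final equality uses $\sum_\be\mu_{\a\be}=0$ together with the null-vector identity $\sum_\a\xi_\a\mu_{\a\be}=0$. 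Hence $\dot V\le0$, and $\dot V=0$ already forces $\bar r(p_{\cdot,\a})=\bar r(\hat q)$ for every $\a$, i.e. $p_{\cdot,\a}=\hat q$ and $p=\hat p$. LaSalle's invariance principle then yields global convergence to $\hat p$. There is no real obstacle here beyond checking two facts invoked above: strictness of the mean-fitness maximum (a consequence of global stability of $\hat q$) and positivity of the weights $\xi$.
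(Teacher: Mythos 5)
The paper states this theorem without proof, citing Nagylaki and Lou (2007, Sects.\ 5.2 and 5.3), so there is no in-text argument to compare against; I can only assess your proposal on its own terms. Your treatment of (c) and (d) is the strongest part and is essentially a complete proof. The weighted relative entropy $V(p)=\sum_\a\xi_\a\sum_i\hat q_i\ln(\hat q_i/p_{i,\a})$ does work: the selection part of $\dot V$ reduces to $\sum_\a\xi_\a[\bar r(p_{\cdot,\a})-\bar r(\hat q)]$ because internality of $\hat q$ gives $\sum_i\hat q_i r_i(p)=\bar r(\hat q)$ for every $p\in\S_I$ (state this identity explicitly --- it is exactly where the internality hypothesis enters), and the migration part is nonpositive by $x\ge1+\ln x$ applied only to the off-diagonal terms, where $\mu_{\a\be}\ge0$ keeps the inequality from reversing, combined with the vanishing row sums and $\sum_\a\xi_\a\mu_{\a\be}=0$. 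Unifying (c) and (d) through the choice of $\xi$ is a genuine economy. The two facts you flag are easy but should be written out: positivity of $\xi$, and the fact that global asymptotic stability of the isolated point $\hat q$ forces the form $u\mapsto\sum_{i,j}r_{ij}u_iu_j$ to be negative \emph{definite} on the tangent space (otherwise a line of equilibria passes through $\hat q$), which yields both $\bar r(p)\le\bar r(\hat q)$ and the strictness needed for LaSalle.

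The one genuine soft spot is case (a). Theorem \ref{theorem_weak_mig}(c) gives convergence of every trajectory to some point of $\Si_\ep$, so what must be shown is that no \emph{interior} trajectory converges to an unstable equilibrium, i.e.\ that the stable manifolds of the other equilibria of $\Si_\ep$ miss the interior of $\S_I^\Ga$. Your finite-time absorption argument only covers initial data in a fixed compact interior set $K$ (both the absorption time $T$ and the admissible range of $\ep$ depend on $K$ and degenerate as $K$ approaches the boundary), and ``transversally repelling'' is not quite the right property: a product equilibrium such as $(\hat q,\hat e_2,\ldots)$ with $\hat e_2$ a boundary equilibrium of the single-deme dynamics need not have all transverse eigenvalues unstable. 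What closes the gap is a case distinction: global stability of $\hat q$ forces every boundary equilibrium of the single-deme dynamics to have at least one invading absent allele, so the perturbation of any equilibrium of $\Si_0$ not supported on a migration-invariant face is pushed out of the state space (cf.\ the remark following the proof of Theorem \ref{theorem_weak_mig}), while the equilibria that survive lie on invariant faces of $\S_I^\Ga$ and their stable manifolds remain in those faces. With that supplement (a) is complete. Case (b) as written is a deferral to the strong-migration machinery rather than a proof, though your key observation --- that under uniform selection the $\xi$-averaged fitnesses make the slow dynamics exactly the panmictic selection equation, for which $\hat q$ is globally stable --- is correct and is indeed the heart of that case.
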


\section{Multilocus models}
Since many phenotypic traits are determined by many genes, many loci are subject to selection. If loci are on the same chromosome, especially if they are within a short physical distance, they can not be treated independently. In order to understand the evolutionary effects of selection on multiple loci, models have to be developed and studied that take linkage and recombination into account. Because of the complexity of the general model, useful analytical results can be obtained essentially only for limiting cases or under special assumptions. Whereas the former can be sometimes extended using perturbation theory to obtain general insight, the latter are mainly useful to study specific biological questions or to demonstrate the kind of complexity that can arise. 

Before developing the general model, we introduce the basic model describing the interaction of selection and recombination and point out some of its fundamental properties. We begin by illustrating the effects of recombination in the simplest meaningful setting.

\subsection{Recombination between two loci}
We consider two loci, $\A$ and $\B$, each with two alleles, $\A_1$, $\A_2$, and $\B_1$, $\B_2$. Therefore, there are four possible gametes, $\A_1\B_1$, $\A_1\B_2$, $\A_2\B_1$, $\A_2\B_2$, and 16 diploid genotypes. If, as usual, $\A_i\B_j/\A_k\B_\ell$ and $\A_k\B_\ell/\A_i\B_j$ are indistinguishable, then only 10 different unordered genotypes remain. 

Genes on different chromosomes are separated during meiosis with probability one half (Mendel's Principle of Independent Assortment). If the loci are on the same chromosome, they may become separated by a recombination event (a crossover) between them. We denote this \emph{recombination probability} by $r$.
The value of $r$ usually depends on the distance between the two loci along the chromosome. 
Loci with $r=0$ are called completely linked (and may be treated as a single locus), and loci with
$r=\tfrac12$ are called unlinked. The maximum value of $r=\tfrac12$ occurs for loci on different chromosomes, because then all
four gametes are produced with equal frequency $\tfrac14$. Thus, the recombination rate satisfies $0\le r\le\tfrac12$.

If, for instance, in the initial generation only the genotypes $\A_1\B_1/\A_1\B_1$
and $\A_2\B_2/\A_2\B_2$ are present, then in the next generation only these double homozygotes, as well as the two double heterozygotes
$\A_1\B_1/\A_2\B_2$ and $\A_1\B_2/\A_2\B_1$ will be present. 
After further generations of random mating, all other genotypes will occur, but not immediately at their
equilibrium frequencies. The formation of gametic types other than $\A_1\B_1$ or $\A_2\B_2$ requires that recombination occurs between the two loci.

We denote the frequency of gamete $\A_i\B_j$ by $P_{ij}$ and, at first, admit an arbitrary number of alleles at each locus.
Let the frequencies of the alleles $\A_i$ at the first locus be denoted by $p_i$ and those of the alleles
$\B_j$ at the second locus by $q_j$. Then
\begin{equation}
	p_i =\sum_j P_{ij} \;\text{and}\; q_j=\sum_i P_{ij}\,.
\end{equation}
The allele frequencies are no longer sufficient to describe the genetic composition of the population because, in general,
they do not evolve independently. 
\emph{Linkage equilibrium} (LE) is defined as the state in which
\begin{equation}
	P_{ij} = p_i q_j                                            \label{LEij}
\end{equation}
holds for every $i$ and $j$. Otherwise the population is said to be in 
\emph{linkage disequilibrium} (LD). LD is equivalent to probabilistic dependence of allele frequencies between loci.

Given $P_{ij}$, we want to find the gametic frequencies $P_{ij}'$ in the
next generation after random mating. The derivation of the recursion equation is based on the
following basic fact of Mendelian genetics: an individual with genotype 
$\A_i\B_j/\A_k\B_l$ produces gametes of parental type if no recombination occurs
(with probability $1-r$), and recombinant gametes if recombination 
between occurs (with probability $r$).
Therefore, the fraction of gametes $\A_i\B_j$ and $\A_k\B_l$ is $\tfrac12(1-r)$ each,
and that of $\A_i\B_l$  and $\A_k\B_j$ is $\tfrac12r$ each. 
From these considerations, we see that the frequency of gametes of type
$\A_i\B_j$ in generation $t+1$ produced without recombination is
$(1-r)P_{ij}$, and that produced with recombination is $rp_iq_j$ because
of random mating. Thus,
\begin{equation}
	P_{ij}' = (1-r)P_{ij} + rp_iq_j\,.                              \label{P_{ij}'}
\end{equation}

This shows that the gene frequencies are conserved, but the gamete
frequencies are not, unless the population is in LE. Commonly, LD between alleles $\A_i$ and $B_j$ is measured by the parameter
\begin{equation}
	D_{ij} = P_{ij} - p_iq_j\,.                                   \label{D_{ij}}
\end{equation}
The $D_{ij}$ are called linkage disequilibria.
From \eqref{P_{ij}'} and \eqref{D_{ij}} we infer
\begin{equation}
	D_{ij}' = (1-r)D_{ij}                                        \label{6.4}
\end{equation}
and
\begin{equation}
	D_{ij}(t) = (1-r)^t D_{ij}(0)\,.                               \label{D_{ij}(t)}
\end{equation}
Therefore, unless $r=0$, linkage disequilibria decay at the geometric rate $1-r$
and LE is approached gradually without oscillation.

For two alleles at each locus, it is more convenient to label the frequencies of the 
gametes $\A_1\B_1$, $\A_1\B_2$, $\A_2\B_1$, and
$\A_2\B_2$ by $x_1$, $x_2$, $x_3$, and $x_4$, respectively. A simple calculation reveals that 
\begin{equation}
	D = x_1x_4 - x_2x_3                                            \label{D}
\end{equation}
satisfies
\begin{equation}
	D = D_{11} = -D_{12} = -D_{21} = D_{22}\,.                  
\end{equation}
Thus, the recurrence equations \eqref{P_{ij}'} for the gamete frequencies can be rewritten as
\begin{equation}\label{gam_dyn_2L2A}
	x_i' = x_i - \et_i rD\,,   \quad i\in\{1,2,3,4\}\,,
\end{equation}  
where $\et_1=\et_4=-\et_2=-\et_3$.                                              

The two-locus gametic frequencies are the elements of the simplex $\S_4$ and may be represented geometrically by the
points in a tetrahedron. The subset where $D=0$ forms a two-dimensional manifold and is called the linkage equilibrium, or
Wright, manifold. It is displayed in Figure~\ref{fig_Simplex}. 

\begin{figure}
\begin{center}
	\includegraphics[height=7.0cm]{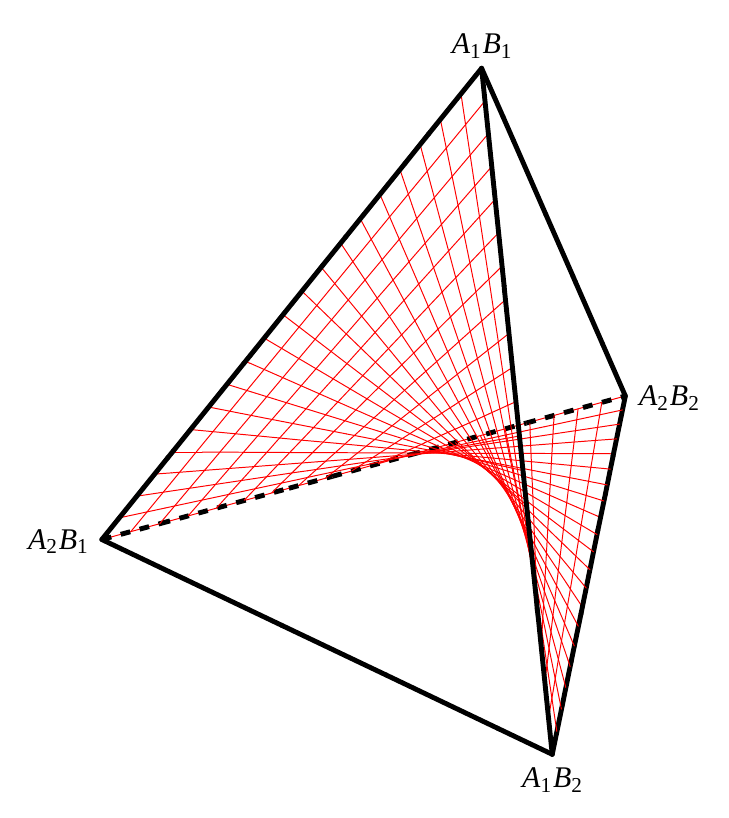}
	\caption{The tetrahedron represents the state space $\S_4$ of the
two-locus two-allele model. The vertices correspond to fixation of the
labeled gamete, and frequencies are measured by the (orthogonal) distance
from the opposite boundary face. At the center of the simplex all gametes
have frequency $\tfrac14$. The two-dimensional (red) surface is the LE
manifold, $D=0$. The states of maximum LD, $D=\pm\tfrac14$, are the centers
of the edges connecting $\A_1\B_2$ to $\A_2\B_1$ and $\A_1\B_1$ to $\A_2\B_2$.}\label{fig_Simplex}
\end{center}
\end{figure}

If $r>0$, \eqref{D_{ij}(t)} implies that all
solutions of \eqref{gam_dyn_2L2A} converge to the LE manifold along
straight lines, because the allele frequencies $x_1+x_2$ and $x_1+x_3$ 
remain constant, where sets such as $x_1+x_2=\text{const.}$ represent 
planes in this geometric picture. The LE manifold is invariant under the dynamics \eqref{gam_dyn_2L2A}.

\subsection{Two diallelic loci under selection}\label{sec:2L2A}
To introduce selection, we assume that viability selection acts on juveniles. Then recombination and random mating occurs.
Since selection acts on diploid individuals, we assign fitnesses to two-locus genotypes. We denote the fitness
of genotype $ij$ ($i,j\in\{1,2,3,4\}$) by $w_{ij}$, where we assume $w_{ij}=w_{ji}$, because usually it does not matter which gamete is paternally or maternally inherited. The marginal fitness of gamete $i$ is defined by $w_i=\sum_{i=1}^4 w_{ij}x_j$, and
the mean fitness of the population is $\wb= \sum_{i,j=1}^4 w_{ij}x_ix_j$. If we assume, as is frequently the case, that there is no
position effect, i.e., $w_{14}=w_{23}$, simple considerations yield the selection-recombination dynamics (Lewontin and Kojima 1960):
\begin{equation}\label{recosel_2L2A}
	x_i' = x_i\frac{w_i}{\wb} - \et_i \frac{w_{14}}{\wb} rD\,,   \quad i\in\{1,2,3,4\}\,.
\end{equation}
This is a much more complicated dynamical system than either the pure selection dynamics \eqref{sel_dyn} or the pure recombination
dynamics \eqref{gam_dyn_2L2A}, and has been studied extensively (for a review, see B\"urger 2000, Sects.\ II.2 and VI.2).
In general, mean fitness may decrease and is no longer maximized at an equilibrium. 

In addition, the existence of stable limit cycles has been established for this discrete-time model (Hastings 1981b, Hofbauer and Iooss 1984)
as well as for the corresponding continuous-time model (Akin 1979, 1982). 
Essentially, the demonstration of limit cycles requires that selection coefficients and recombination rates are of similar magnitude.

There is a particularly important special case in which the dynamics is simple. This is the case of \emph{no epistasis}, or \emph{additive fitnesses}. 
Then there are constants $u_i^{(n)}$ such that
\begin{equation}\label{additive_fitness}
	w_{ij} = u_i^{(1)}+u_j^{(2)} \quad\text{for every } i,j\in\{1,2,3,4\}\,.
\end{equation}
In the absence of epistasis, i.e., if \eqref{additive_fitness} holds, mean fitness $\wb$ is a (strict) Lyapunov function (Ewens 1969). In addition, a point $p$ is an equilibrium point of \eqref{recosel_2L2A} if and only if it is both a selection equilibrium for each locus and it
is in LE (Lyubich 1992, Nagylaki et al.\ 1999). In particular, the equilibria are the critical points of mean fitness.

The reason for this increased complexity of two-locus (or multilocus) systems lies not so much in the increased dimensionality but arises mainly
from the fact that epistatic selection generates nonrandom
associations (LD) among the alleles at different loci. Recombination breaks up
these associations to a certain extent but changes gamete frequencies in a complex way. Thus, there are different kinds
of interacting nonlinearities arising in the dynamical equations under selection and recombination.

\subsection{The general model}\label{sec:mult_gen_model}
We extend the migration-selection model of Section \ref{sec:gen_mig-sel} by assuming that selection acts on a finite number of recombining loci.
The treatment follows B\"urger (2009a), which was inspired by Nagylaki (2009b).
We consider a diploid population with discrete, nonoverlapping
generations, in which the two sexes need not be distinguished. The population
is subdivided into $\Ga\ge1$ panmictic colonies (demes) that exchange adult
migrants independently of genotype. In each of the demes, selection acts 
through differential viabilities, which are time and frequency independent. 
Mutation and random genetic drift are ignored. 

The genetic system consists of $L\ge1$ loci and $I_n\ge2$ alleles, $\A_{i_n}^{(n)}$ ($i_n=1,\ldots,I_n$),
at locus $n$. We use the multi-index $i=(i_1,\ldots,i_L)$ as an
abbreviation for the gamete $\A_{i_1}^{(1)}\ldots \A_{i_L}^{(L)}$.
We designate the set of all loci by $\Loc = \{1,\ldots,L\}$, 
the set of all alleles at locus $n$ by $\I_n = \{1,\ldots,I_n\}$, 
and the set of all gametes by $\I$. The number of gametes is
$I=\abs{\I}=\prod_n I_n$, the total number of genes (alleles at
all loci) is $I_1+\cdots+I_L$. We use letters $i,j,\ell\in\I$ for
gametes, $k,n\in\Loc$ for loci, and $\a,\be\in\G$ for demes. Sums or products without ranges 
indicate summation over all admissible indices, e.g., $\sum_n=\sum_{n\in\Loc}$. 

Let $p_{i,\a}=p_{i,\a}(t)$ represent the frequency of gamete $i$ among zygotes in
deme $\a$ in generation $t$. We define the following column vectors:
\begin{subequations}\label{vector_notation}
\begin{align}
    p_i     &= (p_{i,1},\ldots,p_{i,\Ga})^\Tr \in\Real^\Ga\,, \label{p_i}\\
    p_\dota&= (p_{1,\a},\ldots,p_{I,\a})^\Tr \in \S_I\,, \label{p_{(a)}} \\
    p       &= \left(p_{\cdot,1}^\Tr,\ldots,p_{\cdot,\Ga}^\Tr\right)^\Tr\in \S_I^\Ga\,.
\end{align}
\end{subequations}
Here, $p_i$, $p_\dota$, and $p$ signify the frequency of gamete $i$ in each deme, the gamete
frequencies in deme $\a$, and all gamete frequencies, respectively. 
We will use analogous notation for other quantities, e.g.\ for $D_{i,\a}$.

The frequency of allele $\A_{i_k}^{(k)}$ among gametes in deme $\a$ is
\begin{equation}\label{pikak}
   \pikak = \sum_{i|i_k} p_{i,\a}\,, 
\end{equation}
where the sum runs over all multi-indices $i$ with the $k$th component
fixed as $i_k$. We write 
\begin{equation}\label{pikk}
   \pikk = \left(p_{i_k,1}^{(k)},\ldots,p_{i_k,\Ga}^{(k)}\right)^\Tr\in\Real^\Ga
\end{equation}
for the vector of frequencies of allele $\A_{i_k}^{(k)}$ in each deme.

Let $x_{ij,\a}$ and $w_{ij,\a}$ denote the frequency and fitness of 
genotype $ij$ in deme $\a$, respectively. We designate the marginal fitness 
of gamete $i$ in deme $\a$ and the mean fitness of the population in deme 
$\a$ by
\begin{subequations}\label{marg_mean}
\begin{align}
   w_{i,\a} &= w_{i,\a}(p_\dota) = \sum_j w_{ij,\a} p_{j,\a} \\
\intertext{and}
  \wbara &= \wbara(p_\dota) = \sum_{i,j} w_{ij,\a} p_{i,\a} p_{j,\a}\,.
\end{align}
\end{subequations}

The life cycle starts with zygotes in Hardy-Weinberg proportions.
Selection acts in each deme on the newly born offspring. Then recombination occurs followed by 
adult migration and random mating within in each deme.
This life cycle extends that in Section \ref{sec:forw_backw}.
To deduce the general multilocus migration-selection dynamics (Nagylaki 2009b), let
\begin{subequations}\label{dyn}
\begin{equation}\label{dyn_sel}
   x_{ij,\a}^\ast = p_{i,\a} p_{j,\a} w_{ij,\a}/\wbara
\end{equation}
be the frequency of genotype $ij$ in deme $\a$ after selection, and
\begin{equation}\label{dyn_rec}
   p_{i,\a}^\# = \sum_{j,\ell} R_{i,j\ell} x_{j\ell,\a}^\ast
\end{equation}
its frequency after recombination. Here, $R_{i,j\ell}$ is the probability that during gametogenesis,
paternal haplotypes $j$ and $\ell$ produce a gamete $i$ by recombination. 
Finally, migration occurs and yields the gamete frequencies in the next generation in each deme:
\begin{equation}\label{dyn_mig2}
   p_{i,\a}' = \sum_\be m_{\a\be} p_{i,\be}^\# \,.
\end{equation}
\end{subequations}
The recurrence equations \eqref{dyn} describe the evolution of gamete frequencies under selection on multiple recombining loci and migration.
We view \eqref{dyn} as a dynamical system on $\S_I^\Ga$. We leave it to the reader to check the obvious fact that the processes of 
migration and recombination commute. 

The complications introduced by recombination are disguised by the terms $R_{i,jl}$ which depend on the
recombination frequencies among all subsets of loci. To obtain an analytically useful representation of \eqref{dyn_rec}, 
more effort is required.

Let $\{\K,\N\}$ be a nontrivial decomposition of $\Loc$, i.e.,
$\K$ and its complement $\N=\Loc\setminus \K$ are each proper subsets of $\Loc$
and, therefore, contain at least one locus. (The decompositions
$\{\K,\N\}$ and $\{\N,\K\}$ are identified.) We
designate by $c_\K$ the probability of reassociation of the genes at the
loci in $\K$, inherited from one parent, with the genes at the loci in $\N$,
inherited from the other. Let
\begin{equation}\label{ctot}
   \ctot = \sum_\K c_\K\,, 
\end{equation}
where $\sum_\K$ runs over all (different) decompositions $\{\K,\N\}$ of $\Loc$,
denote the total recombination frequency.
We designate the recombination frequency
between loci $k$ and $n$, such that $k<n$, by $c_{kn}$. It is given by
\begin{equation}
    c_{kn} = \sum_{\K\in \Loc_{kn}} c_\K\,,
\end{equation}
where $\Loc_{kn} = \{\K: k\in \K \text{ and } n\in \N\}$ (B\"urger 2000, p.\ 55).
Unless stated otherwise, we assume that all pairwise recombination rates 
$c_{kn}$ are positive. Hence,
\begin{equation}\label{cmin}
   \cmin = \min_{k<n} c_{kn} > 0\,,
\end{equation}

We define
\begin{equation}\label{D_ia}
   D_{i,\a} = \frac{1}{\bar w_\a} \sum_j \sum_\K c_\K 
         \left(w_{ij,\a}\, p_{i,\a}p_{j,\a} 
          - w_{i_\K j_{\N},j_\K i_{\N},\a}\, 
            p_{i_\K j_{\N},\a}\,p_{j_\K i_{\N},\a}\right)\,.
\end{equation}
This is a measure of LD in gamete $i$ in deme $\a$.
A considerable generalization of the arguments leading to \eqref{P_{ij}'} shows that \eqref{dyn_rec} can be expressed in the following form (Nagylaki 1993):
\begin{equation}\label{dyn_rec2}
   p_{i,\a}^\# = p_{i,\a}\frac{w_{i,\a}}{\bar w_\a} - D_{i,\a}\,.   \tag{\ref{dyn_rec}'}
\end{equation}
If there is only one deme, then \eqref{dyn_rec2} provides the recurrence equation describing evolution under selection on
multiple recombining loci. It seems worth noting that the full dynamics, \eqref{dyn}, 
does not depend on linkage disequilibria between demes. 

Let
\begin{equation}\label{LE_a}
   \La_{0,\a} = \left\{p_\dota: p_{i,\a} 
       = p_{i_1,\a}^{(1)}\cdot\ldots\cdot p_{i_L,\a}^{(L)}\right\} 
       \subseteq \S_I 
\end{equation}
denote the {\it linkage-equilibrium manifold} in deme $\a$, and let
\begin{equation}\label{LE}
   \La_0 = \La_{0,1}\times\ldots\times\La_{0,\Ga} \subseteq \S_I^\Ga\,.
\end{equation}
If there is no position effect, i.e., if $w_{ij,\a} = w_{i_\K j_{\N},j_\K i_{\N};\a}$
for every $i$, $j$, and $\K$, then $D_{i,\a}=0$ for every $p_\dota\in\La_{0,\a}$. Hence,
\begin{equation}\label{La_0_D}
    \La_{0,\a} \subseteq \{p_\dota: D_\dota=0\}\,,
\end{equation}
where $D_\dota$ is defined in analogy to \eqref{p_{(a)}}.
In the absence of selection, equality holds in \eqref{La_0_D}.

We will often need the following assumption:
\begin{equation}\label{ergodic}
    \text{The backward migration matrix $M$ is ergodic, i.e., irreducible 
       and aperiodic.}    \tag{E}
\end{equation}
Given irreducibility, the biologically trivial condition that individuals
have positive probability of remaining in some deme, i.e., $m_{\a\a}>0$
for some $\a$, suffices for aperiodicity (Feller 1968, p.\ 426). Because $M$ is a finite
matrix, ergodicity is equivalent to primitivity.

If \eqref{ergodic} holds, there exists a principal left eigenvector $\xi\in\text{int} \S_\Ga$ such that
\begin{equation}\label{xi}
   \xi^\Tr M = \xi^\Tr\,. 
\end{equation}
The corresponding principal eigenvalue 1 of $M$ is simple and exceeds every other eigenvalue in modulus. 
The principal eigenvector $\xi$ is the unique stationary distribution of the Markov chain
with transition matrix $M$, and $M^n$ converges at a geometric rate to $e\xi^\Tr$ as $n\to\infty$, where
$e=(1,\ldots,1)^\Tr\in\Real^\Ga$ (Feller 1968, p.\ 393; Seneta 1981, p.\ 9).

We average $p_{i,\a}$ with respect to $\xi$,
\begin{equation}\label{P_i}
   P_i = \xi^\Tr p_i\,,\quad P = (P_1,\ldots,P_I)^\Tr\in\S_I\,,
\end{equation}
and define the gamete-frequency deviations $q$ from the average
gamete frequency $P$:
\begin{subequations}\label{q}
\begin{align}
   q_{i,\a}  &= p_{i,\a} - P_i\,, \label{q_ia}\\
   q_i  &= p_i - P_i e \in \Real^\Ga\,, \label{q_i}\\
   \qikk  &= \sum\nolimits_{i|i_k} q_i \in \Real^\Ga\,, \label{qikk}\\
   q_\dota  &= p_\dota - P \in\Real^I\,, \label{q_(a)}\\
   q  &= (q_{\cdot,1}^\Tr,\ldots,q_{\cdot,\Ga}^\Tr)^\Tr \in\Real^{I\Ga}\,.
\end{align}
\end{subequations}
Therefore, $q$ measures spatial heterogeneity or diversity. If $q=0$,
the gametic distribution is spatially homogeneous.

\subsection{Migration and recombination}\label{Mig_Rec}
We study migration and recombination in the absence of selection,  i.e., if $w_{ij,\a}=1$ for every $i$, $j$, $\a$.
Then the dynamics \eqref{dyn} reduces to 
\begin{equation}\label{mig_rec}
   p_{i,\a}' = \sum_\be m_{\a\be} (p_{i,\be} - D_{i,\be})\,,
\end{equation}
where 
\begin{equation}\label{D_nosel}
   D_{i,\be} = \sum_\K c_\K \left(p_{i,\be} 
            - p_{i_\K,\be}^{(\K)}p_{i_{\N},\be}^{(\N)}\right)
\end{equation}
and $\{\K,\N\}$ as above \eqref{ctot}. Here, 
\begin{equation}\label{piKaK}
   p_{i_\K,\be}^{(\K)} = \sum_{i|i_\K} p_{i,\be}\,,
\end{equation}
where $\sum_{i|i_\K}$ runs over all multi-indices $i$ with the components in $\K$ fixed as $i_\K$,
denotes the marginal frequency of the gamete with components $i_k$ fixed for the loci $k\in \K$.
In vector form, i.e., with the notation \eqref{p_i}, \eqref{mig_rec} becomes
\begin{equation}\label{mig_rec_vec}
  p_i' = M(p_i - D_i)\,, \quad i\in\I\,.
\end{equation}

The following theorem shows that in the absence of selection trajectories quickly approach LE and spatial homogeneity.
\begin{theorem}[B\"urger 2009a]\label{mig_rec_theorem}
Suppose that \eqref{mig_rec_vec} and \eqref{ergodic} hold. Then, the manifold
\begin{subequations}\label{Ps_0}
\begin{align}
   \Ps_0 &= \bigl\{p\in\S_I^\Ga: p_i =  \prod_k\pikk \text{ and } \qikk = 0
     			   \text{ for every $k$ and $i_k$}\bigr\} \\
         &= \bigl\{p\in\S_I^\Ga: D=0 \text{ and } q =0\bigr\} 
\end{align}
\end{subequations}
is invariant under \eqref{mig_rec_vec} and globally attracting at a 
uniform geometric rate. Furthermore, every point on $\Ps_0$ is an equilibrium
point. Thus, (global) LE and spatial homogeneity are approached  quickly under recombination 
and (ergodic) migration.
\end{theorem}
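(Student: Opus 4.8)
The plan is to exploit the fact that, in the absence of selection, the marginal allele frequencies obey a \emph{closed linear} recursion, which drives both the spatial homogenization and, indirectly, the decay of linkage disequilibrium. First I would record the algebraic fact that recombination conserves allele frequencies within each deme: summing \eqref{D_nosel} over all gametes carrying the allele $\A_{i_k}^{(k)}$ gives $\sum_{i|i_k} D_{i,\be}=0$, because for each decomposition $\{\K,\N\}$ the corresponding marginal sum of $p_{i_\K,\be}^{(\K)}p_{i_\N,\be}^{(\N)}$ collapses to $p_{i_k,\be}^{(k)}$. Applying $\sum_{i|i_k}$ to \eqref{mig_rec_vec} and using the linearity of $M$ then yields the autonomous linear system
\begin{equation*}
  (\pikk)' = M\,\pikk \qquad\text{for every $k$ and every allele $i_k$.}
\end{equation*}
By ergodicity \eqref{ergodic}, $M^t\to e\xi^\Tr$ at a geometric rate (as recorded below \eqref{xi}), so $\pikk(t)\to\bigl(\xi^\Tr\pikk(0)\bigr)e$ uniformly and geometrically; in particular $\qikk(t)\to0$ geometrically, and the averaged frequencies $\xi^\Tr\pikk$ are conserved. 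This already establishes the spatial-homogeneity ($q=0$) half of $\Ps_0$ and identifies the limiting allele frequencies $P_{i_k}^{(k)}=\xi^\Tr\pikk(0)$.

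Next I would dispose of invariance and the equilibrium claim, which are purely algebraic. On $\Ps_0$ one has $q=0$, so each $p_{i,\a}$ is independent of $\a$, i.e.\ $p_i=P_i e$ with $P_i=\prod_k P_{i_k}^{(k)}$; moreover linkage equilibrium forces every summand of \eqref{D_nosel} to vanish, so $D_i=0$. Hence \eqref{mig_rec_vec} gives $p_i'=M(P_i e)=P_i(Me)=P_i e=p_i$, using $Me=e$. Thus every point of $\Ps_0$ is a fixed point, and $\Ps_0$ is invariant.

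The substantive step is global, uniform, geometric attraction, for which I would track the within-deme linkage-disequilibrium vectors $\delta_i:=p_i-\prod_k\pikk\in\Real^\Ga$ (componentwise products), noting that $p\in\Ps_0$ iff $\delta_i=0$ for all $i$ and $q=0$. Pure recombination contracts these: by the standard multilocus bookkeeping for \eqref{dyn_rec2} (Nagylaki 1993; B\"urger 2000), one generation of recombination in a deme reduces $\max_i|\delta_i|$, in a suitable norm, by a factor bounded away from $1$, the positivity $\cmin>0$ of \eqref{cmin} being essential. Migration, being a $\xi$-stochastic averaging, is nonexpansive on the already-contracted disequilibrium \emph{except} through the mixing of demes with unequal allele frequencies, which generates disequilibrium of covariance (Wahlund) type that is quadratic in, hence bounded by $C\max_{k,i_k,\a}|q_{i_k,\a}^{(k)}|$. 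Combining the two effects yields a recursion of the form $\Delta(t+1)\le(1-\cmin)\Delta(t)+C\,Q(t)$, where $\Delta(t)=\max_{i,\a}|\delta_{i,\a}(t)|$ and $Q(t)=\max_{k,i_k,\a}|q_{i_k,\a}^{(k)}(t)|$. Since the first paragraph gives $Q(t)\le C'\vartheta^{\,t}$ with $\vartheta<1$, a discrete Gronwall (geometric-series) argument forces $\Delta(t)\to0$ geometrically, at rate $\max\{1-\cmin,\vartheta\}$ (up to a harmless polynomial factor if the two rates coincide). Together with $\prod_k\pikk(t)\to\bigl(\prod_k P_{i_k}^{(k)}\bigr)e$ from the first paragraph, this gives $p_i(t)\to P_i e$ uniformly and geometrically, i.e.\ $p(t)\to\Ps_0$ at a uniform geometric rate.

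I expect the main obstacle to be the disequilibrium estimate in the last paragraph: making precise that a single recombination step contracts the \emph{full} multilocus disequilibrium by a definite factor (this requires analyzing the operator \eqref{dyn_rec2} over all decompositions $\{\K,\N\}$ as a triangular linear system on the disequilibria, whose eigenvalues are all smaller than $1$ precisely because $\cmin>0$, rather than merely controlling pairwise loci), and that the disequilibrium injected by migration is genuinely of order $Q(t)$ rather than order one. Once these two bounds are secured, the coupling to the already-established geometric decay of $Q(t)$ is routine.
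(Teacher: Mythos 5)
Your outline is correct and, as far as one can tell, follows essentially the same route as the cited proof: the paper gives no details here beyond stating that the argument is ``very technical and uses induction on the number of embedded loci,'' and your reduction to the closed linear system $(\pikk)'=M\pikk$, the algebraic verification of invariance, and the coupling of a recombination contraction with a geometrically decaying Wahlund-type source term is exactly the skeleton that such an induction fleshes out. Be aware that the crux you flag is real and slightly worse than ``linear'': the map $p\mapsto p-D(p)$ is quadratic, so the triangular system on the multilocus disequilibria carries product terms in lower-order disequilibria (and migration does not commute with Bennett-type linearizing coordinates), which is precisely why one must induct on the embedded (marginal) subsystems rather than obtain a one-step max-norm contraction of the full disequilibrium directly.
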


This theorem generalizes the well known fact that in multilocus systems in which recombination is the only evolutionary force,
linkage disequilibria decay to zero at a geometric rate (Geiringer 1944, Lyubich 1992). The proof is very technical and uses induction on the number of embedded loci of the full $L$-locus system. Theorem \ref{mig_rec_theorem} does not hold if the migration matrix is reducible or periodic (Remark 3.3 in B\"urger 2009a).

In the following, we investigate several limiting cases in which useful general results can be proved and perturbation theory allows important extensions.

\subsection{Weak selection}\label{sec_weak_sel}
We assume that selection is weaker than recombination and migration. The main result will be that
all trajectories converge to an invariant manifold $\Ps_\ep$ close to $\Ps_0$ \eqref{Ps_0}, 
on which there is LE and allele frequencies are deme independent. On $\Ps_\ep$, the dynamics can be described 
by a small perturbation of a gradient system on $\Ps_0$. This implies that all trajectories converge, i.e., no cycling can occur, and the
equilibrium structure can be inferred from that of the much simpler gradient system.

Throughout this section, we assume \eqref{ergodic}, i.e., the backward migration matrix is ergodic.
We assume that there are constants $r_{ij,\a}\in\Real$ such that
\begin{equation}\label{weak_sel}
    w_{ij,\a} = 1 + \ep r_{ij,\a}\,,
\end{equation}
where $\ep\ge0$ is sufficiently small and $\abs{r_{ij}}\le1$. Migration and recombination rates, 
$m_{\a\be}$ and $c_\K$, are fixed so that fitness differences are small compared with them.
From \eqref{marg_mean} and \eqref{weak_sel}, we deduce
\begin{equation}\label{r_ia}
   w_{i,\a}(p_\dota) = 1 + \ep r_{i,\a}(p_\dota)\,,\quad
   \wbara(p_\dota) = 1 + \ep \rbara(p_\dota)\,,
\end{equation}
in which
\begin{equation}\label{r_ia_rbara}
   r_{i,\a}(p_\dota) = \sum_j r_{ij,\a}p_{j,\a}\,,\quad
   \rbara(p_\dota) = \sum_{i,j} r_{ij,\a}p_{i,\a}p_{j,\a}\,.
\end{equation}

When selection is dominated by migration and recombination, we expect that
linkage disequilibria within demes as well as gamete- and gene-frequency differences between demes
decay rapidly to small quantities. In particular, we expect 
approximately panmictic evolution of suitably averaged gamete frequencies
in `quasi-linkage equilibrium'. We also show that all trajectories converge
to an equilibrium point, i.e., no complicated dynamics, such as cycling, 
can occur. In the absence of migration, this
was proved by Nagylaki et al.\ (1999, Theorem 3.1). For a single locus under selection and
strong migration, this is the content of Theorem 4.5 in Nagylaki and Lou (2007). 
Theorem \ref{weak_sel_th} and its proof combine and extend these results
as well as some of the underlying ideas and methods.

To formulate and prove this theorem, we define the vector
\begin{equation}
   \rh_\a = \left(p_{1,\a}^{(1)},\ldots,p_{I_1,\a}^{(1)},
               \ldots,p_{1,\a}^{(L)},\ldots,p_{I_L,\a}^{(L)}\right)^T 
                              \in\S_{I_1}\times\cdots\times\S_{I_L}
\end{equation}
of all allele frequencies at every locus in deme $\a$, and the vector
\begin{equation}\label{pi}
   \pi = \left(P_1^{(1)},\ldots,P_{I_1}^{(1)},
               \ldots,P_1^{(L)},\ldots,P_{I_L}^{(L)}\right)^T 
                              \in\S_{I_1}\times\cdots\times\S_{I_L}
\end{equation}
of all averaged allele frequencies at every locus. We
note that in the presence of selection the $\Pikk$, hence $\pi$, 
are time dependent. Instead of $p$, we will use $\pi$, $D$, and $q$ to 
analyze \eqref{dyn}, and occasionally write $p=(\pi,D,q)$.

On the LE manifold $\La_{0,\a}$ \eqref{LE_a}, which is characterized by the $\rh_\a$ ($\a\in\G$),
the selection coefficients of gamete $i$, allele $i_n$ at locus $n$, and of the entire population are
\begin{subequations}\label{r(rh)}
\begin{align}
   r_{i,\a}(\rh_\a) &= \sum_j r_{ij,\a}\prod_k \pjkak\,, \\
   \rinan(\rh_\a) &=\sum_{i|i_n} r_{i,\a}(\rh_\a)
               \prod_{k:k\ne n} \pikak\,,    \label{rinan}\\
   \rbara(\rh_\a) &=\sum_i r_{i,\a}(\rh_\a)\prod_k\pikk\,, \label{rbara}
\end{align}
\end{subequations}
cf.\ \eqref{r_ia_rbara}.
As in \eqref{xi}, let $\xi$ denote the principal left eigenvector of $M$.
We introduce the average selection coefficients of genotype $ij$,
gamete $i$, allele $i_n$ at locus $n$, and of the entire population:
\begin{subequations}\label{om(pi)}
\begin{align}
   \om_{ij} &= \sum_\a \xi_\a r_{ij,\a}\,, \label{omij}\\
   \om_i(\pi) &= \sum_j \om_{ij}\prod_k \Pjkk 
                = \sum_\a \xi_\a r_{i,\a}(\pi)\,, \label{om_i}\\
   \ominn(\pi) &= \sum_{i|i_n} \om_{i}(\pi)\prod_{k\ne n} \Pikk
           = \sum_\a \xi_\a \rinan (\pi)\,, \label{om_inn}\\
   \bar\om(\pi) &= \sum_i \om_{i}(\pi)\prod_k \Pikk
           = \sum_\a \xi_\a \rbara(\pi)\,.    \label{bar_om_pi}
\end{align}
For $\bar\om$, we obtain the alternative representations
\begin{equation}
   \bar\om(\pi) = \sum_n\sum_{i_n} \ominn\Pinn \label{bar_om_Pinn}
     		    = \sum_{i,j} \om_{ij} \left(\prod_n\Pinn\right)  \left(\prod_k P_{j_k}^{{(k)}}\right)\,,
\end{equation}
\end{subequations}
and
\begin{equation}\label{dif_barom}
   \der{\bar\om(\pi)}{\Pinn} = 2\ominn(\pi)\,.
\end{equation}

For reasons that will be justified by the following theorem, we call the differential equation
\begin{subequations}\label{weak_sel_lim}
\begin{gather}
  \dot{P}_{i_n}^{(n)} = \Pinn \left[\ominn(\pi) - \bar\om(\pi)\right]\,, \label{wsl1}\\
   		D = 0\,,\; q = 0       \label{wsl2}
\end{gather}
\end{subequations}
on $\S_I^\Ga$ the {\it weak-selection limit} of \eqref{dyn}. 
In view of the following theorem, it is more convenient to consider
\eqref{wsl1} and \eqref{wsl2} on $\S_I^\Ga$ instead of \eqref{wsl1} on $\S_{I_1}\times\cdots\times\S_{I_L}$.
The differential equation \eqref{wsl1} is a Svirezhev-Shashahani
gradient (Remark \ref{rem:Svir_Shash}) with potential function $\bar\om$. 
In particular, $\bar\om$ increases strictly along nonconstant solutions of \eqref{wsl1} because
\begin{equation}\label{om_Lyap}
   \dot{\bar\om} = 2\sum_n\sum_{i_n}\Pinn \left[\ominn(\pi)-\bar\om(\pi)\right]^2 \ge0 \,.
\end{equation}

We will also need the assumption
\begin{equation}\label{hyperbolic}
   \text{All equilibria of \eqref{wsl1} are hyperbolic.} \tag{H}
\end{equation}

\begin{theorem}[B\"urger 2009a]\label{weak_sel_th}
Suppose that \eqref{dyn}, \eqref{weak_sel}, \eqref{ergodic} and \eqref{hyperbolic} hold, the backward migration matrix
$M$ and all recombination rates $c_\K$ are fixed, and $\ep>0$ is sufficiently small.

(a) The set of equilibria $\Xi_0\subset\S_I^\Ga$ 
of \eqref{weak_sel_lim}
contains only isolated points, as does the set of equilibria
$\Xi_\ep\subset\S_I^\Ga$ of \eqref{dyn}. As $\ep\to0$, each equilibrium
in $\Xi_\ep$ converges to the corresponding equilibrium in $\Xi_0$.

(b) In the neighborhood of each equilibrium in
$\Xi_0$, there exists exactly one equilibrium point in $\Xi_\ep$.
The stability of each equilibrium in $\Xi_\ep$ is the same as that of
the corresponding equilibrium in $\Xi_0$; i.e., each pair is either
asymptotically stable or unstable.

(c) Every solution $p(t)$ of \eqref{dyn} converges to one of the 
equilibrium points in $\Xi_\ep$.
\end{theorem}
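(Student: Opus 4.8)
The plan is to read \eqref{dyn} as a singular perturbation of the pure migration--recombination map \eqref{mig_rec_vec} and to reduce the asymptotic analysis to the gradient flow \eqref{wsl1} on the invariant manifold $\Ps_0$. At $\ep=0$ the map \eqref{dyn} collapses to \eqref{mig_rec_vec}, for which Theorem \ref{mig_rec_theorem} tells us that $\Ps_0$ \eqref{Ps_0} is invariant, consists entirely of fixed points, and attracts every trajectory at a \emph{uniform geometric} rate. That uniform rate is exactly the normal attractivity needed to invoke the persistence theory for normally hyperbolic invariant manifolds of maps (Hirsch--Pugh--Shub, Fenichel): for all sufficiently small $\ep>0$ the full map \eqref{dyn} possesses a locally invariant, attracting manifold $\Ps_\ep$ that is $C^1$-close to $\Ps_0$, and every orbit of \eqref{dyn} enters an arbitrarily small neighborhood of $\Ps_\ep$ in finitely many generations. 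This reduces the whole question to the dynamics restricted to $\Ps_\ep$.

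Next I would identify that restricted dynamics. On $\Ps_0$ there is global linkage equilibrium and spatial homogeneity, so a state is determined by the averaged allele frequencies $\pi$ of \eqref{pi}; hence $\pi$ serves as a chart for $\Ps_\ep$ up to $O(\ep)$. Restricting \eqref{dyn} to $\Ps_\ep$, the per-generation change of $\pi$ is $O(\ep)$: migration and recombination leave $\pi$ invariant to leading order (they act transverse to $\Ps_0$), while the selection step \eqref{dyn_sel} changes each $\Pinn$ by $\ep\,\Pinn[\ominn(\pi)-\bar\om(\pi)]+O(\ep^2)$ after averaging against the stationary distribution $\xi$ of \eqref{xi} and using LE to factorize the gamete frequencies as in \eqref{r(rh)}--\eqref{om(pi)}. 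Rescaling time by $t=\lfloor\ta/\ep\rfloor$ exactly as in Section \ref{sect_cont_time_sel}, the reduced map converges as $\ep\to0$ to the weak-selection limit \eqref{wsl1}. I expect this to be the main obstacle, precisely because it is the new ingredient relative to Theorem \ref{theorem_weak_mig}: there the unperturbed system had isolated hyperbolic equilibria, whereas here the $\ep=0$ system is degenerate, carrying a whole manifold $\Ps_0$ of fixed points, so one must make the invariant-manifold reduction rigorous and verify that the leading-order slow vector field is \emph{exactly} the Svirezhev--Shahshahani gradient \eqref{wsl1}, with a $C^1$-small, uniform $O(\ep)$ remainder.

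With the reduction in hand, the remaining structure transfers directly. By Remark \ref{rem:Svir_Shash} the limit \eqref{wsl1} is a gradient flow whose potential $\bar\om$ is a strict Lyapunov function, \eqref{om_Lyap}; together with the hyperbolicity hypothesis \eqref{hyperbolic} this makes $\Xi_0$ a finite set of isolated points, and, by Akin's characterization (Akin 1993, Thm.~3.14) used in the proof of Theorem \ref{theorem_weak_mig}, it forces the chain-recurrent set of \eqref{wsl1} to coincide with $\Xi_0$. One then compares the reduced map $g_\ep$ on $\Ps_\ep$ with the time-$\ep$ map of \eqref{wsl1}: the two differ by $O(\ep^2)$, so in slow time $g_\ep$ is a $C^1$-small perturbation of the flow.

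From here I would run the argument of Theorem \ref{theorem_weak_mig} almost verbatim on $\Ps_\ep$. The implicit function theorem produces, near each $\hat\pi\in\Xi_0$, a unique fixed point of $g_\ep$ in $\Xi_\ep$, which yields the isolation and convergence statements of part (a) and the existence and uniqueness in part (b); the Hartman--Grobman theorem gives persistence of the stability type, and since $\Ps_\ep$ is attracting, stability within $\Ps_\ep$ coincides with stability in $\S_I^\Ga$, completing (b). Finally, upper semicontinuity of the chain-recurrent set under $C^1$-perturbation (Conley; Akin 1993, p.~244) shows that for small $\ep$ the chain-recurrent set of $g_\ep$ lies in small neighborhoods of $\Xi_0$ and there reduces to the finitely many hyperbolic points $\Xi_\ep$. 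Because every orbit of \eqref{dyn} first approaches $\Ps_\ep$ and then, within $\Ps_\ep$, approaches this chain-recurrent set, every solution converges to a point of $\Xi_\ep$, which is (c).
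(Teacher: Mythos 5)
Your proposal follows essentially the same route as the paper's proof: Theorem \ref{mig_rec_theorem} plus normally hyperbolic invariant manifold theory to obtain the globally attracting manifold $\Ps_\ep$, averaging against $\xi$ and rescaling time to identify the reduced dynamics on $\Ps_\ep$ with the gradient system \eqref{wsl1} up to $O(\ep^2)$, and then the implicit-function/Hartman--Grobman/chain-recurrence machinery of Theorem \ref{theorem_weak_mig} transplanted to $\Ps_\ep$. The only step you omit is the verification that the perturbations of \emph{boundary} equilibria of \eqref{weak_sel_lim} remain in $\S_I^\Ga$ --- the paper argues that an equilibrium with $\hat p_i=0$ for every $i$ in some subset of $\I$ keeps those components zero because migration does not alter that condition, so the perturbation stays within the boundary face; without this, part (b) is not quite complete for boundary equilibria.
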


The essence of this theorem and its proof is that under weak selection
(i) the exact dynamics quickly leads to spatial quasi-homogeneity and quasi-linkage equilibrium, and (ii) after this time, the exact dynamics can be
perceived as a perturbation of the weak-selection limit \eqref{weak_sel_lim}.
The latter is much easier to study because it is formally equivalent to a panmictic one-locus
selection dynamics. Theorem \ref{weak_sel_th} is a singular perturbation result because in the absence of selection every point on $\Ps_0$ is an equilibrium.

Parts (a) and (b) of the above theorem follow essentially from
Theorem 4.4 of Karlin and McGregor (1972b) which is an application of the implicit function
theorem and the Hartman-Grobman theorem. Part (c) is much stronger and relies, among others, on the notion of
chain-recurrent points and their properties under perturbations of the dynamics (see Section \ref{sec:Weak_mig}).

\begin{proof}[Outline of the proof of Theorem \ref{weak_sel_th}]
Theorem \ref{mig_rec_theorem} and the theory of normally hyperbolic manifolds imply that for sufficiently
small $s$, there exists a smooth invariant manifold $\Ps_\ep$ close to $\Ps_0$, and $\Ps_\ep$ is
globally attracting for \eqref{dyn} at a geometric rate (see Nagylaki et al.\ 1999, and the
references there). The manifold $\Ps_\ep$ is characterized by an equation of the form
\begin{equation}\label{q_ps}
	(D,q) = \ep \ps(\pi,\ep)\,,	
\end{equation}
where $\ps$ is a smooth function of $\pi$. Thus, on $\Ps_\ep$, and more
generally, for any initial values, after a long time,
\begin{equation}\label{Dq(t)_Oep}
   D(t) = O(\ep) \quad\text{and}\quad q(t) = O(\ep)\,.
\end{equation}

The next step consists in deriving the recurrence equations in an $O(\ep)$ neighborhood
of $\Ps_0$ which, in particular, contains $\Ps_\ep$. By applying \eqref{weak_sel} and $D(t) = O(\ep)$ to \eqref{dyn_rec2},
straightforward calculations yield\begin{equation}\label{tilde_pinan}
   {p_{i_n,\a}^{(n)}}^\# = \pinan 
      + \ep\pinan \frac{\rinan(\rh_\a)-\rbara(\rh_\a)}{\wbara(\rh_\a)}
      + O(\ep^2)
\end{equation}
for every $\a\in\G$ (see eq.\ (3.10) in Nagylaki et al.\ 1999). By averaging, invoking \eqref{dyn}, \eqref{tilde_pinan}, and \eqref{q_ps}
one obtains
\begin{align}\label{weak_sel_diff}
   {\Pinn}' &= \mu^T{\pinn}' = \mu^T M {p_{i_n}^{(n)}}^\# = \mu^T {p_{i_n}^{(n)}}^\# \notag\\
     &= \Pinn + \ep \Pinn \left[\ominn(\pi) - \bar\om(\pi)\right] +O(\ep^2)\,.
\end{align}
By rescaling time as $\ta=\ep t$ and letting $\ep\to0$, the leading term in \eqref{weak_sel_diff}, 
${\Pinn}' = \Pinn + \ep \Pinn \left[\ominn(\pi) - \bar\om(\pi)\right]$, approximates
the gradient system \eqref{wsl1}. Therefore, we have $\bar\om(\pi')>\bar\om(\pi)$ unless $\pi'=\pi$. In particular,
the dynamics \eqref{weak_sel_diff} on $\Ps_0$ is gradient like.

The eigenvalues of the Jacobian of \eqref{weak_sel_diff} have the form $1+\ep\nu+O(\ep^2)$, where $\nu$ signifies an
eigenvalue of the Jacobian of \eqref{wsl1}. Therefore, \eqref{hyperbolic} implies that also every equilibrium of \eqref{weak_sel_diff} is hyperbolic. 

The rest of the proof is identical to that of Theorem 3.1 in Nagylaki et al.\ (1999). The heart of its
proof is the following. Since solutions of \eqref{dyn} are in phase with solutions on the invariant
manifold $\Ps_\ep$, it is sufficient to prove convergence of trajectories for initial conditions $p\in\Ps_\ep$.
With the help of the qualitative theory of numerical approximations, it can be concluded that the 
chain-recurrent set of the exact dynamics \eqref{dyn} on $\Ps_\ep$ is a small perturbation of
the chain-recurrent set of \eqref{wsl1}. The latter dynamics, however, is gradient like. Because all
equilibria are hyperbolic by assumption, the chain-recurrent set consists exactly of those equilibria. Therefore,
the same applies to the chain-recurrent set of \eqref{dyn} if $\ep>0$ is small.

A final point to check is that unstable boundary equilibria remain in the simplex after perturbation. The 
reason is that an equilibrium $\hat p$ of \eqref{weak_sel_lim} on the boundary of $\S_I^\Ga$ satisfies
$\hat p_i=0$ for every $i$ in some subset of $\I$, and this condition is not altered by migration. The
positive components of $\hat p$ are perturbed within the boundary, where the equilibrium remains.
\end{proof}

Our next result concerns the average of the (exact) mean fitnesses over demes:
\begin{equation}
   \bar w(p) = \sum_\a \xi_\a \wbara (p_\dota)\,.
\end{equation}

\begin{theorem}[B\"urger 2009a]\label{weak_sel_mean_fit_increase}
Suppose the assumptions of Theorem \ref{weak_sel_th} apply. If 
\eqref{Dq(t)_Oep} holds, $\pi$ is bounded away from the equilibria of
\eqref{wsl1}, and $p$ is within $O(\ep^2)$ of $\Ps_\ep$, then $\De\bar w(p)>0$\,.
\end{theorem}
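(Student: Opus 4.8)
The plan is to reduce the discrete increment $\De\bar w$ to the increment of the gradient potential $\bar\om(\pi)$ along the averaged recurrence \eqref{weak_sel_diff} and then to exploit the Lyapunov identity \eqref{om_Lyap}; the whole difficulty lies in tracking powers of $\ep$, since the positive quantity we are chasing turns out to be of order $\ep^2$. First I would record an exact reduction. By \eqref{r_ia} we have $\wbara(p_\dota)=1+\ep\,\rbara(p_\dota)$, and since $\xi\in\S_\Ga$ it follows, with no approximation, that
\begin{equation*}
   \bar w(p) = 1 + \ep\,\bar r(p)\,,\qquad \bar r(p):=\sum_\a \xi_\a\,\rbara(p_\dota)\,.
\end{equation*}
Hence $\De\bar w = \ep\,[\bar r(p')-\bar r(p)]$, and it suffices to prove $\bar r(p')>\bar r(p)$.

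Next I would pass to the slow manifold. The invariant manifold $\Ps_\ep$ produced in the proof of Theorem \ref{weak_sel_th} is a graph over $\pi$, characterized by $(D,q)=\ep\,\ps(\pi,\ep)$, so the restriction of $\bar r$ to $\Ps_\ep$ is a smooth function of $\pi$ alone, say $g(\pi,\ep)$; on $\Ps_0$, where $D=q=0$ and the gametic distribution is spatially homogeneous and in linkage equilibrium, $\bar r$ equals $\bar\om(\pi)$, so $g(\pi,\ep)=\bar\om(\pi)+O(\ep)$. Because $p$ lies within $O(\ep^2)$ of $\Ps_\ep$ (and so does $p'$, since $\Ps_\ep$ is invariant and attracts at a geometric rate), and because $\bar r$ is a fixed polynomial with $O(1)$ gradient, replacing $\bar r(p)$ and $\bar r(p')$ by $g(\pi,\ep)$ and $g(\pi',\ep)$ costs only $O(\ep^2)$. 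Since $\De\pi=O(\ep)$ by \eqref{weak_sel_diff} and $g=\bar\om+O(\ep)$, a Taylor expansion then gives
\begin{equation*}
   \bar r(p')-\bar r(p) = \bar\om(\pi')-\bar\om(\pi) + O(\ep^2) = \De\bar\om(\pi)+O(\ep^2)\,.
\end{equation*}

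To evaluate $\De\bar\om(\pi)$ I would use the averaged recurrence \eqref{weak_sel_diff}, namely $\De\Pinn=\ep\,\Pinn[\ominn(\pi)-\bar\om(\pi)]+O(\ep^2)$, expand $\bar\om$ to first order, insert $\partial\bar\om/\partial\Pinn=2\ominn$ from \eqref{dif_barom}, and complete the square using $\sum_{i_n}\Pinn[\ominn-\bar\om]=0$ (a consequence of \eqref{bar_om_Pinn}). This produces the discrete counterpart of \eqref{om_Lyap},
\begin{equation*}
   \De\bar\om(\pi) = 2\ep\sum_n\sum_{i_n}\Pinn\,[\ominn(\pi)-\bar\om(\pi)]^2 + O(\ep^2) =: \ep\,V(\pi)+O(\ep^2)\,,
\end{equation*}
with $V(\pi)\ge0$ and $V(\pi)=0$ exactly at the equilibria of \eqref{wsl1}. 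Since by hypothesis $\pi$ is bounded away from those equilibria and the domain $\S_{I_1}\times\cdots\times\S_{I_L}$ is compact, continuity yields $V(\pi)\ge\delta$ for some $\delta>0$. Combining the last three displays, $\bar r(p')-\bar r(p)=\ep\,V(\pi)+O(\ep^2)\ge\ep\delta+O(\ep^2)>0$ for $\ep$ small, and therefore $\De\bar w=\ep\,[\bar r(p')-\bar r(p)]>0$.

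The step I expect to be the main obstacle is the bookkeeping of orders in the second paragraph: the target increment is genuinely $O(\ep^2)$, so every estimate must be sharp to that order. This is precisely what forces the hypothesis that $p$ be within $O(\ep^2)$ — not merely $O(\ep)$ — of $\Ps_\ep$, for an $O(\ep)$ displacement off the manifold would already perturb $\bar r$ by $O(\ep)$ and overwhelm the sign of $\ep\,V(\pi)$. One must likewise verify that the manifold point of $\Ps_\ep$ carrying the same averaged allele frequencies $\pi$ as $p$ is itself within $O(\ep^2)$ of $p$ (using that $\Ps_\ep$ is a Lipschitz graph over $\pi$), and that the second-order Taylor remainders of both $g(\cdot,\ep)$ and $\bar\om$ are $O(\ep^2)$ uniformly on the region where $\pi$ stays away from the equilibria, which is where the smoothness of $\ps$ from Theorem \ref{weak_sel_th} and the compactness of the simplex enter.
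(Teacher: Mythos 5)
Your argument is correct and follows essentially the same route as the paper's (the paper only sketches the proof, deferring to B\"urger 2009a): the identity $\bar w=1+\ep\bar r$, the reduction to the increment of $\bar\om(\pi)$ along the averaged recurrence \eqref{weak_sel_diff}, the discrete analogue of \eqref{om_Lyap}, and a compactness bound away from the equilibria. In particular, your manifold-graph bookkeeping is exactly what delivers the step the paper singles out as essential, namely $(D',q')-(D,q)=O(\ep^2)$, so that the $(D,q)$-variation contributes only at order $\ep^3$ to $\De\bar w$ and cannot overwhelm the positive $\ep^2 V(\pi)$ term.
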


The time to reach an $O(\ep^2)$ neighborhood of $\Ps_\ep$ can be estimated (Remark 4.6 in B\"urger 2009a and Remark 4.13 in Nagylaki and Lou 2007). Unless migration is very weak or linkage very tight, convergence occurs in an evolutionary short period. It follows that mean fitness is increasing during the long period when most allele-frequency change occurs, i.e., after reaching $\Ps_\ep$ and before reaching a small neighborhood of the stable equilibrium.

An essential step in the proof is to show that
\begin{equation}\label{Delta(D,q)}
    (D',q') - (D,q) = O(\ep^2)
\end{equation}
is satisfied if \eqref{Dq(t)_Oep} holds. Therefore, linkage disequilibria and the measure
$q$ of spatial diversity change very slowly on $\Ps_\ep$. This justifies to call states on $\Ps_\ep$
spatially quasi-homogeneous and to be in quasi-linkage equilibrium.

Theorem \ref{weak_sel_th} enables the derivation of the following result about the equilibrium structure of the multilocus migration-selection dynamics \eqref{dyn}. It establishes that arbitrarily many loci with arbitrarily many alleles can be maintained polymorphic by migration-selection balance: 

\begin{theorem}\label{poly_weak_sel_th}
Let $L\ge1$, $\Ga\ge2$, $I_n\ge2$ for every $n\in\Loc$, and let all recombination rates $c_\K$ be positive and fixed.

(a) There exists an open set $\ssQ$ of migration and selection parameters, such that for every parameter combination in $\ssQ$, there is 
a unique, internal, asymptotically stable equilibrium point. This equilibrium is spatially quasi-homogeneous, is
in quasi-linkage equilibrium, and attracts all trajectories with internal initial condition. Furthermore,
every trajectory converges to an equilibrium point as $t\to\infty$.

(b) Such an open set, $\ssQ'$, also exists if the set of all fitnesses is restricted to be nonepistatic and to display intermediate dominance.
\end{theorem}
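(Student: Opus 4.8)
The plan is to deduce both statements from the weak-selection correspondence of Theorem \ref{weak_sel_th}, reducing the full multilocus system \eqref{dyn} to its weak-selection limit \eqref{weak_sel_lim} and then exhibiting averaged fitnesses for which that limit possesses a globally attracting internal polymorphism. First I would fix an ergodic backward migration matrix $M$ (so \eqref{ergodic} holds) with stationary distribution $\xi$, impose weak selection \eqref{weak_sel}, and invoke Theorem \ref{weak_sel_th}: for $\ep>0$ small the equilibria of \eqref{dyn} are in stability-preserving bijection with those of \eqref{weak_sel_lim}, every trajectory of \eqref{dyn} converges, and the invariant manifold $\Ps_\ep$ carrying the asymptotics satisfies $D=O(\ep)$ and $q=O(\ep)$, so any equilibrium on it is automatically spatially quasi-homogeneous and in quasi-linkage equilibrium. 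Hence it suffices to produce, for an open set of admissible parameters, averaged selection coefficients $\om_{ij}=\sum_\a\xi_\a r_{ij,\a}$ for which the gradient system \eqref{wsl1} has a unique internal equilibrium that is asymptotically stable, attracts the whole interior, and has only hyperbolic equilibria so that \eqref{hyperbolic} is met.

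For part (a) I would take the averaged fitnesses to be nonepistatic, $\om_{ij}=\sum_n\gamma^{(n)}_{i_nj_n}$. A direct computation shows the potential $\bar\om$ is then additively separable across loci, so \eqref{wsl1} decouples into $L$ independent one-locus replicator equations with fitness matrices $\Gamma^{(n)}=(\gamma^{(n)}_{i_nj_n})$. By part 7 of Theorem \ref{theorem:sel_dyn}, choosing each $\Gamma^{(n)}$ to have exactly one positive eigenvalue---for instance the symmetric-overdominance matrix $\gamma^{(n)}_{i_nj_n}=1-\de_{i_nj_n}$, whose single positive eigenvalue is $I_n-1$ and whose internal equilibrium sits at uniform allele frequencies---forces a unique globally attracting internal equilibrium in the interior of $\S_{I_n}$, and by part 6 the product system inherits global interior convergence. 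Since "exactly one positive eigenvalue'' and hyperbolicity are open conditions, and the map $(r_{ij,\a})\mapsto(\om_{ij})$ is linear and onto for the positive vector $\xi$, its preimage is a nonempty open set $\ssQ$; Theorem \ref{weak_sel_th} then transfers all assertions to \eqref{dyn}.

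For part (b) the constraint is that the per-deme fitnesses be nonepistatic and display intermediate dominance, so within-deme overdominance is forbidden and the averaged overdominance must instead be manufactured by letting the direction of selection and the degree of dominance vary across demes. The building block is diallelic: at one locus, let one deme favor $\A_1$ with $\A_1$ nearly dominant and another deme favor $\A_2$ with $\A_2$ nearly dominant; the $\xi$-average of the heterozygote fitness then strictly exceeds that of both homozygotes, producing an averaged matrix with one positive eigenvalue while each deme retains strictly intermediate dominance (after a small perturbation away from complete dominance). I would assemble such schemes locus by locus---the multiallelic versions being supplied by the construction underlying Theorem \ref{int_dom_fully}, i.e.\ B\"urger (2009b)---to obtain nonepistatic, intermediate-dominance per-deme fitnesses whose spatial averages land in the open set used in (a), yielding the open set $\ssQ'$.

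The main obstacle is precisely this realizability step for part (b): verifying that, for arbitrarily many alleles at a locus, per-deme matrices with only intermediate dominance can be averaged into a matrix with exactly one positive eigenvalue. This is where the variation of dominance across demes is indispensable and where the linear-algebra content of the cited single-locus results (B\"urger 2009b) must be brought to bear; parts (a)--(c) of Theorem \ref{weak_sel_th} then supply the remaining perturbation, stability-matching, and global-convergence conclusions essentially for free.
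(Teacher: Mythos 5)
Your proposal follows essentially the same route as the paper, which obtains the result from Theorem \ref{weak_sel_th} together with the constructive argument of B\"urger (2009b): pass to the weak-selection limit, manufacture spatially averaged overdominance at every locus (for part (b), by varying the direction of selection and the degree of dominance among demes so that intermediate per-deme dominance averages to heterozygote advantage), and transfer existence, uniqueness, stability, and global convergence back to \eqref{dyn} by the perturbation theorem. The survey itself only cites the reference for the details, and your identification of the multiallelic realizability step in (b) as the one substantive remaining piece is precisely where that reference's construction is invoked.
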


For a more detailed formulation and a proof, see Theorem 2.2 in B\"urger (2009b). The constructive proof suggests that with increasing number of alleles and loci,
the proportion of parameter space that allows for a fully polymorphic equilibrium shrinks rapidly because at every locus some form of overdominance of 
spatially averaged fitnesses is required. In addition, the proof shows that this set $\ssQ$ exists in the parameter region where migration is strong relative to selection. 

\begin{remark}\rm
The set $\ssQ'$ in the above theorem does not contain parameter combinations such that
all one-locus fitnesses are multiplicative or additive, or more generally satisfy DIDID \eqref{DIDID}. In each of these cases, one gamete becomes fixed as $t\to\infty$ (Proposition 2.6 and Remark 2.7 in B\"urger 2009b).
\end{remark}

\subsection{Strong migration}\label{sec:mult_strong_mig}
Now we assume that selection and recombination are weak relative to migration, i.e., in addition to \eqref{weak_sel}, we posit
\begin{equation}\label{weak_rec}
    c_\K = \ep \ga_\K\,,\quad \K\subseteq\Loc\,,
\end{equation}
where $\ep\ge0$ is sufficiently small and $\ga_\K$ is defined by this relation. Then every solution $p(t)$ of the full dynamics
\eqref{dyn} converges to a manifold close to $q=0$. On this manifold, the dynamics is approximated by the differential equation
\begin{subequations}\label{str_mig_lim}
\begin{equation}\label{str_mig_lim_a}
   \dot P_i = P_i[\om_i(P)-\bar\om(P)] - \sum_\K \ga_\K \left[P_i - \PiKK \PiNN\right]\,,
\end{equation}
which we augment with
\begin{equation}
    q=0\,.
\end{equation}
\end{subequations}
This is called the {\it strong-migration limit} of \eqref{dyn}.

In general, it cannot be expected that the asymptotic behavior of solutions of \eqref{dyn} under
strong migration is governed by \eqref{str_mig_lim} because its chain-recurrent set does not always consist of
finitely many hyperbolic equilibria. The differential equation \eqref{str_mig_lim_a} is the continuous-time version of the discrete-time dynamics
\eqref{dyn_rec2} which describes evolution in a panmictic population subject to multilocus selection and recombination. Although the
recombination term is much simpler than in the corresponding difference equation, the dynamics is not necessarily less complex.
Akin (1979, 1982) proved that \eqref{str_mig_lim} may exhibit stable cycling.
Therefore, under strong migration and if selection and recombination are about equally weak, convergence of trajectories of \eqref{dyn} 
will not generally occur, and only local perturbation results can be derived (B\"urger 2009a, Proposition 4.10).

The dynamics \eqref{str_mig_lim} becomes simple if there is a single locus (then the recombination term in \eqref{str_mig_lim_a} vanishes) or if there is no epistasis, i.e., if fitnesses can be written in the form
\begin{equation}\label{no_ep}
    \om_{ij} = \sum_n u_{i_nj_n}^{(n)} 
\end{equation}
for constants $u_{i_nj_n}^{(n)} \ge0$. Essentially, this means that fitnesses can be assigned to each one-locus genotype and there is no nonlinear interaction between loci. Then mean fitness is a Lyapunov function, all equilibria are in LE, and global convergence of trajectories occurs generically (Ewens 1969, Lyubich 1992, Nagylaki et al.\ 1999).

\subsection{Weak recombination}
If recombination is weak relative to migration and selection, the limiting dynamics is formally equivalent to a multiallelic one-locus 
migration-selection model, as treated in Sections \ref{sec:gen_mig-sel} -- \ref{sec:mult_allele_arb_mig}. Then local perturbation results can be applied but, in general, global convergence of trajectories cannot be concluded and, in fact, does not always occur.

\subsection{Weak migration}
In contrast to the one-locus case treat in in Section \ref{sec:Weak_mig}, in the multilocus case the assumption of weak migration is insufficient to guarantee convergence
of trajectories. The reason is that in the absence of migration the dynamics \eqref{dyn} reduces to
\begin{equation}\label{dyn_ep0}
   p_{i,\a}' = p_{i,\a}^\# = p_{i,\a}\frac{w_{i,\a}}{\bar w_\a} - D_{i,\a}
\end{equation}
for every $i\in I$ and every $\a\in\G$; cf.\ \eqref{dyn_rec2}. 
Therefore, we have $\Ga$ decoupled multilocus selection dynamics, one 
for each deme. Because already for a single deme, stable cycling has been established (Hastings 1981, Hofbauer and Iooss 1984), global perturbation
results can not be achieved without additional assumptions. Such an assumption is weak epistasis.

We say there is weak epistasis if we can assign (constant) fitness
components $u_{i_nj_n,\a}^{(n)}>0$ to single-locus genotypes, such that
\begin{equation}\label{weak_ep}
    w_{ij,\a} = \sum_n u_{i_nj_n,\a}^{(n)} + \et s_{ij,\a}\,,
\end{equation}
where the numbers $s_{ij,\a}$ satisfy $\abs{s_{ij,\a}}\le1$ and $\et\ge0$ is a measure of the strength of 
epistasis. It is always assumed that $\et$ is small enough, so that $w_{ij,\a}>0$. 

For the rest of this section, we assume weak migration, i.e., \eqref{weak_mig} and \eqref{ass_muab}, weak epistasis \eqref{weak_ep}, and
\begin{equation}\label{eps}
   \et = \et(\ep) \,,
\end{equation}
where $\et:[0,1)\to[0,\infty)$ is $C^1$ and satisfies $\et(0)=0$. Therefore, migration and epistasis need not be `equally' weak.
In particular, no epistasis ($\et\equiv0$) is included.

In the absence of epistasis and migration ($\ep=\et=0$), $p$ is an equilibrium point of \eqref{dyn_ep0}
if and only if for every $\a\in\G$, $p_\dota$ is both a selection 
equilibrium for each locus and is in LE (Lyubich 1992, Nagylaki et al.\ 1999). In addition, the only chain-recurrent points of \eqref{dyn_ep0}
are its equilibria (Lemmas 2.1 and 2.2 in Nagylaki et al.\ 1999). Therefore, one can apply the proof of Theorem 2.3 in Nagylaki et al.\ (1999) to deduce the following result which simultaneously generalizes Theorem 2.3 in Nagylaki et al.\ (1999) and Theorem \ref{theorem_weak_mig}:

\begin{theorem}[B\"urger 2009a, Theorem 5.4]\label{theorem_weak_mig_mult}
Suppose that in the absence of epistasis every equilibrium of \eqref{dyn_ep0} is hyperbolic, and $\ep>0$ is sufficiently  small.

(a) The set of equilibria $\Si_0\subset\S_I^\Ga$ of \eqref{no_mig_dyn}
contains only isolated points, as does the set of equilibria
$\Si_\ep\subset\S_I^\Ga$ of \eqref{dyn}. As $\ep\to0$, 
each equilibrium in $\Si_\ep$ converges to the corresponding equilibrium in $\Si_0$.

(b) In the neighborhood of each asymptotically stable equilibrium in
$\Si_0$, there exists exactly one equilibrium in $\Si_\ep$, and it 
is asymptotically stable. In the neighborhood of each unstable internal
equilibrium in $\Si_0$, there exists exactly one equilibrium
in $\Si_\ep$, and it is unstable. In the neighborhood of each unstable 
boundary equilibrium in $\Si_0$, there exists at most one equilibrium 
in $\Si_\ep$, and if it exists, it is unstable. 

(c) Every solution $p(t)$ of \eqref{dyn} converges to one of the equilibrium points in $\Si_\ep$.
\end{theorem}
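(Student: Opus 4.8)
The plan is to run the same perturbation machinery used for Theorem \ref{theorem_weak_mig}, now taking as the unperturbed system the no-migration, non-epistatic dynamics. At $\ep=0$ both migration and epistasis vanish: migration because \eqref{weak_mig} gives $m_{\a\be}=\de_{\a\be}$, and epistasis because $\et(0)=0$ in \eqref{eps}. Hence \eqref{dyn} degenerates into $\Ga$ decoupled single-deme multilocus selection--recombination systems \eqref{dyn_ep0} with additive fitnesses in each deme. I treat $\ep$ as the single perturbation parameter and study how \eqref{dyn} departs from this base; $\Si_0$ is the (finite, by hyperbolicity) equilibrium set of the base system and $\Si_\ep$ that of \eqref{dyn}.

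For (a) and (b) I would argue exactly as for Theorem \ref{theorem_weak_mig}. Since every base equilibrium is hyperbolic and there are finitely many, the implicit function theorem applied to $F(x,\ep)=f(x,\ep)-x$ --- whose Jacobian is nonsingular at each base equilibrium --- produces for small $\ep$ a unique nearby fixed point $\hat x_\ep$ of \eqref{dyn} depending continuously on $\ep$, giving the bijection $\Si_0\leftrightarrow\Si_\ep$ and convergence as $\ep\to0$. The Hartman--Grobman theorem together with the openness of hyperbolicity (the Jacobian varies continuously with $\ep$) shows that $\hat x_\ep$ inherits the stability type of its base equilibrium. The one delicate point, treated as in the one-locus case, is boundary behavior: asymptotically stable boundary equilibria persist inside $\S_I^\Ga$ by a Brouwer argument, whereas unstable ones may leave the state space; this is precisely Theorem 4.4 of Karlin and McGregor (1972b).

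For the global statement (c) I would reproduce the proof of Theorem 2.3 in Nagylaki et al.\ (1999), i.e.\ the chain-recurrence argument already sketched for Theorem \ref{theorem_weak_mig}. The structural input is that at $\ep=0$ the base dynamics is gradient-like: since fitnesses are non-epistatic, the average mean fitness $\wb(p)=\sum_\a\wbara(p_\dota)$ is a strict Lyapunov function (Ewens 1969), so $\De\wb\ge0$ with equality exactly on the equilibrium set, on which $\wb$ takes only finitely many values. By Theorem 3.14 of Akin (1993) the chain-recurrent set of the base system then coincides with its finite hyperbolic equilibrium set $\Si_0$ (this is the content of Lemmas 2.1 and 2.2 in Nagylaki et al.\ 1999). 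I then invoke the upper semicontinuity of the chain-recurrent set under $C^1$ perturbation (Akin 1993, p.\ 244): for small $\ep>0$ the chain-recurrent set of \eqref{dyn} lies in shrinking neighborhoods of the base equilibria, and by the implicit function theorem and the persistence of hyperbolicity the maximal invariant set in each such neighborhood is a single hyperbolic equilibrium. Hence for small $\ep$ the chain-recurrent set of \eqref{dyn} is exactly the finite hyperbolic set $\Si_\ep$, so every trajectory converges to one of its points.

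The main obstacle is pinpointing why the base system is gradient-like, and this is exactly where the hypotheses \eqref{weak_ep}--\eqref{eps} with $\et(0)=0$ earn their keep. With epistasis present at $\ep=0$, mean fitness need not increase and the selection--recombination dynamics may possess stable limit cycles (Hastings 1981, Hofbauer and Iooss 1984); then the chain-recurrent set of the base system would strictly contain its equilibria, Akin's upper-semicontinuity step would no longer yield a finite hyperbolic chain-recurrent set for $\ep>0$, and global convergence could genuinely fail. Thus the whole argument hinges on placing the unperturbed point in the non-epistatic regime, where $\wb$ is a Lyapunov function and the chain-recurrent set reduces to the equilibria; everything downstream is the robustness of chain recurrence and the standard implicit-function/Hartman--Grobman persistence of hyperbolic equilibria. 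A final bookkeeping check, as in Theorem \ref{theorem_weak_mig}, confirms that unstable boundary equilibria surviving the perturbation stay in $\S_I^\Ga$.
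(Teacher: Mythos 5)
Your proposal is correct and follows essentially the same route as the paper: at $\ep=0$ the system decouples into $\Ga$ non-epistatic selection--recombination dynamics whose chain-recurrent set consists precisely of the (hyperbolic) equilibria (Lemmas 2.1 and 2.2 in Nagylaki et al.\ 1999, resting on Ewens' 1969 Lyapunov function), after which the implicit-function/Hartman--Grobman persistence and the upper semicontinuity of chain recurrence yield (a)--(c) exactly as in the proof of Theorem 2.3 of Nagylaki et al.\ (1999). You also correctly identify the role of the hypothesis $\et(0)=0$ and the possible loss of unstable boundary equilibria, which is the point where this theorem differs from the weak-selection case.
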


In contrast to the case of weak selection (Theorem \ref{weak_sel_th}), unstable boundary equilibria can leave the state space under weak migration (Karlin and McGregor 1972a). For an explicit example in the one-locus setting, see Remark 4.2 if Nagylaki and Lou (2007).

If $\ep=0$, then all equilibria are in $\La_0$, i.e., there is LE within each deme, but not between demes. If $\ep>0$ is 
sufficiently small, then there is weak LD within each deme, i.e., $D_{i,\a} = O(\ep)$ for every $i$ and every $\a$.

\begin{remark}\label{rem:weak_mig_mult}\rm
If in Theorem \ref{theorem_weak_mig_mult} the assumption of weak epistasis is not made, statements (a) and (b) remain valid, but (c) does not necessarily follow. Statement (c) follows if, for $\ep=0$, the chain recurrent set consists precisely of the hyperbolic equilibria.
\end{remark}

One of the consequences of the above theorem is the following, which contrasts sharply with Theorem \ref{poly_weak_sel_th}:

\begin{theorem}[B\"urger 2009b, Theorem 3.9]\label{th:upper_bound_weak_mig}
For an arbitrary number of loci, sufficiently weak
migration and epistasis, and partial dominance, the number of demes 
is the generic maximum for the number of alleles that can be maintained at 
any locus at any equilibrium (stable or not) of \eqref{dyn}.
\end{theorem}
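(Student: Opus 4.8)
The plan is to reduce the full weak-migration dynamics to its $\ep=0$ limit, where demes and loci decouple, and then transport the conclusion back by the perturbation result already available. Under \eqref{weak_mig}, \eqref{weak_ep} and \eqref{eps} we have $\et(0)=0$, so at $\ep=0$ the dynamics \eqref{dyn} reduces to $\Ga$ decoupled, nonepistatic single-deme systems \eqref{dyn_ep0}. For nonepistatic fitnesses a point is an equilibrium of the single-deme selection--recombination dynamics if and only if it lies on the linkage-equilibrium manifold and is a single-locus selection equilibrium at every locus (Ewens 1969; Lyubich 1992; Nagylaki et al.\ 1999), and mean fitness is a strict Lyapunov function. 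Hence every $\ep=0$ equilibrium of \eqref{dyn} is a product, over demes and loci, of single-locus selection equilibria, all in linkage equilibrium. I would first record this decoupling and then reduce the whole problem to counting alleles at a single locus in one deme under partial dominance.

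The technical heart is the claim that, generically, at a single locus in a single deme with partial dominance \eqref{partial_dom} every equilibrium of the selection dynamics is monomorphic. I would prove it by iterating the monotonicity argument of Proposition \ref{int_dom_no_mig}. Among the alleles present, let $\A_+$ and $\A_-$ be those with the largest and smallest homozygote fitness (distinct, which is generic). As in the proof of Proposition \ref{int_dom_no_mig}, strict heterozygote intermediacy gives $w_+>w_-$ whenever both $\A_+$ and $\A_-$ are present, so the ratio of their frequencies strictly decreases along orbits; hence no equilibrium can carry both $\A_+$ and $\A_-$. Every equilibrium therefore lies on a face obtained by deleting one extreme allele, and that face supports the selection dynamics of a locus with one fewer allele and the same partial-dominance structure. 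Induction on the number of alleles shows that every equilibrium is a vertex, i.e.\ monomorphic. Consequently, at any $\ep=0$ equilibrium of \eqref{dyn} each deme is fixed for a single gamete, so at each locus the set of alleles present across the $\Ga$ demes has at most $\Ga$ elements.

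Assembling the pieces: partial dominance is an open condition and, together with distinct homozygote fitnesses, forces the (monomorphic) single-locus equilibria to be hyperbolic for generic parameters, and the product structure makes every equilibrium of the decoupled system \eqref{dyn_ep0} hyperbolic generically, so the hypotheses of Theorem \ref{theorem_weak_mig_mult} hold. For $\ep>0$ small, each equilibrium of \eqref{dyn} lies near exactly one $\ep=0$ equilibrium $\hat p_0$, and it remains only to check that the perturbation changes no allele count: alleles present at $\hat p_0$ stay present by continuity, while an allele $j$ absent at locus $n$ in every deme at $\hat p_0$ cannot be created, because selection preserves vanishing gamete frequencies, recombination only reshuffles alleles already present, and migration merely averages the (already zero) marginal frequencies; thus the set $\{p:\ p_{j,\a}^{(n)}=0\ \forall\,\a\}$ is invariant under \eqref{dyn}, and applying Theorem \ref{theorem_weak_mig_mult} inside this invariant face identifies the perturbed equilibrium as lying in it. Hence at $\ep>0$ the number of alleles segregating at locus $n$ at any equilibrium equals that at $\hat p_0$, which is at most $\Ga$; tightness follows as in Theorem \ref{weak_mig_gen_max}(b) by choosing fitnesses so that distinct demes fix distinct alleles at locus $n$ (possible when $\Ga\le I_n$), realizing exactly $\Ga$ alleles on an open parameter set. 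The main obstacle I anticipate is the single-locus claim — showing that partial dominance leaves \emph{only} monomorphic equilibria, not merely one globally attracting fixation state — since this is precisely what upgrades the bound from stable equilibria to arbitrary ones; the invariance-of-faces step, needed because unstable boundary equilibria may otherwise leave the simplex under weak migration, is routine once stated.
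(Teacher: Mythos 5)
The survey does not reproduce the proof of this theorem (it cites B\"urger 2009b, Theorem 3.9); the closest in-text material is the one-locus Theorem \ref{weak_mig_gen_max}, whose proof via Proposition \ref{int_dom_no_mig} and Theorem \ref{theorem_weak_mig} only bounds the allele number at the \emph{stable} equilibrium. Your proposal follows the same overall route the survey sketches --- decouple at $\ep=0$, use the product/LE structure of nonepistatic equilibria to reduce to one locus in one deme, then perturb via Theorem \ref{theorem_weak_mig_mult} --- and it correctly identifies and supplies the two additional steps that the ``any equilibrium'' claim actually requires. First, your strengthening of Proposition \ref{int_dom_no_mig} is sound: under \eqref{partial_dom} the extreme homozygote fitnesses are automatically distinct, the pairwise strict intermediacy gives $w_{+k}>w_{kk}>w_{-k}$ for every third allele $k$ present (and the analogous strict inequalities for $k\in\{+,-\}$), so $w_+>w_-$ pointwise and no equilibrium, stable or not, can carry both extreme alleles; induction on the support then forces every single-locus equilibrium to be a vertex, not merely the $\om$-limit to be one. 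Second, the invariant-face argument is the right way to rule out the perturbation creating new segregating alleles: the subspace where allele $j$ at locus $n$ vanishes in every deme is invariant under selection, recombination, and migration, and applying the implicit function theorem both on the full space and on this invariant affine subspace identifies the unique perturbed equilibrium near a given $\ep=0$ equilibrium as lying in the face --- exactly the point one must check because unstable boundary equilibria need not otherwise be controlled. I see no gap; the only cosmetic remark is that distinctness of homozygote fitnesses is implied by \eqref{partial_dom} rather than merely generic, and that the partial-dominance hypothesis should be read as a condition on the single-locus fitness components $u^{(n)}_{i_nj_n,\a}$ of \eqref{weak_ep}, which is what your single-locus lemma is applied to.
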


\subsection{Weak evolutionary forces}
If all evolutionary forces are weak, i.e., if $\ep\to0$ in \eqref{weak_sel}, \eqref{weak_mig}, and 
\begin{equation}
	c_\K = \ep \ga_\K \quad\text{for all subsets } \K\subseteq \Loc\,,
\end{equation}
the limiting dynamics of \eqref{dyn} on $\S_I^\Ga$ becomes
\begin{equation}\label{weak_all}
    \dot p_{i,\a} = p_{i,\a}[r_{i,\a}(p_\dota) - \rbara(p_\dota)] 
                - \sum_\K \ga_\K \left(p_{i,\a} - \piKaK \piNaN\right) + \sum_\be \mu_{\a\be} p_{i,\be} \,.
\end{equation}
Hence, selection, recombination, and migration are decoupled. This dynamics may be viewed as the continuous-time version of \eqref{dyn}. However,
it may exhibit complex dynamical behavior already if either migration or recombination is absent. Analogs of Theorems \ref{mig_rec_theorem}, \ref{weak_sel_th}, \ref{weak_sel_mean_fit_increase}, \ref{poly_weak_sel_th}, and \ref{theorem_weak_mig_mult} apply to \eqref{weak_all}.

\subsection{The Levene model}\label{sec:mult_Levene}
For the Levene model, the recurrence equations \eqref{dyn} simplify to
\begin{equation}\label{dyn_mult_Levene}
   p_i' = \sum_{j,\ell,\a} R_{i,j\ell} p_j p_\ell c_\a w_{j\ell,\a}/\wb_\a\,, \quad i\in\I\,,
\end{equation}
where this form is sufficient to deduce the main results, i.e., it is not necessary to express the recombination probabilities $R_{i,j\ell}$ in terms of the linkage disequilibria.

Many of the results on the one-locus Levene model in Sections \ref{sec:Levene_general} and \ref{sec:Levene_nodom} can be generalized to the multilocus Levene model if epistasis is absent or weak. These generalizations are based on two key results. The first is that in the absence of epistasis geometric mean fitness is again a Lyapunov function, and the internal equilibria are its stationary points (Theorems 3.2 and 3.3 in Nagylaki 2009b). The second key result is that in the absence of epistasis, generically, every trajectory converges to an equilibrium point that is in LE (Theorem 3.1 in B\"urger 2010). This is true for the haploid and the diploid model. Then a proof analogous to that of Theorem \ref{theorem_weak_mig_mult} yields a global perturbation result for weak epistasis, in particular, generic global convergence to an equilibrium point in quasi-linkage equilibrium (Theorem 7.2 in B\"urger 2010).

With the aid of these results, the next two theorems can be derived about the maintenance of multilocus polymorphism in the Levene model.

\begin{theorem}[B\"urger 2010, Result 5.1]\label{multi_int_dom_fully}
Assume an arbitrary number of multiallelic loci, $\Ga\ge2$ demes of given size, and let all recombination rates be positive and fixed. 
Then there exists a nonempty open set of fitness parameters, exhibiting partial dominance between every pair of alleles at every locus,  
such that for every parameter combination in this set, there is a unique, internal, asymptotically stable equilibrium
point of the dynamics \eqref{dyn_mult_Levene}. This equilibrium is globally asymptotically stable.
\end{theorem}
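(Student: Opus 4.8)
The plan is to establish the statement first for \emph{nonepistatic} (additive-across-loci) fitnesses, where the multilocus Levene dynamics \eqref{dyn_mult_Levene} is still governed by a Lyapunov function and collapses onto the linkage-equilibrium manifold, and then to open up the fitness set by a singular perturbation that switches on weak epistasis while preserving partial dominance. The two ingredients I would lean on for the no-epistasis step are the facts recalled before the theorem: that in the absence of epistasis the geometric mean fitness $\tw(p)=\prod_\a\wba^{c_\a}$ (equivalently $F=\ln\tw$, cf.\ \eqref{tw}--\eqref{F}) is a strict Lyapunov function whose stationary points are the internal equilibria, and that generically every trajectory converges to an equilibrium lying on the linkage-equilibrium manifold $\La_0$. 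Together these reduce the entire equilibrium-and-stability question to the variational problem for $F$, viewed as a function of the averaged allele frequencies $\pi$ on $\La_0$; note that this is a genuine reduction rather than a decoupling, since the loci remain coupled through the shared deme denominators $\wba$.

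Next I would exhibit partial-dominance fitnesses for which $F$ has a unique interior maximizer that is globally attracting. Following the mechanism behind the single-locus case (Theorem \ref{int_dom_fully}, which is the $L=1$ instance of the present statement), I would impose strict intermediate dominance \eqref{partial_dom} in every deme for every pair of alleles at every locus, arranged so that the spatially $c$-averaged one-locus scheme is strictly overdominant at each locus. This is cleanest in the weak-selection regime $w_{ij,\a}=1+\ep r_{ij,\a}$ \eqref{weak_sel}, where $F=\ep\,\bar\om(\pi)+O(\ep^2)$ with $\bar\om(\pi)=\sum_\a c_\a\bar r_\a(\pi)$ the $c$-averaged mean fitness (Remark \ref{rem:weak_sel_etc}). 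In the no-epistasis case $\bar\om$ is additive over loci, and the averaged-overdominance choice makes it strictly concave in $\pi$ with a single interior maximizer $\hat\pi$ at which the Hessian is negative definite and toward which every boundary face is uphill. For $\ep$ small this nondegenerate structure persists for $F$, so $F$ has a unique interior critical point---a global maximum---and no stable boundary critical points; the Lyapunov property and generic convergence to $\La_0$ then identify the corresponding internal LE state as the unique internal equilibrium and make it globally asymptotically stable. Equivalently, since the Levene backward matrix $m_{\a\be}=c_\be$ is ergodic when all $c_\be>0$, with stationary vector $\xi=c$, one may apply Theorem \ref{weak_sel_th} directly: its weak-selection limit is exactly the Svirezhev--Shahshahani gradient of $\bar\om$ (Remark \ref{rem:Svir_Shash}), and parts (b)--(c) yield the unique internal asymptotically stable, globally attracting equilibrium; cf.\ Theorem \ref{poly_weak_sel_th}.

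It remains to enlarge the fitness set so that it contains genuinely epistatic schemes. The internal equilibrium built above is hyperbolic (a nondegenerate maximum of $F$ with geometric convergence), and the remaining equilibria on $\La_0$ are hyperbolic boundary equilibria; hence the chain-recurrent set of the no-epistasis dynamics consists of finitely many hyperbolic points. I would then invoke the weak-epistasis global perturbation result for the Levene model---the counterpart of Theorem \ref{theorem_weak_mig_mult}, namely Theorem 7.2 in B\"urger (2010)---to conclude that for sufficiently weak epistasis the equilibria persist with unchanged stability, no spurious interior equilibria appear, and every trajectory still converges to a point in quasi-linkage equilibrium. In particular the unique internal asymptotically stable equilibrium survives and stays globally attracting. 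Since partial dominance \eqref{partial_dom}, strict averaged overdominance, smallness of $\ep$, and smallness of the epistasis parameter are all open conditions, the resulting set of fitness parameters is open and nonempty, which is the assertion.

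The main obstacle is not the fitness construction but the two global dynamical inputs that license the argument. The first is the generic global convergence to the linkage-equilibrium manifold in the nonepistatic multilocus Levene model: without it one could not read off the full equilibrium structure from the reduced variational problem for $F$, precisely because the exact dynamics does not factorize over loci. The second is the chain-recurrence and normally-hyperbolic-manifold machinery underlying the weak-epistasis perturbation, which is what upgrades mere local persistence of equilibria to \emph{global} convergence, and hence to genuine global asymptotic stability on an open parameter set. Verifying hyperbolicity together with the upper-semicontinuity of the chain-recurrent set under the perturbation is where the substance of the proof lies.
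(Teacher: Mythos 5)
Your proposal is correct and follows essentially the same route the paper indicates for this result: the nonepistatic case is handled via the Lyapunov function $\tw$ (equivalently $F$) together with generic convergence to linkage equilibrium (Theorem 3.1 in B\"urger 2010), the fitness construction is the weak-selection, spatially averaged overdominance scheme of Theorem \ref{poly_weak_sel_th} specialized to the Levene migration matrix with $\xi=c$, and openness in the full fitness space is obtained from the weak-epistasis perturbation theorem (Theorem 7.2 in B\"urger 2010). You also correctly identify where the real work lies, namely hyperbolicity of all equilibria of the weak-selection limit and the chain-recurrence machinery that upgrades local persistence to global convergence.
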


Theorem \ref{multi_int_dom_fully} is the multilocus extension of Theorem \ref{int_dom_fully}. From a perturbation theorem for weak epistasis in the Levene model, analogous to Theorem \ref{theorem_weak_mig_mult} (Theorem 7.2 in B\"urger 2010), it follows that such an open set exists also within the set of non-epistatic fitnesses. 

The following result is of very different nature and generalizes Proposition 3.18 in Nagylaki (2009b):

\begin{theorem}[B\"urger 2010, Theorem 7.4]\label{theorem_weak_ep_DIDID}
Assume weak epistasis and diallelic loci with DIDID.

(a) If $L\ge\Ga$, at most $\Ga-1$ loci can be maintained polymorphic for an open
set of parameters.

(b) If $L\le\Ga-1$ and if the dominance coefficients in \eqref{DIDID} are arbitrary for each locus but fixed, 
there exists an open nonempty set $\ssW$ of fitness schemes satisfying \eqref{weak_ep} and
of deme proportions $(c_1,\ldots,c_{\Ga-1})$,
such that for every parameter combination in $\ssW$, there is a unique, internal, 
asymptotically stable equilibrium point of \eqref{dyn_mult_Levene}. 
This equilibrium is in quasi-linkage equilibrium and globally attracting. 
The set $\ssW$ is independent of the choice of the recombination rates.
\end{theorem}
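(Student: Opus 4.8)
The plan is to reduce the assertion to the non-epistatic case $\et=0$ and then to a gradient computation on the linkage-equilibrium manifold that exploits the DIDID structure. First I would invoke the weak-epistasis perturbation theorem for the multilocus Levene model (Theorem~7.2 in B\"urger 2010, the Levene analogue of Theorem~\ref{theorem_weak_mig_mult}): for $\et=\et(\ep)$ small, every trajectory of \eqref{dyn_mult_Levene} converges to an equilibrium within $O(\et)$ of $\La_0$, and each hyperbolic equilibrium of the $\et=0$ dynamics perturbs to a unique equilibrium of the same stability type. Since asymptotic stability, global attraction, uniqueness, the quasi-linkage-equilibrium property, and the count of polymorphic loci are all robust under this perturbation, it suffices to prove both statements for $\et=0$; independence of the recombination rates is then automatic, as neither the $\et=0$ limit nor the perturbation estimates involve the $c_\K$. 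For $\et=0$ I use that geometric-mean fitness $\tw$, equivalently $F=\sum_\a c_\a\ln\wba$, is a strict Lyapunov function whose critical points on $\La_0$ are exactly the equilibria (Theorems 3.2, 3.3 in Nagylaki 2009b), and that generically every trajectory converges to such a point (Theorem 3.1 in B\"urger 2010). Parametrizing each diallelic locus by the frequency $p_n\in[0,1]$ of one allele, the admissible region is the box $[0,1]^L$ and $\wba=\sum_n\wba^{(n)}(p_n)$ splits additively across loci. Writing $\delta_\a^{(n)}=w_{11,\a}^{(n)}-w_{22,\a}^{(n)}$, the DIDID assumption gives $w_{1,\a}^{(n)}-w_{2,\a}^{(n)}=\delta_\a^{(n)}g^{(n)}(p_n)$ with $g^{(n)}(p_n)=\vth^{(n)}+(1-2\vth^{(n)})p_n>0$ on $(0,1)$, whence a short computation yields
\begin{equation*}
   \pder{F}{p_n}=2\,g^{(n)}(p_n)\,h_n(p)\,,\qquad h_n(p)=\sum_\a \frac{c_\a\,\delta_\a^{(n)}}{\wba}\,.
\end{equation*}

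For part (a), at any equilibrium every polymorphic locus (those with $p_n\in(0,1)$, where $g^{(n)}>0$) satisfies $h_n=0$, i.e. $\sum_\a x_\a\delta_\a^{(n)}=0$ with $x_\a=c_\a/\wba>0$. This is a homogeneous linear system in the $\Ga$ positive unknowns $x_\a$, one equation per polymorphic locus. If $\Ga$ or more loci were polymorphic, generic nonsingularity of a $\Ga\times\Ga$ minor of $(\delta_\a^{(n)})$ would force $x=0$, contradicting $x_\a>0$. Hence on an open (indeed dense) set at most $\Ga-1$ loci are polymorphic, which is the generalization of the counting in Theorem~\ref{no_dom_bound}.

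For part (b) I differentiate once more: at an interior critical point ($h_n=0$) the Hessian of $F$ is
\begin{equation*}
   \left(\pder{^2F}{p_n\partial p_m}\right)=-4\,G\,A\,G\,,\qquad
   A=\Bigl(\sum_\a \tfrac{c_\a}{\wba^{2}}\,\delta_\a^{(n)}\delta_\a^{(m)}\Bigr)_{n,m}\,,\quad G=\mathrm{diag}\bigl(g^{(n)}(p_n)\bigr)\,,
\end{equation*}
where $A$ is a Gram matrix, positive definite whenever the vectors $(\delta_\a^{(n)})_\a$, $n=1,\dots,L$, are independent (generic since $L\le\Ga-1\le\Ga$), and $G\succ0$ on the interior. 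Thus the Hessian is negative definite, so every interior equilibrium is a nondegenerate strict local maximum of $F$; two distinct strict maxima on the connected box would force an intervening saddle critical point, which is excluded, so the interior equilibrium is unique, is the global maximum of $F$, and therefore attracts all interior trajectories and is asymptotically stable. To realize one when $L\le\Ga-1$, I choose the homozygote fitnesses (hence the $\delta_\a^{(n)}$) so that the $L$-equation system $\sum_\a x_\a\delta_\a^{(n)}=0$ has a strictly positive solution $x^\ast$—possible because its kernel has dimension $\ge\Ga-L\ge1$ and a positive kernel vector may be prescribed—fix any interior target $p^\ast$, and set $c_\a\propto x_\a^\ast\,\wba(p^\ast)$ (normalized), so that $x_\a=c_\a/\wba(p^\ast)\propto x_\a^\ast$ and $h_n(p^\ast)=0$. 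Strict positivity of $x^\ast$, interiority of $p^\ast$, and the required independence of the $\delta$-vectors are all open conditions, giving an open set $\ssW$ of fitness schemes and deme proportions $(c_1,\ldots,c_{\Ga-1})$; restoring $\et>0$ by the perturbation theorem produces the QLE equilibrium with the stated properties, independently of the $c_\K$.

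The main obstacle is the \emph{global} conclusion under DIDID, where $F$ is not concave: with partial dominance each $\wba^{(n)}$ is non-concave, so Theorem~\ref{F_concave} cannot be applied directly. The resolution is precisely the observation above that, in the DIDID coordinates, the Hessian of $F$ is negative definite \emph{at every interior critical point} (rather than globally), which already forces uniqueness and global attraction through the gradient/Lyapunov structure together with Nagylaki's (2009a,b) qualitative equivalence of DIDID to the no-dominance case. The second delicate point is the singular-perturbation reduction from $\et>0$ to $\et=0$, which rests on the chain-recurrence machinery underlying Theorem~\ref{theorem_weak_mig_mult} (and on unstable boundary equilibria possibly leaving the simplex); this I would cite rather than reprove.
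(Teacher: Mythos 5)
Your overall route — reduce to $\et=0$ via the weak-epistasis perturbation theorem for the Levene model, then work with the Lyapunov function $F$ on the LE manifold and exploit the DIDID factorization $\partial F/\partial p_n=2g^{(n)}(p_n)h_n(p)$ — is exactly the strategy the survey attributes to B\"urger (2010). Your computations are correct: the factorization of the gradient, the homogeneous linear system $\sum_\a x_\a\de_\a^{(n)}=0$ with $x_\a=c_\a/\wba>0$ that yields the generic bound $\Ga-1$ in (a) (homogeneity is precisely why the bound drops by one compared with Theorem \ref{no_dom_bound}), the Hessian $-4GAG$ at interior critical points, and the construction of $(c_1,\dots,c_\Ga)$ from a positive kernel vector in (b) are all sound.

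The genuine gap is in your self-contained argument for uniqueness and global attraction in (b). First, "two strict interior maxima force an intervening saddle critical point" is not valid on the box $[0,1]^L$: the mountain-pass level between two interior maxima can be attained on the boundary, where the constrained critical point is a boundary equilibrium with $\nabla F\ne0$, so no interior saddle need exist and your Hessian computation excludes nothing. Second, and more importantly, even granting a unique interior critical point that is a nondegenerate local maximum of $F$, global attraction does not follow: $F$ is not concave under DIDID, so an asymptotically stable boundary equilibrium (a local maximum of $F$ on a face) is not excluded by any computation you perform at interior critical points, and such an equilibrium would capture an open set of interior trajectories. What actually closes this hole is the cited structural result that under DIDID the non-epistatic Levene dynamics is qualitatively equivalent to the no-dominance case, so that the conclusion of Theorem \ref{F_concave} holds (exactly one stable equilibrium, globally attracting; an internal equilibrium, if it exists, is the global attractor) — Nagylaki (2009a, Theorem 3.2) for one locus, Nagylaki (2009b, Proposition 3.18) and B\"urger (2010) for the multilocus Levene model. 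You do invoke this "together with" your Hessian observation, but the logical burden of uniqueness and global attraction rests entirely on that citation, not on the mountain-pass reasoning; the Hessian computation is needed only to verify hyperbolicity of the interior equilibrium so that the weak-epistasis perturbation (and the claim that $\ssW$ does not depend on the $c_\K$) can be applied.
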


Whereas Theorem \ref{multi_int_dom_fully} suggests that spatially varying selection has the potential to maintain considerable multilocus polymorphism, Theorem \ref{theorem_weak_ep_DIDID} shows that properties of the genetic architecture of the trait (no dominance or DIDID) may greatly constrain this possibility. Numerical results in B\"urger (2009c) quantify how frequent polymorphism is in the two-locus two-allele Levene model, and how this depends on the selection scheme and dominance relations. With increasing number of loci or number of alleles per locus, the volume of parameter space in which a fully polymorphic equilibrium is maintained will certainly decrease rapidly.

A few other aspects of the multilocus Levene model have also been investigated. Zhivotovsky et al.\ (1996) employed a multilocus Levene model to study the evolution of phenotypic plasticity. Wiehe and Slatkin (1998) explored a haploid Levene model in which LD  
is caused by epistasis. More recently, van Doorn and Dieckmann (2006) performed a numerical study that admits new mutations (of variable effect) and substitutions at many loci. They showed that genetic variation becomes increasingly concentrated on a few loci. Roze and Rousset (2008) derived recursions for the allele frequencies and for various types of genetic associations
in a multilocus infinite-island model. Barton (2010) studied certain issues related to speciation using a generalized haploid multilocus Levene model 
that admits habitat preferences.

\subsection{Some conclusions}
Local adaptation of subpopulations to their specific environment occurs only if the alleles or genotypes that perform well are maintained within the population. Similarly, differentiation between subpopulations occurs only if they differ in allele or gamete frequencies. Therefore, maintenance of polymorphism is indispensable for evolving or maintaining local adaptation and differentiation. The more loci and alleles are involved, the higher is the potential for local adaptation and differentation. This is the main reason, why conditions for the maintenance of polymorphism played such an important role in this survey.  

Because for a single randomly mating population, polymorphic equilibria do not exist in the absence of epistasis
and of overdominance or underdominance, it is natural to confine attention to the investigation of the maintenance of genetic variation in subdivided populations if in each subpopulation epistasis is absent (or weak) and dominance is intermediate. In addition, intermediate dominance seems to be by far the most common form of dominance.

We briefly recall the most relevant results concerning maintenance of genetic variation and put them into perspective. Particularly relevant results for a single locus are provided in Sections \ref{Two diallelic demes} -- \ref{sec:submult}, and by 
Theorems \ref{no_dom_bound}, \ref{RB_condition},  \ref{Karlin 1977, Result IA}, \ref{th:NL01_nodom}, and \ref{weak_mig_gen_max}. Theorems \ref{poly_weak_sel_th}, \ref{th:upper_bound_weak_mig}, \ref{multi_int_dom_fully}, and \ref{theorem_weak_ep_DIDID} treat multiple loci. Theorems \ref{poly_weak_sel_th} and \ref{multi_int_dom_fully} establish that in structured populations and in the Levene model, respectively, spatially varying selection 
can maintain multiallelic polymorphism at arbitrarily many loci under 
conditions for which in a panmictic population no polymorphism at all can be maintained.

Interestingly, for strong migration and only two demes, an arbitrary number
of alleles can be maintained at each of arbitrarily many loci, whereas
for weak migration, the number of demes is a generic upper bound for
the number of alleles that can be maintained (Theorem \ref{th:upper_bound_weak_mig}).
At first, this appears counterintuitive given the widespread opinion that it is easier to maintain 
polymorphism under weak migration than under strong migration. 

This opinion derives from the fact that for a single diallelic locus and two demes, the parameter region for which a protected
polymorphism exists increases with decreasing migration rate, provided there is a single parameter that measures the strength of selection,
as is the case in the Deakin model; see condition \eqref{PP_Deakin}. 
Additional support for this opinion comes from the study of weak migration in homogeneous and heterogeneous environments 
(Karlin and McGregor 1972a,b; Christiansen 1999), from the analysis of the general Deakin (1966) model (Section \ref{sec: Ga_diallelic_demes}), 
as well as from various numerical studies (e.g., Spichtig and Kawecki 2004, Star et al.\ 2007a,b).
However, Karlin (1982, p.\ 128) observed that for the non-homogeneous 
Deakin model, in which the `homing probabilities' vary among demes,
``it is possible to increase a single homing rate and reduce or even
abrogate the event of $A$-protection''. This is already obvious from the protection condition \eqref{A1_protected_2demes}: If only one of the migration rates in \eqref{kappa} is reduced, it depends on the sign of the corresponding selection coefficient if protection is facilitated or not.

Theorem \ref{poly_weak_sel_th} is not at variance with Theorem \ref{th:upper_bound_weak_mig} or the widespread opinion expressed above. It is complimentary, and
its proof yields deeper insight into the conditions under which genetic variation can be
maintained by migration-selection balance. With strong migration, a stable 
multiallelic polymorphism requires some form of overdominance of suitably averaged
fitnesses for the loci maintained polymorphic. The reason is that strong migration
leads to strong mixing, so that gamete and allele frequencies
become similar among demes. Because the fraction of volume of parameter space, in which average overdominance holds for multiple alleles, shrinks rapidly with increasing number of alleles or loci, we expect very stringent constraints on selection and dominance coefficients for maintaining polymorphism at many loci if migration is strong. 

The conditions for maintaining loci polymorphic under weak migration are much
weaker. With arbitrary intermediate dominance, the proof of Theorem \ref{th:upper_bound_weak_mig} shows that
those alleles will be maintained that are the fittest in at least one niche. This, obviously, limits the number of alleles that can be maintained.

Also in the Levene model arbitrarily many loci can be maintained polymorphic in the absence of epistasis and if dominance is intermediate in every deme (Theorem \ref{multi_int_dom_fully}). But there, the maximum number of polymorphic loci depends on the pattern of dominance
and the number of demes (Nagylaki 2009b, B\"urger 2010, and Theorem \ref{theorem_weak_ep_DIDID}). It is an open problem if for every (ergodic) migration scheme, arbitrarily
many loci can be maintained polymorphic in the absence of epistasis, overdominance, and underdominance.

Although the fraction of parameter space, in which the conditions for 
maintenance of multiallelic multilocus polymorphisms are satisfied,
decreases rapidly, stable multiallelic polymorphism involving small or moderate numbers of alleles or loci does not seem
unlikely. What is required if dispersal is weak, basically, is that 
there is a mosaic of directional selection pressures and different genes or genotypes that are locally well adapted.

\section{Application: A two-locus model for the evolution of genetic incompatibilities with gene flow}
Here we show how to apply some of the above perturbation results in conjunction with Lyapunov functions and an index theorem to derive the complete equilibrium and stability structure for a two-locus continent-island model, in which epistatic selection may be strong. The model was developed and studied by Bank et al.\ (2012) to explore how much gene flow is needed to inhibit speciation by the accumulation of so-called Dobzhansky-Muller incompatibilities (DMI). 

The idea underlying the concept of DMIs is the following. If a population splits into two, for instance because at least one of the subpopulations moves to a different habitat, and the two subpopulations remain separated for a sufficiently long time, in each of them new, locally adapted alleles may emerge that substitute previous alleles. Usually, such substitutions will occur at different loci. For instance, if $ab$ is the original haplotype (gamete), also called wild type, in one population $Ab$ may become fixed (or reach high frequency), whereas in the other population $aB$ may become fixed. If individuals from the derived populations mate, then hybrids of type $AB$, which occur by recombination between $aB$ and $Ab$ have sometimes reduced fitness, i.e., they are incompatible (to a certain degree). Therefore, the accumulation of DMIs is considered a plausible mechanism for the evolution of so-called intrinsic postzygotic isolation between allopatric (i.e., geographically separated) populations. Since complete separation is an extreme situation, Bank et al.\ (2012) studied how much gene flow can inhibit the accumulation of such incompatibilities, hence parapatric speciation.
 
As a simple scenario, Bank et al.\ (2012) assumed a continent-island model, i.e., a continental population, which is fixed for one haplotype (here, $aB$), sends migrants to an island population in which, before the onset of gene flow, $Ab$ was most frequent or fixed. To examine how much gene flow is needed to annihilate differentiation between the two populations, the equilibrium and stability structure was derived. A stable internal equilibrium, at which all four haplotypes are maintained, is called a DMI because, only when it exists, is differentiation between the two populations maintained at both loci. Bank et al.\ investigated more general biological scenarios than the one outlined above, and they investigated a haploid and a diploid version of their model. Here, we deal only with the haploid case, as it admits a much more complete analysis and is also representative for an important subset of diploid models. 

Our goal is to derive the equilibrium and bifurcation structure of the haploid model. We convey the main ideas and only outline most of the proofs. Detailed proofs are given in the Online Supplement S1 of Bank et al.\ (2012). Also comprehensive biological motivation and discussion is provided by Bank et al., as well as illuminating figures.

\subsection{The haploid DMI model}\label{haploid model}
Let $x_1$, $x_2$, $x_3$, and $x_4$ denote the frequencies of the four haplotypes $ab$, $aB$, $Ab$, and $AB$ on the island. 
They satisfy $x_i\ge0$ for every $i$ and $\sum_{i=1}^4 x_i=1$. 
We assume that selection acts on individuals during the haploid phase of their life cycle according to the 
following scheme for Malthusian fitness values: 

\begin{table}[ht]\label{DMI_fitnesses}
\begin{center}
 $\begin{array}{l|cccc}
       &\text{wild type}&\text{continental}&\text{island}&\text{recombinant}\\
\text{haplotype} &ab&aB&Ab&AB
\\ &\hphantom{a+b-c} &\hphantom{a+b-c} &\hphantom{a+b-c} &\hphantom{a+b-c}\\[-5mm]
\hline\\[-3mm]
\text{fitness}& w_1=0 & w_2=\be & w_3=\a & w_4=\a+\be-\ga
\\[1mm]
\text{frequencies}& x_1 & x_2 & x_3 & x_4
\\[1mm]
\end{array}$
\caption{Haplotype fitnesses and frequencies in the DMI model}
\end{center}
\end{table}

This scheme is entirely general. The fitness of the wild type is arbitrarily normalized to 0. 
The parameters $\a$ and $\be$ measure a potential selective advantage of the island and continental haplotypes, respectively, on the
island. They can be positive or negative. However, further below we will assume that
$\a>0$ because, as will be shown, otherwise a DMI cannot exist. 
Finally, the epistasis parameter $\ga $ measures the strength of the incompatibility among the $A$ and $B$ alleles. We assume $\ga \ge 0$, so that epistasis is negative or absent ($\ga=0$). 

Assuming weak evolutionary forces in continuous time, the haplotype dynamics is given by
\begin{equation}
	\dot x_i = x_i(w_i-\wb) - \et_i r D - mx_i + \de_{2i}m,
\end{equation}
where $r$ is the recombination rate between the two loci, $D=x_1x_4-x_2x_3$ is the LD, $\et_1=\et_4=-\et_2=-\et_3$, and $\de_{i2}=1$ if $i=2$ and
$\de_{i2}=0$ otherwise; cf.\ \eqref{recosel_2L2A} and \eqref{weak_all}. With the settings as in Table \ref{DMI_fitnesses}, we obtain
\begin{subequations}\label{OS_hapdyn}
\begin{align}
	\dot x_1 &= x_1[-\a(x_3+x_4)-\be(x_2+x_4)+\ga x_4] - rD -mx_1\,,\\
	\dot x_2 &= x_2[-\a(x_3+x_4)+\be(x_1+x_3)+\ga x_4] + rD +m(1-x_2)\,,\\
	\dot x_3 &= x_3[\a(x_1+x_2)-\be(x_2+x_4)+\ga x_4] + rD -mx_3\,,\\
	\dot x_4 &= x_4[\a(x_1+x_2)+\be(x_1+x_3)-\ga(x_1+x_2+x_3)] - rD -mx_4\,.
\end{align}
\end{subequations}
This is a dynamical system on the simplex $\S_4$ which constitutes our state space.
We always assume $m\ge0$, $r\ge0$, and $\ga\ge0$.

For many purposes, it is convenient to describe the dynamics in terms of the allele frequencies
$p=x_3+x_4$ of $A$, $q=x_2+x_4$ of $B$, and the measure $D$ of LD. Then the dynamical equations read
\begin{subequations}\label{OS_hapdynp}
\begin{align}
  \dot p &= \alpha p (1-p)-\ga (1-p)(pq+D)+\be D-m p\,, \label{OS_hapdynp_pA}\\
  \dot q &= \be q(1-q)-\ga (1-q)(pq+D)+\alpha D+m(1-q)\,, \label{OS_hapdynp_pB}\\
  \dot D &=[\alpha (1-2p) + \be(1-2q)]D - \ga[(1-p)(1-q)-D](pq+D) \notag \\
  		&\quad-rD-m[p(1-q)+D]\,. \label{OS_hapdynp_D}
\end{align}
\end{subequations}
It is straightforward to show that the condition $x\in\S_4$ translates to $0\le p \le0$, $0\le q \le0$, and
\begin{equation}
	-\min[pq,(1-p)(1-q)] \le D \le \min[p(1-q),(1-p)q]\,.
\end{equation}

\subsection{Existence and stability of boundary equilibria}\label{hapappbound}
We denote the monomorphic equilibria $x_i=1$ by $\mathbf{M}_i$. If $m=0$, then all monomorphic equilibria exist. However,
if $\a>0$ and $\ga>\be$, the conditions most relevant for this investigation (see \eqref{OS_conditions} below),
$\mathbf{M}_1$ and $\mathbf{M}_4$ are always unstable. 

It is straightforward to determine local stability of $\Mtwo$ and $\Mthree$. The latter can be stable only if $m=0$.

Next, there may exist two equilibria at which one locus is polymorphic and the other is fixed.
The equilibrium $\SA$ has the coordinates $(p,q,D) = \big(1- \frac{m}{\a -\ga},1,0 \big)$ 
and is admissible if and only if $m < \a-\ga$. 
The equilibrium $\SB$ has coordinates $\big(0,-\frac{m}{\be},0\big)$ and is admissible if and only if
$m <-\be$. The respective local stability conditions are easy to derive.

The following lemma collects the important observations from these analyses.

\begin{lemma}\label{lemma_boundary}
(a) For given $m>0$, at most one of the boundary equilibria $\Mtwo$, $\SA$, or $\SB$ can be stable. 
$\Mtwo$ is asymptotically stable if
\begin{equation}\label{OS_m2}
	m > \max[-\be, \a-\ga, \a -\be -r]\,.
\end{equation}
$\SA$ is asymptotically stable if
\begin{equation}\label{OS_sa}
	\ga<\a+\be \quad\text{and}\quad \frac{(\a -\ga )(\ga-\be)}{\a} \bigg(1 + \frac{\a+\be -\ga }{r} \bigg) < m < \a-\ga\,,
\end{equation}
which requires $r>\ga-\be$. $\SB$ is asymptotically stable if
\begin{equation} \label{OS_sb}
	\ga>\a+\be \quad\text{and}\quad  \frac{-\be\a}{\ga-\be} \left(1+ \frac{\ga-\be -\a}{r} \right) < m < - \be\,,
\end{equation}
which requires $r>\a$.

(b) If $\Mtwo$ is asymptotically stable, then $\SA$ and $\SB$ are not admissible.

(c) As a function of $m$, boundary equilibria change stability at most once. If a change in stability occurs, then it
is from unstable to stable (as $m$ increases).
\end{lemma}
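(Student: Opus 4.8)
The plan is to decide asymptotic stability of each boundary equilibrium by linearising the flow and reading off eigenvalues, exploiting that in every relevant case the Jacobian is triangular or block triangular. First I would record that for $m>0$ only $\Mtwo$, $\SA$, $\SB$ can be stable: evaluating \eqref{OS_hapdynp_pB} at $(p,q,D)=(1,0,0)$ gives $\dot q=m>0$, so $\Mthree$ is not even an equilibrium unless $m=0$, while under the standing assumptions $\a>0$ and $\ga>\be$ the equilibria $\mathbf{M}_1$ and $\mathbf{M}_4$ are unstable. Thus all the stability content concerns the three remaining points.

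For $\Mtwo$ (the point $x_2=1$) I would linearise \eqref{OS_hapdyn} in the haplotype coordinates $(x_1,x_3,x_4)$ about the origin, using $D=x_1x_4-x_2x_3=-x_3$ to first order. Since the selection part of each equation is $x_i$ times a bracket, to first order the bracket is frozen at its equilibrium value, while the $-rD$ and $-mx_i$ terms are already linear. The linearised system is then triangular (in the ordering $(x_3,x_1,x_4)$ the $x_3$-equation is autonomous and feeds into the $x_1$- and $x_4$-equations, which do not couple to each other), so its eigenvalues are the diagonal entries $\a-\be-r-m$, $-(m+\be)$, and $\a-\ga-m$. Demanding all three negative reproduces $m>\max[-\be,\a-\ga,\a-\be-r]$, i.e.\ \eqref{OS_m2}.

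For $\SA=(\hat p,1,0)$ with $\hat p=1-m/(\a-\ga)$ I would work in the $(p,q,D)$ variables of \eqref{OS_hapdynp}, simplifying throughout with the equilibrium identity $(\a-\ga)(1-\hat p)=m$. The decisive structural observation is that at $q=1$ the entries $\partial_p\dot q$ and $\partial_p\dot D$ both vanish, so the Jacobian is block upper triangular: a scalar block $\partial_p\dot p=-(\a-\ga)\hat p$, automatically negative whenever $\SA$ is admissible (then $\a-\ga>0$ and $\hat p\in(0,1)$), together with a $2\times2$ block $B$ in the $(q,D)$ directions. Stability thus reduces to $\operatorname{tr}B<0$ and $\det B>0$. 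With the substitution $u=\a m/(\a-\ga)$ the two invariants collapse to $\operatorname{tr}B=2\ga-\a-2\be-r$, which is independent of $m$, and $\det B=ur-(\ga-\be)(\a+\be-\ga+r)$; solving $\det B>0$ yields precisely the lower bound on $m$ in \eqref{OS_sa}. Requiring the admissible stability interval (capped at $m<\a-\ga$) to be nonempty forces $(\a+\be-\ga)\bigl(r-(\ga-\be)\bigr)>0$, which under $\ga<\a+\be$ gives $r>\ga-\be$; since $\ga<\a+\be$ also gives $\ga-\be<\a$, the bound $r>\ga-\be$ then implies $\operatorname{tr}B<0$ for free, so no separate trace condition is needed. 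The analysis of $\SB$ is entirely analogous after interchanging the roles of the two loci, the nonemptiness condition now reading $r>\a$, and produces \eqref{OS_sb}.

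Parts (b) and (c) are then quick consequences. Stability of $\Mtwo$ forces $m>\a-\ga$ and $m>-\be$, directly contradicting the admissibility conditions $m<\a-\ga$ for $\SA$ and $m<-\be$ for $\SB$; this proves (b), and combined with the mutually exclusive requirements $\ga<\a+\be$ (for $\SA$) and $\ga>\a+\be$ (for $\SB$) it yields the ``at most one stable'' assertion of (a). For (c): each eigenvalue of $\Mtwo$ is affine and strictly decreasing in $m$, hence crosses zero at most once and only downward, so $\Mtwo$ gains stability exactly once as $m$ increases; for $\SA$ and $\SB$, $\operatorname{tr}B$ is $m$-independent while $\det B$ is affine in $m$ with slope $\a r/(\a-\ga)\ge0$ (positive when $r>0$), so again stability is acquired at most once and only with increasing $m$. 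The main obstacle throughout is the bookkeeping for $\SA$ and $\SB$: computing the nine Jacobian entries at a genuinely polymorphic point and then collapsing $\det B$ to its clean product form. The step that makes the argument transparent is recognising the substitution $u=\a m/(\a-\ga)$ together with $(\a-\ga)(1-\hat p)=m$, which turns both $\operatorname{tr}B$ and $\det B$ into low-degree expressions; without this consolidation the determinant appears as an unwieldy quadratic in $m$ and its equivalence with the stated thresholds is hard to see.
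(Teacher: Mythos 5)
Your proposal is correct and follows the same route the paper indicates (the paper itself only asserts that the local stability conditions are "easy to derive" by linearization and defers the computations to the Online Supplement of Bank et al.\ 2012): triangularity of the Jacobian at $\Mtwo$ gives the three eigenvalues $\a-\be-r-m$, $-(m+\be)$, $\a-\ga-m$, and the block-triangular structure at $\SA$ with $\operatorname{tr}B=2\ga-\a-2\be-r$ and $\det B=ur-(\ga-\be)(\a+\be-\ga+r)$, $u=\a m/(\a-\ga)$, reproduces \eqref{OS_sa}; I have checked these identities and the reduction of the nonemptiness condition to $(\a+\be-\ga)[r-(\ga-\be)]>0$, and the locus-exchange symmetry you invoke for $\SB$ is genuine. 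Parts (b) and (c) then follow exactly as you argue from the disjointness of the admissibility and stability ranges and the monotonicity in $m$ of the eigenvalues and of $\det B$.
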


Finally, if $r=0$, there is a fully polymorphic equilibrium $\mathbf{R}_0$ 
on the edge $x_1 = x_4 = 0$ of $\S_4$. Thus, only the island and the continental haplotypes are present. Its coordinates are easily calculated.

In the following we derive global asymptotic stability of boundary equilibria for various sets of parameters by applying 
the theory of Lyapunov functions (e.g.\ LaSalle 1976, in particular, Theorem 6.4 and Corollary 6.5). By global asymptotic
stability of an equilibrium we mean that every trajectory, such that initially all alleles are present, converges to this equilibrium.
By Lemma \ref{lemma_boundary} there is at most one asymptotically stable boundary equilibrium for any given set of
parameters. Hence, convergence of all trajectories to the boundary is sufficient for demonstrating global stability. 
Because global convergence to the boundary precludes the existence of an internal equilibrium, 
it yields necessary conditions for a stable DMI.

\begin{proposition}\label{DMI_ness_cond}
A DMI can exist only if each of the following conditions is satisfied:
\begin{equation}\label{ness_cond}
	\a>0 \text{ and  } \ga>\be\,,
\end{equation}
\begin{equation}\label{m2_reversed}
	m < \a-\be+r\,,
\end{equation}
\begin{equation}\label{m_mmax_many}
	m < \max\{\a-\be,\tfrac14\a,\tfrac14(\ga-\be)\}\,.
\end{equation}
\end{proposition}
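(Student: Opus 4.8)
The plan is to establish each inequality in its contrapositive form. Whenever one of the three conditions fails, I will produce a Lyapunov function on the state space $\S_4$ whose derivative along \eqref{OS_hapdynp} is negative on the interior, so that every trajectory starting with all four haplotypes present leaves the interior and converges to a boundary face on which some haplotype is absent. By LaSalle's invariance principle (LaSalle 1976, Theorem 6.4 and Corollary 6.5) this gives global convergence to that face. Since an internal equilibrium would be an interior trajectory that never touches the boundary, global boundary convergence excludes any internal equilibrium, and in particular any stable one, i.e.\ any DMI. Lemma \ref{lemma_boundary}, which asserts that at most one boundary equilibrium is stable, then identifies the actual attractor but is not needed for the nonexistence conclusion itself.

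I would split the work according to the conserved structure of the equations: the allele-frequency equations \eqref{OS_hapdynp_pA}--\eqref{OS_hapdynp_pB} are free of $r$, whereas only the disequilibrium equation \eqref{OS_hapdynp_D} contains it. To force $\a>0$ I would take $V=p=x_3+x_4$, the frequency of the island allele $A$, and show $\dot p<0$ on the interior when $\a\le0$, so that $A$ is lost. To force $\ga>\be$ I would track the wild-type allele $b$, i.e.\ $1-q=x_1+x_3$: when $\ga\le\be$ the double-derived recombinant $AB$ has fitness $w_4=\a+\be-\ga\ge\a=w_3$, there is no effective incompatibility cost, and $b$ should be eliminated. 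The $r$-free bounds of the third condition, namely $m\ge\a-\be$, $m\ge\tfrac14\a$ and $m\ge\tfrac14(\ga-\be)$ holding simultaneously, I expect to obtain by sharpening the same two functions: $\a-\be$ is the linearized threshold below which $Ab$ can invade the continental equilibrium $\Mtwo$ (since $w_3-w_2-m=\a-\be-m$), while the factors $\tfrac14$ arise from the logistic maxima $p(1-p)\le\tfrac14$ and $q(1-q)\le\tfrac14$ used to bound the selection terms globally. For the second condition $m<\a-\be+r$ I would pair $V=p$ with a control on $|D|$: recombination damps $D$ through the term $-rD$ in \eqref{OS_hapdynp_D}, and a composite function that offsets the indefinite $+\be D$ coupling in $\dot p$ against this damping should produce a threshold displaced by exactly $r$.

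The main obstacle throughout is the linkage-disequilibrium coupling, whose sign is indefinite: the $+\be D$ term in $\dot p$ and the $\a D$ term in $\dot q$ destroy the naive monotonicity of an allele frequency. I would control it by substituting $x_4=pq+D$ to rewrite $\dot p=\a p(1-p)+x_4[\,\be-\ga(1-p)\,]-\be pq-mp$, together with the admissibility envelope $-\min[pq,(1-p)(1-q)]\le D\le\min[p(1-q),(1-p)q]$, splitting into cases according to the sign of $\be$. Obtaining the constants sharply---so that the Lyapunov estimate degenerates to an equality exactly at the stated thresholds, and so that the three separate bounds coalesce into the single $\max$ of \eqref{m_mmax_many}---is the delicate step, and it is precisely where recombination must be folded into the $D$-estimate to recover the $+r$ in \eqref{m2_reversed}. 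A final technical point is to confirm that each limiting face is genuinely reached: one must verify, via LaSalle, that the zero set of $\dot V$ contains no interior invariant subset, so that the inequality $\dot V\le0$ upgrades to actual convergence.
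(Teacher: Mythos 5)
Your high-level strategy coincides with the paper's: argue by contraposition, exhibit a Lyapunov function whose monotonicity forces every interior trajectory to the boundary, invoke LaSalle, and conclude that no internal equilibrium (hence no DMI) can exist. The gap is in the choice of Lyapunov functions, and it is not merely a matter of missing computations. For \eqref{ness_cond} you propose $V=p=x_3+x_4$ to show that $\a\le0$ forces loss of $A$. But $\dot p=\a p(1-p)+\be D-\ga x_4(1-p)-mp$, and when $\be<0$ and $D<0$ (the region every trajectory eventually enters) the term $\be D$ is positive and can dominate. Concretely, take $\a=0$, $\be=-1$, $\ga=0$, $m=0.01$ and $(x_1,x_2,x_3,x_4)=(0.01,0.48,0.5,0.01)$: then $D=-0.2399$ and $\dot p=0.2399-0.0051>0$, so $p$ is not monotone even though $p\to0$ is the correct conclusion (the attractor is $\SB$). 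Your proposed repair via the admissibility envelope only yields $\be D\le|\be|\,x_2x_3\le|\be|\,p(1-p)$, which leaves $\dot p\le p[(\a-\be)(1-p)-m]$, not negative in general. The paper sidesteps exactly this by using the \emph{ratio} functions $Y=(1-q)/p$ and $X=(1-q)/(1-p)$, both tracking elimination of allele $b$ rather than $A$; for these the migration terms cancel (migration dilutes numerator and denominator at the same rate $m$) and the $D$-coupling becomes controllable, after which the dynamics on the edge $x_1=x_3=0$ delivers $\SA$ or $\Mtwo$.

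For \eqref{m2_reversed} your plan to recover the $+r$ by offsetting $+\be D$ in $\dot p$ against the $-rD$ damping in \eqref{OS_hapdynp_D} cannot work as described, because $\dot p$ contains no recombination term at all ($r$ cancels in $\dot x_3+\dot x_4$); no estimate on $\dot p$ alone will ever see $r$. The composite function you are looking for is simply the gamete frequency $x_2=(1-p)q-D$: the paper shows
\begin{equation*}
	\dot x_2 \ \ge\ x_2\bigl[(m+\be)x_1+(m+\be-\a-r)x_3+(m-\a+\ga)x_4\bigr]\,,
\end{equation*}
which is nonnegative once $m>\max\{-\be,\a-\ga,\a-\be+r\}=\a-\be+r$ under \eqref{ness_cond}, giving global convergence to $\Mtwo$; here $r$ enters through discarding $rx_1x_4\ge0$ and absorbing $-rx_2x_3$ into the bracket. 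Your heuristic for the factors $\tfrac14$ in \eqref{m_mmax_many} via $p(1-p)\le\tfrac14$ is plausible in spirit but, as stated, the crude bound on $\dot p$ gives threshold $\a$, not $\a/4$, so that step too needs a different (again ratio- or gamete-based) function. In short: right framework, but the specific Lyapunov functions must be frequency ratios or gamete frequencies, not raw allele frequencies, and identifying them is the substantive content of the proof.
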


Condition \eqref{ness_cond} means that $w(Ab)>\max\{w(ab),w(aB)\}$, i.e., the island type has higher fitness than its one-step mutational neighbors. 

The \emph{proof} of \eqref{ness_cond} uses the Lyapunov functions
\begin{equation}
	Y = \frac{x_1+x_3}{x_3+x_4} = \frac{1-q}{p}
\end{equation}
and
\begin{equation}
	X = \frac{x_1+x_3}{x_1+x_2} = \frac{1-q}{1-p}
\end{equation}
to show that convergence to $\SA$ or $\Mtwo$ occurs if \eqref{ness_cond} is violated.

Condition \eqref{m2_reversed} follows because, if \eqref{ness_cond} is satisfied,
\begin{align}
	\dot x_2 &= x_2 [\be(x_1+x_3) +\ga x_4 -\a(x_3+x_4) -r x_3]  + m(x_1+x_3+x_4) + r x_1x_4 \notag \\
			 & \ge x_2 [ (m+\be)x_1 + (m+\be-\a-r)x_3 +(m-\a+\ga)x_4 ] \label{lya2}
\end{align}
holds if 
\begin{equation}\label{m2}
	m > \max\{-\be,\a-\ga,\a-\be+r\} = \a-\be+r\,.
\end{equation}
Therefore, global convergence to $\Mtwo$ occurs if \eqref{m2} holds.
In particular, $\Mtwo$ is globally asymptotically stable for every $m$ if $r<\be-\a$.

Condition \eqref{m_mmax_many} follows in a similar way.

\subsection{Properties of internal equilibria}\label{section_internal_equi}
\begin{lemma}
Every trajectory eventually enters the region $D\le0$ and remains there. Convergence
to $D=0$ occurs if and only if at least one allele is eventually lost.
Thus, every internal equilibrium satisfies $D<0$. 
\end{lemma}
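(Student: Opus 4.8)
The plan is to reduce everything to the sign of $D$ on its own zero set together with one scalar quantity that is monotone while $D\ge0$. First I would record the sign of $\dot D$ on the linkage-equilibrium (Wright) manifold $\{D=0\}$. Setting $D=0$ in \eqref{OS_hapdynp_D} and factoring yields
\[
	\dot D\big|_{D=0} = -\,p(1-q)\bigl[\ga(1-p)q + m\bigr]\,,
\]
which is $\le0$ everywhere on $\S_4$ and \emph{strictly} negative at every interior point, since there $p(1-q)>0$ and $\ga(1-p)q+m>0$ (using $\ga>0$ or $m>0$). This single computation already gives positive invariance of $\{D\le0\}$ and shows that no internal equilibrium can sit on $\{D=0\}$; once the global attraction to $\{D\le0\}$ below is in place, an internal equilibrium, being stationary, must lie in $\{D\le0\}$ and, not being on $\{D=0\}$, must satisfy $D<0$.

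The device for the genuinely global statement is the log-ratio
\[
	V = \log\frac{x_1x_4}{x_2x_3}\,,
\]
which is finite on the open simplex and has the same sign as $D$, with $V=0\iff D=0$. Writing $S_i$ for the bracketed selection term multiplying $x_i$ in \eqref{OS_hapdyn}, a short computation gives the identity $S_1+S_4-S_2-S_3=-\ga$, and the migration and recombination contributions collapse, leaving
\[
	\dot V = -\ga - rD\Bigl(\tfrac1{x_1}+\tfrac1{x_2}+\tfrac1{x_3}+\tfrac1{x_4}\Bigr) - \frac{m}{x_2}\,.
\]
On $\{D\ge0\}$ the middle term is $\le0$, so $\dot V\le-\ga-m/x_2\le-(\ga+m)$. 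Because each boundary face of $\S_4$ is inward-pointing or invariant (in particular $\dot x_2|_{x_2=0}=r x_1x_4+m\ge0$), an interior orbit stays interior for all finite time, so $V$ remains finite; the uniform linear decay then forces $V$, hence $D$, down to $0$ by time at most $V(0)/(\ga+m)$, after which invariance pins the orbit in $\{D\le0\}$.

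For the equivalence in the second sentence I would argue both directions directly. If an allele is lost, one of $p,\,1-p,\,q,\,1-q$ tends to $0$; in each case two haplotype frequencies vanish simultaneously so that both $x_1x_4\to0$ and $x_2x_3\to0$, whence $D\to0$. Conversely, suppose no allele is lost, so that the $\liminf$ of all four allele frequencies is positive and the orbit eventually lies in a compact interior set $K$. On $K$ the continuous function $-p(1-q)[\ga(1-p)q+m]$ is bounded above by some $-c<0$; since $D\le0$ on $K$ by the first part, whenever $D\in(-\de,0]$ we have $\dot D<-c/2$, so $D$ can never climb back to a neighbourhood of $0$, giving $\limsup D\le-\de<0$ and therefore $D\not\to0$.

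The step I expected to be the obstacle is precisely the global attraction: a direct attempt to prove $\dot D\le0$ throughout $\{D>0\}$ fails, because the epistatic part of \eqref{OS_hapdynp_D} factors as $\ga(D+pq)\bigl(D-(1-p)(1-q)\bigr)$, in which $D+pq\ge0$ but $D-(1-p)(1-q)=-\bigl[x_1(1-2x_4)+2x_2x_3\bigr]$ can be positive when $x_4$ is large, so $\dot D$ may be positive even for $D>0$. The change of variables to $V$ is what dissolves this difficulty, since $\dot V$ is uniformly negative on $\{D\ge0\}$; the only remaining care is the boundary bookkeeping ensuring $V$ does not blow up before $D$ changes sign, and the observation that the degenerate case $\ga=m=0$ (additive selection, no gene flow) must be excluded — there $\dot V=-rD\sum_i x_i^{-1}$ merely drives $D$ to $0$ along the Wright manifold without any allele being lost.
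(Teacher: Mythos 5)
Your proof is correct and takes essentially the same route as the paper, which works with the reciprocal quantity $Z = x_2x_3/(x_1x_4) = e^{-V}$ and the sign of $\dot Z$ on $\{D\ge0\}$, together with the same residual case analysis for the degenerate parameter combinations (the paper separately treats $r=0$, $m=r=0$, and $m=\ga=0$). Your logarithmic form is marginally sharper, since the uniform bound $\dot V\le-(\ga+m)$ gives entry into $\{D\le0\}$ in finite time rather than only asymptotically, and your explicit computation of $\dot D$ on $\{D=0\}$ makes the positive invariance and the strict inequality $D<0$ at internal equilibria more transparent than in the paper's sketch.
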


To \emph{prove} this lemma, we define
\begin{equation}
	Z = \frac{x_2 x_3}{x_1 x_4}\,,
\end{equation}
where $x_1>0$ and $x_4>0$ is assumed. We note that $Z=1$ if and only if $D=0$, and $Z<1$ if and only if $D>0$.
Then
\begin{equation}\label{dotZ}
  \dot Z = x_1x_3x_4(m + \ga x_2) + r D (x_1x_2x_3+x_2x_3x_4+x_1x_2x_4+x_1x_3x_4)\,.
\end{equation}
We observe that $\dot Z\ge0$ holds whenever $D\ge0$. In addition, it follows immediately that
$\dot Z>0$ if $rD>0$ and $x_2+x_3>0$. If $x_2+x_3=0$ and $x_1x_4>0$, then $\dot x_1+\dot x_4<0$ if
$r+m>0$. Hence, all trajectories leave $D>0$ if $r>0$.
If $rD=0$, then $\dot Z=0$ only if $x_3=0$ or if $m=0$ and $\ga x_2=0$. Thus, our result follows by investigating
(i) the dynamics on $x_3=0$ if $r=0$, (ii) the dynamics on $x_2=0$ if $m=r=0$, and (iii) the case $m=\ga=0$. 
We leave the simple first two cases to the reader. The third case is also not difficult and follows immediately from Section 3.4.1
in B\"urger and Akerman (2011).

From here on, we assume
\begin{equation}\label{OS_conditions}
	\a>0 \text{ and } \ga>\be  \text{ and } r>\be-\a\,,
\end{equation}
because we have proved that internal equilibria can exist only if \eqref{OS_conditions} is satisfied. We note that \eqref{OS_conditions}
holds if and only if $\Mthree$ (island haplotype fixed) is linearly stable in the absence of migration.

Our next aim is the derivation of a cubic equation from which the coordinate $p$ of an internal equilibrium $(p,q,D)$ can be obtained. 
Given $p$, the coordinates $q$ and $D$ can be computed from relatively simple explicit formulas. 

By solving $\dot p=0$, we find that, for given $p$ and $q$, and if $p\ne1-\be/\ga$, the value of LD 
at equilibrium is
\begin{equation}\label{D_eq}
		D = D(p,q)= p\,\frac{m+(1-p)(\ga q-\a)}{\be-\ga+\ga p}\,.
\end{equation}
Substituting this into \eqref{OS_hapdynp_pB}, assuming $\be\ne0$, and solving $\dot q=0$ for $q$, we obtain 
\begin{equation}\label{q_12_m}
	q_{1,2}(p) = \frac12\left[\left(1-\frac{m}{\be}\right) \pm \sqrt{Q}\right]\,,
\end{equation}
where
\begin{equation}
	Q = \left(1+\frac{m}{\be}\right)^2 - \frac{4\a m p}{\be(\ga-\be)} - \frac{4\a(\ga-\a)}{\be(\ga-\be)}p(1-p)
\end{equation}
needs to be nonnegative to yield an admissible equilibrium.
Finally, we substitute $q=q_1(p)$ and $D=D(p,q_1(p))$ into \eqref{OS_hapdynp_D} and obtain that an
equilibrium value $p$ must solve the equation
\begin{equation}\label{ABQ}
	(\ga-\be)A(p) - \sqrt{Q} B(p) = 0\,,
\end{equation}
where
\begin{subequations}\label{A(p)B(p)}
\begin{align}
 	A(p) &=  (\ga-\be)\left\{\be[(\a-r)(\ga-2\a)+\ga(\a-\be)] + [\be(\ga+\be-2\a)+r(2\be-\ga)]m + \be m^2 \right\} \notag\\
 			&\quad + \bigl\{\be\left[2\a(2\ga-3\be)(\ga-\a)-\be\ga(\ga-\be)\right] 
 						- (2\ga-\be)(2\a-\ga)r \notag\\
 			&\qquad+ [2\a\be(\ga-2\be)-\be\ga(\ga-\be)+\ga(2\ga-3\be)r]m \bigr\}p \notag\\
 			&\quad-[2\a\be(\ga-\a)(\ga-2\be)+\be\ga(\ga-2\a)r+\ga^2rm]p^2\,,\\
 	B(p) &= (\ga-\be)[-2\a\be+\ga(\be+r)+\be m] \notag \\
 	 		&\quad+ [2\a\be(\ga-\be)-\be\ga(\ga-\be)+r(\be-2\ga)]p + \ga^2rp^2\,.
\end{align}
\end{subequations}
If we substitute $q=q_2(p)$ and $D=D(p,q_2(p))$ into \eqref{OS_hapdynp_D}, we obtain 
\begin{equation}\label{ABQ2}
	(\ga-\be)A(p) + \sqrt{Q} B(p) = 0\,,
\end{equation}
instead of \eqref{ABQ}. 

By simple additional considerations, it is shown that solutions $p$ of \eqref{ABQ} or \eqref{ABQ2} satisfy
\begin{align}
	0 &=(\ga-\be)^2A(p)^2-QB(p)^2 \notag \\
	  &= \frac{4\be}{\ga-\be}(\be-\ga+\ga p)^2[m+(\ga-\a)(1-p)]P(p)\,,
\end{align}
where
\begin{align}\label{P(p)}
	P(p) &=  (\ga-\be)(m+r+\be-\a)[\a\be(\a+\be-\ga-r)-mr(\ga-\be)] \notag\\
		&\quad + \bigl\{\a\be(\a-\be)(\ga-\be)(\a+\be-\ga)+\a(\ga-\be)[3\be(\ga-2\a)+m(4\be-\ga)]r \notag \\
		&\qquad\;+ [\a(\ga^2+\be\ga-\be^2)+\ga(\ga-\be)m]r^2\bigr\}p \notag\\
		&\quad - 2\a r[\be(\ga-\be)(\ga-2\a)+\ga^2r]p^2 + \a\ga^2r^2 p^3\,.
\end{align}
Because $p=1-\be/\ga$ never gives an equilibrium of \eqref{OS_hapdynp} and $p=1-m/(\ga-\a)$ can give rise only to a single-locus
polymorphism, any internal equilibrium value $p$ must satisfy $P(p)=0$. 
 
We can summarize these findings as follows.

\begin{theorem}\label{theorem_max3_equil}
(a) The haploid dynamics \eqref{OS_hapdynp} can have at most three internal equilibria, and the coordinate $\hat p$ of an internal
equilibrium $(\hat p,\hat q,\hat D)$ is a
zero of the polynomial $P$ given by \eqref{P(p)}. 

(b) For given $\hat p$ with $P(\hat p)=0$, only one of $q_1(\hat p)$ or $q_2(\hat p)$ defined in \eqref{q_12_m} can yield an equilibrium value $\hat q$. $\hat D$ is calculated from $\hat p$ and $\hat q$ by \eqref{D_eq}. This procedure yields an internal equilibrium if and only if $0<\hat p<1$, $0<\hat q<1$, and 
\begin{equation}\label{D_admiss_cond}
	-\min[\hat p\hat q,(1-\hat p)(1-\hat q)] < \hat D<0\,.
\end{equation}
\end{theorem}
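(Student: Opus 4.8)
The plan is to collapse the three simultaneous equilibrium conditions $\dot p=\dot q=\dot D=0$ for \eqref{OS_hapdynp} into a single polynomial equation in the coordinate $p$ alone, by eliminating first $D$ and then $q$, exactly as in the computation preceding the theorem. First I would record that at an internal equilibrium all three coordinates lie strictly inside their admissible ranges, so in particular $p>0$; solving the linear (in $D$) equation $\dot p=0$ then yields \eqref{D_eq}, provided the coefficient $\be-\ga+\ga p$ does not vanish. The exceptional locus $p=1-\be/\ga$ must be dispatched separately: imposing $\be-\ga+\ga p=0$ in $\dot p=0$ forces $m+(1-p)(\ga q-\a)=0$, and one checks that the resulting candidate cannot simultaneously satisfy $\dot q=0$ and $\dot D=0$ with all coordinates internal, so it contributes no internal equilibrium.

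Next, substituting \eqref{D_eq} into $\dot q=0$ and clearing denominators produces, under the assumption $\be\ne0$, a quadratic in $q$ whose roots are the $q_1(p)$ and $q_2(p)$ of \eqref{q_12_m}; admissibility forces the discriminant quantity $Q\ge0$. Feeding each branch $q=q_i(p)$ together with $D=D(p,q_i(p))$ into $\dot D=0$ gives the two equations \eqref{ABQ} and \eqref{ABQ2}, which differ only in the sign of the term $\sqrt{Q}\,B(p)$. Isolating the square-root term and squaring shows that any equilibrium value of $p$ satisfies $(\ga-\be)^2A(p)^2-Q\,B(p)^2=0$; the algebraic heart of the argument is the identity
\[
  (\ga-\be)^2A(p)^2-Q\,B(p)^2
   =\frac{4\be}{\ga-\be}(\be-\ga+\ga p)^2\,[m+(\ga-\a)(1-p)]\,P(p),
\]
with $A,B$ as in \eqref{A(p)B(p)} and $P$ as in \eqref{P(p)}. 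The two spurious factors correspond to already-excluded situations: $\be-\ga+\ga p=0$ is the case handled above, while $m+(\ga-\a)(1-p)=0$ reproduces the single-locus equilibrium $\SA$ rather than an internal one. Hence every internal equilibrium value $p$ is a zero of $P$.

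To finish part (a) I would note that under the standing assumptions \eqref{OS_conditions} one has $\a>0$ and $\ga>\be$, so $\ga-\be>0$ and the coefficient of $p^3$ in \eqref{P(p)} equals $\a\ga^2r^2$; thus $P$ has degree at most three, and it is not identically zero (it is genuinely cubic whenever $\ga r\ne0$, and of lower degree in the boundary cases $r=0$ or $\ga=0$, which are cross-checked against the separately analysed $\mathbf{R}_0$ and monomorphic equilibria). A cubic has at most three real roots, and by part (b) each admissible root $\hat p$ yields at most one internal equilibrium, giving the stated bound of three. For part (b) the key observation is that \eqref{ABQ} and \eqref{ABQ2} cannot both hold unless $\sqrt{Q}\,B(\hat p)=0$: at a zero $\hat p$ of $P$ one has $(\ga-\be)A(\hat p)=\pm\sqrt{Q}\,B(\hat p)$ with a single determined sign, so exactly one branch $q_i(\hat p)$ is consistent with $\dot D=0$. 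One then recovers $\hat q$ from that branch and $\hat D$ from \eqref{D_eq}, and the construction returns a genuine internal equilibrium precisely when $0<\hat p<1$, $0<\hat q<1$, and \eqref{D_admiss_cond} holds, since these are exactly the conditions for $(\hat p,\hat q,\hat D)$ to lie in the interior of $\S_4$; the residual degenerate possibility $\sqrt{Q}\,B(\hat p)=0$ is settled by these admissibility requirements.

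The main obstacle is the verification of the displayed polynomial identity: expanding $(\ga-\be)^2A(p)^2-Q\,B(p)^2$ and factoring out the spurious quadratic and linear factors to expose the cubic $P$ is a lengthy but entirely mechanical computation, best carried out with computer algebra, and it is there that the precise coefficient bookkeeping of \eqref{A(p)B(p)} and \eqref{P(p)} becomes essential. A secondary point requiring care is the treatment of the degenerate parameter values—$\be=0$ (where the quadratic in $q$ degenerates and the factor $4\be/(\ga-\be)$ must be handled directly), $r=0$, and $\ga=0$—each of which must be shown either to reduce $P$ to lower degree without introducing extra internal equilibria, or to be subsumed by the separate boundary analysis of Section \ref{hapappbound}.
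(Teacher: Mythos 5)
Your proposal follows essentially the same route as the paper: eliminate $D$ via \eqref{D_eq}, eliminate $q$ via the quadratic giving $q_{1,2}(p)$, square away the radical to obtain the factorization $(\ga-\be)^2A(p)^2-QB(p)^2=\tfrac{4\be}{\ga-\be}(\be-\ga+\ga p)^2[m+(\ga-\a)(1-p)]P(p)$, and discard the spurious factors corresponding to $p=1-\be/\ga$ and the single-locus polymorphism $\SA$. The only point worth noting is that the upper bound $\hat D<0$ in \eqref{D_admiss_cond} is not purely an interiority condition on $\S_4$ but comes from the preceding lemma that every internal equilibrium has $D<0$; otherwise the argument matches the paper's.
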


\begin{remark}\rm
In the absence of epistasis, i.e., if $\ga=0$, there can be at most two internal equilibria. Their coordinates are obtained from a
quadratic equation in $p$. The admissibility conditions are given by simple formulas (B\"urger and Akerman 2011).

If $r=0$, the equilibrium $\mathbf{R}_0$ mentioned in Section \ref{hapappbound} is obtained from the unique zero of $P(p)$.
\end{remark}

For a number of important special or limiting cases, explicit expressions or approximations can be obtained for the internal equilibria.

\subsection{Bifurcations of two internal equilibria}\label{section_mast}
A bifurcation of two internal equilibria can occur if and only if $P(p^\ast)=0$, where $p^\ast\in(0,1)$ is a critical point of $P$,
i.e., $P'(p^\ast)=0$. There are at most two such critical points, and they are given by
\begin{subequations}
\begin{equation}
	p^\ast_{1,2} = \frac{1}{3\a\ga^2r}\left\{2\a[\be(\ga-2\a)(\ga-\be) + \ga^2 r] \pm \sqrt{R}\right\}\,,
\end{equation}
where
\begin{align}
	R &= \a^2\bigl\{-\be(\ga-\be)\left[16\a\be(\ga-\a)(\ga-\be)+\ga^2(3\a^2+\be^2)-\ga^3(3\a+\be) \right] \notag \\
		&\qquad -\be\ga^2(\ga-\be)(\ga-2\a)r +\ga^2(3\be^2-3\be\ga+\ga^2)r^2 \bigr\} \notag \\
	   &\quad	-3mr\a\ga^2(\ga-\be)[4\a\be+\ga(r-\a)].
\end{align}
\end{subequations}

Solving either $P(p^\ast_1)=0$ or $P(p^\ast_2)=0$ for $m$, we obtain after some straightforward manipulations that the
critical value $m^\ast$ must be a solution of the following quartic equation:
\begin{subequations}
\begin{equation}\label{mast_1}
	[\a\be(2\be-\ga)(\a+\be-\ga+r)+\ga(\ga-\be)r m]^2[-\psi_1+2\psi_2 r m + 27\a\ga^2(\ga-\be)r^2 m^2]=0\,,	
\end{equation}
where
\begin{align}
	\psi_1 &= \a(\ga-\be)(\a-\be+r)^2[4\a\be(\ga-\a)(\ga-\be)+\ga^2r^2]\,, \\
	\psi_2 &= 2\a^2(\ga-\be)^2(9\be\ga-8\a\be-\a\ga) + 3\a\ga(\ga-\be)(3\be\ga-2\a\be-\a\ga)r \notag\\
			&\quad+3\a\ga^2(\ga-\be)r^2+2\ga^3r^3\,.
\end{align}
\end{subequations}

The zero $m=m_0$ arising from the first (linear) factor in \eqref{mast_1} does not give a valid bifurcation point for internal
equilibria. 

The second (quadratic) factor in \eqref{mast_1} provides two potential solutions. However,
because $\psi_1\ge0$, one is negative. Therefore, the critical value we are looking for is given by
\begin{equation}\label{m_ast}
	m^\ast = \frac{1}{27\a\ga^2(\ga-\be)r}\left(-\psi_2 + \sqrt{\psi_2^2+27\a\ga^2(\ga-\be)\psi_1}\right)\,.
\end{equation}
At this value, two equilibria with non-zero allele frequencies collide and annihilate each other. 
Thus, $m^\ast$ is the critical value at which a saddle-node bifurcation occurs. 
This gives an admissible bifurcation if both equilibria are internal (hence admissible)
for either $m<m^\ast$ or $m>m^\ast$.  

If $p_1^\ast=p_2^\ast$, i.e., if $R=0$, a pitchfork bifurcation could occur at $m^\ast$. As a function of
$\a$, $\be$, and $\ga$, the condition $p_1^\ast=p_2^\ast$ can be satisfied at $m^\ast$ only for three different values of $r$, of
which at most two can be positive. It can be shown that at each of these values, one of the emerging zeros of $P(p)$
does not give rise to an admissible equilibrium (because $D>0$ there). Thus, only a saddle-node bifurcation can occur.

Simple and instructive series expansions for $m^\ast$ may be found in the Online Supplement of Bank et al.\ (2012).

\subsection{No migration}\label{OS_haploid_no_mig}
We assume $m=0$. From Section \ref{section_internal_equi}, we obtain the following properties of internal equilibria $(p,q,D)$. 
The LD is given by
\begin{equation}\label{D_eq_m0}
	D = D(p,q)= p(1-p)\frac{\ga q-\a}{\be-\ga+\ga p}\,,
\end{equation}
where $p\ne1-\be/\ga$; cf.\ \eqref{D_eq}. For admissibility, \eqref{D_admiss_cond} needs to be satisfied.
For given $p$ and if $\be\ne0$, the coordinate $q$ of an internal equilibrium can assume only one of the following forms:
\begin{equation}\label{q_12}
	q_{1,2}(p) = \frac12\left(1 \pm \sqrt{1-\frac{4\a(\ga-\a)p(1-p)}{\be(\ga-\be)}}\right) \,.
\end{equation}
By Theorem \ref{theorem_max3_equil}, for given $p$, at most one of $q_1=q_1(p)$ or $q_2=q_2(p)$ can give rise to an equilibrium.

The following theorem characterizes the equilibrium and stability structure.

\begin{theorem}\label{theorem_m0_1equ}
Suppose \eqref{OS_conditions} and $m=0$.

(a) The haploid dynamics \eqref{OS_hapdynp} admits at most one internal equilibrium.

(b) Depending on the parameters, the internal equilibrium is given by either \newline
$(p,q_1(p),D(p,q_1(p)))$ or $(p,q_2(p),D(p,q_2(p)))$, where $p$ is one of the two zeros of
\begin{equation}\label{P(p)_m0}
	P(p) = \ga^2 r^2 p^2 - r[2\be(\ga-\be)(\ga-2\a)+r\ga^2]p + \be(\ga-\be)(\a-\be-r)(\a+\be-\ga-r)\,,
\end{equation} 
and $q_i(p)$ and $D(p,q_i(p))$ are given by \eqref{q_12} and \eqref{D_eq_m0}, respectively.

(c) An internal equilibrium exists if and only if both $\Mtwo$ and $\Mthree$ are asymptotically stable. This is the case if and only if
\begin{equation}\label{int_eq_existence}
	\ga>\a \;\text{ and }\; \be>0 \;\text{ and }\; r>\a-\be\,.
\end{equation}

(d) The internal equilibrium is unstable whenever it exists.

(e) If \eqref{int_eq_existence} does not hold, then $\Mthree$ is globally asymptotically stable.
\end{theorem}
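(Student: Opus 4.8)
The plan is to treat parts (a)--(b) as an algebraic specialization of the general equilibrium analysis of Section~\ref{section_internal_equi}, and then to deduce (c)--(e) from the boundary analysis of Lemma~\ref{lemma_boundary}, an index count, and Lyapunov functions. First I would set $m=0$ in the equilibrium machinery. The factor $[m+(\ga-\a)(1-p)]$ that multiplies $P(p)$ degenerates to $(\ga-\a)(1-p)$, whose root $p=1$ is the extraneous fixation value; dividing it out collapses the cubic $P$ of \eqref{P(p)} to the quadratic \eqref{P(p)_m0}. Hence any internal equilibrium has $\hat p$ a zero of \eqref{P(p)_m0}, with $\hat q$, $\hat D$ recovered from \eqref{q_12} and \eqref{D_eq_m0}; this is (b). For (a) I would show that, among the at most two roots of \eqref{P(p)_m0} and the branch choice $q_1$ or $q_2$ allowed by Theorem~\ref{theorem_max3_equil}(b), at most one combination meets the admissibility constraints \eqref{D_admiss_cond}. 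The leverage is the preceding lemma forcing $\hat D<0$: inserting \eqref{q_12} into \eqref{D_eq_m0} and chasing the signs of $\ga\hat q-\a$ and $\ga\hat p-(\ga-\be)$ eliminates one branch and leaves a single admissible equilibrium.

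For (c) I would characterize existence by locating an admissible zero of \eqref{P(p)_m0} in $(0,1)$. Reading off the leading coefficient $\ga^2r^2>0$ and evaluating $P(0)=\be(\ga-\be)(\a-\be-r)(\a+\be-\ga-r)$ together with $P(1)$, the intermediate value theorem and the admissibility signs pin down the parameter range in which a feasible root exists. The cleaner bookkeeping is to tie this to boundary stability: specializing Lemma~\ref{lemma_boundary} to $m=0$, $\Mtwo$ is asymptotically stable iff $\be>0$, $\ga>\a$, $r>\a-\be$, while under the standing hypothesis \eqref{OS_conditions} the equilibrium $\Mthree$ is linearly stable. Thus \eqref{int_eq_existence} is exactly the statement that both $\Mtwo$ and $\Mthree$ are stable, and I would verify that the admissible root of \eqref{P(p)_m0} is present precisely in this range, interpreting the appearance of the interior equilibrium as the transcritical bifurcation through which $\Mtwo$ gains stability.

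Part (d) is where I would invoke an index theorem on the state space $\S_4$. When $m=0$ every boundary face is invariant, so I would first record the indices of all equilibria: the two sinks $\Mtwo$, $\Mthree$, the relevant single-locus equilibria $\SA$, $\SB$ (and $\mathbf R_0$ when $r=0$), and the unique interior equilibrium from (a). Since the total index over $\S_4$ is constrained, two coexisting stable nodes force the unique interior equilibrium to carry the index of a saddle; in particular it cannot be a sink, so it has an eigenvalue with positive real part and is unstable. Making this rigorous — handling the non-transverse invariant boundary, confirming hyperbolicity, and correctly accounting for the boundary equilibria — is the main obstacle; as a fallback I would linearize \eqref{OS_hapdynp} directly at the explicit equilibrium and extract the positive eigenvalue from the Jacobian, using $\hat D<0$ and $P(\hat p)=0$ to simplify.

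Finally, for (e) I would prove global convergence to $\Mthree$ by LaSalle's invariance principle. When \eqref{int_eq_existence} fails while \eqref{OS_conditions} holds, $\Mtwo$ is unstable, no interior equilibrium exists by (c), and $\Mthree$ is the only stable equilibrium by Lemma~\ref{lemma_boundary}. I would then exhibit monotone Lyapunov functions built from the ratios $X=(1-q)/(1-p)$, $Y=(1-q)/p$, and $Z=x_2x_3/(x_1x_4)$ employed in Proposition~\ref{DMI_ness_cond} and in the $D\le0$ lemma: in each of the three ways \eqref{int_eq_existence} can fail, one of these ratios decreases strictly off the boundary, driving the loss of allele $a$ or $B$ and hence convergence to $\Mthree$. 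The delicate point is uniformity — ensuring the chosen function is genuinely monotone throughout the interior and that its limit set is exactly $\Mthree$ — which I expect to require combining two of the ratios and invoking the absence of interior and other stable equilibria to exclude residual boundary limit sets.
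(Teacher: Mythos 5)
Your proposal follows the paper's own route: (a)--(b) by specializing the equilibrium algebra of Section~\ref{section_internal_equi} to $m=0$ and using $\hat D<0$ to cut down the admissible roots, (c)--(d) by combining the boundary stability conditions of Lemma~\ref{lemma_boundary} with Hofbauer's index theorem, and (e) by Lyapunov functions. One mechanical step should be corrected in (b): dividing out the external factor $[m+(\ga-\a)(1-p)]$ does \emph{not} reduce the cubic \eqref{P(p)} to the quadratic \eqref{P(p)_m0} --- that factor sits outside $P$, and removing it still leaves a cubic. What actually happens is that at $m=0$ the cubic \eqref{P(p)} itself acquires the root $p=1$ and factors as $\a(p-1)$ times \eqref{P(p)_m0}; since $p=1$ is a boundary value, internal equilibria are zeros of the quadratic factor. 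Two further small points: at $m=0$ the equilibria $\SA$ and $\SB$ degenerate into monomorphic states, so the only boundary equilibria entering the index count are the four $\mathbf{M}_i$; and the difficulty you anticipate with the invariant boundary is exactly what the saturation device of Remark~\ref{IndexTheorem} resolves --- only externally stable boundary equilibria contribute, $\mathbf{M}_1$ and $\mathbf{M}_4$ never do, so when $\Mtwo$ and $\Mthree$ are both stable their indices sum to $2$ and there must be an interior equilibrium of index $-1$, which cannot be a sink. That is precisely the paper's argument for (c) and (d).
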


This theorem complements the results derived by Feldman (1971) and Rutschman (1994) for the discrete-time dynamics of the 
haploid two-locus selection model. Rutschman proved global convergence to a boundary equilibrium for all parameter combinations
for which no internal equilibrium exists. If transformed to the parameters used by Rutschman, condition \eqref{int_eq_existence}
yields precisely the cases not covered by Rutschman's Theorem 14. Because our model is formulated in continuous time, the
internal equilibrium can be determined by solving quadratic equations. This is instrumental for our proof, as is the following index theorem.

\begin{remark}[Hofbauer's index theorem]\label{IndexTheorem}\rm
Theorem 2 in Hofbauer (1990) states the following. For every dissipative semiflow on $\Real^n_+$ such that all fixed points are regular, the sum of the indices of all saturated equilibria equals +1.

In our model, the index of an equilibrium is $(-1)^m$, where $m$ is the number of negative eigenvalues 
(they are always real). An internal equilibrium is always saturated. If it is asymptotically stable, it has index 1. Equilibria on the boundary of the simplex
are saturated if and only if they are externally stable. This is the case if and only if no gamete that is missing at the
equilibrium can invade. Because $\S_4$ is attracting within $\Real^4_+$, the index of an asymptotically stable (hence, saturated) boundary equilibrium is 1. 
\end{remark}

\begin{proof}[Outline of the proof of Theorem \ref{theorem_m0_1equ}]
Statements (a) and (b) follow from four technical lemmas that provide properties of the polynomial $P(p)$ and resulting admissibility conditions for $(p,q,D$).

The proof of (c) and (d) is the mathematically most interesting part. In view of the above,
it remains to prove that the internal equilibrium exists if \eqref{int_eq_existence} holds and that it is 
unstable. This follows readily from the index theorem of Hofbauer (1990) in Remark \ref{IndexTheorem}. 
In our model, the only boundary equilibria are the four monomorphic states. $M_1$ and $M_4$ are never saturated because they are
unstable within $\S_4$. $\Mtwo$ and $\Mthree$ are saturated if and only if they are asymptotically stable within $\S_4$. Then, $\text{ind}(\Mtwo)=\text{ind}(\Mthree)=1$. Hence there must exist an internal equilibrium with index -1. Such an equilibrium cannot be stable.
Because $\Mtwo$ and $\Mthree$ are both asymptotically stable if and only if \eqref{int_eq_existence} holds, statements (c) and (d) 
are proved. 

(e) follows with the help of several simple Lyapunov functions.
\end{proof}

\subsection{Weak migration}\label{hapappm0}
Theorem \ref{theorem_weak_mig_mult} and Remark \ref{rem:weak_mig_mult} in conjunction with Theorem \ref{theorem_m0_1equ} and Lemma \ref{lemma_boundary}
yield the equilibrium configuration for weak migration.

\begin{theorem}\label{OS_theorem_weakmig}
If $m>0$ is sufficiently small, the following equilibrium configurations occur. 

(a) If \eqref{int_eq_existence} holds, there exists one unstable internal equilibrium and one asymptotically stable internal
equilibrium (the perturbation of $\Mthree$). In addition, the monomorphic equilibrium $\Mtwo$ is asymptotically stable. Neither $\SA$ nor $\SB$ is admissible.

(b) Otherwise, i.e., if $\ga<\a$ or $\be<0$ or $r<\a-\be$, the perturbation of the equilibrium $\Mthree$ is globally 
asymptotically stable (at least if $\ga$ is small).
The equilibrium $\Mtwo$ is unstable, and the equilibria $\SA$ and $\SB$ may be admissible. If $\SA$ or $\SB$ 
is admissible, it is unstable.
\end{theorem}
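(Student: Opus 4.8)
The plan is to treat the full weak-migration dynamics \eqref{OS_hapdyn} as a perturbation, in the single small parameter $m$, of the zero-migration dynamics whose complete equilibrium and stability structure is supplied by Theorem \ref{theorem_m0_1equ}. Although the continent--island term $-mx_i+\de_{2i}m$ is not of the symmetric backward-migration form assumed in Theorem \ref{theorem_weak_mig_mult}, the machinery behind that theorem --- the implicit function theorem for the persistence and continuous movement of hyperbolic fixed points, the Hartman--Grobman theorem for the persistence of their stability type, and the upper semicontinuity of the chain-recurrent set for global convergence --- applies verbatim here. The first step is therefore to record, from Theorem \ref{theorem_m0_1equ} together with the boundary analysis of Section \ref{hapappbound}, the equilibria of \eqref{OS_hapdynp} at $m=0$ and to check that (generically, for $r>0$) all of them are hyperbolic and that the chain-recurrent set consists of exactly these finitely many equilibria. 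Because $\ga>0$ means epistasis is not weak, statement (c) of the perturbation theorem must be invoked through Remark \ref{rem:weak_mig_mult}: global convergence for small $m$ follows once the $m=0$ chain-recurrent set is known to be the (hyperbolic) equilibrium set, which is precisely what Theorem \ref{theorem_m0_1equ} delivers.

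For part (a) I would assume \eqref{int_eq_existence}. At $m=0$, Theorem \ref{theorem_m0_1equ}(c),(d) give exactly one internal equilibrium, which is unstable, with both $\Mtwo$ and $\Mthree$ asymptotically stable; $\mathbf{M}_1,\mathbf{M}_4$ are unstable, and neither $\SA$ nor $\SB$ is admissible since their admissibility requires $\a>\ga$ resp. $\be<0$, both excluded by \eqref{int_eq_existence}. Perturbing: the unstable internal equilibrium moves to a unique nearby unstable internal equilibrium; $\Mthree$, being asymptotically stable, perturbs to a unique asymptotically stable equilibrium, and since immigration of $aB$ injects the $a$ and $B$ alleles (which recombination spreads to all four haplotypes) this perturbed state is internal --- the ``perturbation of $\Mthree$''. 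For $\Mtwo$ I would not merely perturb but observe it is an exact equilibrium for every $m$ (the island state coincides with the continental input $x_2=1$); by Lemma \ref{lemma_boundary} it is asymptotically stable when $m>\max[-\be,\a-\ga,\a-\be-r]$, and under \eqref{int_eq_existence} the right-hand side is negative, so $\Mtwo$ is asymptotically stable for all small $m>0$. Finally Lemma \ref{lemma_boundary}(b), or directly the inequalities $m<\a-\ga<0$ and $m<-\be<0$, rules out $\SA,\SB$, giving precisely the configuration of (a).

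For part (b) I would assume \eqref{int_eq_existence} fails. Then Theorem \ref{theorem_m0_1equ}(e) makes $\Mthree$ globally asymptotically stable at $m=0$ (for $\ga$ small, matching the proviso), so the zero-migration chain-recurrent set is the finite collection of equilibria with $\Mthree$ the unique sink. By parts (a)--(c) of the perturbation theorem (via Remark \ref{rem:weak_mig_mult}), for small $m>0$ every trajectory converges to an equilibrium, $\Mthree$ perturbs to a unique asymptotically stable equilibrium, and every other equilibrium perturbs to an unstable one or leaves $\S_4$; since there is a unique sink, that perturbation of $\Mthree$ is globally asymptotically stable. Instability of $\Mtwo$ follows because, \eqref{int_eq_existence} having failed while $\Mthree$ is globally stable, $\Mtwo$ cannot be asymptotically stable at $m=0$, and Hartman--Grobman preserves this. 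The equilibria $\SA,\SB$ may now become admissible for small $m$ (their thresholds $\a-\ga$ resp. $-\be$ can be positive once $\ga<\a$ resp. $\be<0$), but Lemma \ref{lemma_boundary} shows they are unstable whenever a perturbation of $\Mthree$ is the stable equilibrium.

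The main obstacle is the first paragraph's verification, namely that at $m=0$ the chain-recurrent set contains nothing beyond the hyperbolic equilibria --- that the pure two-locus selection--recombination flow under the DMI fitness scheme admits no cycles or other recurrent sets. In general two-locus continuous-time selection with recombination can cycle (Akin), so this cannot be taken for granted; it must be extracted from the special structure of Section \ref{OS_haploid_no_mig}, using the index theorem of Remark \ref{IndexTheorem} to pin down the equilibrium count and the Lyapunov functions behind Theorem \ref{theorem_m0_1equ}(e) to exclude recurrent dynamics away from the equilibria. Securing hyperbolicity at the degenerate value $r=0$ and along the pitchfork thresholds flagged in Section \ref{section_mast} is the remaining delicate point; away from that measure-zero set the perturbation goes through cleanly.
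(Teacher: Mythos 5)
Your proposal is correct and follows essentially the same route as the paper, whose proof consists precisely of invoking Theorem \ref{theorem_weak_mig_mult} and Remark \ref{rem:weak_mig_mult} together with Theorem \ref{theorem_m0_1equ} and Lemma \ref{lemma_boundary}; you simply spell out the details the paper leaves implicit. Your closing observation --- that the global-convergence claim hinges on the $m=0$ chain-recurrent set consisting only of hyperbolic equilibria, which is also the likely source of the ``at least if $\ga$ is small'' proviso in part (b) --- correctly identifies the one point the paper's one-line proof glosses over.
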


Throughout, we denote the stable internal equilibrium by $\IDMI$. To first order in $m$ its coordinates are
\begin{equation}
	 \left(1- \frac{m(\a+r)}{\a(\a -\be+r)} , 
			\frac{m(\ga -\be+r)}{(\ga-\be)(\a -\be+r)}, - \frac{m}{\a -\be+r} \right).
\end{equation}

\subsection{The complete equilibrium and stability structure}\label{OS_main_result}
Now we are in the position to prove our main results about the equilibrium and bifurcation structure. We continue to assume
\eqref{OS_conditions}, so that a DMI can occur. Throughout, we always consider bifurcations as a function of (increasing) $m$.

We define
\begin{subequations}
\begin{align}
	\mA &= \frac{(\a -\ga )(\ga-\be)}{\a}\biggl(1+\frac{\a+\be -\ga }{r}\biggr)\,, \\
	\mB &= \frac{-\be\a}{\ga-\be} \biggl(1+\frac{\ga-\be -\a}{r}\biggr)\,, \\
	m_2 &= \a-\be-r  
\end{align}	
\end{subequations}
and note that $\mA$, $\mB$, and $m_2$ are the critical values of $m$ above which $\SA$, $\SB$, and $\Mtwo$, respectively, are asymptotically
stable provided they are admissible. If we set
\begin{subequations}
\begin{numcases}
	{\mmax^- =} 
    	\mA    & if $\ga-\a < 0 < \ga-\be<\a$ and $r > \ga-\be$,  \label{mminus1} \\
		\mB    & if  $\be<0<\a<\ga-\be$ and $r > \a$,      \label{mminus2} \\
		 m_2    & if $r\le\min[\a,\a-\be,\ga-\be]$, \label{mminus3}\end{numcases}
\end{subequations}
then, by Lemma \ref{lemma_boundary}, all boundary equilibria are repelling if and only if $m<\mmax^-$. As the following theorem shows, if $m<\mmax^-$, an asymptotically stable internal equilibrium exists which, presumably, is globally attracting. Hence, a DMI will evolve from every initial condition. Of course, $\mmax^-$ can be zero (if $0<\a<\be<\ga$ and $r\le\be-\a$). We also recall the definition of $m^\ast$ from \eqref{m_ast}.

\begin{theorem}\label{theorem_main1}
The following three types of bifurcation patterns can occur:

Type 1. 
\begin{itemize}[itemsep=-1ex]
\item If $0<m<m^\ast$, there exist two internal equilibria; one is asymptotically stable $(\IDMI)$, the other $(\mathbf{I}_0)$ is unstable.
The monomorphic equilibrium $\Mtwo$ is asymptotically stable.
\item At $m=m^\ast$, the two internal equilibria collide and annihilate each other by a saddle-node bifurcation.
\item If $m>m^\ast$, $\Mtwo$ is the only equilibrium; it is asymptotically stable and, presumably, globally stable.  
\end{itemize}

Type 2. There exists a critical migration rate $\tilde m$ satisfying $0<\tilde m<m^\ast$ such that:
\begin{itemize}[itemsep=-1ex] 
\item If $0<m<\tilde m$, there is a unique internal equilibrium $(\IDMI)$. It is asymptotically stable and, presumably,
globally stable.
\item At $m=\tilde m$, an unstable equilibrium $(\mathbf{I}_0)$ enters the state space by an exchange-of-stability
bifurcation with a boundary equilibrium.
\item If $\tilde m<m<m^\ast$, there are two internal equilibria, one asymptotically stable $(\IDMI)$, 
the other unstable $(\mathbf{I}_0)$, and one of the boundary equilibria is asymptotically stable.
\item At $m=m^\ast$, the two internal equilibria merge and annihilate each other by a saddle-node bifurcation.
\item If $m>m^\ast$, a boundary equilibrium asymptotically stable and, presumably, globally stable.
\end{itemize}

Type 3.
\begin{itemize}[itemsep=-1ex]
\item If $0<m<\mmax^-$, a unique internal equilibrium $(\IDMI)$ exists. It is asymptotically stable and, presumably,
globally stable.
\item At $m=\mmax^-$, $\IDMI$ leaves the state space through a boundary equilibrium by an exchange-of-stability
bifurcation. 
\item If $m>\mmax^-$, a boundary equilibrium is asymptotically stable and, presumably, globally stable.
\end{itemize}
\end{theorem}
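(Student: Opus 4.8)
The plan is to track every equilibrium as a function of the single bifurcation parameter $m$, starting from the well-understood configuration at $m=0^+$, and to combine four ingredients: the weak-migration perturbation result (Theorem~\ref{OS_theorem_weakmig}), the fact that every internal equilibrium has its $p$-coordinate among the at most three zeros of the cubic $P$ (Theorem~\ref{theorem_max3_equil}), the monotone stability behaviour of the boundary equilibria (Lemma~\ref{lemma_boundary}), and Hofbauer's index theorem (Remark~\ref{IndexTheorem}) as a global bookkeeping constraint. The three bifurcation types should emerge from a case distinction on the starting configuration. Type~1 arises precisely when \eqref{int_eq_existence} holds: then Theorem~\ref{OS_theorem_weakmig}(a) applies and for small $m$ one already has the stable/unstable internal pair $\IDMI,\mathbf{I}_0$ together with an asymptotically stable $\Mtwo$. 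Types~2 and~3 arise in the complementary regime in which \eqref{OS_conditions} still holds but \eqref{int_eq_existence} fails, so that Theorem~\ref{OS_theorem_weakmig}(b) gives a single stable internal equilibrium $\IDMI$ (the perturbation of $\Mthree$) and no stable boundary equilibrium for small $m$.

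First I would dispose of Type~1. Under \eqref{int_eq_existence} one has $m_2=\a-\be-r<0$, so $\Mtwo$ is asymptotically stable for every $m>0$ by Lemma~\ref{lemma_boundary}, while Lemma~\ref{lemma_boundary}(b) shows $\SA$ and $\SB$ are never stable here; hence the only admissible bifurcations are among internal equilibria. A preliminary step, following the admissibility analysis of Section~\ref{section_mast}, is to establish that at most two of the three zeros of $P$ are simultaneously admissible, the third yielding $\hat D>0$. The index theorem then forces $+1=\mathrm{ind}(\Mtwo)+\mathrm{ind}(\IDMI)+\mathrm{ind}(\mathbf{I}_0)=1+1-1$ for small $m$, so the pair $\IDMI,\mathbf{I}_0$ can disappear only by colliding. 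By the analysis of $m^\ast$ in Section~\ref{section_mast} the unique admissible collision is a saddle-node at $m=m^\ast$; for $m>m^\ast$ no internal equilibrium can exist, since $\Mtwo$ remains stable and a surviving internal equilibrium would push the index sum past $+1$. This reproduces the Type~1 pattern.

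For the complementary regime I would locate the first boundary equilibrium to gain stability. By Lemma~\ref{lemma_boundary} at most one of $\Mtwo,\SA,\SB$ is ever stable, each switches from unstable to stable exactly once as $m$ increases, and the relevant threshold is the minimum of the admissible critical values among $\mA,\mB,m_2$, as encoded in \eqref{mminus1}--\eqref{mminus3}. For $0<m<\mmax^-$ no boundary equilibrium is saturated, so the index theorem requires a stable internal equilibrium, which by Theorem~\ref{OS_theorem_weakmig}(b) and the cubic count is the unique $\IDMI$. The decisive step is to analyse the transcritical (exchange-of-stability) bifurcation at this boundary threshold: tracking which zero of $P$ reaches the simplex boundary determines whether $\IDMI$ itself exits (Type~3, the index passing from $\{\IDMI\}$ to $\{$boundary$\}$) or whether instead a new unstable internal equilibrium $\mathbf{I}_0$ is created while $\IDMI$ survives (Type~2, after which $\IDMI$ and $\mathbf{I}_0$ annihilate at $m^\ast$). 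In the latter case one must verify the ordering $0<\tilde m<m^\ast$, with $\tilde m$ the boundary threshold, so that the saddle-node occurs strictly inside the state space.

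The main obstacle is exactly this Type~2 versus Type~3 dichotomy: it cannot be settled by the index theorem alone, since both scenarios are index-consistent, and it requires a genuine local analysis of how the admissible roots of $P(p)$ move relative to the admissibility window \eqref{D_admiss_cond} as $m$ crosses the boundary threshold. Concretely, I would compute the direction of the transcritical branch, i.e. the sign of $dp/dm$ for the root colliding with the boundary (equivalently the sign of the relevant factor of $P$ together with that of $\hat D$ near the boundary equilibrium), and compare it against the location of the critical points $p^\ast_{1,2}$ that govern $m^\ast$. The remaining stability assignments then follow mechanically, the index being $-1$ at each newly created interior equilibrium and continuity of eigenvalues carrying stability along each hyperbolic branch; the global (``presumably'') statements are supported, where available, by the Lyapunov functions of Section~\ref{hapappbound}, but they are not needed for the local bifurcation classification asserted here.
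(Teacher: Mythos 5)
Your proposal follows essentially the same route as the paper's proof: start from the weak-migration configurations of Theorem \ref{OS_theorem_weakmig}, use Lemma \ref{lemma_boundary} and Theorem \ref{theorem_max3_equil} to constrain how equilibria can appear or disappear, and let Hofbauer's index theorem do the global bookkeeping, with Type 1 settled by the forced saddle-node at $m^\ast$ and Types 2 and 3 distinguished by whether an unstable equilibrium enters through a boundary equilibrium or $\IDMI$ exits through one. The only caveat is that the "main obstacle" you identify --- deciding which of Type 2 or Type 3 occurs for given parameters via the direction of the transcritical branch --- is not actually needed for this theorem (which only asserts that these are the possible patterns); that assignment is the content of Theorem \ref{theorem_main2}, whose proof the paper defers to the Online Supplement, while here the paper merely lists both alternatives and excludes everything else (pitchforks, simultaneous exits, re-entry after exit) by the index and admissibility arguments you already invoke.
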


\begin{figure}
\begin{center}
	\includegraphics[width=12.0cm]{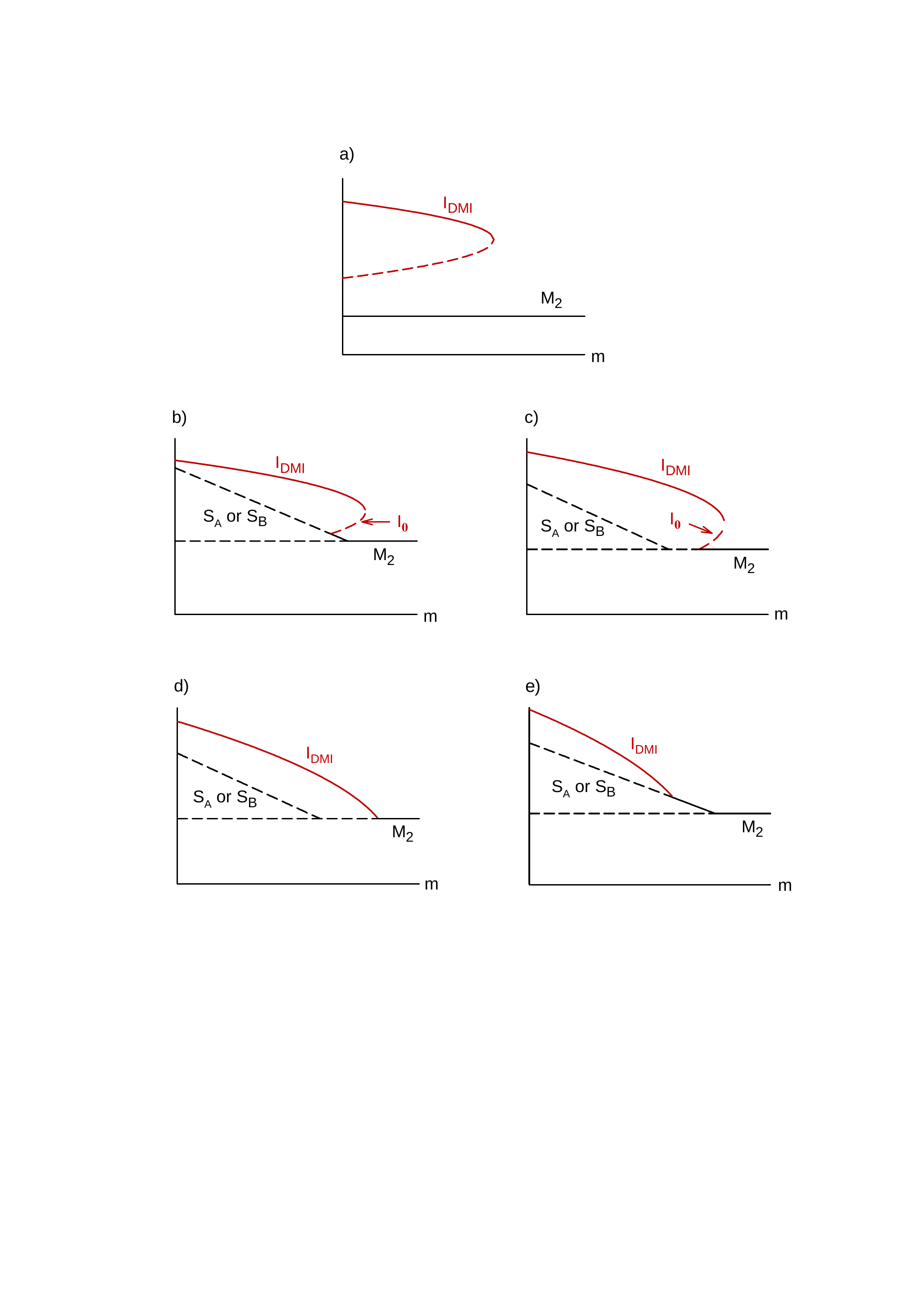}
	\caption{\small Bifurcation patterns in the DMI model according to Theorem \ref{theorem_main1}. Panel a shows the bifurcation pattern of Type 1.
	Panels b and c display the two cases that are of Type 2, and panels d and e those of Type 3.}
	 \label{Survey_DMI_fig1}
\end{center}
\end{figure}

\begin{proof}
Theorem \ref{OS_theorem_weakmig} provides all equilibrium configurations for small $m$. Lemma \ref{lemma_boundary} provides
control over the boundary equilibria. As $m$ increases, they can vanish but not emerge. They can also become asymptotically
stable as $m$ increases. For sufficiently large $m$, there is always a globally asymptotically stable boundary equilibrium. 
By Theorem \ref{theorem_max3_equil}, the number of internal equilibria is at most three. In
addition, internal equilibria can emerge or vanish only either by a saddle-node bifurcation 
(Section \ref{section_mast}) or because an equilibrium
enters or leaves $\S_4$ through one of boundary equilibria, when an exchange of stability occurs. 
A bifurcation involving the two internal equilibria can occur at most
at one value of $m$, namely at $m^\ast$ \eqref{m_ast}. An exchange-of-stability bifurcation can occur only at the
values $\mA$, $\mB$, or $m_2$. If it occurs, the respective boundary equilibrium is asymptotically stable for every
larger $m$ for which it is admissible.
By the index theorem of Hofbauer (1990), Remark \ref{IndexTheorem}, the sum of the indices of all saturated equilibria is 1.

If Case 1 of Theorem \ref{OS_theorem_weakmig} applies, then $\Mtwo$ is asymptotically stable for every $m>0$, and
it is the only boundary equilibrium. Hence, its index is 1. The index of the stable internal equilibrium is also 1. Because the sum of the indices
of the internal equilibria must be 0, the index of the unstable equilibrium is -1. Because at most one 
bifurcation involving the two internal equilibria can occur and because for large $m$, $\Mtwo$ is globally asymptotically stable,
the bifurcation must be of saddle-node type in which the equilibria
collide and annihilate each other (but do not emerge). In principle, the internal equilibria could also leave $\S_4$
through a boundary equilibrium (in this case, it must be $\Mtwo$). However, by the index theorem, they can do so only simultaneously.
This occurs if and only if $m^\ast=m_2$, which is a non-generic degenerate case. 
Because the sum of indices of the internal equilibria must be zero, no
equilibrium can enter the state space. These considerations settle the bifurcation pattern of Type 1.

If Case 2 of Theorem \ref{OS_theorem_weakmig} applies, then, for small $m$, the boundary equilibria are unstable, 
hence not saturated,
and do not contribute to the sum of indices. Since then the indices of internal equilibria must sum up to 1, 
the only possible bifurcation that does not entail the stability of a boundary equilibrium
would be a pitch-fork bifurcation of the internal equilibrium which, by Section \ref{section_mast}, does not
occur. Indeed, because $m^\ast$ is the only value at which a bifurcation among internal
equilibria can occur, and because for large $m$ a boundary equilibrium is globally asymptotically stable, the three
equilibria emerging by a pitchfork bifurcation would have to leave the state space through boundary equilibria. This, however,
cannot occur, as follows easily from the results about linear stability in Section \ref{hapappbound}. 
Thus, any further bifurcations involve a boundary equilibrium. There are two possibilities. 

(i) An equilibrium enters $\S_4$ at some value
$\tilde m$ (which can only be one of $\mA$, $\mB$, or $m_2$) through
one of the unstable boundary equilibria by an exchange-of-stability bifurcation. If $m>\tilde m$, there is one
asymptotically stable boundary equilibrium, an unstable internal equilibrium (the one that entered $\S_4$), and 
one asymptotically stable internal equilibrium. Now a reasoning analogous to that applied above to Case 1 of 
Theorem \ref{OS_theorem_weakmig} establishes the bifurcation pattern of Type 2.

(ii) The internal equilibrium leaves $\S_4$ by exchange of stability through one of the boundary equilibria at $\mmax^-$. 
This becomes asymptotically stable then and, presumably, globally stable. At larger values of $m$ no equilibrium can enter $\S_4$ through one of the
(other) unstable boundary equilibria, because this would either lead to two simultaneously stable boundary equilibria, 
which is impossible (Lemma \ref{lemma_boundary}), or this had to occur at the same value at which the hitherto stable 
boundary equilibrium
merges with $\Mtwo$ and leaves the state space. This, too, is impossible because the sum of the indices of the
new stable boundary equilibrium and the new unstable internal equilibrium would be zero. 
Thus, we have established the bifurcation pattern of Type 3 and excluded all other possibilities.
\end{proof}

Our final goal is to assign the respective parameter combinations to the three types of bifurcation patterns determined above.
To this aim we define four selection scenarios that reflect different biological situations. We say that the fitness landscape is of slope type if at least one of the one-mutant neighbors ($ab$ or $AB$) has fitness intermediate between the continental ($aB$) and the island type ($Ab$). Otherwise, there is a double-peak landscape.

\emph{Selection scenario 1}: $0 < \a < \be <\ga$. It represents the parameter regime in which ``selection against hybrids" (of continental and island haplotypes) is driving DMI evolution. The fitness landscape has two peaks with the higher at the continental type. The genetic incompatibility is strong ($\ga$ large).

\emph{Selection scenario 2}: $0 < \be < \a < \ga$. It represents an intermediate parameter regime in which the two mechanisms of ``selection against hybrids" and ``selection against immigrants" (below) interfere. This is also a double-peak landscape, but with maximum at the island type. The incompatibility is strong.

\emph{Selection scenario 3}: $\be<\ga<\min\{\a,\a+\be\}$. It represents part of the parameter regime in which ``selection against immigrants" drives DMI evolution. The fitness landscape is of slope type and the incompatibility is weak. $\be$ may be positive or negative. 

\emph{Selection scenario 4}: $\be < 0 < \a < \ga -\be$. It represents part of the parameter regime in which ``selection against immigrants" drives DMI evolution. The fitness landscape is of slope type and there is strong local adaptation ($\be < 0 < \a)$.

To formulate the main result, we need the following critical recombination rates:
\begin{subequations}
\begin{align}
	\rA &= (\ga-\a)\,\frac{3(\ga-\be)-\a}{2\ga-\a}\,, \\
	\rB &= \be\,\frac{3\a+\be-\ga}{\be+\ga}\,, \\
	r_2 &= \frac{3\a(\ga-\be)-\sqrt{\a(\ga-\be)(4\be\ga+5\a\ga-9\a\be)}}{2\ga}\,.  
\end{align}	
\end{subequations}

\begin{theorem}\label{theorem_main2}
1. Bifurcation patterns of Type 1 occur in 

\noindent
Selection scenario 1;\newline 
Selection scenario 2 if and only if $r\ge\a-\be$.

2. Bifurcation patterns of Type 2 occur in

\noindent
Selection scenario 2 if and only if $r_2<r<\a-\be$;

\noindent
Selection scenario 3 if and only if one of the following holds:

(a) $r_2^\ast < r \le \ga-\be$,

(b) $r>\max[\ga-\be,\rA]$ and $\ga>\tfrac12\a$,

(c) $\ga-\be<r<\infty$ and $\ga=\tfrac12\a>3\be$,

(d) $\ga-\be < r < \rA$ and $\ga<\tfrac12\a$.

\noindent
Selection scenario 4 if and only if one of the following holds:

(a) $r_2^\ast < r \le \a$,

(b) $r>\max[\a,\rB]$ and $\ga>-\be$,

(c) $\a<r<\infty$ and $\ga=-\be<3\a+\be$,

(d) $\a < r < \rB$ and $\ga<-\be$.

3. Bifurcation patterns of Type 3 occur in

\noindent
Selection scenario 2 if and only if $r\le r_2$; \newline
Selection scenario 3 if and only if one of the following holds:

(a) $r\le\min[\ga-\be,r_2^\ast]$,

(b) $\ga-\be < r \le \rA$ and $\ga>\tfrac12\a$,

(c) $\ga-\be<r<\infty$ and $\ga=\tfrac12\a<3\be$. 

(d) $r\ge\max[\ga-\be,\rA]$ and $\ga<\tfrac12\a$.

\noindent
Selection scenario 4 if and only if one of the following holds:

(a) $r\le\min[\a,r_2^\ast]$,

(b) $\a < r \le \rB$ and $\ga>-\be$,

(c) $\a<r<\infty$ and $\ga=-\be\ge 3\a+\be$,

(d) $r\ge\max[\a,\rB]$ and $\ga<-\be$.
\end{theorem}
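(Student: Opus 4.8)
The statement is a bookkeeping result: it partitions the admissible parameter region \eqref{OS_conditions} into the three bifurcation patterns of Theorem~\ref{theorem_main1}, indexed by the four selection scenarios and the recombination rate $r$. My plan is therefore not to establish new dynamical facts but to assemble those already proved---Theorem~\ref{OS_theorem_weakmig} (the weak-migration configuration), Lemma~\ref{lemma_boundary} (stability and admissibility of boundary equilibria), the saddle-node value $m^\ast$ from \eqref{m_ast}, Theorem~\ref{theorem_max3_equil} (at most three internal equilibria), and the index argument of Remark~\ref{IndexTheorem}---into a two-level case analysis. The top-level split is between Case~1 and Case~2 of Theorem~\ref{OS_theorem_weakmig}, governed entirely by whether \eqref{int_eq_existence} holds; Case~1 forces Type~1, whereas Case~2 leads to Type~2 or Type~3. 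The second-level split, within Case~2, is decided by comparing the threshold $\mmax^-$ at which a boundary equilibrium first becomes stable with the saddle-node value $m^\ast$.

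\textbf{Top level and the value of $\mmax^-$.} First I would evaluate \eqref{int_eq_existence} in each scenario. In Scenario~1 one has $0<\a<\be<\ga$, so $\ga>\a$, $\be>0$, and $\a-\be<0\le r$; thus \eqref{int_eq_existence} holds unconditionally, Case~1 applies, $\Mtwo$ is stable for every $m>0$, and only Type~1 can occur. In Scenario~2 the same condition holds precisely when $r\ge\a-\be$, giving Type~1 there and pushing the regime $r<\a-\be$ into Case~2. In Scenarios~3 and~4 the condition fails (since $\ga<\a$, respectively $\be<0$), so Case~2 always applies. Within Case~2 I would then read off which boundary equilibrium first stabilizes, and the corresponding $\mmax^-\in\{m_2,\mA,\mB\}$, directly from the definition of $\mmax^-$ together with Lemma~\ref{lemma_boundary}: in Scenario~2 (and in Scenarios~3,4 when $r\le\ga-\be$, resp.\ $r\le\a$) the relevant equilibrium is $\Mtwo$ with $\mmax^-=m_2=\a-\be-r$; in Scenario~3 with $r>\ga-\be$ it is $\SA$ with $\mmax^-=\mA$; in Scenario~4 with $r>\a$ it is $\SB$ with $\mmax^-=\mB$.

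\textbf{Second level: Type~2 versus Type~3.} Both types begin, for small $m$, from the Case~2 configuration of a unique stable internal equilibrium with all boundary equilibria repelling. By the proof of Theorem~\ref{theorem_main1}, the exchange-of-stability event at $m=\mmax^-$ either injects a new \emph{unstable} internal equilibrium---so that the stable one survives to collide with it at the admissible saddle-node $m^\ast$, which is Type~2 and requires $\mmax^-<m^\ast$---or it \emph{absorbs} the stable internal equilibrium into the boundary, in which case the collision at $m^\ast$ is not realized among admissible (interior, $\hat D<0$) equilibria, which is Type~3 and corresponds to $\mmax^-\ge m^\ast$. Consequently the Type~2/Type~3 boundary is exactly the locus $\mmax^-=m^\ast$. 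Solving this equation for $r$ yields the critical recombination rates: equating $m_2=\a-\be-r$ with \eqref{m_ast} produces $r_2$ (and its analogue $r_2^\ast$ in Scenarios~3,4), while $\mA=m^\ast$ and $\mB=m^\ast$ produce $\rA$ and $\rB$, respectively. Determining the \emph{sign} of $\mmax^-(r)-m^\ast(r)$ on each side of these critical values then fixes which side is Type~2 and which is Type~3; the auxiliary dichotomies in the statement ($\ga\gtrless\tfrac12\a$, $\ga\gtrless-\be$, $\ga\gtrless3\be$) are precisely the conditions under which this sign, and hence the orientation of the transition, reverses.

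\textbf{Main obstacle.} The genuine difficulty is the last step: controlling the sign of $\mmax^-(r)-m^\ast(r)$. Because $m^\ast$ is the root \eqref{m_ast} of a quartic, built from the cubic and quartic quantities $\psi_1,\psi_2$, one must show that $\mmax^-(r)-m^\ast(r)$ vanishes only at the stated critical rate and changes sign there in the stated direction. This is where the case splits of Scenarios~3 and~4 proliferate: the difference is not monotone across the whole admissible $r$-range, its sign at the endpoints depends on $\ga-\tfrac12\a$ and $\ga+\be$, and the critical rates $\rA,\rB$ may fall on either side of the regime boundaries $\ga-\be$ and $\a$. A secondary technical point is admissibility: one must verify via \eqref{D_admiss_cond} and Theorem~\ref{theorem_max3_equil} that the colliding roots of $P$ in \eqref{P(p)} are genuine interior equilibria on the Type~2 side (so that the saddle-node is visible) and that no further interior equilibrium can appear, which the bound of at most three internal equilibria together with the index count of Remark~\ref{IndexTheorem} guarantees.
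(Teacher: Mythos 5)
Your two-level decomposition is the right skeleton, and it matches the architecture of the actual argument: the survey itself does not prove this theorem but defers to the Online Supplement of Bank et al.\ (2012), describing the proof as ``quite tedious,'' and that tedium lives precisely in the step you defer. The top level is fine: \eqref{int_eq_existence} holds unconditionally in Scenario 1, holds iff $r>\a-\be$ in Scenario 2, and fails in Scenarios 3 and 4, so Theorem \ref{OS_theorem_weakmig} correctly routes Scenario 1 (and Scenario 2 with large $r$) to Type 1 and everything else to the Type 2/Type 3 alternative; likewise your identification of which of $m_2$, $\mA$, $\mB$ plays the role of $\mmax^-$ in each regime follows from Lemma \ref{lemma_boundary} and \eqref{mminus1}--\eqref{mminus3}.

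The gap is in the second level. First, the classifier ``Type 2 iff $\mmax^-<m^\ast$, Type 3 iff $\mmax^-\ge m^\ast$'' is asserted rather than derived. What actually distinguishes the two patterns is the \emph{direction} of the transcritical bifurcation at $m=\mmax^-$: whether the root of $P$ crossing the boundary there moves into $\S_4$ (injecting $\mathbf{I}_0$) or out of it (absorbing $\IDMI$). That is a local property of the boundary equilibrium and of the root configuration of the cubic \eqref{P(p)}, and the index theorem of Remark \ref{IndexTheorem} is consistent with either outcome, so it cannot decide. One must either establish the crossing direction directly and then deduce the comparison with $m^\ast$, or prove that the saddle-node at $m^\ast$ involves two \emph{admissible} interior roots (checking $0<\hat p<1$, the correct branch of $q_{1,2}$, and \eqref{D_admiss_cond}) exactly when $\mmax^-<m^\ast$; you name this admissibility check but do not indicate how it is done, and it is not a formality. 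Second, the entire verifiable content of the theorem --- that $\rA$, $\rB$, $r_2$ are the solutions of $\mA=m^\ast$, $\mB=m^\ast$, $m_2=m^\ast$; the sign of $\mmax^-(r)-m^\ast(r)$ on each side; why the orientation reverses across $\ga=\tfrac12\a$ in Scenario 3 and $\ga=-\be$ in Scenario 4; and the extra conditions $\ga=\tfrac12\a\gtrless 3\be$ and $\ga=-\be\gtrless 3\a+\be$ in the degenerate subcases (c) --- is exactly the output of the computation you label the ``main obstacle'' and stop at. Finally, $r_2^\ast$ appears in your case list (as in the statement) but is never defined, so the Scenario 3 and 4 subcases (a) cannot even be checked as written. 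As it stands the proposal is a correct plan of attack, not a proof.
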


For the quite tedious proof, we refer to the Online Supplement of Bank et al.\ (2012).

The theorem shows that for selection scenario 1, in which the continental type is superior to all others, only the bifurcation pattern of Type 1 occurs (provided \eqref{OS_conditions} is assumed). Hence, if a DMI exists, it is only locally stable and initial conditions, or historical contingencies, determine whether differentiation between the populations is maintained or not. The only other situation in which the bifurcation pattern of Type 1 can occur is for the second double-peak scenario, provided linkage between the two loci is sufficiently tight. For slope-type fitnesses (selection scenarios 3 and 4) as well as for scenario 2 with strong recombination, the DMI is always globally stable for weak migration.

If, in selection scenario 4, $\ga(\a+\be)<\be(2\a+\be)$ and $\ga<-\be$ hold, then a bifurcation pattern of Type 3 occurs for every $r>0$. 
These two conditions are satisfied whenever $-\be>\max[2\a,\ga]$. 
Therefore, in the strong local adaptation scenario a globally asymptotically stable DMI occurs whenever the selection intensity on the two loci differs
by more than a factor of two. A bistable equilibrium pattern can occur in this scenario
only if the selection strength on both loci is sufficiently similar and the recombination rate is about as strong as the selection intensity.

In summary, two mechanisms can drive the evolution of parapatric DMIs. In a heterogeneous environment, a DMI can emerge as a by-product of selection against maladpated immigrants. In a homogeneous environment, selection against unfit hybrids can maintain a DMI. No DMI can be maintained if the migration rate exceeds one of the bounds given in Proposition \ref{DMI_ness_cond}.
Therefore, it is the adaptive advantage of single substitutions rather than the strength of the incompatibility that is the most important factor for the evolution of a DMI with gene flow. In particular, neutral DMIs can not persist in the presence of gene flow. Interestingly, selection against immigrants is most effective if the incompatibility loci are tightly linked, whereas selection against hybrids is most effective if they are loosely linked. Therefore, opposite predictions result concerning the genetic architecture that maximizes the rate of gene flow a DMI can sustain.

\parindent=0pt
\section{References}

{\everypar={\hangindent=.5cm \hangafter=1}

Akin E. 1979. The Geometry of Population Genetics.
Lect.\ Notes Biomath.\ 31. Berlin Heidelberg New York: Springer.

Akin E. 1982. Cycling in simple genetic systems. J.\ Math.\ Biol.\ 13, 305-324.

Akin, E. 1993. The General Topology of Dynamical Systems. Providence, R.I.: Amer.\ Math.\ Soc.

Bank, C., B\"urger, R., Hermisson, J.  2012. The limits to parapatric speciation: Dobzhansky-Muller incompatibilities in a continent-island model. Genetics 191, 845–863.

Barton, N.H. 1999. Clines in polygenic traits. Genetical Research 74, 223-236.

Barton, N.H. 2010. What role does natural selection play in speciation? Phil.\ Trans. \ R.\ Soc.\ B 365, 1825-1840.

Barton, N.H., Turelli, M. 2011. Spatial waves of advance with bistable dynamics: cytoplasmic and genetic analogues of Allee effects.
Amer.\ Natur.\ 178, No.\ 3, pp. E48-E75.

Baum, L.E., Eagon, J.A. 1967. An inequality with applications to
statistical estimation for probability functions of Markov processes and
to a model for ecology. Bull.\ Amer.\ Math.\ Soc.\ 73, 360-363.

Berman, A., Plemmons, R.J. 1994. Nonnegative Matrices in the Mathematical Sciences. Philadelphia: SIAM.

Bulmer, M.G. 1972. Multiple niche polymorphism. Amer.\ Natur.\ 106, 254-257.

B\"urger, R. 2000. The Mathematical Theory of Selection, Recombination, and
Mutation. Wiley, Chichester.

B\"urger, R. 2009a. Multilocus selection in subdivided populations I.
Convergence properties for weak or strong migration. J.\ Math.\ Biol.\
58, 939-978.

B\"urger, R. 2009b. Multilocus selection in subdivided populations II.
Maintenance of polymorphism and weak or strong migration. J.\ Math.\ Biol.\
58, 979-997.

B\"urger, R. 2009c. Polymorphism in the two-locus Levene model with nonepistatic
directional selection. Theor.\ Popul.\ Biol.\ 76, 214-228.

B\"urger, R. 2010. Evolution and polymorphism in the multilocus Levene model with no or weak epistasis.
Theor.\ Popul.\ Biol.\ 78, 123-138. 

B\"urger, R. 2011. Some mathematical models in evolutionary genetics. In: The Mathematics of Darwin's Legacy (FACC Chalub \& JF Rodrigues, eds), pp. 67-89. Birkh\"auser, Basel.

B\"urger, R., Akerman, A. 2011. The effects of linkage and gene flow on local adaptation: A two-locus continent-island model. Theor.\ Popul.\ Biol.\ 80, 272-288.

Cannings, C. 1971. Natural selection at a multiallelic autosomal locus
with multiple niches. J.\ Genetics 60, 255-259.

Christiansen, F.B. 1974. Sufficient conditions for protected polymorphism in a subdivided
population. Amer.\ Naturalist 108, 157-166.

Christiansen, F.B. 1975. Hard and soft selection in a subdivided population.
Amer.\ Natur.\ 109, 11-16.

Christiansen, F.B. 1999. Population Genetics of Multiple Loci. Wiley, Chichester.

Conley, C. 1978. Isolated invariant sets and the Morse index. NSF CBMS Lecture Notes 38.
Providence, R.I.: Amer.\ Math.\ Soc. 

Deakin, M.A.B. 1966. Sufficient conditions for genetic polymorphism.
Amer.\ Natur.\ 100, 690-692.
 
Deakin, M.A.B. 1972. Corrigendum to genetic polymorphism in a subdivided population. Australian\ J.\ Biol.\ Sci.\ 25, 213-214.

Dempster, E.R. 1955. Maintenance of genetic heterogeneity. Cold Spring
Harbor Symp.\ Quant.\ Biol.\ 20, 25-32.

Ewens, W. J. 1969. Mean fitness increases when fitnesses are additive. Nature 221, 1076.

Ewens, W.J. 2004. Mathematical Population Genetics. 2nd edition. Springer, New York.

Ewens, W.J. 2011. What changes has mathematics made to the Darwinian theory? In: The Mathematics of Darwin's Legacy (FACC Chalub \& JF Rodrigues, eds), pp. 7-26. 
Birkh\"auser, Basel.

Eyland, E.A. 1971. Moran's island model. Genetics 69, 399-403.

Feller, W. 1968. An Introduction to Probability Theory and Its Applications, vol. I, 
third edn. Wiley, New York.

Fisher, R.A. 1937. The wave of advance of advantageous genes. Ann.\ Eugenics 7, 355-369.

Friedland, S., Karlin, S. 1975. Some inequalities for the spectral radius of nonnegative
matrices and applications. Duke Math.\ J.\ 42, 459-490.

Geiringer, H. 1944. On the probability theory of linkage in Mendelian
heredity. Ann.\ Math.\ Stat.\ 15, 25-57.

Hadeler, K.P., Glas, D., 1983. Quasimonotone systems and convergence to equilibrium
in a population genetic model. J.\ Math.\ Anal.\ Appl.\ 95, 297-303.

Haldane, J.B.S. 1930. A mathematical theory of natural and artificial
selection. Part VI. Isolation. Proc.\ Camb.\ Phil.\ Soc.\ 28, 224-248.

Haldane, J.B.S. 1932. The Causes of Evolution. London: Longmans, Green. (Reprinted with
a new introduction and afterword by E.G.\ Leigh, Jr., by Princeton University Press, 1992.)

Haldane, J.B.S. 1948. The theory of a cline. J.\ Genetics 48, 277-284.

Hardy, G.H. 1908. Mendelian proportions in a mixed population. Science 28, 49-50.

Hastings A. 1981a. Simultaneous stability of $D=0$ and $D\ne0$ for
multiplicative viabilities at two loci: an analytical study. J.\ Theor.\ Biol.\ 89, 69-81.

Hastings A. 1981b. Stable cycling in discrete-time genetic models. Proc.\
Natl.\ Acad.\ Sci.\ USA 78, 7224-7225.

Hirsch, M. W., 1982. Systems of differential equations which are competitive or
cooperative. I: Limit sets. SIAM J.\ Math.\ Anal.\ 13, 167-179.

Hofbauer J., Iooss G. 1984. A Hopf bifurcation theorem of difference equations 
approximating a differential equation. Monatsh.\ Math.\ 98, 99-113.

Hofbauer, J., Sigmund, K. 1988. The Theory of Evolution and Dynamical Systems. Cambridge: University Press.

Hofbauer, J., Sigmund, K. 1998. Evolutionary Games and
Population Dynamics. Cambridge: University Press.

Karlin, S. 1977. Gene frequency patterns in the Levene subdivided
population model. Theor.\ Popul.\ Biol.\ 11, 356-385.

Karlin, S. 1982. Classification of selection-migration structures and
conditions for a protected polymorphism. Evol.\ Biol.\ 14, 61-204.

Karlin, S., Campbell, R.B. 1980. Selection-migration regimes characterized
by a globally stable equilibrium. Genetics 94, 1065-1084.  

Karlin S., Feldman, M.W. 1978. Simultaneous stability of $D=0$ and 
$D\ne0$ for multiplicative viabilities at two loci. Genetics 90, 813-825.

Karlin, S., McGregor, J. 1972a. Application of method of small parameters
to multi-niche population genetics models.
Theor.\ Popul.\ Biol.\ 3, 186-208.

Karlin, S., McGregor, J. 1972b. Polymorphism for genetic and ecological
systems with weak coupling. Theor.\ Popul.\ Biol.\ 3, 210-238.

Kingman, J.F.C. 1961. An inequality in partial averages. Quart.\ J.\ Math.\ 12, 78-80.

Kolmogoroff, A., Pretrovsky, I., Piscounoff, N. 1937.
\'Etude de l'\'equation de la diffusion avec croissance de la quantite de 
mati\'ere et son application \`a un probl\`eme biologique. (French)
Bull.\ Univ.\ Etat Moscou, Ser.\ Int., Sect.\ A, Math.\ et Mecan.\ 1, 
Fasc. 6, 1-25.

LaSalle, J.P. 1976. The Stability of Dynamical Systems. 
Regional Conf.\ Ser.\ Appl.\ Math.\ 25. Philadelphia: SIAM.


Levene, H. 1953. Genetic equilibrium when more than one ecological niche
is available. Amer.\ Natur.\ 87, 331-333.

Lessard S. 1997. Fisher's fundamental theorem of natural selection
revisited. Theor.\ Pop.\ Biol.\ 52, 119-136.

Lewontin, R.C., Kojima K.-I. 1960. The evolutionary dynamics of complex polymorphisms. Evolution 14, 458-472.

Li, C.C. 1955. The stability of an equilibrium and the average fitness of
a population. Amer.\ Natur.\ 89, 281-295.

Li, C.C. 1967. Fundamental theorem of natural selection. Nature 214, 505-506.

Lou, Y., Nagylaki, T., 2002. A semilinear parabolic system for migration
and selection in population genetics. J.\ Differential Equations 181, 388–418.

Lou, Y., Nagylaki, T., Ni, W.-M. 2013. An introduction to migration-selection PDE models. 
Discrete Continuous Dynamical Systems, Series A, in press.

Lyubich, Yu.I. 1971. Basic concepts and theorems of evolutionary genetics
of free populations. Russ.\ Math.\ Surv.\ 26, 51-123.

Lyubich, Yu.I. 1992. Mathematical Structures in Population Genetics.
Berlin Heidelberg New York: Springer.

Maynard Smith, J. 1970. Genetic polymorphism in a varied environment. 
Am.\ Nat.\ 104, 487-490.

Nagylaki, T. 1977. Selection in One- and Two-Locus Systems. Lect.\ Notes
Biomath.\ 15. Berlin, Heidelberg, New York: Springer.

Nagylaki, T. 1989. The diffusion model for migration and selection. Pp.\ 55–75 
in: Hastings, A. (Ed.), Some Mathematical Questions in Biology. Lecture Notes
on Mathematics in the Life Sciences, vol. 20. American Mathematical Society,
Providence, RI.

Nagylaki, T. 1992. Introduction to Theoretical Population Genetics.
Berlin: Springer.

Nagylaki, T. 1993. The evolution of multilocus systems under weak 
selection. Genetics 134, 627-647.

Nagylaki, T. 2009a. Polymorphism in multiallelic migration-selection
models with dominance. Theor.\ Popul.\ Biol.\ 75,239-259.

Nagylaki, T. 2009b. Evolution under the multilocus Levene model without epistasis.
Theor.\ Popul.\ Biol.\ 76, 197-213.

Nagylaki, T., Hofbauer, J., Brunovsk\'y, P., 1999. Convergence of multilocus systems
under weak epistasis or weak selection. J.\ Math.\ Biol.\ 38, 103-133.

Nagylaki, T., Lou, Y. 2001. Patterns of multiallelic poylmorphism
maintained by migration and selection. Theor.\ Popul.\ Biol.\ 59, 297-33.

Nagylaki, T., Lou, Y. 2006a. Multiallelic selection polymorphism.
Theor.\ Popul.\ Biol.\ 69, 217-229.

Nagylaki, T., Lou, Y. 2006b. Evolution under the multiallelic Levene model.
Theor.\ Popul.\ Biol.\ 70, 401-411.

Nagylaki, T., Lou, Y. 2007. Evolution under multiallelic migration-selection
models. Theor.\ Popul.\ Biol.\ 72, 21-40. 

Nagylaki, T., Lou, Y. 2008. The dynamics of migration-selection models.
In: Friedman, A.\ (ed) Tutorials in Mathematical Biosciences IV.
Lect.\ Notes Math.\ 1922, pp.\ 119 - 172. 
Berlin Heidelberg New York: Springer.

Novak, S. 2011. The number of equilibria in the diallelic Levene model with multiple demes. 
Theor.\ Popul.\ Biol.\ 79, 97–101.

Peischl, S. 2010. Dominance and the maintenance of polymorphism in multiallelic migration-selection models
with two demes. Theor.\ Popul.\ Biol.\ 78, 12-25.

Price, G.R. 1970. Selection and covariance. Nature 227, 520-521.

Prout, T. 1968. Sufficient conditions for multiple niche polymorphism.
Amer.\ Natur.\ 102, 493-496.

Seneta, E. 1981. Non-negative Matrices, 2nd ed. London: Allen and Unwin.

Spichtig, M., Kawecki, T.J. 2004. The maintenance (or not) of polygenic
variation by soft selection in a heterogeneous environment. 
Amer.\ Natur.\ 164, 70-84. 

Star, B., Stoffels, R.J., Spencer, H.G. 2007a. Single-locus polymorphism
in a heterogeneous two-deme model. Genetics 176, 1625-1633.

Star, B., Stoffels, R.J., Spencer, H.G. 2007b. Evolution of fitnesses and
allele frequencies in a population with spatially heterogeneous selection
pressures. Genetics 177, 1743-1751.

Strobeck, C. 1979. Haploid selection with $n$ alleles in $m$ niches.
Amer.\ Nat.\ 113, 439-444.

Svirezhev, Yu.M. 1972. Optimality principles in population genetics.
In: Studies in Theoretical Genetics, pp.\ 86--102. Novisibirsk:
Inst.\ of Cytology and Genetics. (In Russian).

van Doorn, G.S., Dieckmann, U. 2006. The long-term evolution of
multilocus traits under frequency-dependent disruptive selection.
Evolution 60, 2226-2238.

Wakeley, J. 2008. Coalescent Theory: An Introduction. Roberts \& Company Publishers, Greenwood Village, Colorado.

Weinberg, W. 1908. \"Uber den Nachweis der Vererbung beim Menschen.
Jahreshefte des Vereins f\"ur vaterl\"andische Naturkunde in W\"urttemberg 64, 368-382.

Weinberg, W. 1909. \"Uber Vererbungsgesetze beim Menschen. Zeitschrift f\"ur
induktive Ab\-stam\-mungs- und Vererbungslehre 1, 377-392, 440-460; 2, 276-330.

Wiehe, T., Slatkin, M., 1998. Epistatic selection in a multi-locus Levene model and
implications for linkage disequilibrium. Theor.\ Popul.\ Biol.\ 53, 75-84.

Wright, S. 1931. Evolution in Mendelian populations. Genetics 16, 97-159. 

Zhivotovsky, L.A., Feldman, M.W., Bergman, A. 1996. On the evolution of phenotypic plasticity
in a spatially heterogeneous environment. Evolution 50, 547-558.

}

\end{document}